\DeclareRobustCommand{\SkipTocEntry}[5]{} 
\crefname{lemma}{lemma}{lemmata}
\Crefname{lemma}{Lemma}{Lemmata}
\theoremstyle{plain}                          
\newtheorem{theorem}{Theorem}[section]
\newtheorem{proposition}[theorem]{Proposition}    
\newtheorem{lemma}[theorem]{Lemma}
\newtheorem{corollary}[theorem]{Corollary}
\theoremstyle{definition}
\newtheorem{definition}[theorem]{Definition}
\newtheorem{prop-defin}[theorem]{Proposition-definition} 
\newtheorem{example}[theorem]{Example}
\theoremstyle{remark}
\newtheorem{remark}[theorem]{Remark}
\numberwithin{equation}{section}
\renewcommand{\theta}{\vartheta}
\renewcommand{\phi}{\varphi}
\renewcommand{\epsilon}{\varepsilon}
\newcommand{\normord}[1]{\vcentcolon\mathrel{#1}\vcentcolon}
\newcommand{\mb}[1]{\mathbb{#1}} 
\newcommand{\mf}[1]{\mathfrak{#1}}
\newcommand{\mc}[1]{\mathcal{#1}}
\newcommand{\N}{\mb{N}} 
\newcommand{\C}{\mb{C}} 
\newcommand{\Z}{\mb{Z}} 
\renewcommand{\P}{\mb{P}}
\newcommand{\DA}{\widehat{\mathcal{D}}_A^\hslash}
\newcommand{\I}{\mathcal{I}}
\renewcommand{\>}{\rangle}
\newcommand{\del}{\partial}
\newcommand{\2}{\frac{1}{2}}
\DeclareMathOperator{\Ker}{Ker}
\DeclareMathOperator{\Tr}{Tr}
\DeclareMathOperator{\End}{End}
\newcommand{\Id}{\mathord{\mathrm{Id}}}
\DeclareMathOperator*{\Res}{Res}
\DeclareMathOperator{\ad}{ad}
\DeclareMathOperator{\GL}{GL}
\newcommand{\DAh}{\mathcal{D}_A^{\hslash}}
\begin{document}

\title{Highest weight vectors, shifted topological recursion and quantum curves}

\author[R.~Belliard]{Raphaël Belliard}
\email[R.~B.]{raphael.belliard@usask.ca}

\author[V.~Bouchard]{Vincent Bouchard}
\email[V.~B.]{vincent.bouchard@ualberta.ca}

\author[R.~Kramer]{Reinier Kramer}
\email[R.~K.]{reinier.kramer@unimib.it}

\author[T.~Nelson]{Tanner Nelson}
\email[T.~N.]{tnelson1@ualberta.ca}

\address[R.~B., V.~B., R.~K., \& T.~N.]{University of Alberta, Edmonton, AB, T6G 2G1, Canada}

\address[R.~B.]{Centre for Quantum Topology and Its Applications (quanTA) and Department of Mathematics and Statistics, University of Saskatchewan, Saskatoon, SK, S7N 5E6, Canada}

\address[R.~K.]{Università di Milano-Bicocca, Dipartimento di Matematica e Applicazioni, Via Roberto Cozzi, 55, Milan, 20125, Italy}

\address[R.~K.]{INFN sezione di Milano-Bicocca, Milano, Italy}

\thanks{}

\begin{abstract}
  We extend the theory of topological recursion by considering Airy structures whose partition functions are highest weight vectors of particular $\mc{W}$-algebra representations. Such highest weight vectors arise as partition functions of Airy structures only under certain conditions on the representations. In the spectral curve formulation of topological recursion, we show that this generalization amounts to adding specific terms to the correlators $ \omega_{g,1}$, which leads to a ``shifted topological recursion'' formula. We then prove that the wave-functions constructed from this shifted version of topological recursion are WKB solutions of families of quantizations of the spectral curve with $ \hslash$-dependent terms. In the reverse direction, starting from an $\hslash$-connection, we find that it is of topological type if the exact same conditions that we found for the Airy structures are satisfied. When this happens, the resulting shifted loop equations can be solved by the shifted topological recursion obtained earlier.
\end{abstract}

\maketitle

\tableofcontents


\section{Introduction}

\subsection{Motivation: quantum curves}

The Eynard-Orantin topological recursion \cite{EO07} is a method to calculate invariants associated to Riemann surfaces by a formula which is recursive on the negative of the Euler characteristic $ 2g - 2 + n$. It has as input only the cases $(g,n) = (0,1), (\frac{1}{2}, 1), (0,2)$ where $2g-2+n \leq 0$ -- these define the (spectral) curve of the problem -- and produces as output a set $ \{ \omega_{g,n} \}_{g \in \frac{1}{2}{\mathbb{N}, n \in \mathbb{N}^*}} $ of symmetric multidifferentials on the spectral curve. Topological recursion has applications to matrix models \cite{CEO06}, volumes of moduli spaces \cite{EO09}, Gromov--Witten theory \cite{BKMP09,EO15,DOSS14,GKLS22}, Hurwitz numbers and hypergeometric KP tau-functions \cite{BM08,DKPS19a,BDKS20a}, and WKB analysis of Lax systems \cite{BBE15,BEM17,IMS18}, among others.\par

Topological recursion can be understood as a quantization formalism \cite{BE09}. The spectral curve can often be understood as an algebraic curve $P(x,y) = 0$. We then consider the following question: how can the spectral curve be quantized? I.e., how do we construct a function $ \psi (\hslash; x)$ and a differential operator $ \hat{P} (\hslash; \hat{x}, \hat{y})$ with $ \hat{x} = x\cdot$ and $ \hat{y} = \hslash \frac{d}{dx}$ such that
\begin{equation}\label{eq:quant}
  \hat{P} (\hslash; \hat{x}, \hat{y}) \psi (\hslash; x) = 0 \,, \qquad \hat{P}(0; x, y) = P(x,y)\,.
\end{equation}

There are many operators $\hat{P}$ that reduce to $P$ this way, but due to non-commutativity of $ \hat{x}$ and $ \hat{y}$, there is no canonical choice. So the real question is: how do we quantize in a meaningful way?

Topological recursion provides an answer to this question, as was originally suggested in \cite{BE09}. Out of the differentials $\omega_{g,n}$ produced by topological recursion, one can construct a wave-function that is annihilated by a quantization of the spectral curve. For genus $0$ spectral curves, the wave function $ \psi $ is constructed by integrating the $ \omega_{g,n}$ along a correctly chosen divisor $D$ of degree $ -1$, and assembling them in a multivalued WKB-type generating series:
\begin{equation}
\psi(z) = \exp \left(\sum_{g \in \frac{1}{2}\mathbb{N}, n \in \mathbb{N}^*} \frac{\hslash^{2g-2+n}}{n!} \left(\int_{D+[z]} \cdots \int_{D+[z]} \omega_{g,n} - \delta_{g,0} \delta_{n,2} \frac{dx(z_1) dx(z_2)}{(x(z_1)-x(z_2))^2} \right) \right) \,.
\end{equation}
For higher genus, the correct wave function is a transseries obtained from this function as a generalized theta series.

This quantization procedure was proved for a large class of genus $0$ curves by Bouchard--Eynard \cite{BE17} and for higher genus by Eynard--Garcia-Failde--Marchal--Orantin \cite{EGMO21}. More precisely, in the original formulation of topological recursion, the projection of the spectral curve to the first coordinate, $ x \colon \Sigma \to \P^1$, has to be simply ramified. This was generalized to spectral curves with arbitrary ramification in \cite{BE13}. The simple ramification condition is also a requirement in the proof of \cite{EGMO21} for higher genus spectral curves, but not in the proof of \cite{BE17} for genus $0$ spectral curves, which uses the higher ramification generalization of \cite{BE13}.\par

This quantization method however raises an intriguing question. There are many ways to quantize a plane curve as in \eqref{eq:quant} -- one needs to choose an ordering of the non-commutative operators $\hat{x}, \hat{y}$, and one could add further $\hslash$-corrections. Nonetheless, topological recursion seems to "select" a particular quantization. Moreover, it is often not the naively expected one, such as the normal-ordered quantization. It may not even be the quantization in \emph{any} ordering! (I.e. it may include further $\hslash$-corrections.) Why is topological recursion selecting such particular quantizations?

To make things concrete, consider the following spectral curve:
\begin{equation}\label{eq:ss1}
x^{r-1} y^r - 1 =0.
\end{equation}
This is the $s=1$ case in the notation of \cite{BBCCN18}. This spectral curve falls into the class considered in \cite{BE17}. In there, it is shown that the quantization procedure above gives rise to the following quantum curve:
\begin{equation}
\left( \hat{y} \hat{x} \right)^{r-1} \hat{y} - 1,
\end{equation}
which is of course a quantization of the spectral curve, but a rather strange one! For instance, it is not the normal-ordered quantization, which one could naively expect to be singled out by topological recursion. Why is topological recursion selecting this particular quantization? Is it possible to modify the quantization procedure to obtain other choices of quantization of the spectral curve?

As explained in \cite{BE17}, there is a freedom in the quantization procedure, which is in the choice of integration divisor $D$. For some spectral curves, constructing wave-functions with different choices of integration divisors does produce solutions to distinct quantizations of the spectral curve. However, this freedom is rather limited, and is not sufficient to obtain all possible choices of quantizations. For instance, in most cases, one should take $D$ to consist of a pole of $x$ (understood as a meromorphic function on the normalization of the plane curve); but for the spectral curve \eqref{eq:ss1}, there is only one such choice (the pole at $\infty$), and thus this freedom cannot account for other choices of quantizations.

The motivation behind this paper is to figure out how we can modify topological recursion and its corresponding quantization procedure to obtain more general quantizations of spectral curves. We propose a "shifted" version of topological recursion and loop equations, which, as we show, allows us to reconstruct the WKB solution to more general quantizations of the spectral curve.

\subsection{A trifecta of viewpoints}

We will also approach this question from two other different viewpoints: from the point of view of WKB solutions to differential systems, and from the reformulation of topological recursion as Airy structures coming from representations of $\mathcal{W}$-algebras.

On the one hand, it is natural to consider the question of quantization from the point of view of WKB solutions of differential systems. In this context, we can start with any quantization of a spectral curve, which produces a differential system. The question then becomes: for what such quantizations can we reconstruct the WKB solution through topological recursion? This question was answered in part in \cite{BBE15,BEM18,IMS18}: if the system required certain conditions, called the topological type property, the solution is given by topological recursion. In \cite{BEM17}, Belliard--Eynard--Marchal formulated a set of six assumptions that imply the topological type property, and proved that they hold in many natural examples. In this context, what we show is that we can sharpen one of the assumptions of \cite{BEM17}; we obtain a larger class of quantum curves for which the WKB solution can be reconstructed recursively, but it is now via the shifted topological recursion previously defined.

On the other hand, topological recursion was reformulated in an algebraic language by Kontsevich and Soibelman \cite{KS17}, who showed that the $ \omega_{g,n}$ can be assembled in a partition function which is annihilated by a particular set of differential operators called an Airy structure. These Airy structures encode the fact that topological recursion gives a solution to loop equations \cite{BEO15,BS17} which only have prescribed poles and holomorphic components. For the original topological recursion of \cite{EO07}, the Airy structure can be obtained as a representation of a number of copies of the Virasoro algebra, one for each ramification point of the spectral curve. The partition function can then be thought of as a vacuum vector (or highest weight vector with weight zero) of the Virasoro algebra. The strength of the Airy structure formulation is that it gives an immediate proof that there exists symmetric solutions to topological recursion, something which otherwise is quite difficult to prove directly from topological recursion. 

This approach via Airy structures was generalized to higher order ramification points in \cite{BBCCN18,BKS23} (also allowing poles of $y$ at the ramification points, keeping $\omega_{0,1} =y \, dx$ holomorphic). In particular, this approach proves that the topological recursion formulas obtained in \cite{BE13,DN18} have symmetric solutions. Surprisingly, requiring symmetry gave conditions on the kind of ramification orders $r$ and pole orders $r-s$ of $ y$ that are allowed: one must have $ r = \pm 1 \pmod{s}$, otherwise already $\omega_{0,3}$ is non-symmetric.

In this generalization, the Airy structures are obtained as representations of $\mathcal{W}(\mathfrak{gl}_r)$-algebras (one copy for each ramification point of order $r$). The partition function is again a vacuum vector (or highest weight vector with weight zero). From this point of view, what we show is that our proposed shifted topological recursion (and shifted loop equations) arise by simply considering more general partition functions obtained as highest weight vectors with non-zero weights. We show that those also form Airy structures, and thus we know that our proposed shifted topological recursion has a symmetric solution. As this formulation in terms of Airy structures is clean and simple, this will be our starting point.

\subsection{Contributions of this paper}

We mostly investigate the spectral curves which can be parametrized by

\begin{equation}
  \begin{cases}
    x 
    &= 
    z^r 
    \\
    y 
    &= 
    z^{s-r}
  \end{cases}
\end{equation}
for some $ r \geq 2$, $ s \geq 1$. These curves are the the local model for any (smooth) ramification point of a spectral curve, and from the analysis of \cite{BBCCN18}, we know that topological recursion is well-behaved on these curves if and only if $ r = \pm 1 \pmod{s}$. Their plane curve equation is
\begin{equation}
\label{rAirySC}
  P(x,y) = x - y^r
\end{equation}
for $ s= r +1$ and
\begin{equation}
\label{rsSC}
  P(x,y) = x^{r-s} y^r - 1
\end{equation}
else. We call those curves the $(r,s)$-spectral curves.

We are particularly interested in possible quantizations of these curves. From the plane curves \cref{rAirySC,rsSC}, one may think that in the $ s = r+1 $ case, there is no ambiguity in quantization, while in the other cases, there are several possible orderings of the quantization of the monomial $ x^{r-s} y^r$.\par

We find, rather, that if $r = 1 \pmod{s}$, we can obtain infinite-dimensional families of quantum curves, whose solutions can be calculated via an explicit and consistent\footnote{i.e. producing symmetric multidifferentials} modification of the topological recursion formula which we call "shifted topological recursion". In the particular case $ s = 1$, this family is even larger than in the other cases. However, only for $ s \in \{ 1, r-1\}$ do these families contain all possible orderings of the naive quantization.

We start our investigation from the point of view of Airy structures. In \cref{s:shifted}, we investigate the theory of Airy structures for different $ (r,s)$. The $(r,s)$-Airy structures corresponding to topological recursion on the $(r,s)$-spectral curves were constructed in \cite{BBCCN18} as representations of $\mathcal{W}(\mathfrak{gl}_r)$-algebras. The corresponding partition functions are vacuum vectors or highest weight vectors with weight zero. We show that we can construct more general families of Airy structures, which we call ``shifted $(r,s)$-Airy structures'', whose partition functions correspond to highest weight vectors with non-zero weights of the $\mathcal{W}(\mathfrak{gl}_r)$-algebras (\cref{t:shifts}).

More precisely, for the cases $(r,s) = (r,1)$, in the usual construction the partition function is annihilated by all the non-negative modes $W^i_k$, $k \geq 0$, $1 \leq i \leq r$ of the generators of the $\mathcal{W}(\mathfrak{gl}_r)$-algebra. It is thus a highest weight vector with highest weight zero. We show that we can construct general highest weight vectors from Airy structures; the highest weights, which correspond to the non-zero weights of the zero modes $W^i_0$, appear in the differential operators as $r$ scalars $ S_i \in \hslash \C \llbracket \hslash \rrbracket $, for $ 1 \leq i \leq r$. For the cases  $r = 1 \pmod{s}$ with $s \geq 2$, the partition function is now annihilated by some negative modes of the generators as well as the non-negative modes, and the construction is more limited. We show that we can construct more general Airy structures, but the only freedom is in giving a non-zero weight $S_1 \in \hslash \C \llbracket \hslash \rrbracket$ to the zero mode $W^1_0$ of the conformal weight $1$ generator. Finally, for $r = -1 \pmod{S}$ and $s \geq 3$, we show that we cannot introduce any non-zero weights from Airy structures.

In \cref{s:shiftedle}, we generalize the translation from Airy structures to topological recursion (via loop equations) to these shifted $(r,s)$-Airy structures. The shifted Airy structures are equivalent to a modification of the loop equations for correlators $\omega_{g,n}$, which we call "shifted loop equations" (\cref{p:shifteloopeq}). We can solve these shifted loop equations in the same way as topological recursion solves the usual loop equations, and we obtain a variation on the topological recursion formula, which we call "shifted topological recursion" (\cref{ShiftedTR}). The only difference with the usual topological recursion formula is that the highest weights introduce corrections to the correlators $ \omega_{g,1}$, which have to be added into the topological recursion formula explicitly. Aside from this, the recursive structure remains the same. Moreover, because we obtain shifted topological recursion starting from Airy structures, it is guaranteed to produce symmetric multidifferentials.

Now that we have more general shifted loop equations and topological recursion for the $(r,s)$-spectral curves, we can ask whether the corresponding quantization formalism produces wave-functions for more general quantizations of the $(r,s)$-spectral curves. We answer this question in \cref{s:qcurves}. We generalize the construction of quantum curves from \cite{BE17} to the shifted loop equations. We find the appropriate system of differential equations and hence quantum curves that annihilate the wave-function constructed from the correlators produced by shifted topological recursion (\cref{PsiDifferentialSystem} and \cref{t:QC}). In this way, we obtain families of quantizations of the $(r,s)$-spectral curves. In particular, for $s=1$ and $s=r-1$, we obtain families that contain all possible quantizations of the spectral curve corresponding to distinct choices of ordering of the non-commutative operators $\hat{x}$ and $\hat{y}$.

Finally, in \cref{s:diffsyst} we close the loop by considering the converse question: given a quantum curve, or rather the associated differential system, when can its solution be constructed by topological recursion? I.e., when is the system of topological type? We find that after sharpening one of the assumptions of \cite{BEM17}, the conditions for this to work are exactly the same as the ones obtained in the Airy structures framework, namely that $r= 1 \pmod{s}$. In fact, we generalize the construction to allow for highest weight shifts in the differential systems, and we obtain that the differential system has a WKB solution constructed from shifted topological recursion under exactly the same conditions as in \cref{t:shifts} (\cref{t:ds}). 
We identify certain key elements in both constructions, explaining the correspondence between the two languages.

Moreover, the aim of this paper is also partly expository. We connect several important viewpoints on topological recursion: the original geometric definition via residues of multidifferentials, the algebraic formulation via Airy structures, and the integrable aspect via the WKB analysis of an $ \hslash$-connection of topological type. The central concept which connects all of these points of view is that of loop equations, and they will appear in different guises throughout the paper. This trifecta of viewpoints is represented pictorially in \cref{f:intro}.

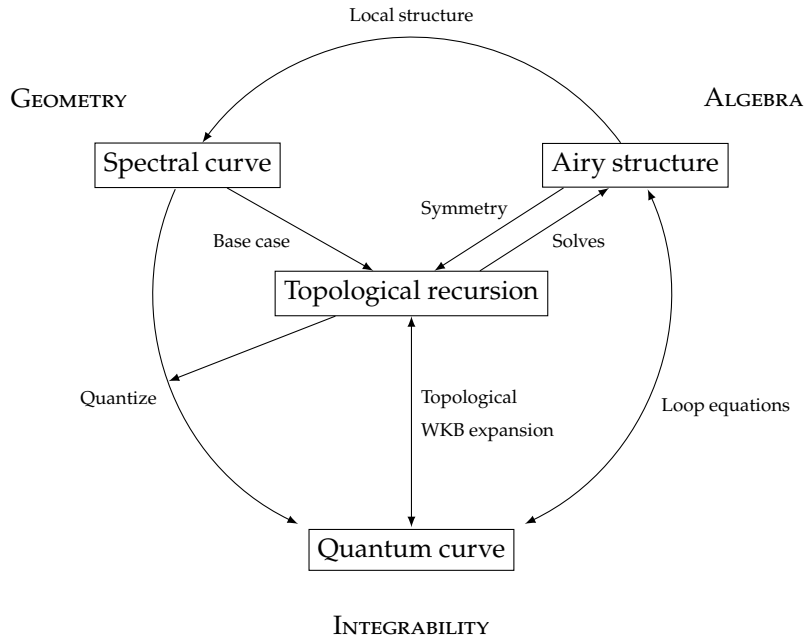
\begin{figure}[ht]
  \begin{tikzpicture}
    \def \radius {3.4cm}
    \def \margin {6} 
    \def \inside {0.3cm}

    \node[draw] at ({30}:\radius) {Airy structure};
    \node at (30:\radius+1.8cm) {\textsc{Algebra}};
    \draw[->, >=latex] ({30+\margin}:\radius) 
      arc ({30+\margin}:{150-\margin}:\radius);
    \node[draw] at ({150}:\radius) {Spectral curve};
    \node at (150:\radius+1.8cm) {\textsc{Geometry}};
    \draw[->, >=latex] ({150+\margin}:\radius) 
      arc ({150+\margin}:{270-2\margin}:\radius);
    \node[draw] at ({270}:\radius) {Quantum curve};
    \node at (270:\radius+1cm) {\textsc{Integrability}};
    \draw[<->, >=latex] ({270+2\margin}:\radius) 
      arc ({270+2\margin}:{390-\margin}:\radius);
    \node[draw] at (0,0) {Topological recursion};
    \draw[<->, >=latex] (0cm,-\radius+\inside) to (0cm,-\inside);
    \node[label={[align=left] \scriptsize Topological\\ \scriptsize WKB \scriptsize expansion}] at (1cm,-0.65*\radius) {};
    \draw[->, >=latex] (-1cm,-\inside) to (200:\radius);
    \node at ({200}:\radius+0.7cm) {\scriptsize Quantize};
    \node at ({340}:\radius+1cm) {\scriptsize Loop equations};
    \draw[->, >=latex] (0.9cm,\inside) to (2.6cm,1.4cm);
    \node at (2.2cm,0.7cm) {\scriptsize Solves};
    \node at (90:\radius+0.3cm) {\scriptsize Local structure};
    \draw[->, >=latex] (150:\radius-2*\inside) to (-0.5cm,\inside);
    \node at (-2.1,0.7cm) {\scriptsize Base case};
    \draw [->, >=latex] (2cm,1.4cm) to (0.3cm,\inside);
    \node at (0.7cm,1.1cm) {\scriptsize Symmetry};
  \end{tikzpicture}
  \caption{A trifecta of viewpoints.}
  \label{f:intro}
\end{figure}

\subsection{Open questions}

We have only considered in detail very specific spectral curves, with a single ramification point relevant for topological recursion. These give all of the commonly considered local models, but the more general global situation still poses significant challenges, at least at a computational level. We have also not considered higher-genus spectral curves, as \cite{EGMO21}, and the required resummations there may also pose problems.

In the semi-simple case, i.e. the case where all ramifications are of type $ (r,s) = (2,3)$, local topological recursion is identified \cite{DOSS14} with Givental's reconstruction of cohomological field theories (CohFTs) \cite{Giv01a}, which reconstructs all semi-simple CohFTs from genus $0$ data \cite{Tel12}, starting from the correspondence between the Airy curve $ x - y^2 = 0$ and the unit CohFT. In the cases $ (r,r+1) $ and $(r,r-1)$, the curves also correspond to CohFTs, namely the $r$-spin Witten class \cite{Wit93,FSZ10,BBCCN18} and the class $ \Theta^r$ \cite{Nor23a,CGG22}, and the Givental group action still acts on such CohFTs \cite{FSZ10} and can still be identified with topological recursion by \cite{DOSS14}. However, our results show that for $ r = 1 \pmod{s}$, topological recursion can actually get corrections in positive genus, and this suggests that in these cases the Givental group action has to be extended as well. Therefore, in these cases an analogue of Teleman's reconstruction theorem may not hold, as the Givental group does not act transitively.

In a similar direction, an open question is to find a geometric interpretation for the correlators $\omega_{g,n}$ calculated by topological recursion on the $(r,s)$-spectral curves, or equivalently, for the partition function of the $(r,s)$-Airy structures. As mentioned above, in the cases with $s=r+1$ and $s=r-1$, such an interpretation is known: the partition function is the descendent potential of the $r$-spin Witten class and the class $\Theta^r$ respectively \cite{CGG22}. However, it remains unknown for other choices of $s$. It is perhaps even more interesting to study whether there is a geometric interpretation for shifted $(r,s)$-Airy structures, in particular in the case $s=1$, where we can shifts all zero modes. In fact, in upcoming work, one of the author, in collaboration with N. K. Chidambaram, A. Giacchetto and S. Shadrin, show that for $s=1$, and for a specific choice of the highest weights, the partition function is the descendant potential of the $\Theta^{r,1}$-class proposed in \cite{CGG22} (see Remark 2.10).

\subsection{Notation}

We use the grading conventions from \cite{BCJ22}. To connect to other Airy structure literature, cf. \cite[Remark~2.16]{BCJ22}.\par
We use the convention that $\mathbb{N} = \{0,1,2,\ldots \}$ and $\mathbb{N}^* = \{1,2,3,\ldots \}$. We write $[r] = \{1,\ldots, r\}$. For a set $ N$ and a variable $z$, we write $ z_N \coloneq \{ z_n \, | \, n \in N \}$.\par
We consider fields in vertex operator algebras as differential forms of degree equal to the conformal weight of the state. I.e. if in a VOA $V$, the state $ v \in V$ has conformal weight $ \Delta$, then we index its field by
\begin{equation}
  Y(v;x) = \sum_{k \in \Z} v_k \frac{(dx)^\Delta}{x^{\Delta + k}}\,.
\end{equation}
We use $x$ for the variable instead of the conventional $ z$, because this conforms with our interpretation via the spectral curve of topological recursion, cf. \cite{BKS23}.\par
When considering a spectral curve with local coordinate $ z$, and functions $  x(z)$ and $ y(z)$, we may write $ x_j = x(z_j)$ and $ y_j = y(z_j)$ to lighten notation.

\subsection*{Acknowledgments}

We would like to thank N. Chidambaram, A. Giacchetto, J. Hurtubise, P. Lorenzoni and S. Shadrin for interesting discussions, and J. Hurtubise in particular for explaining the construction of formal WKB solutions relevant to our situation.\par
The authors acknowledge support from the National Science and Engineering Research Council of Canada. R.K. is partially supported by funds of the Istituto Nazionale di Fisica Nucleare, by IS-CSN4 Mathematical Methods of Nonlinear Physics. R.K. is also thankful to GNFM (Gruppo Nazionale di Fisica Matematica) for supporting activities that contributed to the research reported in this paper.\par
The University of Alberta respectfully acknowledges that they are situated on Treaty 6 territory, traditional lands of First Nations and Métis people.

\section{Shifted \texorpdfstring{$(r,s)$}{rs}-Airy structures and highest weight vectors}

\label{s:shifted}

In this section, we explain how Airy structures \cite{KS17}, by which we mean higher quantum Airy structures with crosscaps in the sense of \cite{BBCCN18} (or rather the associated Airy ideals \cite{BCJ22}), can be used to reconstruct highest weight vectors for $\mathcal{W}(\mathfrak{gl}_r)$ at self-dual level. This involves a generalization of the $(r,s)$-Airy structures introduced in \cite{BBCCN18}, which we call \emph{shifted $(r,s)$-Airy structures}. We assume familiarity with the relevant concepts in these papers, and only refer to main results. We follow the approach to Airy structures presented in \cite{Bo24}, following \cite{BCJ22}.

\subsection{Airy structures}

Let us start by reviewing the definition of Airy structures (also called Airy ideals). We follow \cite{Bo24}; proofs of the results stated here can be obtained either there or in \cite{BCJ22,KS17}.

\subsubsection{The Rees Weyl algebra}

Let $A$ be a finite or countably infinite index set. We use the notation $x_A$ for the set of variables $\{x_a \}_{a \in A}$, and $\partial_A$ for the set of differential operators $\left\{ \frac{\partial}{\partial x_a} \right \}_{a \in A}$.  The Weyl algebra $ \mathbb{C}[x_A]\langle \partial_A \rangle$ is the algebra of differential operators with polynomial coefficients. We define the completed Weyl algebra $\mathcal{D}_A$ to be the completion of the Weyl algebra, where we allow infinite sums in the derivatives (when $A$ is a countably infinite index set) but not in the variables. 

$\mathcal{D}_A$ has many filtrations, one of which is the Bernstein filtration (see Definition 2.3 in \cite{Bo24}). Using this filtration, we construct a graded algebra via the Rees construction:
\begin{definition}\label{d:rees}
The \emph{Rees Weyl algebra} $\mathcal{D}_A^\hslash$ associated to $\mathcal{D}_A$ with the Bernstein filtration is
\begin{equation}
\mathcal{D}_A^\hslash = \bigoplus_{n \in \mathbb{N}} \hslash^n F_n \mathcal{D}_A,
\end{equation}
where the $F_n \mathcal{D}_A$ refer to the subspaces in the Bernstein filtration of $\mathcal{D}_A$.
\end{definition}

When $A$ is countably infinite, we want to be able to take infinite linear combinations of operators $P_a$ without divergent sums appearing. To this end, we define the notion of a bounded collection of differential operators:
\begin{definition}\label{def:bounded}
Let $I$ be a finite or countably infinite index set, and $\{P_i\}_{i \in I}$ a collection of differential operators $P_i \in \DA$ of the form
\begin{equation}
P_i = \sum_{n \in \mathbb N} \hslash^n \sum_{\substack{m,k \in \mathbb N \\ m+k = n} }\sum_{a_1,\ldots, a_m \in A} p^{(n,k)}_{i;a_1,\ldots, a_m}(x_A) \partial_{a_1}\ldots \partial_{a_m}.
\end{equation}
We say that the collection is \emph{bounded} if, for all fixed choices of indices $ a_1,\ldots, a_m, n  $ and $ k $, the polynomials  $ p^{(n,k)}_{i;a_1,\ldots, a_m}(x_A) $ vanish for all but finitely many indices $ i \in I $.
\end{definition} 
It is easy to see that for any bounded collection $\{P_i\}_{i \in I}$, linear combinations $\sum_{i \in I} c_i P_i$ for any $c_i \in \DAh$ are well defined operators in $\DAh$, regardless of whether $I$ is finite or countably infinite.

\subsubsection{Airy ideals}

We now define the notion of Airy ideals (or Airy structures), which is a particular class of left ideals in $\DAh$. 

 \begin{definition}\label{d:airy}
 Let $\mathcal{I} \subseteq \DAh$ be a left ideal. We say that it is an \emph{Airy ideal} (or \emph{Airy structure}) if there exists a bounded generating set $\{H_a \}_{a \in A}$ for $\mathcal{I}$ such that:\footnote{We abuse notation slightly here. We say that $\mathcal{I}$ is generated by the $H_a$, even though in standard terminology the ideal generated by the $H_a$ should only contain finite linear combinations of the generators. Here we allow finite and infinite (when $A$ is countably infinite) linear combinations, which is allowed since the collection $\{H_a \}_{a \in A}$ is bounded.}
  \begin{enumerate}
    \item The operators $H_a$ take the form
    \begin{equation}\label{eq:form}
       H_a= \hslash \partial_a + \hslash p_a(x_A) + O(\hslash^2),
    \end{equation}
    where the $p_a(x_A)$ are linear polynomials.
    \item The left ideal $\mathcal{I}$ satisfies the property:
    \begin{equation}
      [ \mathcal{I}, \mathcal{I}] \subseteq \hslash^2 \mathcal{I}.
    \end{equation}
  \end{enumerate}
 \end{definition}    
 
 \subsubsection{Partition function}
 
 The main reason that Airy ideals are interesting is because they are annihilator ideals for some partition functions. 
 
 \begin{definition}\label{d:pf}
A \emph{partition function} in the set of variables $x_A$ is an expression of the form
\begin{equation}\label{eq:pfFgncoeff}
Z = \exp\left( \sum_{g \in \frac{1}{2} \mathbb{N}, n \in \mathbb{N}^*} \frac{\hslash^{2g-2+n}}{n!}  \sum_{k_1, \ldots, k_n \in A}  F_{g,n}[k_1, \ldots,k_n]  x_{k_1} \cdots x_{k_n}\right).
\end{equation}
We say that it is \emph{stable} if $F_{0,1}[k_1]=F_{0,2}[k_1,k_2]=F_{\frac{1}{2},1}[k_1] =0$, \emph{semistable} if $F_{0,1}[k_1]=0$, and unstable otherwise.
 \end{definition}
 
 Recall the definition of annihilator ideal:
 
 \begin{definition}
Let $Z$ be a partition function as in \eqref{eq:pfFgncoeff}. The \emph{annihilator ideal} $\mathcal{I} = \text{Ann}_{\DAh}(Z)$ of $Z$ in $\DAh$ is the left ideal in $\DAh$ defined by
\begin{equation}
 \text{Ann}_{\DAh}(Z) = \{ P \in \DAh~|~ P Z = 0 \}. 
\end{equation}
\end{definition}

The main result in the theory of Airy structures, which was originally proved in \cite{KS17}, is the following theorem:

 \begin{theorem}\label{t:airy}
  Let $\mathcal{I} \subset \DAh$ be an Airy ideal. Then there exists a unique partition function $Z$ of the form \eqref{eq:pfFgncoeff} such that $\mathcal{I}$ is the annihilator ideal of $Z$ in $\DAh$. Moreover, $Z$ is semistable, and if $p_a(x_A) = 0$ for all $a \in A$, then it is stable.
 \end{theorem}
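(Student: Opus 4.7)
The plan is to construct the coefficients $F_{g,n}[k_1, \ldots, k_n]$ of $F := \log Z$ by induction on $2g - 2 + n$, reading the recursion directly off the generators $H_a$. First, I would rewrite $H_a Z = 0$ as $(e^{-F} H_a e^F) \cdot 1 = 0$: using $e^{-F} \partial_a e^F = \partial_a + \partial_a F$, the conjugated operator can be put in normal-ordered form, and extracting its derivative-free part produces a scalar equation
\begin{equation}
  \partial_a F + p_a(x_A) + \sum_{k \geq 1} \hslash^k R_a^{(k)}(F, \partial F, \partial^2 F, \ldots) = 0,
\end{equation}
where each $R_a^{(k)}$ is a polynomial in the iterated derivatives of $F$ with polynomial coefficients in $x_A$, finite by the Bernstein filtration bound on $H_a$.

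Next I would expand this equation in the bigrading by $\hslash$-power and polynomial degree in $x_A$. Since in the semistable case $F$ begins at $\hslash^0$ with contributions only from $F_{0,2}$ and $F_{\frac{1}{2},1}$, leading-order matching with $p_a(x_A) = p_a^{(0)} + \sum_k p_a^{(1)}[k] x_k$ forces $F_{\frac{1}{2},1}[a] = -p_a^{(0)}$ and $F_{0,2}[a,k] = -p_a^{(1)}[k]$; that $p_a$ is a \emph{linear} polynomial, with no constant term in $x$ that would source $F_{0,1}$, is precisely what rules out an unstable contribution and gives semistability, with stability following in the further case $p_a \equiv 0$. At higher bidegree, the equation indexed by $a$ takes the triangular form
\begin{equation}
  F_{g,n}[a, k_2, \ldots, k_n] = \Phi_{g,n,a}(k_2, \ldots, k_n;\ F_{g',n'} \text{ with } 2g'-2+n' < 2g-2+n),
\end{equation}
where $\Phi_{g,n,a}$ is an explicit expression extracted from the higher-order components $\hslash^k H_a^{(k)}$ of $H_a$. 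This determines each $F_{g,n}$ uniquely from the previously constructed data.

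The hard part will be the symmetry of the $F_{g,n}$. The recursion above distinguishes the index $a$ (the label of the generator $H_a$ being used); symmetry in the remaining $(k_2, \ldots, k_n)$ is inherited inductively from the lower-order $F_{g',n'}$, but the exchange $a \leftrightarrow k_j$ is a genuine constraint not automatically satisfied. The Airy condition $[\mathcal{I}, \mathcal{I}] \subseteq \hslash^2 \mathcal{I}$ is exactly what one needs: from $[H_a, H_b] \in \hslash^2 \mathcal{I}$ together with $H_a Z = H_b Z = 0$, one obtains an identity which, matched order-by-order in the bigrading, yields $F_{g,n}[a, b, k_3, \ldots] = F_{g,n}[b, a, k_3, \ldots]$ at the current level of the induction, granted the hypothesis at lower levels (the $\hslash^2$ factor on the right-hand side is precisely what ensures the correction lives at a strictly lower stage of the induction and hence vanishes). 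Verifying this compatibility cleanly at each bidegree is the core of the proof; existence and uniqueness of $Z$ in partition-function form then follow formally, and the maximality statement that $\mathcal{I}$ is the \emph{full} annihilator of $Z$ reduces to checking that any strictly larger left ideal would overdetermine the recursion and hence contain a unit, which is incompatible with the fact that the constructed $Z$ is nonzero.
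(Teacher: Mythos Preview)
The paper does not actually prove this theorem: it is stated with a citation to \cite{KS17}, and the subsequent remark points to \cite{BCJ22} for the extension to the semistable case with nonzero $p_a$. So there is no in-paper proof to compare against directly; one can only compare against the approaches in those references.

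Your proposal follows the original Kontsevich--Soibelman strategy: conjugate $H_a$ by $e^F$, extract the scalar part, and solve a triangular recursion in $2g-2+n$, with the commutator condition $[\mathcal{I},\mathcal{I}]\subseteq\hslash^2\mathcal{I}$ supplying the symmetry of $F_{g,n}$ at each step. This is correct in outline and is one of the two standard proofs. The approach the paper's remark singles out, that of \cite{BCJ22}, is genuinely different and more structural: one exhibits an explicit filtered automorphism of $\DAh$ (built from the coefficients of the $H_a$) carrying $\mathcal{I}$ to the canonical ideal $\mathcal{I}_{\text{can}}$ generated by $\{\hslash\partial_a\}_{a\in A}$, for which the partition function is $Z=1$; the partition function for $\mathcal{I}$ is then the image of $1$ under the inverse automorphism. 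This bypasses the inductive symmetry check entirely and makes the statement $\mathcal{I}=\mathrm{Ann}(Z)$ transparent, since $\mathrm{Ann}(1)=\mathcal{I}_{\text{can}}$ is immediate and automorphisms preserve annihilators.

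That last point is where your proposal has a genuine gap. Your argument that ``any strictly larger left ideal would overdetermine the recursion and hence contain a unit'' is not a proof: it does not rule out operators $P\notin\mathcal{I}$ with $PZ=0$ whose presence is simply compatible with, rather than overdetermining, the recursion. To close this in your framework you would need to show that every $P\in\DAh$ can be reduced modulo $\mathcal{I}$ to an element of $\mathbb{C}[x_A]\llbracket\hslash\rrbracket$ (using the $H_a$ to kill derivatives order by order in $\hslash$), and that a nonzero such element cannot annihilate $Z$. This is doable but is real work; the automorphism approach of \cite{BCJ22} handles it for free. A minor secondary point: your explanation of semistability is slightly off --- the vanishing $F_{0,1}[a]=0$ comes from the fact that the $\hslash^0$ component of the conjugated equation is just $\partial_a F\big|_{\hslash^{-1}}$, with nothing sourcing it, not from $p_a$ being linear.
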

 
 In other words, given any Airy ideal $\I$, there always exists a unique partition function $Z$ such that $\I Z =0$. Since the operators $H_a$ that generate $\I$ are finite degree in $\hslash$, the differential constraints $H_a Z = 0$ for all $a \in A$ give rise to recursion relations for the $F_{g,n}[k_1,\ldots,k_n]$ that can be used to fully reconstruct $Z$ uniquely.
 
 \begin{remark}
 In the literature on Airy structures, the $O(\hslash)$ terms $\hslash p_a(x_A)$ are usually omitted from the operators $H_a$ in \cref{d:airy}. The resulting partition function is then always stable (that is, the sum in \eqref{eq:pfFgncoeff} starts with $2g-2+n>0$). It is straightforward however to extend the proof of \cref{t:airy} (for instance, following step-by-step the approach in \cite{BCJ22}) to the case of non-zero linear polynomials $\hslash p_a(x_A)$, with the only difference being that the resulting partition function becomes semistable (i.e. with the sum starting with $2g-2+n \geq 0$). 
 \end{remark}
 
 \subsubsection{Airy ideals in universal enveloping algebras}
 
 Many Airy ideals are constructed via representations of either Lie algebras or non-linear Lie algebras -- see for instance \cite{BBCCN18}. We briefly explain the main idea.
 
 Let $\mathfrak{g}$ be either a Lie algebra or a non-linear Lie algebra (see for instance Section 3 of \cite{DK05} for a precise definition of non-linear Lie algebras), and $U(\mathfrak{g})$ the universal enveloping algebra. Suppose that there is an exhaustive ascending filtration on $U(\mathfrak{g})$ (such as the filtration by conformal weight); then we construct the Rees universal enveloping algebra $U^\hslash(\mathfrak{g}) = \bigoplus_{n \in \mathbb{N}} \hslash^n F_n U(\mathfrak{g})$ using the Rees construction as in Definition \ref{d:rees}. 
 
 To construct Airy ideals, we proceed as follows:
\begin{lemma}\label{l:uea}
Let $\rho: U^\hslash(\mathfrak{g}) \to \DAh$ be a representation of the Rees enveloping algebra in the Rees Weyl algebra, for some index set $A$. Let $\mathcal{I}_{U^\hslash} \subseteq U^\hslash(\mathfrak{g})$ be a left ideal in $U^\hslash(\mathfrak{g})$, and $\mathcal{I} = \DAh \rho(\mathcal{I}_{U^\hslash}) \subseteq \DAh$ be the corresponding left ideal in $\DAh$ generated by $\rho(\mathcal{I}_{U^\hslash})$.

Suppose that  $\mathcal{I}_{U^\hslash}$ satisfies the property $[\mathcal{I}_{U^\hslash},\mathcal{I}_{U^\hslash}] \subseteq \hslash^2 \mathcal{I}_{U^\hslash}$, and that there exists a generating set $\{H_a \}_{a \in A}$ for $\mathcal{I}_{U^\hslash}$ such that $\rho(H_a) = \hslash \partial_a + O(\hslash^2)$ and the collection $\{\rho(H_a)\}_{a \in A}$ is bounded. Then $\mathcal{I}$ is an Airy ideal.
\end{lemma}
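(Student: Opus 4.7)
The plan is to verify the two defining properties of an Airy ideal from Definition \ref{d:airy} directly for $\mathcal{I}$, taking $\{\rho(H_a)\}_{a \in A}$ as the candidate bounded generating family. Condition (1) is immediate from the hypothesis $\rho(H_a) = \hslash \partial_a + O(\hslash^2)$ with linear polynomial $p_a \equiv 0$, and boundedness is assumed. One should also observe that $\{\rho(H_a)\}_{a \in A}$ genuinely generates $\mathcal{I}$ as a left ideal in $\DAh$: since $\rho$ is an algebra homomorphism and $\{H_a\}$ generates $\mathcal{I}_{U^\hslash}$, every element of $\rho(\mathcal{I}_{U^\hslash})$ has the form $\sum_a \rho(u_a)\rho(H_a)$, and left-multiplying by $\DAh$ then recovers all of $\mathcal{I}$.

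The substance lies in condition (2), the identity $[\mathcal{I}, \mathcal{I}] \subseteq \hslash^2 \mathcal{I}$. By bilinearity together with boundedness of the generating family, it suffices to treat pairs of the form $P = c\,\rho(H_a)$, $Q = d\,\rho(H_b)$ with $c, d \in \DAh$ and $a, b \in A$. Iterated application of the Leibniz rule for commutators produces the expansion
\begin{equation*}
[c\,\rho(H_a),\, d\,\rho(H_b)] = cd\,[\rho(H_a),\rho(H_b)] + c\,[\rho(H_a),d]\,\rho(H_b) + d\,[c,\rho(H_b)]\,\rho(H_a) + [c,d]\,\rho(H_b)\rho(H_a).
\end{equation*}
The two structural inputs to be used are the hypothesis $[\mathcal{I}_{U^\hslash},\mathcal{I}_{U^\hslash}] \subseteq \hslash^2 \mathcal{I}_{U^\hslash}$, together with the general fact that the Bernstein filtration makes the Weyl algebra almost-commutative, so that $[\DAh, \DAh] \subseteq \hslash^2 \DAh$ in the Rees algebra (commutators drop the Bernstein degree by two).

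Walking through the four summands: the first equals $cd\,\rho([H_a,H_b]) \in cd \cdot \hslash^2 \rho(\mathcal{I}_{U^\hslash}) \subseteq \hslash^2 \mathcal{I}$. In the second and third, the inner Weyl-algebra bracket already lies in $\hslash^2 \DAh$ and is followed by an element of $\rho(\mathcal{I}_{U^\hslash})$, so each summand lies in $\hslash^2 \DAh \cdot \rho(\mathcal{I}_{U^\hslash}) \subseteq \hslash^2 \mathcal{I}$. The fourth term is the only point requiring genuine care and will be the main obstacle: after extracting $\hslash^2$ from $[c,d]$ we are left with the product $\rho(H_b)\rho(H_a)$, two images of generators appearing in the ``wrong'' order to apply the bracket hypothesis directly. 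What rescues us is precisely that $\mathcal{I}$ is a \emph{left} ideal in $\DAh$: the inclusion $\rho(H_a) \in \mathcal{I}$ gives $\rho(H_b)\rho(H_a) \in \DAh \cdot \mathcal{I} = \mathcal{I}$, so this summand too lies in $\hslash^2 \mathcal{I}$, completing the verification.
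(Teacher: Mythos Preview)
Your proof is correct. The paper itself does not provide a proof of this lemma, deferring instead to the references \cite{BCJ22,Bo24}, so there is no in-paper argument to compare against; your direct verification of the two conditions in \cref{d:airy}, using the Leibniz expansion of the commutator together with the almost-commutativity $[\DAh,\DAh]\subseteq\hslash^2\DAh$ of the Rees Weyl algebra for the Bernstein filtration, is exactly the standard route.
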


In this construction we see that the two conditions in the definition of Airy ideals, \cref{d:airy}, are obtained independently. The condition $[\mathcal{I}_{U^\hslash},\mathcal{I}_{U^\hslash}] \subseteq \hslash^2 \mathcal{I}_{U^\hslash}$ is a condition on the left ideal $\mathcal{I}_{U^\hslash} \subseteq U^\hslash(\mathfrak{g})$  in the Rees universal enveloping algebra, while the second condition that there exists a generating set $\{H_a \}_{a \in A}$ for $\mathcal{I}_{U^\hslash}$ such that $\rho(H_a) = \hslash \partial_a + O(\hslash^2)$ depends on the choice of representation.

The condition $[\mathcal{I}_{U^\hslash},\mathcal{I}_{U^\hslash}] \subseteq \hslash^2\mathcal{I}_{U^\hslash}$ is in fact fairly easy to satisfy. We first define an operation that maps elements of $U(\mathfrak{g})$ to elements of $U^\hslash(\mathfrak{g})$:

\begin{definition}\label{d:homogenization}
Let 
 $p \in U(\mathfrak{g})$, and let  $i = \min\{ k \in \mathbb{N}~|~ p \in F_k U(\mathfrak{g}) \}$. We define the \emph{homogenization} $h(p)$ of $p$ to be $h(p) = \hslash^i p \in U^\hslash(\mathfrak{g})$. We define the homogenization $h(\mathcal{I}_U)$ of a left ideal $\mathcal{I}_U \subseteq U(\mathfrak{g})$ to be the left ideal in $U^\hslash(\mathfrak{g})$ generated by all homogenized elements $h(p)$, $p \in \mathcal{I}_U$. 
 \end{definition}
 
 Then we have the following simple lemma:
 
\begin{lemma}\label{l:hom}
Let $\mathcal{I}_U \subseteq U(\mathfrak{g})$ be a left ideal. Then its homogenization $h(\mathcal{I}_U) \subseteq U^\hslash(\mathfrak{g})$ satisfies $[ h(\mathcal{I}_U), h(\mathcal{I}_U)] \subseteq \hslash^2 h(\mathcal{I}_U)$.
\end{lemma}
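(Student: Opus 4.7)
The plan is to exploit the defining property of the filtration used to set up the Rees algebra: commutators drop the filtration degree by two, i.e.\ $[F_i U(\mathfrak{g}), F_j U(\mathfrak{g})] \subseteq F_{i+j-2} U(\mathfrak{g})$. This holds for the Bernstein filtration on the Weyl algebra and for the conformal-weight filtration relevant in the $\mathcal{W}$-algebra setting, and it is precisely this drop-by-two that puts the factor of $\hslash^2$ into the definition of an Airy ideal in the first place.

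First I would verify the statement on generators. For $p, q \in \mathcal{I}_U$ of filtration degrees $i$ and $j$, a direct computation gives
\begin{equation}
  [h(p), h(q)] = \hslash^{i+j} [p, q].
\end{equation}
Since $\mathcal{I}_U$ is a left ideal, both $pq$ and $qp$ lie in $\mathcal{I}_U$, hence so does $[p, q]$. By the filtration property, $[p, q] \in F_{i+j-2} U(\mathfrak{g})$, so its homogenization $h([p,q]) = \hslash^k [p, q]$ satisfies $k \leq i+j-2$. Re-expressing the commutator then yields $[h(p), h(q)] = \hslash^{i+j-k} h([p, q]) \in \hslash^2 h(\mathcal{I}_U)$, as required.

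Next I would extend to arbitrary elements by the Leibniz rule for commutators. Every element of $h(\mathcal{I}_U)$ is by definition a sum of terms of the form $A h(p)$ with $A \in U^\hslash(\mathfrak{g})$ and $p \in \mathcal{I}_U$, so by bilinearity it suffices to control $[A h(p), B h(q)]$. Expanding,
\begin{equation}
  [A h(p), B h(q)] = AB [h(p), h(q)] + A [h(p), B] h(q) + B [A, h(q)] h(p) + [A, B] h(q) h(p),
\end{equation}
the first term is handled by the generator case above. The remaining three each contain one of the commutators $[h(p), B]$, $[A, h(q)]$ or $[A, B]$, viewed in $U^\hslash(\mathfrak{g})$. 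Writing $A = \hslash^m a$ and $B = \hslash^n b$ and applying the drop-by-two of the filtration once more, each of these commutators absorbs exactly two extra factors of $\hslash$ compared to the naive product in $U^\hslash(\mathfrak{g})$. What remains in each of the three terms is then absorbed into $h(\mathcal{I}_U)$ because $h(p)$ or $h(q)$ still appears as a right-hand factor and $h(\mathcal{I}_U)$ is a left ideal.

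The main obstacle is essentially bookkeeping rather than structural: one must confirm that after extracting $\hslash^2$, the residual factor still has non-negative $\hslash$-valuation and so actually lives in $U^\hslash(\mathfrak{g}) = \bigoplus_n \hslash^n F_n U(\mathfrak{g})$, and one must carefully arrange the Leibniz expansion so that the ideal element always sits on the right, exploiting that $\mathcal{I}_U$ is only a one-sided (left) ideal. Both points follow immediately from the drop-by-two property of the filtration together with the bilinearity of the commutator; no deeper structural input is needed.
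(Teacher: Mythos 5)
Your proof is correct, and it is essentially the intended argument (the paper itself states this as a ``simple lemma'' and defers the proof to the cited references): commutators of homogenized elements carry $\hslash^{i+j}$, the commutator of two ideal elements stays in the left ideal, and the Leibniz expansion with the ideal element kept as the rightmost factor handles products by arbitrary elements of $U^\hslash(\mathfrak{g})$. The one point worth emphasizing is the one you flag at the outset: the whole argument rests on the drop-by-two property $[F_iU(\mathfrak{g}),F_jU(\mathfrak{g})]\subseteq F_{i+j-2}U(\mathfrak{g})$, which is not literally among the lemma's stated hypotheses and is genuinely needed --- for a filtration that only drops by one under commutators (e.g.\ the PBW filtration on $U(\mathfrak{sl}_2)$, where $[h(e),h(h)]=-2\hslash\,h(e)\notin\hslash^2h(\mathcal{I}_U)$) the conclusion fails --- but it does hold for the Bernstein filtration and for the conformal-weight filtration on the modes of $\mathcal{W}(\mathfrak{gl}_r)$ at self-dual level, which are the cases the paper uses, so making this hypothesis explicit as you do is exactly right.
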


Thus any left ideal $\mathcal{I}_{U^\hslash} \subseteq U^\hslash(\mathfrak{g})$ that is obtained as the homogenization of a left ideal in $U(\mathfrak{g})$ automatically satisfies $[\mathcal{I}_{U^\hslash},\mathcal{I}_{U^\hslash}] \subseteq \hslash^2\mathcal{I}_{U^\hslash}$. This gives a clear recipe on how to obtain Airy ideals from universal enveloping algebras.

\begin{enumerate}
\item We start with a left ideal $\mathcal{I}_U \subseteq U(\mathfrak{g})$ or, equivalently, a cyclic left module $M \simeq U(\mathfrak{g})/ \mathcal{I}_U$ generated by a vector $v$ whose annihilator is $\mathcal{I}_U = \text{Ann}_{U(\mathfrak{g})}(v)$.
\item We construct the homogenization $\mathcal{I}_{U^\hslash}= h(\mathcal{I}_U)$, which is a left ideal in $U^\hslash(\mathfrak{g})$.  By construction, we know that $[\mathcal{I}_{U^\hslash}, \mathcal{I}_{U^\hslash}] \subseteq \hslash^2 \mathcal{I}_{U^\hslash}$. From the point of view of modules, we obtain a cyclic left module $M[\hslash] \simeq U^\hslash(\mathfrak{g})/\mathcal{I}_{U^\hslash}$ generated by the vector $v$ and where $\hslash$ acts by multiplication; the annihilator of $v$ in $U^\hslash(\mathfrak{g})$ is $\mathcal{I}_{U^\hslash} = \text{Ann}_{U^\hslash(\mathfrak{g})}(v)$.
\item We find a representation $\rho: U^\hslash(\mathfrak{g}) \to \DAh$, for some index set $A$, such that there exists a  generating set $\{H_a\}_{a \in A}$ for $\mathcal{I}_{U^\hslash}$ with $\rho(H_a) = \hslash \partial_a + \hslash p_a(x_A)+ O(\hslash^2)$ and the collection $\{\rho(H_a)\}_{a \in A}$ bounded.
\end{enumerate}
By Lemma \ref{l:uea}, the left ideal $\mathcal{I} \subseteq \DAh$ generated by $\{\rho(H_a)\}_{a \in A}$ is an Airy ideal.

\subsection{\texorpdfstring{$(r,s)$}{rs}-Airy structures}

\label{s:rsairystruct}

In this section we apply the ideas of the previous section to construct Airy ideals from the universal enveloping algebra of the modes of the strong generators of the $\mathcal{W}(\mathfrak{gl}_r)$-algebra at self-dual level. We follow the three-step approach explained above. This construction was originally presented in \cite{BBCCN18}.

\subsubsection{The $\mathcal{W}(\mathfrak{gl}_r)$-algebra at self-dual level}

Let us introduce the $\mathcal{W}(\mathfrak{gl}_r)$-algebra at self-dual level via its realization as a subalgebra of the Heisenberg VOA $\mathcal{H}(\mathfrak{gl}_r)$.

Let $\mathfrak{h} \subset \mathfrak{gl}_r$ be the Cartan subalgebra with orthogonal canonical basis $\{\chi^j \}_{j=1}^r$.
The Heisenberg VOA is the vertex operator algebra freely generated by the vectors $\chi_{-1}^j |0 \rangle$, $j=1,\ldots,r$, where $| 0 \rangle$ is the vacuum vector. We define the fields
\begin{equation}
J^j(z) = Y(\chi_{-1}^j |0 \rangle, z) = \sum_{n \in \mathbb{Z}} J_n^j \frac{dz}{z^{n+1}}.
\end{equation}

The $\mathcal{W}(\mathfrak{gl}_r)$-algebra at self-dual level is the VOA strongly freely generated by the vectors 
\begin{equation}\label{WrGenerators}
  w^j = e_j (\chi^1_{-1}, \dotsc, \chi^r_{-1})|0\> \,, \qquad j\in [r],
\end{equation} 
where $e_j$ denotes the $j$'th elementary symmetric polynomial. The corresponding fields take the form
\begin{equation}\label{eq:expl1}
W^j(z) = Y( e_j (\chi^1_{-1}, \dotsc, \chi^r_{-1})|0\>, z) =e_j\left( J^{1}(z), \ldots, J^r(z) \right)  
= \sum_{n \in \mathbb{Z}} W^j_n \frac{dz^j}{z^{n+j}}.
\end{equation}
This gives the explicit relation
\begin{equation}\label{eq:explicit}
W^j_n = \sum_{1 \leq i_1 < \ldots < i_j \leq r} \sum_{m_1 + \ldots + m_j = n}\left( \prod_{k=1}^j J^{i_k}_{m_k} \right).
\end{equation}

The modes $\{W^j_n\}_{j \in [r], n \in \mathbb{Z}}$ of the strong generators span a non-linear Lie algebra. Let us denote by $U_r$ the universal enveloping algebra of the modes.

There is a natural filtration on $U_r$ by conformal weight, where the modes $W^j_n$ have degree $j$. More precisely, the subspaces in the filtration $F_n  U_r$ consist of sums of monomials of the forms $W^{j_1}_{n_1} \cdots W^{j_k}_{n_k}$ with $j_1 + \ldots + j_k \leq n$. We use this filtration to construct the Rees universal enveloping algebra $U^\hslash_r$, which in essence amounts to redefining $W^j_n \mapsto \hslash^j W^j_n$. 

\subsubsection{A few preliminary lemmas}

We prove a few preliminary lemmas that will be useful shortly. We first prove a simple result about partitions and elementary symmetric polynomials.

\begin{definition}\label{d:ps}
Let $\lambda = (\lambda_1, \lambda_2, \ldots, \lambda_p)$ be an integer partition of $r$, that is, $\lambda_1 \geq \lambda_2 \geq \ldots \geq \lambda_p \geq 1$ and $\sum_{i=1}^p \lambda_i = r$. We define the partial sums $\mu_k = \sum_{i=1}^k \lambda_i$ for $k \in [p]$. By convention we set $\mu_0(\lambda) = 0$.
\end{definition}

\begin{lemma}\label{l:esp}
Let $\lambda = (\lambda_1, \lambda_2, \ldots, \lambda_p)$ be an integer partition of $r$. Let $e_j$ be the $j$'th symmetric polynomial. Then:
\begin{equation}
e_j(x_1, \ldots, x_r) = \sum_{j_1=0}^{\lambda_1} \cdots \sum_{j_p=0}^{\lambda_p} \delta_{j_1+\ldots+j_p,j}  \prod_{k=1}^p e_{j_k}(x_{\mu_{k-1}+1},\ldots,x_{\mu_k}),
\end{equation}
where $\delta_{m,n}$ is the Kronecker delta.
\end{lemma}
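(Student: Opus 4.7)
The plan is to prove the identity by comparing coefficients in the generating function of the elementary symmetric polynomials. Recall that
\begin{equation*}
  \sum_{j=0}^{n} e_j(y_1,\dotsc,y_n)\, t^j \;=\; \prod_{i=1}^{n} (1 + t\, y_i).
\end{equation*}
Applied to the variables $x_1,\dotsc,x_r$, this gives a polynomial in $t$ whose coefficients are precisely the $e_j(x_1,\dotsc,x_r)$ we wish to compute.

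The partition $\lambda=(\lambda_1,\dotsc,\lambda_p)$ of $r$ induces a decomposition of $[r]$ into the consecutive blocks $B_k = \{\mu_{k-1}+1,\dotsc,\mu_k\}$ of size $\lambda_k$, and in particular
\begin{equation*}
  \prod_{i=1}^{r}(1+t\,x_i) \;=\; \prod_{k=1}^{p}\; \prod_{i=\mu_{k-1}+1}^{\mu_k} (1+t\,x_i).
\end{equation*}
The key (and only nontrivial) step is now to apply the generating function identity to each inner product and distribute; each block contributes
\begin{equation*}
  \prod_{i=\mu_{k-1}+1}^{\mu_k}(1+t\,x_i) \;=\; \sum_{j_k=0}^{\lambda_k} e_{j_k}(x_{\mu_{k-1}+1},\dotsc,x_{\mu_k})\, t^{j_k}.
\end{equation*}

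Multiplying these $p$ sums together and collecting the coefficient of $t^j$ yields exactly the right-hand side of the claimed identity, with the Kronecker delta $\delta_{j_1+\dotsb+j_p,\,j}$ arising from the condition that the exponents of $t$ in each factor add to $j$. Comparing with the left-hand expansion $\sum_j e_j(x_1,\dotsc,x_r)\,t^j$ and equating the coefficient of $t^j$ finishes the proof.

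There is no real obstacle here; this is essentially bookkeeping built into the multiplicativity of the generating function $\prod_i(1+t x_i)$, together with the fact that the partition $\lambda$ organizes the factors into disjoint consecutive groups indexed by the partial sums $\mu_k$.
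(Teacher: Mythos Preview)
Your proof is correct and takes essentially the same approach as the paper: both use the generating function $\prod_i (1+t x_i) = \sum_j e_j\, t^j$, factor the product over the blocks determined by the partial sums $\mu_k$, expand each block as $\sum_{j_k} e_{j_k}\, t^{j_k}$, and compare coefficients of $t^j$.
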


\begin{proof}
This follows directly from the generating function for elementary symmetric polynomials. We know that
\begin{equation}
G(z;x_1,\ldots, x_r) := \prod_{i=1}^r (1 + x_i z) = \sum_{j=0}^r e_j(x_1,\ldots,x_r) z^j,
\end{equation}
where $e_0(x_1,\ldots,x_r) = 1$.
But:
\begin{align}
G(z;x_1,\ldots, x_r) =&  \prod_{k=1}^{p} G(z; x_{\mu_{k-1}+1}, \ldots, x_{\mu_{k}})\\
=& \prod_{k=1}^{p}  \left(  \sum_{j_k=0}^{\lambda_k} e_{j_k}(x_{\mu_{k-1}+1},\ldots,x_{\mu_k}) z^{j_k}\right)\\
=& \sum_{j=0}^r \left( \sum_{j_1=0}^{\lambda_1} \cdots \sum_{j_p=0}^{\lambda_p} \delta_{j_1+\ldots+j_p,j}  \prod_{k=1}^p e_{j_k}(x_{\mu_{k-1}+1},\ldots,x_{\mu_k})  \right) z^j.
\end{align}
\end{proof}

Using this lemma we can exploit the realization of the $\mathcal{W}(\mathfrak{gl}_r)$-algebra at self-dual level in terms of elementary symmetric polynomials to see that there is a natural embedding of $\mathcal{W}(\mathfrak{gl}_r)$ in $\mathcal{W}(\mathfrak{gl}_{\lambda_1}) \times \ldots \times \mathcal{W}(\mathfrak{gl}_{\lambda_p})$ for any integer partition $\lambda$ of $r$.

\begin{lemma}\label{l:embed}
Let $\lambda = (\lambda_1, \lambda_2, \ldots, \lambda_p)$ be an integer partition of $r$.  Let $W^j(z)$, $j \in [r]$ be the strong generators of $\mathcal{W}(\mathfrak{gl}_r)$ and $W^j_m$ their modes. There is a natural embedding $\mathcal{W}(\mathfrak{gl}_r) \subset \mathcal{W}(\mathfrak{gl}_{\lambda_1}) \times \ldots \times \mathcal{W}(\mathfrak{gl}_{\lambda_p})$ given by the explicit formula for the modes:
\begin{equation}\label{eq:embedding}
W^j_m =\sum_{j_1=0}^{\lambda_1} \cdots \sum_{j_p=0}^{\lambda_p} \delta_{j_1+\ldots+j_p,j}   \sum_{m_1+ \ldots + m_p = m} \left( \prod_{k=1}^p X^{k,j_k}_{m_k}  \right),
\end{equation}
where the $X^{k,j_k}_{m_k}$, $j_k \in [\lambda_k]$, $m_k \in \mathbb{Z}$  are the modes of the strong generators of the $\mathcal{W}(\mathfrak{gl}_{\lambda_k}) $ factors.  By convention we set $X^{k,0}_{m_k} = \delta_{m_k,0}$.
\end{lemma}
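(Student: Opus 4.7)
The plan is to exploit the Heisenberg realization \eqref{eq:expl1} of the strong generators together with the polynomial identity of \cref{l:esp}. First, I would decompose the Cartan subalgebra $\mathfrak{h} \subset \mathfrak{gl}_r$ as an orthogonal direct sum along the blocks of $\lambda$: set $\mathfrak{h}_k = \mathrm{span}(\chi^{\mu_{k-1}+1}, \ldots, \chi^{\mu_k})$ for $k \in [p]$, so that $\mathfrak{h} = \mathfrak{h}_1 \oplus \cdots \oplus \mathfrak{h}_p$ with the inherited basis. This induces a factorization $\mathcal{H}(\mathfrak{gl}_r) \cong \mathcal{H}(\mathfrak{gl}_{\lambda_1}) \otimes \cdots \otimes \mathcal{H}(\mathfrak{gl}_{\lambda_p})$ of Heisenberg VOAs, where the $k$-th factor is generated by $J^{\mu_{k-1}+1}, \ldots, J^{\mu_k}$, and under which the Heisenberg currents from different factors commute.

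Next, I would identify the strong generators of the $k$-th factor: by definition $\mathcal{W}(\mathfrak{gl}_{\lambda_k})$ is strongly generated by the states $e_{j_k}(\chi^{\mu_{k-1}+1}_{-1}, \ldots, \chi^{\mu_k}_{-1})|0\rangle$ for $j_k \in [\lambda_k]$, with corresponding fields
\begin{equation}
X^{k,j_k}(z) = e_{j_k}\bigl(J^{\mu_{k-1}+1}(z), \ldots, J^{\mu_k}(z)\bigr) = \sum_{m \in \mathbb{Z}} X^{k,j_k}_m \frac{dz^{j_k}}{z^{m+j_k}}\,.
\end{equation}
Applying \cref{l:esp} pointwise to the $J^i(z)$ (which is legitimate since each monomial in the elementary symmetric polynomial involves distinct $J^i$'s, so no normal-ordering ambiguity arises) yields the field identity
\begin{equation}
W^j(z) = \sum_{j_1=0}^{\lambda_1} \cdots \sum_{j_p=0}^{\lambda_p} \delta_{j_1+\ldots+j_p,j}\, \prod_{k=1}^p X^{k,j_k}(z)\,,
\end{equation}
where the $X^{k,0}$ term is interpreted as the identity field (hence the convention $X^{k,0}_{m_k} = \delta_{m_k,0}$). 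Because the factors in the product come from commuting tensor factors of the Heisenberg VOA, this ordinary product coincides with the normal-ordered product and is free of singularities.

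Finally, I would extract modes by comparing coefficients of $dz^j/z^{n+j}$ on both sides. Since each $X^{k,j_k}(z)$ has degree $j_k$ as a differential, the product carries differential degree $\sum_k j_k = j$, matching that of $W^j(z)$; collecting the poles, the coefficient of $dz^j/z^{n+j}$ on the right-hand side becomes exactly
\begin{equation}
\sum_{j_1=0}^{\lambda_1} \cdots \sum_{j_p=0}^{\lambda_p} \delta_{j_1+\ldots+j_p,j}\, \sum_{m_1+\ldots+m_p = n} \prod_{k=1}^p X^{k,j_k}_{m_k}\,,
\end{equation}
which gives \eqref{eq:embedding}. The fact that this realizes an embedding (and not merely a homomorphism) follows from the freeness of the strong generators of $\mathcal{W}(\mathfrak{gl}_r)$ together with the known freeness of the generators in each $\mathcal{W}(\mathfrak{gl}_{\lambda_k})$ factor.

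The main obstacle I would expect is purely bookkeeping: checking that the product of fields on the right-hand side is well-defined as a field in the tensor product VOA (which reduces to the commutativity of $J^i$'s from distinct blocks), and that the indexing conventions for the degenerate case $j_k = 0$ align with the convention $X^{k,0}_{m_k} = \delta_{m_k,0}$. Nothing deeper than the elementary polynomial identity of \cref{l:esp} and the VOA factorization is needed.
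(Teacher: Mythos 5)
Your argument is correct and follows essentially the same route as the paper's proof: apply \cref{l:esp} to the Heisenberg realization \eqref{eq:expl1} of the generators, obtain the field identity $W^j(z)=\sum \delta_{j_1+\cdots+j_p,j}\prod_k X^{k,j_k}(z)$, and read off the modes. The extra remarks on the block factorization of the Heisenberg VOA and on injectivity are fine but not needed beyond what the paper records.
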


\begin{proof}
This follows from \cref{l:esp}. 
By \eqref{eq:expl1}, and using \cref{l:esp}, we get:
\begin{align}
W^j(z) =& e_j\left( J^{1}(z), \ldots, J^r(z) \right) \\
=&\sum_{j_1=0}^{\lambda_1} \cdots \sum_{j_p=0}^{\lambda_p} \delta_{j_1+\ldots+j_p,j} 
\prod_{k=1}^p e_{j_k}(J_{\mu_{k-1}+1}(z),\ldots,J_{\mu_k}(z))\\
=&\sum_{j_1=0}^{\lambda_1} \cdots \sum_{j_p=0}^{\lambda_p} \delta_{j_1+\ldots+j_p,j}  \prod_{k=1}^p X^{k,j_k}(z),
\end{align}
where by convention we set $X^{k,0}(z) = 1$ (and thus $X^{k,0}_{m_k} = \delta_{m_k,0}$). Then the explicit formula \eqref{eq:embedding} for the modes follows directly.
\end{proof}

Next we introduce a few simple definitions: 

\begin{definition}\label{d:lambdann}
Let $\lambda = (\lambda_1, \lambda_2, \ldots, \lambda_p)$ be an integer partition of $r$, and consider the embedding from \cref{l:embed}. For $d \in [p]$, we say that the mode $W^j_m$ is \emph{non-negative of level $d$ with respect to $\lambda$} if either $m \geq 0$, or for $m<0$, all terms in the sum over $m_1 + \ldots m_p = m$ in \eqref{eq:embedding} satisfy one of the following two conditions:
\begin{enumerate}[(a)]
\item $m_k > 0$ for at least one $k \in [p]$;
\item there are at least $d$ distinct $k_1, \ldots, k_d \in [p]$ such that $m_{k_i} = 0$ and $j_{k_i} >0$ for all $i \in [d]$. 
\end{enumerate}
\end{definition}

To put it simply, a mode $W^j_m$ with $m<0$ is non-negative of level $d$ with respect to $\lambda$ if all monomials in the sum \eqref{eq:embedding} contain either one positive mode or at least $d$ non-trivial zero modes.

\begin{definition}\label{d:lambdaj}
Let $\lambda = (\lambda_1, \lambda_2, \ldots, \lambda_p)$ be an integer partition of $r$. For $j \in [r]$, we define
\begin{equation}
\lambda(j) =\min \{ s \in [p]~|~\lambda_1 + \ldots + \lambda_s \geq j \}.
\end{equation}
\end{definition}

The notions are related as follows:

\begin{lemma}\label{l:lambdann}
Let $\lambda = (\lambda_1, \lambda_2, \ldots, \lambda_p)$ be an integer partition of $r$. For $d \in [p]$, the mode $W^j_m$ is non-negative of level $d$ with respect to $\lambda$ if and only if $m \geq 0$ if $\lambda(j) \leq  d$ and $m \geq d - \lambda(j)$ if $\lambda(j) > d$.
\end{lemma}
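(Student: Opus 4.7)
The plan is to unpack the definition of non-negativity of level $d$ term-by-term in the embedding \eqref{eq:embedding}, and extract from it a combinatorial inequality relating $m$, $\lambda(j)$ and $d$.

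First I would fix notation. For each multi-index $(j_1, \ldots, j_p)$ appearing in the sum, let $Q = \{k \in [p] : j_k > 0\}$ be its support. The convention $X^{k,0}_{m_k} = \delta_{m_k,0}$ forces $m_k = 0$ for $k \notin Q$ in any non-vanishing term, so effectively $\sum_{k \in Q} m_k = m$. The key combinatorial input, which uses $\lambda_1 \geq \lambda_2 \geq \cdots \geq \lambda_p$, is that $|Q| \geq \lambda(j)$: the $|Q|$ largest parts of $\lambda$ are $\lambda_1, \ldots, \lambda_{|Q|}$, and $\sum_{k=1}^{|Q|} \lambda_k \geq \sum_{k \in Q} \lambda_k \geq \sum_{k \in Q} j_k = j$, so $|Q|$ is at least the minimum $s$ with $\lambda_1 + \cdots + \lambda_s \geq j$, namely $\lambda(j)$.

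For the sufficiency direction, suppose the bound on $m$ holds. If $m \geq 0$ the conclusion is the first clause of \cref{d:lambdann}. Otherwise $\lambda(j) > d$ and $m \geq d - \lambda(j)$. Fix a term and assume no $m_k$ is positive (else condition (a) holds). Let $q = |\{k \in Q : m_k < 0\}|$; since each such $m_k \leq -1$ we obtain $q \leq -m \leq \lambda(j) - d$, hence the number of $k \in Q$ with $m_k = 0$ and $j_k > 0$ is $|Q| - q \geq \lambda(j) - (\lambda(j) - d) = d$, which is condition (b).

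For the necessity direction I would construct an explicit bad term whenever the stated bound fails. The greedy choice $j_k = \lambda_k$ for $k < \lambda(j)$, $j_{\lambda(j)} = j - \sum_{k<\lambda(j)} \lambda_k$ (which lies in $[1, \lambda_{\lambda(j)}]$ by definition of $\lambda(j)$), and $j_k = 0$ otherwise, produces a valid multi-index with $|Q| = \lambda(j)$. If $\lambda(j) \leq d$ and $m < 0$, set $m_1 = m$ and $m_k = 0$ for $k > 1$; then no $m_k$ is positive and exactly $\lambda(j) - 1 < d$ elements of $Q$ have $m_k = 0$. If $\lambda(j) > d$ and $m < d - \lambda(j)$, i.e., $-m \geq \lambda(j) - d + 1$, I would spread $m$ over $\lambda(j) - d + 1$ indices of $Q$ so that each receives a strictly negative value, leaving at most $d - 1$ elements of $Q$ with $m_k = 0$. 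In both cases neither (a) nor (b) of \cref{d:lambdann} holds, so $W^j_m$ is not non-negative of level $d$. The main bookkeeping hurdle is the inequality $|Q| \geq \lambda(j)$ together with the verification that the greedy multi-index is admissible; once these are in place, the lemma reduces to counting strictly negative versus zero entries in $Q$.
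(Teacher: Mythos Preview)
Your proof is correct and follows essentially the same approach as the paper: both identify $\lambda(j)$ as the minimum possible support size $|Q|$ of a monomial in the embedding \eqref{eq:embedding}, and then argue by counting how many of the $m_k$ with $k\in Q$ can be strictly negative versus zero. The paper's proof states these counting considerations tersely, whereas you spell out the inequality $|Q|\geq\lambda(j)$ and construct explicit witness terms for necessity; this is a welcome expansion of detail but not a different argument.
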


\begin{proof}
The minimal degree of the monomials in the sum in \eqref{eq:embedding} for $W^j_m$ is given by $\lambda(j)$. Clearly, for $W^j_m$ to be non-negative of any level, we must have $m > - \lambda(j)$, otherwise the sum would contain a term with only negative modes.  For $W^j_m$ to be non-negative of level $d$, if $\lambda(j) > d$, we must have $m \geq d - \lambda(j)$, so that all terms contain either at least $d$ zero modes or at least one positive mode. For the cases with $\lambda(j) \leq d$, the only modes that are non-negative of level $d$ are those with $m \geq 0$, since whenever $m<0$ there will be terms with less than $d$ zero modes and no positive mode.
\end{proof}

We can rewrite the condition above in terms of a new partition of $r$.

\begin{lemma}\label{l:newpart}
Let $\lambda = (\lambda_1, \lambda_2, \ldots, \lambda_p)$ be an integer partition of $r$. For $d \in [p]$, define a new partition $\tilde{\lambda} =(\tilde{\lambda}_1, \ldots, \tilde{\lambda}_{p-d+1}) = (\mu_d, \lambda_{d+1} ,\ldots, \lambda_p)$, where $\mu_d = \sum_{i=1}^d \lambda_i$. The mode $W^j_m$ is non-negative of level $d$ with respect to $\lambda$ if and only if $m \geq 1 - \tilde{\lambda}(j)$.
\end{lemma}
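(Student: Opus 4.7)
The plan is to reduce to the previously established \cref{l:lambdann} and translate its case distinction into a single uniform bound using the partial sums of $\tilde{\lambda}$. The main observation is that, by construction of $\tilde{\lambda}$, its partial sums satisfy $\tilde{\mu}_s = \tilde{\lambda}_1 + \cdots + \tilde{\lambda}_s = \mu_{d+s-1}$ for every $s \in [p-d+1]$. Thus $\tilde{\lambda}(j)$ is the smallest $s$ with $\mu_{d+s-1} \geq j$.

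First, I would dispose of the case $\lambda(j) \leq d$. Since $\mu_d \geq \mu_{\lambda(j)} \geq j$, we have $\tilde{\lambda}_1 \geq j$, so $\tilde{\lambda}(j) = 1$ and therefore $1 - \tilde{\lambda}(j) = 0$. This matches the condition $m \geq 0$ provided by \cref{l:lambdann} in this range of $\lambda(j)$.

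Next, I would handle the case $\lambda(j) > d$. Here $\mu_d < j \leq \mu_{\lambda(j)}$, so the minimization defining $\tilde{\lambda}(j)$ is achieved at $d + s - 1 = \lambda(j)$, giving $\tilde{\lambda}(j) = \lambda(j) - d + 1$, and hence $1 - \tilde{\lambda}(j) = d - \lambda(j)$. Again this matches the bound $m \geq d - \lambda(j)$ from \cref{l:lambdann}.

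Combining both cases shows the bound $m \geq 1 - \tilde{\lambda}(j)$ is equivalent to the conditions of \cref{l:lambdann}, completing the proof. There is essentially no obstacle here; the statement is a purely combinatorial repackaging of \cref{l:lambdann}, so the argument is a direct verification via the two cases above. The only mild care needed is checking the edge case $\lambda(j) = d$, which falls under case one since $\mu_d \geq j$ already forces $\tilde{\lambda}(j)=1$, consistent with both formulas giving $0$.
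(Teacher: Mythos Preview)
Your proposal is correct and follows essentially the same approach as the paper: both reduce to \cref{l:lambdann} and split into the two cases $\lambda(j)\leq d$ (equivalently $j\in[\mu_d]$, giving $\tilde{\lambda}(j)=1$) and $\lambda(j)>d$ (giving $\tilde{\lambda}(j)=\lambda(j)-d+1$). Your version makes the relation $\tilde{\mu}_s=\mu_{d+s-1}$ explicit, which is a helpful clarification, but the argument is otherwise identical.
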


\begin{proof}
First, we note that $\tilde{\lambda}(j) = 1$ for all $j \in [\mu_d]$, and $\lambda(j) \leq d$ if and only if $j \in [\mu_d]$. Therefore the condition $m \geq 0$ for all $j$ such that $\lambda(j) \leq d$ is reproduced. For $j > \mu_d$, we need to show that $1 - \tilde{\lambda}(j) = d - \lambda(j)$, that is, $\tilde{\lambda}(j) + d-1 = \lambda(j)$, which is clear by construction of the partition $\tilde{\lambda}$.
\end{proof}

\subsubsection{Step 1: constructing left ideals $\mathcal{I}_{U_r}(\lambda) \subset U_r$}

We move on to the construction of the Airy ideals. The first step is to construct a family of proper left ideals $\mathcal{I}_{U_r} \subset U_r$ in the universal enveloping algebra of modes associated to partitions of $r$. The construction presented here is from Section 3.3 of \cite{BBCCN18}. We provide a proof of the main result so that we can generalize it in the next section.

\begin{proposition}\label{p:leftideals}
Let $\lambda = (\lambda_1, \lambda_2, \ldots, \lambda_p)$ be an integer partition of $r$. 
Let $\mathcal{I}_{U_r}(\lambda)$ be the left ideal generated by the modes $W^j_m$, with $j \in [r]$ and $m \geq1- \lambda(j)$. Then $U_r / \mathcal{I}_{U_r}(\lambda)$ is a cyclic left module generated by a non-zero vector $v$. Furthermore, 
$W^j_m \notin \mathcal{I}_{U_r}(\lambda)$ for all $j \in [r]$ and $m < 1- \lambda(j)$.
\end{proposition}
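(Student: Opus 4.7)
The plan is to construct an explicit $\mathcal{W}(\mathfrak{gl}_r)$-module containing a non-zero vector $v$ whose annihilator simultaneously witnesses both claims. Via the embedding of \Cref{l:embed}, it suffices to work in a module for $\mathcal{W}(\mathfrak{gl}_{\lambda_1}) \times \cdots \times \mathcal{W}(\mathfrak{gl}_{\lambda_p})$. For each factor I take the Verma module $V_k$ of $\mathcal{W}(\mathfrak{gl}_{\lambda_k})$ generated by a highest weight vector $v_k$ with all zero highest weights, so that $X^{k,j_k}_{m_k} v_k = 0$ for every $m_k \geq 0$ and $j_k \in [\lambda_k]$, while the $X^{k,j_k}_{m_k}$ with $m_k < 0$ produce independent PBW basis elements of $V_k$. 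Setting $v \coloneq v_1 \otimes \cdots \otimes v_p$ gives a non-zero vector in $V \coloneq V_1 \otimes \cdots \otimes V_p$ on which each $\mathcal{W}(\mathfrak{gl}_{\lambda_k})$ acts independently.

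Expanding via \Cref{l:embed} gives
\begin{equation*}
W^j_m \cdot v = \sum_{j_1+\cdots+j_p = j}\ \sum_{m_1+\cdots+m_p = m}\ \bigotimes_{k=1}^{p} \bigl( X^{k,j_k}_{m_k} v_k \bigr).
\end{equation*}
A term in this sum is non-vanishing only if $m_k = 0$ for every $k$ with $j_k = 0$ (by the convention $X^{k,0}_{m_k} = \delta_{m_k,0}$) and $m_k \leq -1$ for every $k$ with $j_k > 0$ (else $X^{k,j_k}_{m_k} v_k = 0$). Letting $s$ count the indices $k$ with $j_k > 0$, the constraints $\sum_{k : j_k > 0} j_k = j$ with $j_k \leq \lambda_k$ force $s \geq \lambda(j)$ by \Cref{d:lambdaj}, whence $m = \sum_{k:\, j_k>0} m_k \leq -s \leq -\lambda(j)$. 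Consequently every term vanishes when $m \geq 1-\lambda(j)$, so $\mathcal{I}_{U_r}(\lambda) \subseteq \mathrm{Ann}_{U_r}(v)$. Since $v \neq 0$, the ideal is proper and $U_r/\mathcal{I}_{U_r}(\lambda)$ is a non-zero cyclic module generated by the image of $1$.

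For the non-membership claim, fix $j \in [r]$ and $m \leq -\lambda(j)$. I will exhibit a concrete non-vanishing configuration in the sum above: take $j_k = \lambda_k$ for $k < \lambda(j)$, $j_{\lambda(j)} = j - \mu_{\lambda(j)-1} \in [1,\lambda_{\lambda(j)}]$, and $j_k = 0$ otherwise; take $m_k = -1$ for $2 \leq k \leq \lambda(j)$, $m_1 = m + \lambda(j) - 1 \leq -1$, and $m_k = 0$ otherwise. The associated tensor $\bigotimes_k(X^{k,j_k}_{m_k} v_k)$ is a non-zero product of first-level PBW basis vectors in the $V_k$. Distinct non-vanishing terms in the sum for $W^j_m v$ arise from distinct tuples $(j_\bullet, m_\bullet)$ and hence land in distinct conformal-weight and mode components of the factors $V_k$; they are therefore linearly independent, so no cancellation is possible, $W^j_m v \neq 0$, and $W^j_m \notin \mathcal{I}_{U_r}(\lambda)$.

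The main obstacle is the structural input that each Verma module $V_k$ is freely generated (as a vector space) by ordered products of strictly negative modes of $\mathcal{W}(\mathfrak{gl}_{\lambda_k})$, i.e.\ a PBW-type theorem for the non-linear Lie algebra of modes at self-dual level. Once this is granted, the linear independence of the distinct tensor-product terms is automatic, and both parts of the proposition follow from the explicit mode-bookkeeping above.
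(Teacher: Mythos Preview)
Your proof is correct and follows essentially the same strategy as the paper: build $v$ as the tensor product of highest-weight-zero vectors for the factors in the embedding of \Cref{l:embed}, use the mode-counting to see that every monomial in $W^j_m$ for $m \geq 1-\lambda(j)$ contains a non-negative mode in some factor, and invoke the PBW basis for each $\mathcal{W}(\mathfrak{gl}_{\lambda_k})$ Verma module to show the remaining $W^j_m$ act non-trivially. You are somewhat more explicit than the paper in carrying out the counting and in exhibiting a concrete non-vanishing monomial, but the argument is the same.
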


\begin{proof}
We start with the embedding $\mathcal{W}(\mathfrak{gl}_r) \subset \mathcal{W}(\mathfrak{gl}_{\lambda_1}) \times \ldots \times \mathcal{W}(\mathfrak{gl}_{\lambda_p})$ from \cref{l:embed}, with the explicit formula \eqref{eq:embedding}.

Let $v_k$ be a highest-weight vector with highest weight zero for $\mathcal{W}(\mathfrak{gl}_{\lambda_k} )$. That is, $X^{k,j}_{m} v_k = 0$ for all $j \in [\lambda_k]$ and $m \geq 0$, and the cyclic module generated by $v_k$ is spanned by elements of the form $X^{k,j_1}_{m_1 }\cdots X^{k,j_n}_{m_n} v_k$ with $m_1,\ldots,m_n < 0$.  Construct the tensor product $v = v_1 \otimes \cdots \otimes v_p$. $v$ generates a cyclic module for $U_r$ via the embedding \eqref{eq:embedding}. It is annihilated by all the modes $W^j_m$ that are non-negative of degree one with respect to $\lambda$ (see \cref{d:lambdann}), since all monomials in \eqref{eq:embedding} contain at least one non-negative mode. From \cref{l:lambdann}, we know that $W^j_m$ is non-negative of degree one with respect to $\lambda$ if and only if $m \geq 1 - \lambda(j)$ for all $j \in [r]$.

Furthermore, for each mode $W^j_m$ with $m <1- \lambda(j)$, there will be at least one monomial that will only involve negative modes. Since the $X^{k,j}_m$ form a PBW basis for the $\mathcal{W}(\mathfrak{gl}_{\lambda_k})$ factors, this means that these terms will act non-trivially on $v$, and therefore we conclude that the cyclic module generated by $v$ is spanned by elements of the form $W^{j_1}_{m_1} \cdots W^{j_n}_{m_n} v$ with $m_i < 1-\lambda(j_i)$. We conclude that the left ideal $\mathcal{I}_{U_r}(\lambda)$ generated by the $W^j_m$, with $j \in [r]$ and $m \geq 1-\lambda(j)$, is the annihilator ideal of $v$, and that $W^j_m \notin \mathcal{I}_{U_r}(\lambda)$ for all $j \in [r]$ and $m <1- \lambda(j)$.

\end{proof}

Let us clarify the statement of the proposition with a couple of examples.
\begin{example}
Pick $\lambda = (r)$. Then $\lambda(j) = 1$ for all $j \in [r]$. The ideal $\mathcal{I}_{U_r}(\lambda)$ is generated by all non-negative modes, that is $W^j_m$ with $j \in [r]$ and $m \geq 0$. The corresponding vector $v$ is a highest weight vector with weight zero, as it satisfies $W^j_m v  =0$ for all $j \in [r]$ and $m \geq0$.

Pick $\lambda=(1 ,1,\ldots, 1)$. Then $\lambda(j) = j$.The ideal $\mathcal{I}_{U_r}(\lambda)$ is generated by all modes $W^j_m$ with $j \in [r]$ and $m \geq 1-j$. The corresponding vector $v$ is the vacuum vector, which satisfies $W^j_m v = 0$ for all $j \in [r]$ and $m \geq 1-j$.
\end{example}

\subsubsection{Step 2: determining the homogenization of $\mathcal{I}_{U_r}(\lambda)$}

Associated to a partition $\lambda$ of $r$ we constructed a left ideal $\mathcal{I}_{U_r}(\lambda)$ in the universal enveloping algebra of modes. The homogenization of $\mathcal{I}_{U^\hslash_r}(\lambda)$ is obtained by homogenizing all elements of $\mathcal{I}_{U_r}(\lambda)$. For the  ideals that we constructed above, the homogenization is easy to obtain. Since the modes $W^j_m$ form a PBW basis for $U_r$, and $W^j_m \in \mathcal{I}_{U_r}(\lambda)$ for $m \geq 1- \lambda(j)$ but $W^j_m \notin \mathcal{I}_{U_r}(\lambda)$ for $m < 1- \lambda(j)$, we conclude that the homogenization $\mathcal{I}_{U^\hslash_r}(\lambda) \subset U^\hslash_r$ is generated by the homogenization of the modes, that is, by $W^{\hslash,j}_m := \hslash^j W^j_m$ for $j \in [r]$ and $m \geq  1-\lambda(j)$. Therefore, by \cref{l:hom}, we conclude that
$$
[ \mathcal{I}_{U_r^\hslash}(\lambda), \mathcal{I}_{U_r^\hslash}(\lambda)] \subseteq \hslash^2 \mathcal{I}_{U_r^\hslash}(\lambda).
$$

\subsubsection{Step 3: finding a good representation of $U^\hslash_r$ in $\mathcal{D}_{A}^\hslash$}

To a partition $\lambda$ of $r$ we constructed a left ideal $ \mathcal{I}_{U_r^\hslash}(\lambda) \in U^\hslash_r$ that satisfies the condition $[ \mathcal{I}_{U_r^\hslash}(\lambda), \mathcal{I}_{U_r^\hslash}(\lambda)] \subseteq \hslash^2 \mathcal{I}_{U_r^\hslash}(\lambda)$. For each of those, can we find a representation  $\rho: U^\hslash_r \to \DAh$, for some index set $A$, such that there exists a  generating set $\{H_a\}_{a \in A}$ for $\mathcal{I}_{U^\hslash_r}$ with $\rho(H_a) = \hslash \partial_a + O(\hslash^2)$ and the collection $\{\rho(H_a)\}_{a \in A}$ bounded?

One way to do that for a subset of those ideals is to consider representations of $U_r^\hslash$ that come from $\mathbb{Z}_r$-twisted representations for the Heisenberg VOA $\mathcal{H}(\mathfrak{gl}_r)$. This construction was proposed in \cite{BBCCN18}. We will not explain it in detail here, but simply state the final result, which is the following proposition:

\begin{proposition}[{\cite[Proposition~4.5 \& Corollary~4.7]{BBCCN18}}]
There exists a representation $\mu: U^\hslash_r \to \mathcal{D}_{\mathbb{N}^*}^\hslash$ that takes the form
  \begin{equation}\label{eq:ww}
   \mu( W^{\hslash,i}_k)
    =
    \left(\frac{\hslash}{r} \right)^i \sum_{j=0}^{\lfloor \frac{i}{2} \rfloor} \frac{i!}{2^j j! (i-2j)!} \sum_{\substack{p_{2j+1}, \dotsc p_i \in \Z \\ \sum p_l = rk}} \Psi^{(j)}_r (p_{2j+1}, \dotsc, p_i) \normord{ \prod_{l=2j+1}^i J_{p_l}} \,,
  \end{equation}
  where, with $ \theta = e^{2 \pi i /r}$,
  \begin{equation}
    \Psi^{(j)}_r (a_{2j+1}, \dotsc, a_i) \coloneqq \frac{1}{i!} \sum_{\substack{m_1, \dotsc, m_i = 0\\ m_l \neq m_k}}^{r-1} \prod_{k = 1}^j \frac{\theta^{m_{2k-1} + m_{2k}}}{(\theta^{m_{2k-1}} - \theta^{m_{2k}})^2} \prod_{l=2j+1}^i \theta^{-m_l a_l} \, ,
  \end{equation}
  and
  \begin{equation}\label{eq:Js}
  J_m  = \begin{cases} \del_{x_m} & m > 0 \\ 0 & m =0 \\ -m x_{-m} & m < 0\end{cases} \,.
\end{equation}
In \eqref{eq:ww}, for cases such that $j=i/2$ the condition $\sum p_l = rk$ is understood as the Kronecker delta $\delta_{k,0}$.
\end{proposition}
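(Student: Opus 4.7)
The plan is to construct $\mu$ explicitly via a $\Z_r$-twisted realization of the Heisenberg VOA $\mathcal{H}(\mathfrak{gl}_r)$, following the standard approach to principal $\mathcal{W}$-algebra representations. First, I would set up the twisted module on the polynomial ring $\C[x_{\N^*}]$: after introducing the $r$-fold cover $z = \zeta^r$, the $r$ Heisenberg fields $J^j(z)$, $j \in [r]$, are bundled into a single untwisted boson $\hat J(\zeta)$ whose modes act through the formula \eqref{eq:Js}. More precisely, one uses the discrete Fourier transform $J^j(z) = \frac{1}{r}\sum_{m=0}^{r-1} \theta^{-jm}\, \hat J(\theta^m \zeta)$, which expresses each $J^j_n$ as a linear combination of the $\hat J_p$ with $p \equiv n \pmod r$, weighted by powers of $\theta = e^{2\pi i/r}$.

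Second, I would compute $W^i_k$ in this representation. Since $W^i(z) = e_i(J^1(z),\dotsc,J^r(z))$ by \eqref{eq:expl1}, substituting the Fourier expansion and extracting the $k$-th mode writes $W^i_k$ as a sum over products of $i$ modes $\hat J_{p_1}\cdots \hat J_{p_i}$ with $\sum_l p_l = rk$, weighted by a sum over tuples of Fourier indices $m_1,\dotsc,m_i \in \{0,\dotsc,r-1\}$ that are pairwise distinct — this is precisely because one expands an elementary symmetric polynomial rather than a complete one. The prefactor $(\hslash/r)^i$ then combines the Rees rescaling $W^{\hslash,i}_k = \hslash^i W^i_k$ with the $r^{-i}$ from the Fourier inversion.

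Third, I would reduce the product $\hat J_{p_1}\cdots \hat J_{p_i}$ to normal-ordered form using Wick's theorem. A single contraction between the $k$-th and $k'$-th factors yields the propagator residue $\theta^{m_k + m_{k'}}/(\theta^{m_k} - \theta^{m_{k'}})^2$, obtained from the OPE $\hat J(\zeta_1)\hat J(\zeta_2) \sim (\zeta_1-\zeta_2)^{-2}$ pulled back to the cover. Performing $j$ contractions among the $i$ factors contributes the combinatorial coefficient $\binom{i}{2j}\, (2j)!/(2^j j!) = i!/(2^j j!(i-2j)!)$, times a product of $j$ propagator factors, times a surviving normal-ordered product of $i-2j$ modes. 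Collecting these pieces and identifying the coefficient of $\normord{\prod_l J_{p_l}}$ produces exactly $\Psi^{(j)}_r(p_{2j+1},\dotsc,p_i)$ as stated. That $\mu$ is a genuine representation of $U^\hslash_r$ (and not merely of some associated graded object) is then automatic, because the $W^i$ are polynomial expressions in the Heisenberg generators of a bona fide VOA representation, and the Rees rescaling matches the conformal weight filtration under which the non-linear Lie algebra relations are homogeneous.

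The main obstacle is careful combinatorial bookkeeping in the normal-ordering step: one must track that the fully-contracted terms (where $2j = i$) produce the $\delta_{k,0}$ selection rule noted after \eqref{eq:ww}, verify that contractions between modes with coincident Fourier indices are already excluded by the $m_l \ne m_k$ condition in $\Psi^{(j)}_r$ so that no spurious singularities from $\theta^{m_k} - \theta^{m_{k'}} = 0$ appear, and confirm that the symmetrization over which pairs are contracted yields the factor $1/i!$ inside $\Psi^{(j)}_r$. This is essentially the calculation of \cite{BBCCN18}, to which I would defer for the detailed verification.
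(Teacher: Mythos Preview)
Your sketch is correct and in fact goes further than the paper itself: the paper does not give a proof of this proposition at all, but simply states it as a citation of \cite[Proposition~4.5 \& Corollary~4.7]{BBCCN18}. The construction you outline --- the $\Z_r$-twisted Heisenberg module on $\C[x_{\N^*}]$, the discrete Fourier transform relating the $r$ fields $J^j$ to a single boson on the cover, and the Wick/normal-ordering combinatorics yielding the propagator factors $\theta^{m_k+m_{k'}}/(\theta^{m_k}-\theta^{m_{k'}})^2$ --- is precisely the approach of that reference, to which both you and the paper defer.
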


This is not yet in the form that we want though, since for $i \geq 2$ the $   \mu( W^{\hslash,i}_k)$ are $O(\hslash^i)$. However, for $ s \in [r+1]$ and $s$ coprime with $r$,
we can define a new representation $\rho: U^\hslash_r \to \mathcal{D}_{\mathbb{N}^*}^\hslash$ via conjugation:
\begin{equation}\label{eq:rep}
\rho( W^{\hslash,i}_k) = \hat{T}_s \mu(W^{\hslash,i}_k) \hat{T}_s^{-1}, \qquad \text{with} \qquad \hat{T}_s = \exp \Big( -\frac{J_s}{s\hslash} \Big).
\end{equation}
One can calculate that, if we keep only the modes such that $k  \geq - \lfloor \frac{s(i-1)}{r} \rfloor $, we get that
\begin{equation}
\rho( W^{\hslash,i}_k) = \hslash J_{r k + s (i-1)} + O(\hslash^2),
\end{equation}
which is in the right form.

Combining Steps 2 and 3, we need to find partitions $\lambda$ of $r$ such that $1-\lambda(i) = - \lfloor \frac{s(i-1)}{r} \rfloor$. As shown in Appendix B of \cite{BBCCN18}, the result is that there exists a partition $\lambda$ of $r$ such that $1-\lambda(i) = - \lfloor \frac{s(i-1)}{r} \rfloor$ if and only if $r = \pm 1 \pmod{s}$. For $s=1$, the partition is $\lambda=(r)$. For $2 \leq s \leq r-1$,  we can write $r = r' s + r''$ with $r'' \in \{1, s-1\}$, and the partition is given by $\lambda = (\lambda_1,\ldots, \lambda_s)$ with
\begin{equation}
\lambda_1 = \ldots = \lambda_{r''} = r'+1, \qquad \lambda_{r''+1} = \ldots = \lambda_s = r'.
\end{equation}
For $s=r+1$, the partition is  $\lambda = (1,1,\ldots,1)$.

We can summarize this in the following theorem:

\begin{theorem}[{\cite[Theorem~4.9]{BBCCN18}}]\label{t:rsAs}
  Let $ r \geq 2$, and $ s \in [r+1]$ such that $ r = \pm 1 \pmod{s}$. Let $\rho: U^\hslash_r \to \mathcal{D}_{\mathbb{N}^*}^\hslash$ be the representation defined in \eqref{eq:rep}. Let $\mathcal{I}_{U_r^\hslash} \in U^\hslash_r$ be the left ideal generated by the  modes $W^{\hslash,j}_m$ with $j \in [r]$ and $m \geq - \lfloor \frac{s(i-1)}{r} \rfloor $, 
 and $\mathcal{I}$ the corresponding left ideal in $\mathcal{D}_{\mathbb{N}^*}^\hslash$ generated by $\rho(\mathcal{I}_{U_r^\hslash})$. Then $\mathcal{I}$ is an Airy ideal, which we call the \emph{$(r,s)$-Airy structure}.
\end{theorem}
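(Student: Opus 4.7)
The plan is to follow step-by-step the three-stage recipe for constructing Airy ideals from universal enveloping algebras outlined in \cref{l:uea} and the subsequent discussion, specializing the partition $\lambda$ of $r$ to one that matches the leading-order output of the representation $\rho$ defined in \eqref{eq:rep}. Concretely, the map $i \mapsto -\lfloor s(i-1)/r \rfloor$ is what controls which modes $W^{\hslash,i}_k$ become first-order differential operators of the form $\hslash \partial_a + O(\hslash^2)$ after conjugation by $\hat T_s$, so the first task is to exhibit a partition $\lambda$ of $r$ such that $1 - \lambda(i) = -\lfloor s(i-1)/r \rfloor$ for every $i \in [r]$. The hypothesis $r \equiv \pm 1 \pmod{s}$ is exactly what makes this possible: for $s=1$ the trivial partition $\lambda = (r)$ works; for $s = r+1$ the partition $\lambda=(1,\dots,1)$ works; and for $2 \leq s \leq r-1$ one checks directly with $r = r's + r''$, $r'' \in \{1,s-1\}$, that the partition stated in the excerpt produces the correct floor function.

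Once such a $\lambda$ is fixed, the second task is to invoke \cref{p:leftideals} to obtain a left ideal $\mathcal{I}_{U_r}(\lambda) \subset U_r$ generated by the modes $W^j_m$ with $j \in [r]$ and $m \geq 1 - \lambda(j)$, annihilating a cyclic vector $v$ whose explicit description in \cref{l:embed} uses the embedding $\mathcal{W}(\mathfrak{gl}_r) \hookrightarrow \prod_k \mathcal{W}(\mathfrak{gl}_{\lambda_k})$. Because the modes form a PBW basis, the homogenization $\mathcal{I}_{U_r^\hslash}(\lambda)$ is simply generated by $W^{\hslash,j}_m = \hslash^j W^j_m$ for the same range of indices, and \cref{l:hom} immediately gives $[\mathcal{I}_{U_r^\hslash}(\lambda),\mathcal{I}_{U_r^\hslash}(\lambda)] \subseteq \hslash^2 \mathcal{I}_{U_r^\hslash}(\lambda)$.

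The third task is to apply the representation $\rho$ of \eqref{eq:rep} and verify that the resulting operators have the shape required in \cref{d:airy}. Using the cited proposition, $\rho(W^{\hslash,i}_k) = \hslash J_{rk + s(i-1)} + O(\hslash^2)$ precisely when $k \geq -\lfloor s(i-1)/r \rfloor$, which by our choice of $\lambda$ is the same as $k \geq 1 - \lambda(i)$. To put this in the standard Airy form $\hslash \partial_a + O(\hslash^2)$ indexed by $a \in \mathbb{N}^*$, I would verify that as $(i,k)$ ranges over $i \in [r]$ and $k \geq 1 - \lambda(i)$, the integer $rk + s(i-1)$ takes every positive value exactly once; this is a Chinese-Remainder/coprimality statement ensured by $\gcd(r,s)=1$ together with the compatibility $1-\lambda(i) = -\lfloor s(i-1)/r \rfloor$. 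Given \eqref{eq:Js}, $J_{rk+s(i-1)}$ is then $\partial_{rk+s(i-1)}$, so renaming the generator by $a = rk+s(i-1)$ yields a generating set $\{H_a\}_{a \in \mathbb{N}^*}$ with $\rho(H_a) = \hslash \partial_a + O(\hslash^2)$.

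The remaining step is to verify boundedness of $\{\rho(H_a)\}_{a \in \mathbb{N}^*}$ in the sense of \cref{def:bounded}: each coefficient polynomial in $\rho(H_a)$ depends only on finitely many $J_m$'s whose mode indices sum to $rk + s(i-1) = a$, and after conjugation by $\hat T_s$ the $\hslash$-shift introduces only finitely many additional contributions per fixed monomial, so for any fixed monomial in the $x_A$ and $\partial_A$ only finitely many $a$ contribute. Invoking \cref{l:uea} then concludes that $\mathcal{I}$ is an Airy ideal. The most delicate point, and the one I expect to require the most care, is the bijectivity of $(i,k) \mapsto rk + s(i-1)$ onto $\mathbb{N}^*$: this is essentially a Frobenius-style counting argument that crucially fails unless $r \equiv \pm 1 \pmod{s}$, and is where the coprimality and congruence hypotheses enter in an essential way.
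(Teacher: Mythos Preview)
Your proposal is correct and follows exactly the paper's three-step outline leading up to the theorem. One small correction to your final paragraph: the bijectivity of $(i,k)\mapsto rk+s(i-1)$ onto $\mathbb{N}$ (the pair $(1,0)$ maps to $0$, and $\rho(W^{\hslash,1}_0)=\hslash J_0=0$ identically, so that generator is simply dropped) requires only $\gcd(r,s)=1$; the hypothesis $r\equiv\pm1\pmod s$ enters precisely where you first placed it---in the existence of a partition $\lambda$ with $1-\lambda(i)=-\lfloor s(i-1)/r\rfloor$, which is what makes \cref{p:leftideals} applicable and hence guarantees the ideal is proper and its homogenization is generated by the $W^{\hslash,j}_m$---and not in the bijectivity step.
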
{}

Since $\mathcal{I}$ is an Airy ideal, there exists a unique partition function $Z$ such that $\mathcal{I} Z = 0$. Concretely, what this means is that
\begin{equation}
\rho(W_m^{\hslash,i}) Z = 0 \qquad \text{for $i \in [r], m \geq - \lfloor \frac{s(i-1)}{r} \rfloor$.}
\end{equation}
This set of differential constraints can be used to uniquely reconstruct $Z$ recursively. This is equivalent to topological recursion on the $(r,s)$ spectral curves $x^{r-s} y^r - 1 = 0$, as shown in \cite{BBCCN18}.

\subsubsection{More general representations}

We can generalize the construction of the Airy ideals in \cref{t:rsAs} by constructing more general representations $\rho: U^\hslash_r \to \mathcal{D}_{\mathbb{N}^*}^\hslash$. The idea is simple: instead of conjugating by $T_s$ as in \eqref{eq:rep}, we conjugate by more complicated operators. This idea was explored in \cite{BBCCN18} -- see also \cite[Section~4.1]{BKS23}.

Pick a collection of complex numbers
\begin{equation}\label{eq:complex}
  F_{0,1}[-k]\,, k \geq \min \{ s, r\} \,, \qquad F_{\frac{1}{2},1}[-k]\,, k > 0\,, \qquad F_{0,2}[-k,-l] \,, k,l > 0,
\end{equation}

such that $ F_{0,1}[-s] \neq 0$ and $ F_{0,2}[-k,-l] = F_{0,2}[-l,-k]$. Define the operators
\begin{align}\label{eq:TPhi}
  \hat{T} 
  &\coloneqq
  \exp \Big( \sum_k \left( \frac{1}{\hslash} F_{0,1}[-k] + F_{\frac{1}{2},1}[-k] \right) \frac{J_k}{k}\Big) \,,
  \\
  \hat{\Phi}
  &\coloneqq
  \exp \Big( \frac{1}{2} \sum_{k,l > 0} F_{0,2}[-k,-l]\frac{J_kJ_l}{kl} \Big)\,.
\end{align}
We define a new representation $\rho': U^\hslash_r \to \mathcal{D}_{\mathbb{N}^*}^\hslash$ via conjugation:
\begin{equation}\label{eq:repp}
\rho'( W^{\hslash,i}_k) = \hat{\Phi}\hat{T} \mu(W^{\hslash,i}_k) \hat{T}^{-1} \hat{\Phi}^{-1}.
\end{equation}
Then it is not too difficult to show that \cref{t:rsAs} generalizes to this new class of representations:

\begin{proposition}[{\cite[Proposition~4.14]{BBCCN18} \& \cite[Theorem~2.14]{BKS23}}] \label{p:rsAsgen}
Let $ r \geq 2$, and $ s \in [r+1]$ such that $ r = \pm 1 \pmod{s}$. Let $\rho': U^\hslash_r \to \mathcal{D}_{\mathbb{N}^*}^\hslash$ be the representation defined in \eqref{eq:repp}. Let $\mathcal{I}_{U_r^\hslash} \in U^\hslash_r$ be the left ideal generated by the  modes $W^{\hslash,j}_m$ with $j \in [r]$ and $m \geq - \lfloor \frac{s(i-1)}{r} \rfloor $, 
 and $\mathcal{I}$ the corresponding left ideal in $\mathcal{D}_{\mathbb{N}^*}^\hslash$ generated by $\rho'(\mathcal{I}_{U_r^\hslash})$. Then $\mathcal{I}$ is an Airy ideal, which we call the \emph{deformed $(r,s)$-Airy structure}.
\end{proposition}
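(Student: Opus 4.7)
The plan is to adapt the three-step recipe used for \cref{t:rsAs} to the deformed representation $\rho'$. Since $\rho'$ is the composition of $\mu$ with conjugation by $\hat{\Phi}\hat{T}$, and both ingredients are algebra homomorphisms (inner conjugation being an automorphism of a suitable completion of $\mathcal{D}^\hslash_{\mathbb{N}^*}$), $\rho'$ is itself a homomorphism. Combined with the bracket condition $[\mathcal{I}_{U_r^\hslash},\mathcal{I}_{U_r^\hslash}] \subseteq \hslash^2 \mathcal{I}_{U_r^\hslash}$ obtained from \cref{l:hom} applied to the homogenization of the left ideal in \cref{p:leftideals} (which depends only on $\mathcal{I}_{U_r^\hslash}$ and not on the representation), one immediately gets $[\mathcal{I},\mathcal{I}] \subseteq \hslash^2 \mathcal{I}$ for the deformed case.

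The central computation is the explicit adjoint action on Heisenberg modes. Using $[J_k,J_l]=k\delta_{k+l,0}$ and noting that iterated commutators with the generators of $\hat{T}$ and $\hat{\Phi}$ terminate after a single step, a direct Baker--Campbell--Hausdorff calculation gives
\[
\hat{\Phi}\hat{T}\, J_m \, \hat{T}^{-1}\hat{\Phi}^{-1} = J_m + \frac{F_{0,1}[m]}{\hslash} + F_{\frac{1}{2},1}[m] + \sum_{k>0} \frac{F_{0,2}[m,-k]}{k}\, J_k
\]
for $m<0$, while $J_m \mapsto J_m$ for $m \geq 0$, with the convention that $F$-values outside their declared domains vanish. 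Substituting into \eqref{eq:ww} for $\mu(W^{\hslash,i}_k)$ and expanding the normal-ordered product slot-by-slot, one tracks $\hslash$-orders: the overall prefactor contributes $\hslash^i$, each $F_{0,1}/\hslash$ replacement lowers the order by one, and each $F_{\frac{1}{2},1}$ or $F_{0,2}$ replacement preserves it.

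For $k \geq -\lfloor s(i-1)/r\rfloor$, the support assumption $F_{0,1}[-n]=0$ for $n<\min\{s,r\}$ together with $F_{0,1}[-s]\neq 0$ ensures that the minimal $\hslash$-order is exactly $1$, achieved by replacing $i-1$ slots with the $-s$-indexed $F_{0,1}/\hslash$ constant and leaving the remaining slot with index $a \coloneqq rk+s(i-1)$. This reproduces, verbatim from the proof of \cref{t:rsAs}, the leading term $\hslash \partial_a$. The remaining $\hslash^1$ contributions come from (a) non-dominant $F_{0,1}$-replacement patterns (where the unreplaced slot can have positive, zero, or negative index), (b) one slot using $F_{\frac{1}{2},1}$ instead (yielding a scalar), and (c) one slot using the $F_{0,2}$-shift (yielding further positive-mode operators $\partial_b$). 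After relabelling generators by $a \in \mathbb{N}^*$ (a bijection by coprimality of $r$ and $s$), the total leading term is
\[
\rho'(W^{\hslash,i}_k) = \hslash\, \partial_a + \hslash\, L_a + O(\hslash^2),
\]
where $L_a$ is a linear combination of $\partial_b$ ($b>0$), $x_c$ ($c>0$), and scalars. A formal unitriangular change of basis among the generators (triangular with respect to the natural ordering on $\mathbb{N}^*$, which is compatible with the support of the $F_{0,2}$ shifts) eliminates the off-diagonal $\partial_b$ contributions and produces a new generating set of the Airy form \eqref{eq:form}, with $p_a$ affine-linear. Boundedness carries over from the undeformed case because each shift introduces only finitely many positive-mode operators at each fixed $\hslash$-order.

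The main obstacle is the combinatorial book-keeping of the expansion of the normal-ordered product, in particular (i) verifying that no terms of $\hslash$-order $\leq 0$ survive (this is ruled out by the support assumption on $F_{0,1}$ coupled with the lower bound on $k$), and (ii) verifying the unitriangularity of the matrix implementing the change of basis that removes the off-diagonal $\partial_b$ pieces. Both points are controlled by the interplay between the support of the $F$-data, the constraint $k \geq -\lfloor s(i-1)/r\rfloor$, and the filtration by the index $a$, exactly as in the proof of \cref{t:rsAs}; the deformation only introduces tame subleading corrections.
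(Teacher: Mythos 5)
The paper itself does not prove this proposition: it is quoted from \cite{BBCCN18} and \cite{BKS23}, with only the remark that the argument of \cref{t:rsAs} ``generalizes''. Your route — compute the adjoint action of $\hat{\Phi}\hat{T}$ on the Heisenberg modes, substitute into \eqref{eq:ww}, do the $\hslash$-bookkeeping, and then repair the generating set — is exactly the strategy of the cited proofs, and most of it is sound: the conjugation formula for $J_m$ is correct, the count showing that no terms of $\hslash$-order $\leq 0$ survive for $k\geq-\lfloor s(i-1)/r\rfloor$ is right (and is indeed what guarantees $\rho'$ lands in $\mathcal{D}^\hslash_{\mathbb{N}^*}$ despite the $\hslash^{-1}$ in $\hat{T}$), and the bracket condition is correctly inherited from the homogenized ideal via \cref{l:uea}. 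A small harmless inaccuracy: at order $\hslash^1$ the constraint $rk\geq -s(i-1)$ actually forbids the contributions you list under (b) and (c), as well as a negative-index unreplaced slot, so the $O(\hslash)$ tail consists only of extra $\partial_b$ terms; listing impossible terms does not hurt, but it shows the bookkeeping was not pushed through.

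The genuine gap is the ``unitriangular change of basis'' step, which is asserted rather than proved and is false as stated in one of the allowed cases. Your triangularity rests on the claim that every non-dominant $F_{0,1}$-replacement leaves an unreplaced slot of index $b>a=rk+s(i-1)$. This holds when $s\leq r$ (since all replaced indices are $\leq -s$), but for $s=r+1$ the data \eqref{eq:complex} allows $F_{0,1}[-k]$ for all $k\geq\min\{s,r\}=r$, so replacements using $F_{0,1}[-r]$ produce $\partial_b$ with $b$ as small as $a-(i-1)$; the matrix relating the raw generators to the target basis then has sub-diagonal entries with respect to the natural order on $\mathbb{N}^*$, and a Neumann-series inversion of the kind you invoke no longer converges entry-wise for free. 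One must either restrict/reinterpret the $s=r+1$ deformations (as is done in the cited references, where the admissibility conditions on $\omega_{0,1}$ are calibrated precisely so that the leading-order dictionary survives), or give a separate argument for invertibility and boundedness of the change of generators in that case. Relatedly, even in the triangular cases you should record the column-finiteness of the elimination matrix (for fixed $b$ only the finitely many $a<b$ can contribute), since that is what makes the new collection bounded in the sense of \cref{def:bounded}. With those points supplied, your argument does reproduce the content of \cite[Proposition~4.14]{BBCCN18} and \cite[Theorem~2.14]{BKS23}.
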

%

\subsection{Shifted \texorpdfstring{$(r,s)$}{rs}-Airy structures}
\label{s:shiftedrs}

The construction of the previous section can be naturally generalized by starting with highest weight vectors with non-zero weights. This gives rise to new left ideals that can be used to construct Airy structures. We continue using the three-step approach.

\subsubsection{Step 1: constructing left ideals $\mathcal{I}_{U_r}(\lambda; S)$}

The idea is the same as before; to a partition $\lambda$ of $r$ we associate a left ideal $\mathcal{I}_{U_r}(\lambda; S) \in U_r$. However, as we will see, we will get families of ideals parametrized by some complex numbers $S = (S_1,\ldots S_{\lambda_1-\lambda_2}) \in \mathbb{C}^{\lambda_1-\lambda_2}$.

\begin{theorem}\label{t:leftidealsshifted}
Let $\lambda = (\lambda_1, \lambda_2, \ldots, \lambda_p)$ be an integer partition of $r$. For $d \in [p]$, define a new partition $\tilde{\lambda} =(\tilde{\lambda}_1, \ldots, \tilde{\lambda}_{p-d+1}) = (\mu_d, \lambda_{d+1} ,\ldots, \lambda_p)$, where $\mu_k = \sum_{i=1}^k \lambda_i$.  Let $S_j \in \mathbb{C}$ for $j \in [\mu_{d-1}]$ and $S_j =0$ for $j > \mu_{d-1}$.

Let $\mathcal{I}_{U_r}(\tilde\lambda)$ be the left ideal generated by the shifted modes $W^j_m - S_j \delta_{m,0}$, with $j \in [r]$ and $m \geq 1 - \tilde{\lambda}(j)$. Then $U_r / \mathcal{I}_{U_r}(\tilde \lambda)$ is a cyclic left module generated by a non-zero vector $v$. Furthermore, $W^j_m v \neq0$ for all $j \in [r]$ and $m < 1 - \tilde{\lambda}(j)$, and thus
$W^j_m \notin \mathcal{I}_{U_r}(\tilde \lambda)$ for all $j \in [r]$ and $m < 1 - \tilde{\lambda}(j)$.
\end{theorem}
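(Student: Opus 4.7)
The plan is to adapt the argument of \cref{p:leftideals} by applying the embedding of \cref{l:embed} to the coarser partition $\tilde\lambda$ rather than $\lambda$, and twisting only the highest-weight vector on the first tensor factor so as to generate the prescribed shifts. Starting from
\[
  \mathcal{W}(\mathfrak{gl}_r) \;\hookrightarrow\; \mathcal{W}(\mathfrak{gl}_{\mu_d}) \times \mathcal{W}(\mathfrak{gl}_{\lambda_{d+1}}) \times \cdots \times \mathcal{W}(\mathfrak{gl}_{\lambda_p}) \, ,
\]
I would take $v = \tilde v_1 \otimes \cdots \otimes \tilde v_{p-d+1}$, where $\tilde v_1$ is a highest-weight vector of $\mathcal{W}(\mathfrak{gl}_{\mu_d})$ with weights $c_j = S_j$ for $j \in [\mu_{d-1}]$ and $c_j = 0$ for $\mu_{d-1} < j \leq \mu_d$, while each $\tilde v_k$ for $k \geq 2$ is a highest-weight vector of weight zero. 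The candidate cyclic $\mathcal{W}(\mathfrak{gl}_r)$-module is then generated by $v$ inside the tensor product of the cyclic modules associated to each $\tilde v_k$.

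Using the explicit formula \eqref{eq:embedding} and the fact that factors in different tensor slots commute, the action of $W^j_m$ on $v$ is handled by three cases. For $m > 0$, every summand contains a positive mode that kills some tensor factor. For $m = 0$, only all-zero-mode summands contribute, and because $\tilde v_k$ for $k \geq 2$ has weight zero, only $j_1 = j$ survives, giving $W^j_0 v = c_j v = S_j v$ with both sides vanishing for $j > \mu_d$.

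The delicate case is $1 - \tilde\lambda(j) \leq m < 0$. For a summand to act non-trivially on $v$, every factor $X^{k,j_k}_{m_k}$ must be either trivial ($j_k = m_k = 0$), a descendant ($j_k > 0$, $m_k < 0$), or a weight contribution (necessarily $k = 1$, $1 \leq j_1 \leq \mu_{d-1}$, $m_1 = 0$). Writing $|S_1|$ for the number of strict-negative-mode positions, one has $|S_1| \leq -m \leq \tilde\lambda(j) - 1$, and the constraint $\sum_k j_k = j$ combined with $j_k \leq \tilde\lambda_k$ and the optional weight contribution at $k=1$ forces
\[
  j \;\leq\; \mu_{d-1} + \sum_{k=2}^{|S_1|+1} \tilde\lambda_k \;=\; \mu_{d+|S_1|} - \lambda_d
  \;\leq\; \mu_{d+\tilde\lambda(j)-1} - \lambda_d
  \;\leq\; \mu_{d+\tilde\lambda(j)-2} \, ,
\]
the final step being monotonicity of $\lambda$. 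This contradicts $j > \mu_{d+\tilde\lambda(j)-2}$, so every summand vanishes and $W^j_m v = 0$. I expect this inequality to be the main obstacle; it is the structural reason why the theorem only allows shifts $S_j$ at positions $j \leq \mu_{d-1}$.

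For the final claim $W^j_m v \neq 0$ when $m < 1 - \tilde\lambda(j)$, I would exhibit a specific non-vanishing summand: $j_k = \tilde\lambda_k$ and $m_k = -1$ for $k < \tilde\lambda(j)$, $j_{\tilde\lambda(j)} = j - \mu_{d+\tilde\lambda(j)-2} \in [1, \tilde\lambda_{\tilde\lambda(j)}]$, $m_{\tilde\lambda(j)} = m + \tilde\lambda(j) - 1 \leq -1$, and $j_k = m_k = 0$ elsewhere. Each factor then contributes a distinct PBW basis element in its Verma-type module; competing summands with the same $(m_k)$-pattern give linearly independent products via PBW in each factor, while those with different $(m_k)$-patterns lie in different graded components of the tensor product, so no cancellation can occur. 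Hence $W^j_m \notin \mathcal{I}_{U_r}(\tilde\lambda)$, which together with the previous paragraphs establishes properness of $\mathcal{I}_{U_r}(\tilde\lambda)$ and thus non-triviality and cyclicity of $U_r/\mathcal{I}_{U_r}(\tilde\lambda)$.
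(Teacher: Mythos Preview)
Your approach is correct in spirit and genuinely different from the paper's. The paper applies the embedding of \cref{l:embed} to the \emph{finer} partition $\lambda$, takes $d-1$ tensor factors with non-zero highest weights, and then invokes the ``non-negative of level $d$'' machinery (\cref{d:lambdann}, \cref{l:newpart}) to identify the annihilator with the ideal attached to $\tilde\lambda$. In that route the shifts $S_j$ arise only as certain polynomial combinations of the individual weights $Q^k_{j_k}$, so one must separately argue that every tuple $(S_1,\dots,S_{\mu_{d-1}})$ is realized. Your route---embedding with respect to $\tilde\lambda$ directly and putting a single highest-weight vector on the first factor with weights $(S_1,\dots,S_{\mu_{d-1}},0,\dots,0)$---sidesteps that surjectivity question entirely and makes the role of the bound $j\leq\mu_{d-1}$ transparent.

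There is, however, a small gap in your key inequality for the case $1-\tilde\lambda(j)\leq m<0$. Your displayed bound
\[
  j \;\leq\; \mu_{d-1} + \sum_{k=2}^{|S_1|+1} \tilde\lambda_k
\]
only covers the situation $m_1=0$ (weight or trivial at $k=1$). If instead $1\in S_1$ (descendant at $k=1$), then $j_1$ can be as large as $\tilde\lambda_1=\mu_d>\mu_{d-1}$, and one finds $j\leq\sum_{k=1}^{|S_1|}\tilde\lambda_k$, which can exceed your displayed bound when $\lambda_d>\lambda_{d+|S_1|}$. The fix is immediate: in that case $\sum_{k=1}^{|S_1|}\tilde\lambda_k=\mu_{d+|S_1|-1}\leq\mu_{d+\tilde\lambda(j)-2}$ directly, which already contradicts $j>\mu_{d+\tilde\lambda(j)-2}$. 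So split into the two cases $1\in S_1$ and $1\notin S_1$; both reach the same contradiction, the first being the standard level-one argument and the second being where the constraint $j_1\leq\mu_{d-1}$ is essential. With this case distinction made explicit, your argument is complete.
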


\begin{proof}
The proof goes along the same lines as for \cref{p:leftideals}. The main difference is that we consider highest weight vectors with non-zero weights. We use again the embedding of $\mathcal{W}(\mathfrak{gl}_r)$ in $\mathcal{W}(\mathfrak{gl}_{\lambda_1}) \times \ldots \times \mathcal{W}(\mathfrak{gl}_{\lambda_p})$, with the explicit formula from \eqref{eq:embedding}.

Let $v_1$ be a highest weight vector with highest weight $Q^1 = (Q^1_1, \ldots, Q^1_{\lambda_1})$ for $\mathcal{W}(\mathfrak{gl}_{\lambda_1} )$. That is, $X^{1,j}_{m} v_1 = \delta_{m,0} Q^1_j v_1$ for all $j \in [\lambda_1]$ and $m \geq 0$, and the cyclic module generated by $v_1$ is spanned by elements of the form $X^{1,j_1}_{m_1 }\cdots X^{1,j_n}_{m_n} v_1$ with $m_1,\ldots,m_n < 0$. 

Let $v_2, \ldots, v_p$ be highest weight vector with weight zero as in the proof of \cref{p:leftideals}. Construct the tensor product $v = v_1 \otimes \cdots \otimes v_p$. We want to find the annihilator of $v$ in $\mathcal{W}(\mathfrak{gl}_r)$. It is clear that $v$ is annihilated by all modes $W^j_m$ that are non-negative of degree two with respect to $\lambda$ (see \cref{d:lambdann}) and such that $j > \lambda_1$, since all monomials in those modes will contain either a positive mode or two zero modes from different factors (and only the zero modes from the first factor $\mathcal{W}(\mathfrak{gl}_{\lambda_1} )$ act non-trivially). Furthermore, it is clear that the zero modes $W^j_0$ for $j \in [\lambda_1]$ act as $W^j_0 v = Q^1_j v$. Let us set $S_j = Q^1_j$ for $j \in [\lambda_1]$ and $S_j=0$ for $j > \lambda_1$. Using \cref{l:newpart}, we conclude that the annihilator is the left ideal generated by the modes $W^j_m - \delta_{m,0} S_j$, with $m \geq 1 - \tilde{\lambda}(j)$, for the new partition $\tilde{\lambda} = (\lambda_1+\lambda_2, \lambda_3, \ldots, \lambda_p)$ of $r$. This is the case $d=2$.

If we started instead with a highest weight vector $v_k$ for any other factor $2 \leq k \leq p$, we would reach the same conclusion, with the weights $S_j$ being non-zero only for $j \in [\lambda_k]$. Thus we can see it a subcase of the previous one. This concludes the case $d=2$.

For general $d \in [p]$, consider highest weight vectors $v_1,\ldots,v_{d-1}$ with non-zero weights for $\mathcal{W}(\mathfrak{gl}_{\lambda_j})$ with $j \in [d-1]$, and highest weight vectors $v_d, \ldots, v_p$ with zero weights. The tensor product $v = v_1 \otimes \cdots \otimes v_p$ is annihilated by all modes $W^j_m$ that are non-negative of degree $d$ with respect to $\lambda$ and such that $j > \lambda_1 + \ldots + \lambda_{d-1} = \mu_{d-1}$, since all monomials in these modes contain either a positive mode or $d$ zero modes from distinct factors. Further, the zero modes $W^j_0$ for $j \in [\mu_{d-1}]$ will act as $W^j_0 v = S_j v$ for some constants $S_j$ that are obtained as polynomials in the highest weights of the vector $v_1,\ldots,v_{d-1}$. Set $S_j = 0$ for $j > \mu_{d-1}$; we conclude that the annihilator is the left ideal generated by the modes  $W^j_m - \delta_{m,0} S_j$, with $m \geq 1 - \tilde{\lambda}(j)$, for the new partition $\tilde{\lambda} = (\mu_d, \lambda_{d+1}, \ldots, \lambda_p)$ of $r$.  As in the $d=2$ case, considering the tensor product of $d-1$ other highest weight vectors is a sub-case of this one.
\end{proof}

We can rephrase the theorem a little bit. In the end, we can forget about the original partition $\lambda$ that we started with. So let us rename $\tilde \lambda$ as $\lambda$. We get the following reformulation.

\begin{corollary}\label{c:leftidealsshifted}
Let $\lambda = (\lambda_1, \lambda_2, \ldots, \lambda_p)$ be an integer partition of $r$. Let $S_j \in \mathbb{C}$ for $j \in [\lambda_1-\lambda_2]$ and $S_j =0$ for $j > \lambda_1-\lambda_2$. Let $\mathcal{I}_{U_r}(\lambda)$ be the left ideal generated by the shifted modes $W^j_m - S_j \delta_{m,0}$, with $j \in [r]$ and $m \geq 1 - \lambda(j)$. Then $U_r / \mathcal{I}_{U_r}(\lambda)$ is a cyclic left module generated by a non-zero vector $v$. Furthermore, $W^j_m v \neq0$ for all $j \in [r]$ and $m < 1 - \lambda(j)$, and thus
$W^j_m \notin \mathcal{I}_{U_r}( \lambda)$ for all $j \in [r]$ and $m < 1 - \lambda(j)$.
\end{corollary}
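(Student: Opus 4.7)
The plan is to prove the corollary by an explicit cyclic module construction in $\mathcal{W}(\mathfrak{gl}_r)$, generalizing the proof of \cref{t:leftidealsshifted} to allow a slightly more flexible choice of highest weight vector in the largest factor. Using the embedding $\mathcal{W}(\mathfrak{gl}_r) \hookrightarrow \mathcal{W}(\mathfrak{gl}_{\lambda_1}) \times \cdots \times \mathcal{W}(\mathfrak{gl}_{\lambda_p})$ of \cref{l:embed}, I would take $v_1$ to be a highest weight vector for $\mathcal{W}(\mathfrak{gl}_{\lambda_1})$ with weight $Q^1 = (S_1, \ldots, S_{\lambda_1-\lambda_2}, 0, \ldots, 0)$, take $v_k$ for $k \geq 2$ to be vacuum vectors for $\mathcal{W}(\mathfrak{gl}_{\lambda_k})$, and set $v = v_1 \otimes v_2 \otimes \cdots \otimes v_p$, which is non-zero. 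The key new ingredient compared to \cref{t:leftidealsshifted} is that only the last $\lambda_2$ coordinates of $Q^1$ are forced to vanish, rather than all but $\mu_{d-1}$ of them; this turns out to be exactly what is needed to cancel certain off-diagonal contributions coming from the embedded negative modes, and it also covers cases (such as $\lambda = (3,2)$) for which no choice of original partition $\lambda'$ and $d$ in \cref{t:leftidealsshifted} yields the maximal shift $\lambda_1 - \lambda_2$.

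The main task is to evaluate $W^j_m v$ via the mode expansion \eqref{eq:embedding}, writing $W^j_m$ as a sum of monomials $\prod_k X^{k,j_k}_{m_k}$ with $\sum_k j_k = j$ and $\sum_k m_k = m$. A monomial acts trivially on $v$ whenever any of the following killing conditions holds: (a) some $m_k > 0$; (b) some $k \geq 2$ with $j_k > 0$ has $m_k = 0$, since $X^{k,j_k}_0 v_k = 0$ by the vacuum condition; (c) $k = 1$, $m_1 = 0$, and $j_1 > \lambda_1 - \lambda_2$, since $Q^1_{j_1} = 0$. Hence for a monomial to survive, every $m_k$ is non-positive, every $k \geq 2$ with $j_k > 0$ satisfies $m_k \leq -1$, and $m_1 = 0$ is allowed only if $j_1 \leq \lambda_1 - \lambda_2$. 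A case analysis optimizing $m = \sum_k m_k$ under these constraints, using that $\lambda$ is weakly decreasing so that the greedy packing $(j_1, \ldots, j_{\lambda(j)}) = (\lambda_1, \ldots, \lambda_{\lambda(j)-1}, j - \mu_{\lambda(j)-1})$ is optimal, then shows that the maximum surviving $m$ is $0$ (the diagonal contribution $X^{1,j}_0$ acting as $S_j$) when $j \leq \lambda_1 - \lambda_2$, and $-\lambda(j)$ otherwise. This yields $W^j_m v = S_j \delta_{m,0} v$ for all $m \geq 1 - \lambda(j)$, so the shifted operators lie in the annihilator of $v$.

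For the converse, that $W^j_m \notin \mathcal{I}_{U_r}(\lambda)$ when $m < 1 - \lambda(j)$, I would exhibit a surviving monomial at $m = -\lambda(j)$ and push $m$ further down by decreasing one $m_k$: for $j > \lambda_1$, the greedy packing above with $m_k = -1$ for $k \in [\lambda(j)]$; for $j \leq \lambda_1$, the single-factor monomial $X^{1,j}_{-1}$. By the PBW theorem applied factor by factor in the tensor product of Verma-type modules, the resulting vector sits at a strictly positive level and is linearly independent from $v$, so $W^j_m v \neq 0$ in $U_r/\mathcal{I}_{U_r}(\lambda)$. The main obstacle is the combinatorial step in the second paragraph: one must confirm that condition (c) --- the vanishing of $Q^1_j$ for $j > \lambda_1 - \lambda_2$ --- is exactly tight, in the sense that it kills all surviving monomials with $-\lambda(j) < m < 0$ but leaves the diagonal shift at $m = 0$ for $j \leq \lambda_1 - \lambda_2$ untouched.
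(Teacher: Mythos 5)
Your proposal is correct, but it takes a genuinely different route from the paper: the paper gives no independent argument for this corollary at all, presenting it as a pure relabelling of \cref{t:leftidealsshifted} (rename $\tilde\lambda$ as $\lambda$), whereas you rerun the tensor-product construction directly for the target partition, taking $v_1$ to be a highest weight vector for $\mathcal{W}(\mathfrak{gl}_{\lambda_1})$ with weight $(S_1,\dots,S_{\lambda_1-\lambda_2},0,\dots,0)$, vacua in the other factors, and analysing which monomials in \eqref{eq:embedding} survive. The comparison is instructive. The renaming is shorter, but it only produces shifts for $j\in[\mu_{d-1}]$, and after relabelling $\mu_{d-1}\le\lambda_1-\lambda_2$ with equality attainable only if $\lambda_1$ can be split into parts all of size at least $\lambda_2$ with smallest part $\lambda_2$; this fails exactly when $0<\lambda_1-\lambda_2<\lambda_2$ — your example $\lambda=(3,2)$, and more importantly the partitions $(r'+1,r',\dots,r')$ with $r'\ge 2$ arising for $r=1\pmod{s}$, $s\ge2$, which is precisely the case exploited later in \cref{t:shifts}. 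So your direct construction is not merely an alternative: it establishes the corollary in the full generality in which it is stated and used, which the renaming alone does not, and you correctly spotted this. As for the combinatorial step you flag as the main obstacle, it does hold, by a short counting argument: if a surviving monomial uses the zero-cost slot in the first factor (so $m_1=0$, $j_1\le\lambda_1-\lambda_2$) together with $t$ factors carrying strictly negative modes, then
\begin{equation*}
  j \;\le\; (\lambda_1-\lambda_2) + (\lambda_2+\lambda_3+\cdots+\lambda_{t+1}) \;\le\; \mu_t \,,
\end{equation*}
while if it does not use that slot then $j\le\mu_t$ directly; since $\mu_{\lambda(j)-1}<j$, either way $t\ge\lambda(j)$ and hence $m\le-\lambda(j)$. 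Thus the only surviving contribution to $W^j_m v$ with $m\ge 1-\lambda(j)$ is the diagonal zero mode $X^{1,j}_0$ acting as $S_j\delta_{m,0}$, confirming your tightness claim, and the non-vanishing of $W^j_m v$ for $m<1-\lambda(j)$ follows from your exhibited all-negative monomial together with the PBW/Verma-module argument, exactly as in the paper's proof of \cref{p:leftideals}.
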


\begin{remark}
In essence, what this means is that, given any partition $\lambda$ of $r$ and left ideal generated by the modes $W^j_m$ with $j \in [r]$ and $m \geq 1 - \lambda(j)$, we can shift the zero modes $W^j_0$ for $j \in [\lambda_1-\lambda_2]$, and the shifted modes generate a new left ideal such that all $W^j_m$ with $m < 1 - \lambda(j)$ are not in the ideal. In the language of \cite{BBCC21}, one can say that the modes $W^j_0$ for $ j \in [\lambda_1-\lambda_2]$ are \emph{extraneous}. 
\end{remark}

\subsubsection{Step 2: determining the homogenization of $\mathcal{I}_{U_r}(\lambda;S)$}

Just as for the $(r,s)$-Airy ideals, the homogenization is easy to obtain. By the same argument as before, we conclude that the homogenization  $\mathcal{I}_{U^\hslash_r}(\lambda;S) \subset U^\hslash_r$ is generated by the homogenization of the shifted modes, that is, by $W^{\hslash,j}_m(S) := \hslash^j (W^j_m- \delta_{m,0} S_j)$ for $j \in [r]$ and $m \geq  1-\lambda(j)$. Therefore, by \cref{l:hom}, we conclude that
$$
[ \mathcal{I}_{U_r^\hslash}(\lambda;S), \mathcal{I}_{U_r^\hslash}(\lambda;S)] \subseteq \hslash^2 \mathcal{I}_{U_r^\hslash}(\lambda;S).
$$

\subsubsection{Step 3: finding a good representation of $U^\hslash_r$ in $\mathcal{D}_{A}^\hslash$}

For the modes $W^{\hslash,j}_m(S)$ with $m >0$, we use the same representation $\mu: U^\hslash_r \to \mathcal{D}_{\mathbb{N}^*}^\hslash$ as before from \eqref{eq:ww}. We extend it to the shifted modes by
\begin{equation}
\mu(W^{\hslash,i}_0(S) ) = \mu(W^{\hslash,i}_0) - \sum_{n=1}^\infty \hslash^n S_{i,n},
\end{equation}
where the $S_{i,n} \in \mathbb{C}$ for $i \in [\lambda_1-\lambda_2]$ and $S_{i,n} = 0$ for $i > \lambda_1-\lambda_2$. It is easy to see that mapping the shifts $\hslash^i S_i$ to the series $ \sum_{n=1}^\infty \hslash^n S_{i,n}$ still produces a representation of the universal enveloping algebra.

As before, for $ s \in [r+1]$ and $s$ coprime with $r$,
we define a new representation $\rho: U^\hslash_r \to \mathcal{D}_{\mathbb{N}^*}^\hslash$ via conjugation:
\begin{equation}\label{eq:rep2}
\rho( W^{\hslash,i}_k(S)) = T_s \mu(W^{\hslash,i}_k(S)) T_s^{-1}, \qquad \text{with} \qquad T_s = \exp \Big( -\frac{J_s}{s\hslash} \Big).
\end{equation}
If we keep only the modes such that $k  \geq - \lfloor \frac{s(i-1)}{r} \rfloor $, we get that
\begin{equation}
\rho( W^{\hslash,i}_k) = \hslash J_{r k + s (i-1)} + \hslash S_{i,1} + O(\hslash^2),
\end{equation}
which is in the right form.

As before, a partition $\lambda$ of $r$ such that $1-\lambda(i) = - \lfloor \frac{s(i-1)}{r} \rfloor$ exists if and only if $r = \pm 1 \pmod{s}$. For $s=1$, the partition is $\lambda=(r)$. For $2 \leq s \leq r-1$,  we can write $r = r' s + r''$ with $r'' \in \{1, s-1\}$, and the partition is given by $\lambda = (\lambda_1,\ldots, \lambda_s)$ with
\begin{equation}
\lambda_1 = \ldots = \lambda_{r''} = r'+1, \qquad \lambda_{r''+1} = \ldots = \lambda_s = r'.
\end{equation}
For $s=r+1$, the partition is  $\lambda = (1,1,\ldots,1)$.

We notice that, for $s \geq 2$, if $r= 1 \pmod{s}$, $\lambda_1 = \lambda_2+1$, which means that the only non-zero shifts are $S_{1,n}$; that is, we can only shift the zero mode $W^1_0$. For $s \geq 3$, if $r=-1 \pmod{s}$, $\lambda_1 = \lambda_2$, and all shifts are zero; we are back to the $(r,s)$-Airy structures. 

In the case $s=1$, things are more interesting. The partition is $\lambda = (r)$. We are then allowed to shift all zero modes, that is, $S_{i,n} \neq 0$ for all $i \in [r]$ and $n \geq 1$. 

To summarize these conditions, we define the notion of a set of $s$-consistent shifts:
\begin{definition}\label{d:consistent}
Let $S=\{ S_{i,n} \}_{i \in [r], n \in \mathbb{N}^*}$ be a set of complex numbers. We say that it is \emph{$s$-consistent} if the following two conditions are satisfied:
\begin{itemize}
\item If $s \geq 2$ and $r = 1 \pmod{s}$, then $S_{i,n} = 0$ for all $2 \leq i \leq r$, and:
\item If $s \geq 3$ and $r = -1 \pmod{s}$, then $S_{i,n} = 0$ for all $i \in [r]$.
\end{itemize}
\end{definition}

We then obtain the following theorem:

\begin{theorem}\label{t:shifts}
  Let $ r \geq 2$, and $ s \in [r+1]$ such that $ r = \pm 1 \pmod{s}$. Let $\rho: U^\hslash_r \to \mathcal{D}_{\mathbb{N}^*}^\hslash$ be the representation defined in \eqref{eq:rep2}. Let $\mathcal{I}_{U_r^\hslash}(S) \in U^\hslash_r$ be the left ideal generated by the shifted modes $W^{\hslash,j}_m(S)$ with $j \in [r]$ and $m \geq - \lfloor \frac{s(i-1)}{r} \rfloor $, where the set of shifts $S$ is $s$-consistent,
 and $\mathcal{I}(S)$ the corresponding left ideal in $\mathcal{D}_{\mathbb{N}^*}^\hslash$ generated by $\rho(\mathcal{I}_{U_r^\hslash}(S))$. Then $\mathcal{I}(S)$ is an Airy ideal, which we call the \emph{shifted $(r,s)$-Airy structure.}
 
 For $s=1$, all zero modes are shifted, that is,
 \begin{equation}
 \rho(W^{\hslash,j}_0(S)) = \rho(W^{\hslash,j}_0) - \sum_{n=1}^\infty \hslash^n S_{j,n}.
 \end{equation}
 For $s \geq 2$ and $r = 1 \pmod{s}$, only the first zero mode is shifted, that is,
 \begin{equation}
  \rho(W^{\hslash,j}_0(S)) = \rho(W^{\hslash,j}_0) - \delta_{j,1} \sum_{n=1}^\infty \hslash^n S_{1,n}.
 \end{equation}
 For $s \geq 3$ and $r = -1 \pmod{s}$, no shifts are allowed.
\end{theorem}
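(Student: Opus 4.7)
The plan is to verify the two hypotheses of \cref{l:uea} for the left ideal $\mathcal{I}_{U_r^\hslash}(S)$ and the representation $\rho$ defined in \eqref{eq:rep2}, so that the image generates an Airy ideal in $\mathcal{D}_{\mathbb{N}^*}^\hslash$. The three-step recipe explained after \cref{l:hom} is already essentially set up in Steps 1--3 immediately preceding the theorem; what remains is to piece the ingredients together and check that the $s$-consistency condition in \cref{d:consistent} is exactly what matches the partition structure forced by $r = \pm 1 \pmod{s}$.

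For the commutator condition, I would first invoke \cref{c:leftidealsshifted}, which shows that $\mathcal{I}_{U_r}(\lambda;S)$ is the annihilator ideal of a highest weight vector obtained as the tensor product of highest weight vectors for the factors $\mathcal{W}(\mathfrak{gl}_{\lambda_k})$ under the embedding of \cref{l:embed}. Step~2 then identifies the homogenization $\mathcal{I}_{U_r^\hslash}(\lambda;S)$ as the ideal generated by $W^{\hslash,j}_m(S) = \hslash^j(W^j_m - \delta_{m,0} S_j)$ for $m \geq 1 - \lambda(j)$, since the $W^j_m$ form a PBW basis and none of the $W^j_m$ with $m < 1-\lambda(j)$ lies in $\mathcal{I}_{U_r}(\lambda;S)$. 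By \cref{l:hom} the homogenization automatically satisfies $[\mathcal{I}_{U_r^\hslash}(\lambda;S), \mathcal{I}_{U_r^\hslash}(\lambda;S)] \subseteq \hslash^2 \mathcal{I}_{U_r^\hslash}(\lambda;S)$, which passes to the image $\mathcal{I}(S)$ in $\mathcal{D}_{\mathbb{N}^*}^\hslash$.

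For the generating-set condition, I would compute $\rho(W^{\hslash,i}_k(S))$. Since $\rho$ is defined by conjugating $\mu$ by $T_s = \exp(-J_s/(s\hslash))$ and since the shift only subtracts a series $\sum_{n \geq 1} \hslash^n S_{i,n}$ from the zero mode, one has $\rho(W^{\hslash,i}_k(S)) = \rho(W^{\hslash,i}_k) - \delta_{k,0} \sum_{n\geq 1} \hslash^n S_{i,n}$. Using the computation from \cite{BBCCN18} (already recorded in the excerpt above), for $k \geq -\lfloor s(i-1)/r \rfloor$ one obtains
\begin{equation}
  \rho(W^{\hslash,i}_k(S)) = \hslash J_{rk+s(i-1)} + \hslash\, \delta_{k,0} S_{i,1} + O(\hslash^2),
\end{equation}
so that, using the identification $J_m = \partial_{x_m}$ for $m>0$ from \eqref{eq:Js}, these operators have the required form $\hslash \partial_a + \hslash p_a(x_A) + O(\hslash^2)$, with the linear term $p_a(x_A)$ being a constant (i.e., the shift) when $a = s(i-1)$ for some $i$ and zero otherwise. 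Boundedness of the collection $\{\rho(W^{\hslash,i}_k(S))\}$ follows exactly as in the unshifted case of \cref{t:rsAs}, since the added constants do not affect boundedness.

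The main point that still needs care, and is where I expect the real content of the theorem to lie, is matching $1 - \lambda(i) = -\lfloor s(i-1)/r \rfloor$ with the $s$-consistency of $S$. From Appendix~B of \cite{BBCCN18} the required partitions $\lambda$ are: $(r)$ for $s=1$, $(r'{+}1,\dots,r'{+}1,r',\dots,r')$ (with $r''$ copies of $r'+1$, $r'' \in \{1,s-1\}$) for $2 \leq s \leq r-1$, and $(1,1,\dots,1)$ for $s=r+1$. Since \cref{c:leftidealsshifted} permits nonzero shifts only on $W^j_0$ with $j \leq \lambda_1 - \lambda_2$, I would compute $\lambda_1-\lambda_2$ case by case: it equals $r$ when $s=1$ (all zero modes can be shifted), equals $1$ when $s \geq 2$ and $r = 1 \pmod s$ (only $W^1_0$ can be shifted, corresponding to $r'' = 1$), equals $0$ when $s \geq 3$ and $r = -1 \pmod s$ (no shifts, corresponding to $r''=s-1$), and is automatically $0$ for $s=r+1$ where the partition is all ones. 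These are precisely the cases enumerated in \cref{d:consistent}, which completes the verification.
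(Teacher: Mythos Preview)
Your proposal is correct and follows essentially the same approach as the paper: the three-step recipe (left ideal via \cref{c:leftidealsshifted}, homogenization via \cref{l:hom}, representation check via conjugation by $T_s$) is exactly how the paper structures the argument in the paragraphs leading up to the theorem, and your case-by-case computation of $\lambda_1-\lambda_2$ for the relevant partitions is precisely the verification the paper records just before stating the result. The paper does not give a separate proof block for this theorem because Steps 1--3 of \S\ref{s:shiftedrs} already constitute the proof.
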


The $s=1$ case is particularly interesting. Since $\mathcal{I}(S)$ is an Airy ideal, there exists a unique partition function $Z$ such that $\mathcal{I}(S) Z = 0$. Explicitly, this means that
\begin{equation}
 \rho(W^{\hslash,j}_m(S)) Z = 0 \qquad \text{for $j\in[r], m \geq 0$.} 
\end{equation}
In other words, this means that
\begin{equation}
 \rho(W^{\hslash,j}_m) Z = \left(\sum_{n=1}^\infty \hslash^n S_{j,n} \right) Z \qquad \text{for $j\in[r], m \geq 0$.} 
\end{equation}
Thus we can think of the partition function $Z$ for the shifted $(r,s)$-Airy structures as being a highest weight vector for $\mathcal{W}(\mathfrak{gl}_r)$ at self-dual level.

\begin{remark}\label{r:extra}
In the language of \cite{BBCC21}, the statement of \cref{t:shifts} is that for the $(r,s)$-Airy structures, there are extraneous zero modes only for the cases $s=1$ or $r = 1 \pmod{s}$. For $s=1$, all zero modes $W^j_0$, $j \in [r]$, are extraneous, while for $r=1 \pmod{s}$ only the zero mode $W^1_0$ is extraneous.
\end{remark}

\subsubsection{More general representations}

Just as for the $(r,s)$-Airy structures, we can construct more general shifted $(r,s)$-Airy structures via conjugation. As before, we construct a new class of representations
$\rho': U^\hslash_r \to \mathcal{D}_{\mathbb{N}^*}^\hslash$ via conjugation:
\begin{equation}\label{eq:rep2p}
\rho'( W^{\hslash,i}_k(S)) = \hat{\Phi}\hat{T} \mu(W^{\hslash,i}_k(S)) \hat{T}^{-1} \hat{\Phi}^{-1}.
\end{equation}
with $\hat{T}$ and $\hat{\Phi}$ defined in \eqref{eq:TPhi}.

Following the same arguments as in \cite{BBCCN18} and \cite{BKS23}, \cref{t:shifts} generalizes to this new class of representations:

\begin{proposition}\label{p:shiftsgen}
 Let $ r \geq 2$, and $ s \in [r+1]$ such that $ r = \pm 1 \pmod{s}$. Let $\rho': U^\hslash_r \to \mathcal{D}_{\mathbb{N}^*}^\hslash$ be the representation defined in \eqref{eq:rep2p}. Let $\mathcal{I}_{U_r^\hslash}(S) \in U^\hslash_r$ be the left ideal generated by the shifted modes $W^{\hslash,j}_m(S)$ with $j \in [r]$ and $m \geq - \lfloor \frac{s(i-1)}{r} \rfloor $, where the set of shifts $S$ is $s$-consistent,
 and $\mathcal{I}(S)$ the corresponding left ideal in $\mathcal{D}_{\mathbb{N}^*}^\hslash$ generated by $\rho'(\mathcal{I}_{U_r^\hslash}(S))$. Then $\mathcal{I}(S)$ is an Airy ideal, which we call the \emph{deformed and shifted $(r,s)$-Airy structure}.
\end{proposition}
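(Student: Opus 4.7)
The plan is to combine the argument from Theorem~\ref{t:shifts} (shifts under conjugation by $T_s$) with that of Proposition~\ref{p:rsAsgen} (deformation via $\hat{T}\hat{\Phi}$ conjugation without shifts). I would follow the same three-step template used throughout this section. Steps~1 and~2 happen entirely inside $U_r^\hslash$ and are identical to those in Theorem~\ref{t:shifts}: the left ideal $\mathcal{I}_{U_r^\hslash}(S)$ and the bracket condition $[\mathcal{I}_{U_r^\hslash}(S),\mathcal{I}_{U_r^\hslash}(S)] \subseteq \hslash^2 \mathcal{I}_{U_r^\hslash}(S)$ are independent of the choice of representation. Only Step~3 --- exhibiting a bounded generating set of $\rho'(\mathcal{I}_{U_r^\hslash}(S))$ in Airy form --- requires a new verification.

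The central observation is that the shift $-\delta_{k,0}\sum_{n\geq 1}\hslash^n S_{i,n}$ is a central scalar, and therefore commutes with both $\hat{T}$ and $\hat{\Phi}$. Hence
\begin{equation}
\rho'(W^{\hslash,i}_k(S)) = \rho'(W^{\hslash,i}_k) - \delta_{k,0} \sum_{n \geq 1} \hslash^n S_{i,n}.
\end{equation}
By Proposition~\ref{p:rsAsgen}, the family $\{\rho'(W^{\hslash,i}_k)\}_{i \in [r],\, k \geq -\lfloor s(i-1)/r \rfloor}$ is already bounded and has the Airy form $\hslash \partial_a + \hslash p_a(x_A) + O(\hslash^2)$ (with the $F_{0,1}$, $F_{\frac{1}{2},1}$, $F_{0,2}$ corrections encoded by $\hat{T}$ and $\hat{\Phi}$). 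Subtracting the scalar formal series $\delta_{k,0}\sum_{n\geq 1}\hslash^n S_{i,n}$ contributes at most a constant $-\hslash S_{i,1}$ at order $\hslash$, which is absorbed into $\hslash p_a(x_A)$, plus higher-order scalars absorbed into the $O(\hslash^2)$ tail. Boundedness is unaffected because the shift has a non-zero contribution at only finitely many generators (at $k=0$, with $i$ further restricted by $s$-consistency).

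The only substantive content carried over is the $s$-consistency condition of Definition~\ref{d:consistent}: this is exactly the condition under which the highest weight vector $v$ of Corollary~\ref{c:leftidealsshifted} has the correct weight structure, so that the admissible modes of Step~3 match the annihilator at the enveloping-algebra level. That was already handled in Theorem~\ref{t:shifts} and transfers verbatim. I therefore do not anticipate any genuine obstacle: no new computation with $\hat{T}$, $\hat{\Phi}$, or $\mu$ is required beyond what was already performed in \cite{BBCCN18,BKS23}, and the proof reduces to a combined reading of Theorem~\ref{t:shifts} and Proposition~\ref{p:rsAsgen}, with scalar shifts passing transparently through the conjugation.
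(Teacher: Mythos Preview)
Your proposal is correct and follows essentially the same approach as the paper, which simply states that ``following the same arguments as in \cite{BBCCN18} and \cite{BKS23}, \cref{t:shifts} generalizes to this new class of representations'' without further detail. In fact, you have spelled out more explicitly than the paper does why the scalar shifts pass unchanged through the $\hat{\Phi}\hat{T}$ conjugation and why the resulting operators remain in Airy form and bounded.
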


\subsection{Other shifts}

In the previous section we showed that we can shift some zero modes to get new shifted $(r,s)$-Airy structures. But are we allowed to shift other modes that are not zero modes? The answer is no, because of the following simple lemma.

\begin{lemma}\label{l:othershifts}
Let $\lambda = (\lambda_1, \lambda_2, \ldots, \lambda_p)$ be an integer partition of $r$.  Fix a pair $(\alpha,\beta)$, with $\alpha \in [r]$ and $0 \neq \beta \in \mathbb{Z}$. Let $\mathcal{I}$ be the left ideal generated by the modes $W^j_m - S  \delta_{j,\alpha} \delta_{m,\beta}$ with $j \in [r]$ and $m \geq 1 - \lambda(j)$, where $0 \neq S \in \mathbb{C}$. In other words, we shift only one mode, but it is a non-zero mode. Then $\mathcal{I} \simeq U_r$. That is, the left ideal is not proper.
\end{lemma}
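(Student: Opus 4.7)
\textit{Proof plan.} Suppose for contradiction that $\mathcal{I}$ is a proper left ideal, and consider the nonzero cyclic left $U_r$-module $M := U_r/\mathcal{I}$ generated by $v := 1 + \mathcal{I}$. By construction of $\mathcal{I}$ the vector $v$ satisfies $W^j_m v = 0$ for every $(j,m) \neq (\alpha,\beta)$ in the generating range $m \geq 1-\lambda(j)$, while $W^\alpha_\beta v = Sv$. The plan is to derive $v = 0$ by exhibiting an element $A$ in the generating set of $\mathcal{I}$ (hence satisfying $Av = 0$) such that the commutator $[A, W^\alpha_\beta]$ has the schematic form $c\, W^\alpha_\beta + (\text{terms that annihilate }v)$ with $c \neq 0$. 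Granted this, computing $[A, W^\alpha_\beta] v$ in two ways gives on one side $A W^\alpha_\beta v - W^\alpha_\beta A v = S\,Av = 0$, and on the other side $c\,W^\alpha_\beta v = cSv$; hence $cSv = 0$, forcing $v = 0$ since $c, S \neq 0$.

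To produce such an $A$ I distinguish two cases according to $\alpha$. The \emph{main case $\alpha \geq 2$} uses the Virasoro zero mode $L_0 := \tfrac{1}{2}(W^1 W^1)_0 - W^2_0$, reflecting the free-field identity $L = \tfrac{1}{2}\sum_i :J^i J^i: = \tfrac{1}{2}:W^1 W^1: - W^2$ in the Heisenberg realization at self-dual level. Since $\lambda(1) = 1$ and $(1,m) \neq (\alpha,\beta)$ for all $m$ when $\alpha \geq 2$, every positive mode $W^1_m$ ($m \geq 0$) annihilates $v$; expanding the normal-ordered product $(W^1 W^1)_0 = (W^1_0)^2 + 2\sum_{n \geq 1} W^1_{-n} W^1_n$ then gives $(W^1 W^1)_0\, v = 0$, and combined with $W^2_0 v = 0$ this yields $L_0 v = 0$, so $L_0$ plays the role of $A$. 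Because $W^\alpha$ is a normal-ordered product of currents of distinct Heisenberg species, it is Virasoro primary of weight $\alpha$, giving $[L_0, W^\alpha_\beta] = -\beta W^\alpha_\beta$; hence $c = -\beta \neq 0$.

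The \emph{residual case $\alpha = 1$} (for which necessarily $\beta \geq 1$, since $\lambda(1) = 1$) cannot be handled by $L_0$: the shift $W^1_\beta v = Sv$ obstructs the vanishing of $L_0 v$, because the term $W^1_{-\beta} W^1_\beta$ contributes $S\, W^1_{-\beta} v$, which need not be zero. Instead I take $A = W^2_0$ directly, which lies in the generating set since $\lambda(2) \geq 1$ and $(2,0)\neq (1,\beta)$. A Wick calculation in the Heisenberg realization gives the general commutator
\begin{equation*}
  [W^1_m, W^k_n] = m(r-k+1)\, W^{k-1}_{m+n} \qquad (k \in [r],\ m,n \in \Z),
\end{equation*}
obtained from the OPE $W^1(z) W^k(w) \sim (r-k+1) W^{k-1}(w)/(z-w)^2$ and the identity $\sum_i \partial_{J^i} e_k = (r-k+1) e_{k-1}$. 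Specializing to $(m,k,n) = (\beta,2,0)$ yields $[W^2_0, W^1_\beta] = -\beta(r-1) W^1_\beta$, so $c = -\beta(r-1)$, which is nonzero for $r \geq 2$. Both cases thus force $v = 0$, contradicting properness of $\mathcal{I}$.

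The conceptual point, and the small obstacle, is that the uniform Virasoro argument breaks exactly when the shifted field is the $U(1)$ current $W^1$ itself, precisely because $W^1$ appears quadratically in $L_0$; this is circumvented by using the next-simplest generator $W^2_0$, whose commutator with $W^1_\beta$ already contains $W^1_\beta$ with a nonzero coefficient proportional to $r-1$, which is why the hypothesis $r \geq 2$ (implicit throughout this section) is needed.
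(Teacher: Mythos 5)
Your proof is correct, and it reaches the same destination as the paper by the same underlying mechanism — commuting the shifted generator against a zero mode lying in the (annihilator of the) ideal so as to extract a nonzero scalar — but it executes this more carefully and with a case split that the paper does not have. The paper's proof is a one-line ideal-membership computation: it asserts $[W^2_0, W^j_m] = m\, W^j_m$, applies this to $W^\alpha_\beta - S$, and concludes $\beta S \in \mathcal{I}$. Strictly speaking, with the paper's choice of strong generators $W^j = e_j(J^1,\dots,J^r)$, the mode $W^2_0$ is not the Virasoro zero mode, and the quoted commutator acquires correction terms; the identity is exact only for the genuine Virasoro zero mode $L_0 = \tfrac{1}{2}(W^1 W^1)_0 - W^2_0$, and showing $L_0$ annihilates the cyclic vector requires exactly the observation you make — it works cleanly when $\alpha \geq 2$ (all modes $W^1_{m\geq 0}$ and $W^2_0$ are unshifted generators), but is obstructed when $\alpha = 1$, where the term $W^1_{-\beta}W^1_\beta$ picks up $S\,W^1_{-\beta}$. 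Your treatment of the residual case $\alpha = 1$ via the explicit Heisenberg commutator $[W^2_0, W^1_\beta] = -\beta(r-1)W^1_\beta$ is a clean and correct way around this, and the coefficient $r-1$ is indeed nonzero for $r \geq 2$. So your route buys rigor (it repairs the literal gap in the quoted commutator and identifies precisely where the naive Virasoro argument fails), at the cost of a case distinction and some free-field computation; the paper's route buys brevity, implicitly reading $W^2_0$ as the Virasoro zero mode. Two minor remarks: the primarity of $W^\alpha$ is more than you need (only the $L_0$-grading, i.e.\ definite conformal weight, enters), and, like the paper, you implicitly assume $\beta \geq 1-\lambda(\alpha)$ so that the shifted mode actually occurs among the generators — otherwise $\mathcal{I}$ is the unshifted ideal and is proper; this is clearly the intended hypothesis.
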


\begin{proof}
The Virasoro zero-mode $W^2_0$ is always in the ideal $\mathcal{I}$. Thus, for any mode $W^j_m$, we have
\begin{equation}
[W^2_0, W^j_m] = m W^j_m.
\end{equation}
This means that if we shift the mode $W^\alpha_\beta$, we get
\begin{equation}
[ W^2_0, W^\alpha_\beta - S] = [W^2_0, W^\alpha_\beta] = \beta W^{\alpha}_\beta = \beta(W^\alpha_\beta - S) + \beta S.
\end{equation}
The left-hand-side is clearly in the ideal $\mathcal{I}$, and thus the right-hand-side must be too. Since the first term on the right-hand-side is in the ideal, we conclude that $\beta S \in\mathcal{I}$. But $\beta S \in \mathbb{C}$, and we conclude that $\mathcal{I}\simeq U_r$.
\end{proof}

The upshot of this simple lemma is that the homogenization of $\mathcal{I}$ is the whole Rees universal enveloping algebra $U_r^\hslash$. It is thus impossible to find a representation that maps its generators to operators of the required form in a Rees Weyl algebra, and we conclude that we cannot obtain Airy ideals in this way.

\begin{remark}
In the language of \cite{BBCC21}, \cref{l:othershifts} can be reformulated as the statement that for the $(r,s)$-Airy structures, only zero modes can be extraneous. As we already classified in \cref{t:shifts} what zero modes are extraneous, this concludes the analysis of extraneous modes for the $(r,s)$-Airy structures.
\end{remark}

\section{Shifted loop equations and shifted topological recursion}

\label{s:shiftedle}

In section \ref{s:shifted} we constructed new Airy structures, which we called ``shifted $(r,s)$-Airy structures''. In the case $s=1$, the partition function associated to these shifted $(r,1)$-Airy structures is a highest weight vector for the $\mathcal{W}(\mathfrak{gl}_r)$-algebra at self-dual level.

In general, as shown in \cite{BBCCN18}, the differential constraints associated to the $(r,s)$-Airy structures can be reformulated as loop equations for a system of correlators on the $(r,s)$-spectral curves. Along similar lines, in this section we show that the differential constraints associated to the shifted $(r,s)$-Airy structures can be recast as ``shifted loop equations'' for another system of correlators on shifted $(r,s)$-spectral curves. We then find a recursive formula that solves these shifted loop equations; it turns out to look like the usual topological recursion formula, but with some correlators appropriately shifted. Unsurprisingly (or perhaps uncreatively) we call this recursive formula ``shifted topological recursion''.
%

\subsection{Spectral curves, loop equations and topological recursion}

We refer the reader to \cite{BBCKS} for a careful treatment of spectral curves, loop equations and topological recursion. An introduction to these concepts is also available in \cite{Bo24}. Here we summarize the main concepts.

We start with the general definition of spectral curves.

\begin{definition}\label{d:sc}
  An \emph{admissible local spectral curve} $ \mc{S} = ( C, x, \omega_{0,1}, \omega_{\frac{1}{2},1}, \omega_{0,2}) $ is a collection of small disks $ C = \bigsqcup_{j=1}^N C_j $ for some positive integer $N$ together with maps $ x \colon C_j \to \P^1 \colon z \mapsto z^{r_j} + x_j$ for distinct $ x_j \in \P^1$,  two one-forms $\omega_{0,1}$ and $ \omega_{\frac{1}{2},1} $ which on each $C_j$ have expansions
  \begin{align}
    \omega_{0,1}^j (z) 
    &= 
    \sum_{k \geq s_j} F^j_{0,1}[-k] z^{k-1} dz \,,
    \\
    \omega_{\frac{1}{2},1}^j (z) 
    &= 
    \sum_{k \geq 0} F^j_{\frac{1}{2},1}[-k] z^{k-1} dz \,,
  \end{align}
  where $ F^j_{0,1} [-s_j] \neq 0$ and $s_j \in [r_j+1] $ such that $ r_j = \pm 1 \pmod{s_j}$, and a \emph{fundamental bidifferential of the second kind}
  \begin{equation}
    \omega_{0,2} \in H^0 ( C^2 ; K_C^{\boxtimes 2}(2 \Delta))^{\mf{S}_2}
  \end{equation}
  with biresidue $ 1$ on the diagonal.
  \end{definition}
  
  Given a spectral curve, we construct a particular basis of one-forms that will play an important role in the following.
  
  \begin{definition}\label{d:xibasis}
  Let $\mc{S}$ be an admissible local spectral curve. For each component $C_j$ with $j \in [N]$, we define a basis of one-forms:
  \begin{align}
  \xi^{(j)}_{k} (z)
  &\coloneqq
  z^{k-1} dz \,,
  \\
  \xi^{(j)}_{-k} (z) 
  &\coloneqq \Res_{w = 0} \Big( \int_0^{w} \omega_{0,2}( \mathord{\cdot}, z) \Big) \frac{dw}{w^{k+1}} = \left( \frac{1}{z^{k+1}} +\text{holomorphic} \right)\ dz \,.
  \end{align}
  \end{definition}

We also introduce the notation:
\begin{definition}\label{d:fz}
Let $\mc{S}$ be an admissible local spectral curve. For each component $C_j$ with $j \in [N]$, we define $\mathfrak{f}(z) = \{ \theta^k z \}_{k\in [r_j]}$, where $\theta = \exp\left(\frac{2 \pi i}{r_i} \right)$. $\mathfrak{f}(z)$ is the set of sheets of the branched covering $x: C_j \to \mathbb{P}^1$ near the ramification point $z=0$.
\end{definition}
  
  The main object of study is a system of correlators. 
  
  \begin{definition}\label{d:system}
  A \emph{system of correlators} on an admissible local spectral curve $ \mc{S}$ is a collection $ \{ \omega_{g,n} \}_{g \in \frac{1}{2} \N, n\in \N^*}$ such that $ \omega_{0,1}$, $ \omega_{\frac{1}{2},1}$, and $ \omega_{0,2}$ are the ones given as part of the data of the spectral curve, and all $ \omega_{g,n}$ for $ 2g-2+n > 0$ are symmetric meromorphic $n$-differentials on $ C^n$, with only possible poles at the origins of the $ C_j$ with vanishing residue.
  \end{definition}

  We will single out particular systems of correlators that satisfy the projection property.
  
  \begin{definition}\label{d:projection} 
  Let $ \{ \omega_{g,n} \}_{g \in \frac{1}{2} \N, n\in \N^*}$ be a system of correlators on an admissible local spectral curve $\mc{S}$. We say that it satisfies the \emph{projection property} if for all $ 2g - 2 + n > 0$,
  \begin{equation}
    \omega_{g,n} (z_{[n]}) = \sum_{j} \Res_{z = 0 \in C_j} \Big( \int_0^z \omega_{0,2} (\mathord{\cdot},z_1) \Big) \omega_{g,n} (z, z_2, \dotsc, z_n)
  \end{equation}
\end{definition}

  It is easy to see that the basis of one-forms introduced in \cref{d:xibasis} is well suited to study systems of correlators that satisfy the projection property:
  \begin{lemma}\label{l:finiteness}
  Let $\{ \omega_{g,n} \}_{g \in \frac{1}{2} \N, n\in \N^*}$ be a system of correlators on an admissible local spectral curve $\mc{S}$. The system of correlators satisfies the projection property if and only if it has an expansion of the form
 \begin{equation}\label{eq:expansion}
\omega_{g,n}(z_1,\ldots,z_n) = \sum_{j_1, \ldots, j_n \in [N]} \sum_{k_1, \ldots, k_n \in \mathbb{N}^*} F_{g,n} \begin{bmatrix} j_1 & \ldots & j_n \\ k_1 & \ldots & k_n \end{bmatrix} \xi_{-k_1}^{(j_1)}(z_1) \cdots  \xi_{-k_n}^{(j_n)}(z_n),
\end{equation}
where only a finite number of coefficients are  non-zero. Note that only the one-forms $\xi^{(j)}_k(z)$ with negative $k$ appear in the summation.
  \end{lemma}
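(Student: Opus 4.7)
The plan is to prove both directions via a reproducing identity for the basis $\xi^{(j)}_{-k}$ under the projection kernel. First, I would establish that for all $j' \in [N]$ and $k \geq 1$,
\begin{equation*}
\sum_{j \in [N]} \Res_{z=0 \in C_j} \Big(\int_0^z \omega_{0,2}(\cdot, z_1)\Big)\, \xi^{(j')}_{-k}(z) = \xi^{(j')}_{-k}(z_1).
\end{equation*}
Only the $j = j'$ term contributes, since $\xi^{(j')}_{-k}$ is holomorphic near the origin of every other $C_j$. The identity then follows from a direct residue computation: by the very definition of $\xi^{(j')}_{-m}$, the Taylor expansion of the kernel at $z = 0 \in C_{j'}$ is $\int_0^z \omega_{0,2}(\cdot, z_1) = \sum_{m \geq 1} \xi^{(j')}_{-m}(z_1)\, z^m$, and pairing with $\xi^{(j')}_{-k}(z) = dz/z^{k+1} + (\text{holomorphic})$ picks out $m = k$.

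For the backward direction, I would substitute the finite expansion of $\omega_{g,n}$ into the projection-property formula, pull the finite sum outside the residue, and apply the reproducing identity factor-by-factor in the first variable; this returns the expansion unchanged and so verifies the projection property in $z_1$, and by symmetry of $\omega_{g,n}$ in every variable. That the expansion defines a valid system of correlators per \cref{d:system} is immediate, since each $\xi^{(j)}_{-k}$ with $k \geq 1$ is meromorphic, has its only pole at $0 \in C_j$ of order $k+1$, and vanishing residue.

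For the forward direction, I would proceed by induction on the number of variables, peeling off one at a time. Starting from the projection property in $z_1$ and Laurent-expanding $\omega_{g,n}(z, z_2, \ldots, z_n) = \sum_{k} \beta^{(j)}_k(z_2, \ldots, z_n)\, z^{k-1} dz$ about $z = 0 \in C_j$ (with $\beta^{(j)}_0 = 0$ forced by the vanishing-residue condition), the residue pairs with the Taylor expansion of the kernel to yield
\begin{equation*}
\omega_{g,n}(z_1, \ldots, z_n) = \sum_{j \in [N]} \sum_{k \geq 1} \xi^{(j)}_{-k}(z_1)\, \beta^{(j)}_{-k}(z_2, \ldots, z_n).
\end{equation*}
The coefficients $\beta^{(j)}_{-k}$, obtained as contour integrals in $z$, are themselves meromorphic $(n-1)$-differentials with poles only at origins and vanishing residues, and inherit the projection property in the remaining variables from the symmetry of $\omega_{g,n}$. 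Iterating $n-1$ more times produces the claimed expansion.

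The main subtlety is finiteness. This rests on $\omega_{g,n}$ being meromorphic in the sense of having finite pole orders along each coordinate divisor $\{z_i = 0 \in C_{j_i}\}$: the Laurent expansion in each $z_i$ then has only finitely many polar coefficients, so only finitely many $k_i \geq 1$ yield non-zero $F_{g,n}$, and the sums over $j_i$ are finite because $[N]$ itself is finite. Everything else is routine bookkeeping once the reproducing identity is in place, together with the verification that the intermediate coefficients retain meromorphy, the correct pole structure, and the vanishing-residue property so that the recursion on variables closes.
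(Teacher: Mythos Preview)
The paper does not prove this lemma; it is introduced with ``It is easy to see that\dots'' and stated without argument. Your proof is correct and follows the natural line: the reproducing identity you establish is precisely what makes the projection operator act as the identity on the span of the $\xi^{(j)}_{-k}$, and the forward direction is the standard peeling-off of variables via Laurent expansion combined with finiteness of pole orders.

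One small point worth making explicit in your inductive step is the interchange of residues when $j=j'$, i.e.\ when the projection in the next variable and the coefficient extraction in the previous variable are both taken at the same origin $0\in C_j$. This is unproblematic because for $2g-2+n>0$ the correlators have no diagonal pole, so near $(0,0)\in C_j\times C_j$ the double Laurent expansion is an honest double series and the order of coefficient extraction is immaterial; the same observation justifies that the $\beta^{(j)}_{-k}$ inherit the vanishing-residue condition. With that remark, your argument is complete.
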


  For the purpose of formulating loop equations and topological recursion, we introduce the following particular combinations of correlators.
  \begin{definition}\label{d:EW}
  Let $ \{ \omega_{g,n} \}_{g \in \frac{1}{2} \N, n\in \N^*}$ be a system of correlators on an admissible local spectral curve $\mc{S}$.  For any $i \in \mathbb{N}^*$, we define the objects:
  \begin{align}\label{partiallydisconnected}
     \mc{W}_{g,i,n} (z_{[i]} ; w_{[n]}) \coloneqq& \sum_{\substack{P \vdash z_{[i]} \\ \bigsqcup_{S \in P} N_S = w_{[n]}\\ \sum_{S \in P} (g_S -1) = g-i}}\prod_{S \in P} \omega_{g_S, |S| + |N_S|}(S, N_S) \,, \\
   \mc{W}_{g,i,n}' (z_{[i]} ; w_{[n]}) \coloneqq& \sum'_{\substack{P \vdash z_{[i]} \\ \bigsqcup_{S \in P} N_S = w_{[n]}\\ \sum_{S \in P} (g_S -1) = g-i}}\prod_{S \in P} \omega_{g_S, |S| + |N_S|}(S, N_S) \,,
  \end{align}
  where the sum is 1) over set partitions $P$ of $z_{[i]}$, 2) over all possible splittings of $w_{[n]}$ into possibly empty disjoint subsets $N_S$ where $S$ runs over all parts of $P$ and $\bigsqcup_{S \in P} N_S = w_{[n]}$, 3) over all sets of non-negative half-integers $\{ g_S \}_{S \in P}$ such that $\sum_{S \in P} (g_S - 1) = g-i$. The difference between the first and second object is that the prime over the summation symbol means that the terms with $\omega_{0,1}$ are omitted from the sum.
  
  For each component $C_j$, with $j \in [N]$, and for $i \in [ r_j]$, we also define the objects
    \begin{equation}
    \mc{E}^{i, (j)}_{g,n}(x; z_{[n]}) = \sum_{\substack{Z \subseteq \mf{f}(z) \\ |Z| = i}} \mc{W}_{g,i,n}(Z; z_{[n]}).
  \end{equation}
  \end{definition}

We can now define so-called loop equations, which are particular equations satisfied by systems of correlators.

\begin{definition}
  Let $ \{ \omega_{g,n} \}_{g \in \frac{1}{2} \N, n\in \N^*}$ be a system of correlators on an admissible local spectral curve $\mc{S}$. 
  We say that the system of correlators satisfies the \emph{loop equations} if, for all $j \in [N]$, $i \in [r_j]$, and $2g-2+n > 0$,
  \begin{equation}
    \mc{E}^{i,(j)}_{g,n} (x; z_{[n]}) \in  \mc{O} \left( x^{ \lfloor \frac{s_j(i-1)}{r_j}\rfloor + 1} \right) \left( \frac{dx}{x}\right)^i \,.
  \end{equation}
\end{definition}

The main result of relevance here is that, given an admissible local spectral curve $\mc{S}$, there always exists a single system of correlators that satisfies both the loop equations and the projection property, and this system of correlators can be reconstructed recursively from the data of the spectral curve.

\begin{theorem}\label{UnshiftedTR}
  For an admissible local spectral curve $ \mc{S}$, there exists exactly one system of correlators that satisfies the loop equations and the projection property. It can be calculated recursively by the \emph{topological recursion formula}
  \begin{equation}
    \omega_{g,n+1}(z_0, z_{[n]}) = -\sum_{ j \in [N]} \Res_{z = 0 \in C_j} \sum_{Z \subseteq \mf{f}' (z)} K^{1 + |Z|}(z_0; z, Z) \mc{W}'_{g,1+|Z|,n} (z,Z; z_{[n]})\,,
  \end{equation}
  where $ \mf{f}'(z) = x^{-1}(x(z)) \setminus \{z \}$ and the \emph{recursion kernels} are
  \begin{equation}
    K^{1+|Z|} (z_0; z, Z) \coloneqq \frac{\int_0^z \omega_{0,2} (\mathord{\cdot}, z_0)}{\prod_{z' \in Z} \big( \omega_{0,1}(z') - \omega_{0,1}(z) \big)} \, .
  \end{equation}
\end{theorem}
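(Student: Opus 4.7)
The plan is to split the theorem into three parts: uniqueness given both properties, the derivation of the topological recursion formula, and existence (including symmetry of the resulting multidifferentials). The first two parts can be handled directly by induction on the Euler characteristic, while for the third I would appeal to the Airy structure framework developed in \cref{s:shifted}.

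For uniqueness and derivation of the formula, I would proceed by induction on $2g-2+n$. The base case is given by the spectral curve data. For the inductive step, fix $(g,n+1)$ with $2g-2+n \geq 0$ and assume all lower correlators are determined. By the projection property, the correlator can be written as
\begin{equation}
\omega_{g,n+1}(z_0, z_{[n]}) = \sum_j \Res_{z = 0 \in C_j}\left( \int_0^z \omega_{0,2}(\mathord{\cdot}, z_0)\right) \omega_{g,n+1}(z, z_{[n]}),
\end{equation}
which reduces the problem to determining the principal part of $\omega_{g,n+1}$ at each ramification point $z=0 \in C_j$. Locally, the $r_j$ loop equations $\mc{E}^{i,(j)}_{g,n+1}$ for $i=1,\ldots,r_j$ form a triangular linear system in the sheet evaluations $\omega_{g,n+1}(z', z_{[n]})$ for $z' \in \mf{f}(z)$: extracting the single term in $\mc{W}'_{g,1+|Z|,n}$ that contains $\omega_{g,n+1}$ itself and moving every other term (which by the induction hypothesis involves only $\omega_{0,1}$ and correlators of strictly lower complexity) to the other side, one solves for the relevant polar part. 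Inserting back into the projection formula reproduces the recursion formula exactly: the product $\prod_{z' \in Z}(\omega_{0,1}(z') - \omega_{0,1}(z))^{-1}$ in the kernel $K^{1+|Z|}$ is precisely the inverse of this triangular system, while the $\int_0^z \omega_{0,2}$ factor implements the projection.

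For existence, the delicate issue is the symmetry of the resulting $\omega_{g,n}$. Rather than attempting to establish symmetry directly from the recursion, I would invoke the Airy structure formalism. By \cref{p:rsAsgen}, for each ramification component $C_j$ of type $(r_j, s_j)$ the differential constraints generated by $\rho'(W^{\hslash,i}_m)$ form an Airy ideal. \Cref{t:airy} then provides a unique partition function $Z$, whose coefficients $F_{g,n}$ are symmetric by construction in \eqref{eq:pfFgncoeff}. Assembling them via the expansion in \cref{l:finiteness} yields a system of correlators that is automatically symmetric and satisfies the projection property.

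The main obstacle, and where I expect the real work to lie, is to verify that the Airy ideal constraints are precisely the loop equations for these correlators. This rests on the explicit form \eqref{eq:expl1} of the strong generators $W^i(z) = e_i(J^1(z), \ldots, J^r(z))$: under the dictionary that identifies the $J^i$-modes with sheet evaluations at the ramification point, the elementary symmetric polynomial becomes exactly the sum $\sum_{Z \subseteq \mf{f}(z),\, |Z|=i} \prod_{z' \in Z} \omega_{0,1}(z')$ appearing in $\mc{E}^{i,(j)}$, and the polarity conditions built into the Airy ideal (the range $m \geq -\lfloor s_j(i-1)/r_j \rfloor$ of allowed modes) translate into the growth conditions $\mc{O}(x^{\lfloor s_j(i-1)/r_j \rfloor + 1})$ on the loop equations. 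Once this dictionary is installed, symmetry and existence follow from \cref{t:airy}, and uniqueness in the theorem matches the uniqueness clause of that theorem applied to the same Airy ideal.
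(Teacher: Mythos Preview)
The paper does not actually prove \cref{UnshiftedTR}; it is stated as known background (with references to \cite{BBCKS} and \cite{Bo24}). The natural comparison is with the paper's proof of the shifted analogue, \cref{ShiftedTR}, which follows the same template and specialises to the unshifted case when all shifts vanish.

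Your overall strategy---uniqueness and the formula by induction from the projection property and the loop equations, existence and symmetry via Airy structures---matches the paper's approach. Two points deserve tightening. First, an indexing slip: the loop equations that control $\omega_{g,n+1}$ are $\mc{E}^{i,(j)}_{g,n}$ (with $n$ spectator variables), not $\mc{E}^{i,(j)}_{g,n+1}$. Second, and more substantively, the ``triangular linear system in the sheet evaluations'' picture is not how the derivation actually works, and your description of the kernel product as ``the inverse of this triangular system'' is not accurate. The loop equations give elementary-symmetric-type combinations of the sheet values, not a triangular system, and one does not solve for the individual $\omega_{g,n+1}(z',z_{[n]})$. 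The mechanism the paper uses (in the proof of \cref{ShiftedTR}) is the combinatorial identity \eqref{CombinatorialIdentity}: starting from $\omega_{g,n+1}=\mc{W}'_{g,1,n}$ and the projection property, one multiplies by $\prod_{z'\in\mf{f}'(z)}(\omega_{0,1}(z')-\omega_{0,1}(z))$ to convert $K^1$ into $K^r$, then applies \eqref{CombinatorialIdentity} to rewrite the integrand as the desired sum over $Z\supsetneq\{z\}$ plus $\sum_i\mc{E}^i_{g,n}(-\omega_{0,1}(z))^{r-i}$. The loop equations make the latter holomorphic under the residue, and the kernels $K^{1+|Z|}$ then appear by cancelling the factors $\prod_{z'\in\mf{f}(z)\setminus Z}(\omega_{0,1}(z')-\omega_{0,1}(z))$ against the denominator of $K^r$. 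This is a direct algebraic rewriting, not a linear inversion.

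Your existence argument via \cref{p:rsAsgen} and \cref{t:airy}, and the identification of the Airy constraints with the loop equations through the dictionary sending elementary symmetric polynomials in the $J^i$ to sums over sheets, is exactly how the paper (and the literature it cites) establishes symmetry.
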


\subsection{The (deformed) \texorpdfstring{$(r,s)$}{rs}-spectral curves}

From now on we will focus on admissible local spectral curves with only one component ($N=1$); we will therefore drop the superscript $^{(j)}$ from the various expressions.

A particular example of the construction can be obtained from the $(r,s)$-Airy structures of \cref{s:rsairystruct}. One can show that finding the partition function of the $(r,s)$-Airy structures of \cref{t:rsAs} is equivalent to topological recursion on the $(r,s)$-spectral curve (see \cite{BBCCN18}), which is defined as follows:
 \begin{definition}\label{d:rs}
 Let $r  \in \mathbb{Z}$ such that $r \geq 2$, and $s \in [r+1]$ with $r = \pm 1 \pmod{s}$.
 The \emph{$(r,s)$-spectral curve} is given by $\mc{S} = (C, x, \omega_{0,1}, \omega_{\frac{1}{2},1} \omega_{0,2} ) $, where  $C$ is a small disk, $ x =  z^r$, $\omega_{0,1} = r z^{s-1}\ dz$, $\omega_{\frac{1}{2},1} = 0$, and
 \begin{equation}
  \omega^{\textup{std}}_{0,2} (z_1, z_2) = \frac{dz_1 dz_2}{(z_1 - z_2)^2}\,.
\end{equation}
 \end{definition}
If we define the meromorphic function $y$ on $C$ by $\omega_{0,1} = y\ dx$, then $y = z^{s-r}$. For $s \in [r-1]$, we can then think of $x(z)$ and $y(z)$ as a parametrization of the algebraic curve
\begin{equation}\label{eq:rsac1}
x^{r-s} y^r - 1 = 0.
\end{equation}
For $s=r+1$, we get a parametrization of the $r$-Airy algebraic curve
\begin{equation}\label{eq:rsac2}
y^r - x =0.
\end{equation}
We call these algebraic curves the $(r,s)$-algebraic curves.

In fact, the correspondence applies more generally to the deformed $(r,s)$-Airy structures of \cref{p:rsAsgen}, so let us explain it in this more general setting. We define the deformed $(r,s)$-spectral curves in terms of the data introduced in \eqref{eq:complex}.

 \begin{definition}\label{d:rsdeformed}
 Let $r  \in \mathbb{Z}$ such that $r \geq 2$, and $s \in [r+1]$ with $r = \pm 1 \pmod{s}$. Pick complex numbers:
 \begin{equation}
  F_{0,1}[-k]\,, k \geq \min \{ s, r\} \,, \qquad F_{\frac{1}{2},1}[-k]\,, k > 0\,, \qquad F_{0,2}[-k,-l] \,, k,l > 0.
\end{equation}
 The \emph{deformed $(r,s)$-spectral curve} is given by $\mc{S} = (C, x, \omega_{0,1}, \omega_{\frac{1}{2},1} \omega_{0,2} ) $, where $C$ is a small disk, $x=z^r$,
\begin{align}
  \omega_{0,1} (z)
  &= 
  \sum_k F_{0,1} [-k] z^{k-1} dz \,,
  \\
  \omega_{\frac{1}{2},1} (z)
  &=
  \sum_k F_{\frac{1}{2},1} [-k] z^{k-1} dz \,,
  \\
  \omega_{0,2} (z_1, z_2)
  &=
  \omega^{\textup{std}}_{0,2} (z_1, z_2) + \sum_{k,l} F_{0,2}[-k,-l] z_1^{k-1} z_2^{l-1} dz_1 dz_2\,.
\end{align}
The $(r,s)$-spectral curve of \cref{d:rs} is recovered for the choice of numbers:
\begin{equation}
 F_{0,1} [-k] = r \delta_{k,s}, \qquad F_{\frac{1}{2},1}[-k] = 0, \qquad F_{0,2}[-k,-l] =0.
\end{equation}
 \end{definition}

To extract the loop equations from the deformed $(r,s)$-Airy structure, we start with \cref{p:rsAsgen}. The claim is that the differential constraints  for the partition function of the deformed $(r,s)$-Airy structure can be recast as the statement that there exists a system of correlators on the deformed $(r,s)$-spectral curve that satisfies the loop equations and the projection property. 

For clarity of notation, let us introduce the notation:
\begin{equation}\label{eq:unshiftedH}
H^i_k := \rho'( W^{\hslash,i}_k) 
\end{equation}
for the operators \eqref{eq:repp} generating the deformed $(r,s)$-Airy structure. The differential constraints then take the form
\begin{equation}
  H^i_k Z = 0 \,, \quad i \in [r], \quad k \geq - \lfloor \frac{s(i-1)}{r} \rfloor\,.
\end{equation}
We first introduce the following fields constructed out of the differential operators $H^i_k$:
\begin{equation}
H^i(x)
  \coloneqq
  \sum_{k \in \Z} H^i_k \frac{dx^i}{x^{k+i}}.
\end{equation}
We also introduce the following notation, recalling the definition of the modes $J_k$ in \eqref{eq:Js} and the basis of one-forms from \cref{d:xibasis}:
\begin{align}
  \mc{J}_-(z)
  &\coloneqq
  \sum_{k>0} J_k \xi_{-k}(z)\,,
  \\
  \mc{J}_+(z)
  &\coloneqq
  \sum_{k>0} J_{-k} \xi_k (z)\,.
  \end{align}
Using this notation, we can rewrite the differential  operators $H^i_k$ more explicitly in terms of the data of the deformed $(r,s)$-spectral curve from \cref{d:rsdeformed}.
\begin{proposition}[{\cite[Section~4.1]{BKS23}}]\label{p:hh}
  For a set $ S$, let $ \mc{P}(S)$ be the set whose elements are disjoint sets of pairs in $ S$, and for $ P \in \mc{P}(S) $, write $ \sqcup P = \bigsqcup_{p \in P} p \subseteq S$. Then
  \begin{equation}
    \begin{split}
      rH^i(x) 
      &= 
      \sum_{\substack{Z \subseteq \mf{f}(z)\\ |Z| = i}} 
      \sum_{\substack{(\sqcup P) \sqcup A_0 \sqcup A_{\frac{1}{2}} \sqcup A_+ \sqcup A_- = Z \\ P \in \mc{P}(\mf{f}(z))}}
      \prod_{\{ z', z''\} \in P} \hslash^2 \omega_{0,2}(z',z'') 
      \prod_{z' \in A_0} \hslash \omega_{0,1} (z') 
      \\
      &\hspace{4cm}
      \prod_{z' \in A_{\frac{1}{2}}} \hslash^2 \omega_{\frac{1}{2},1}(z') 
      \prod_{z' \in A_+} \mc{J}_+ (z') 
      \prod_{z' \in A_-} \mc{J}_- (z').
    \end{split}
  \end{equation}
\end{proposition}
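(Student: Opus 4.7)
The plan is to unfold the definition $H^i(x) = \rho'(W^{\hslash,i}(x)) = \hat{\Phi}\hat{T}\mu(W^{\hslash,i}(x))\hat{T}^{-1}\hat{\Phi}^{-1}$ in three stages, corresponding to the three sources of terms in the claimed formula. First, I would rewrite the output of $\mu$ from \eqref{eq:ww} in geometric ``sheet'' form: under the $\Z_r$-twist implicit in $\mu$, the Heisenberg modes $J_{p_l}$ recombine into a single twisted field $J(z') = \mc{J}_+(z') + \mc{J}_-(z')$ evaluated at sheets $z' \in \mf{f}(z)$, and the $\theta$-factors in $\Psi^{(j)}_r$ become standard contractions $\omega_{0,2}^{\textup{std}}(z',z'')$ at pairs of sheets. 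This is contained in the proof of \cite[Proposition~4.5]{BBCCN18} and the field-theoretic formalism of \cite[Section~4.1]{BKS23}, and yields the compact identity
\begin{equation}
r\,\mu(W^{\hslash,i}(x)) = \sum_{\substack{Z \subseteq \mf{f}(z)\\ |Z|=i}} \normord{\prod_{z' \in Z} \hslash J(z')}\,,
\end{equation}
where the normal ordering is taken with respect to $\omega_{0,2}^{\textup{std}}$.

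Next I would peel off the conjugation by $\hat{T}$. Its exponent in \eqref{eq:TPhi} is linear in the positive modes $J_k$ ($k>0$), so $[\hat{T},\mc{J}_-]=0$, and a one-line commutator computation combined with \cref{d:rsdeformed} gives the c-number shift $\hat{T}(\hslash J(z'))\hat{T}^{-1} = \hslash J(z') + \omega_{0,1}(z') + \hslash\omega_{\frac{1}{2},1}(z')$. Since c-number shifts pass through the normal ordering, each factor $\hslash J(z')$ may independently be replaced by itself, by $\omega_{0,1}(z')$ (set $A_0$), or by $\hslash\omega_{\frac{1}{2},1}(z')$ (set $A_{\frac{1}{2}}$). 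After accounting for the implicit Rees weight $\hslash$ carried by each kept $\hslash J(z')$, this reproduces the singleton contributions $\hslash\omega_{0,1}$ and $\hslash^2\omega_{\frac{1}{2},1}$ appearing in the claimed formula.

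Then I would handle the conjugation by $\hat{\Phi}$. Its exponent is quadratic in positive modes, so $[\hat{\Phi},\mc{J}_-]=0$, but the computation $\hat{\Phi}J_{-l}\hat{\Phi}^{-1} = J_{-l} + \sum_{k>0}F_{0,2}[-l,-k]\,J_k/k$ shows that $\hat{\Phi}\mc{J}_+(z')\hat{\Phi}^{-1}-\mc{J}_+(z')$ is an \emph{operator}-valued shift whose bilinear generating function over $(z',z'')$ is precisely the deformation $\omega_{0,2}(z',z'')-\omega_{0,2}^{\textup{std}}(z',z'')$ from \cref{d:rsdeformed}. The cleanest way to conclude is to record this as a change-of-normal-ordering identity: for any subset $B$ of unshifted sheets, $\hat{\Phi}$-conjugation of $\normord{\prod_{z'\in B}\hslash J(z')}$ produces a Wick-type sum over pairings $P'\in\mc{P}(B)$ weighted by $\hslash^2(\omega_{0,2}-\omega_{0,2}^{\textup{std}})(z',z'')$, times the $\omega_{0,2}^{\textup{std}}$-normal-ordered product of the remaining factors. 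Merging these new pairings with the $\hslash^2\omega_{0,2}^{\textup{std}}$-contractions already implicit in Step~1 exactly produces pair contributions $\hslash^2\omega_{0,2}$. The leftover unpaired factors split as $\hslash J(z') = \hslash\mc{J}_+(z') + \hslash\mc{J}_-(z')$, yielding the $A_+$ and $A_-$ contributions and completing the decomposition $(\sqcup P)\sqcup A_0\sqcup A_{\frac{1}{2}}\sqcup A_+\sqcup A_- = Z$.

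The main obstacle I anticipate is the change-of-normal-ordering identity for $\hat{\Phi}$: unlike $\hat{T}$, its shifts are operator-valued rather than c-number, so one must argue carefully that the extra pairings it induces combine with the $\omega_{0,2}^{\textup{std}}$-contractions from Step~1 to yield the full deformed $\omega_{0,2}$, with no uncancelled ``cross-terms'' between $\mc{J}_+$ and $\mc{J}_-$ at distinct sheets left over. Once this identity is in hand, everything else is straightforward combinatorial bookkeeping of subsets of $Z$ and tracking of $\hslash$-powers.
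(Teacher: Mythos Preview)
The paper does not give its own proof of this proposition; it simply records it with the citation \cite[Section~4.1]{BKS23} and moves on. Your three-stage outline --- rewrite $\mu(W^{\hslash,i}(x))$ in twisted-sheet form as a normal-ordered product over $\mf{f}(z)$, then conjugate by $\hat T$ to produce the c-number shifts $\omega_{0,1}$ and $\omega_{\frac{1}{2},1}$, then conjugate by $\hat\Phi$ and reinterpret the operator-valued shift as a change of normal ordering upgrading $\omega_{0,2}^{\textup{std}}$ to the deformed $\omega_{0,2}$ --- is exactly the argument carried out in that reference, and your identification of the $\hat\Phi$ step (the merging of the two sources of pair contractions into a single $\hslash^2\omega_{0,2}$ weight) as the only non-mechanical point is accurate.

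One small caution: your sentence about ``the implicit Rees weight $\hslash$ carried by each kept $\hslash J(z')$'' is a bit muddled. The $\hslash$-powers in the final formula are \emph{not} obtained by a uniform rescaling; rather, each factor $\hslash J(z')$ after the $\hat T$-shift becomes $\omega_{0,1}(z') + \hslash\,\omega_{\frac{1}{2},1}(z') + \hslash\mc J_+(z') + \hslash\mc J_-(z')$, and it is the bare $\mc J_\pm$ (without $\hslash$) that appear in the statement --- the apparent discrepancy is absorbed because in the paper's conventions the operators $\mc J_\pm$ already sit in Bernstein degree one. When you write the proof out in full, track the $\hslash$-grading by matching the total degree $i$ of $W^{\hslash,i}$ against $2|P| + |A_0| + 2|A_{\frac12}| + |A_+| + |A_-|$ rather than trying to read it off factor by factor.
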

While this shape may not look very appealing at first, it is a useful form for extracting loop equations. 

Next, out of the partition function 
\begin{equation}
  Z = \exp \bigg( \sum_{ \substack{g \in \frac{1}{2} \N, n \in \N^* \\ 2g-2 + n > 0}} \frac{\hslash^{2g-2+n}}{n!} F_{g,n}[k_1, \dotsc, k_n] \prod_{j=1}^n x_{k_j} \bigg)
\end{equation} 
associated to the deformed $(r,s)$-Airy structure, we construct a system of correlators on the deformed $(r,s)$-spectral curve.
\begin{definition}\label{d:srs}
Let $Z$ be the partition function associated to the deformed $(r,s)$-Airy structure. For $2g-2+n>0$, we construct the following symmetric $n$-differentials on the deformed $(r,s)$-spectral curve:
\begin{equation}
  \omega_{g,n} (z_1, \dotsc, z_n ) \coloneqq \sum_{k_1, \dotsc, k_n = 1}^\infty F_{g,n}[k_1, \dotsc, k_n] \prod_{j=1}^n \xi_{-k_j} (z_j).
\end{equation}
\end{definition}

Since the correlators have finite expansions in the $\xi_{-k_j}(z_j)$ with $k_j >0$, it is clear that the system of correlators satisfies the projection property (see \cref{l:finiteness}):
\begin{lemma}\label{l:projprop}
The system of correlators $\{\omega_{g,n} \}_{g \in \frac{1}{2}\mathbb{N}, n \in \mathbb{N}^*}$ constructed above satisfies the projection property.
\end{lemma}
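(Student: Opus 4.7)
My plan is to reduce the claim to Lemma \ref{l:finiteness} and then verify the finiteness condition appearing there. By the very definition of $\omega_{g,n}$ in Definition \ref{d:srs}, the expansion is of the form
\begin{equation}
  \omega_{g,n}(z_1,\dotsc,z_n) = \sum_{k_1,\dotsc,k_n=1}^\infty F_{g,n}[k_1,\dotsc,k_n] \prod_{j=1}^n \xi_{-k_j}(z_j),
\end{equation}
involving only the $\xi_{-k}$ with $k>0$. So, by Lemma \ref{l:finiteness}, it suffices to prove that for every $(g,n)$ with $2g-2+n>0$ only finitely many of the coefficients $F_{g,n}[k_1,\dotsc,k_n]$ are non-zero.

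To establish this finiteness, the plan is to induct on $2g-2+n$ using the differential constraints $H^i_k Z=0$ for $i\in[r]$ and $k\ge -\lfloor \frac{s(i-1)}{r}\rfloor$, where $H^i_k=\rho'(W^{\hslash,i}_k)$ as in \eqref{eq:unshiftedH}. Written out as in Proposition \ref{p:hh}, the action of $H^i_k$ on $Z$ produces a recursion on the $F_{g,n}$ whose inhomogeneous terms are built from lower-topology correlators, from $\omega_{0,1},\omega_{\frac{1}{2},1},\omega_{0,2}$ (which, by Definition \ref{d:rsdeformed}, have finite expansions in the $\xi_{-k}$'s up to any fixed order in the variables $k_j$), and from finitely many modes of $\mc{J}_-,\mc{J}_+$ bounded by the conformal weight $i\le r$. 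Combined with boundedness of the collection $\{H^i_k\}$ in the sense of Definition \ref{def:bounded}, this shows that at each step of the recursion each coefficient $F_{g,n}[k_1,\dotsc,k_n]$ is expressed as a finite sum involving only coefficients $F_{g',n'}[k'_1,\dotsc,k'_{n'}]$ with $2g'-2+n'<2g-2+n$ and with $(k'_1,\dotsc,k'_{n'})$ taken from the finite supports obtained inductively, the $k_i$ themselves being forced to lie in a finite range by degree counting in $x$.

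The induction base is provided by the unstable correlators $\omega_{0,1}$ and $\omega_{\frac{1}{2},1}$, whose coefficients have finite support by the choice made in Definition \ref{d:rsdeformed}, together with $\omega_{0,2}$ where the non-standard part is also finite. The main technical step, and the part requiring some care, is the degree bookkeeping in the recursive step: one must verify that the finite support of the lower-topology data is inherited by $F_{g,n}[k_1,\dotsc,k_n]$ through the explicit formula of Proposition \ref{p:hh}, i.e.\ that no infinite tail of non-zero coefficients is generated. Once this is done, the finiteness hypothesis of Lemma \ref{l:finiteness} is satisfied, and the projection property follows immediately.
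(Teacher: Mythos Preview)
Your reduction to Lemma~\ref{l:finiteness} is exactly what the paper does: the paper simply observes, in the sentence preceding the lemma, that since the correlators by construction expand only in the $\xi_{-k_j}$ with $k_j>0$ and this expansion is finite, the projection property follows from Lemma~\ref{l:finiteness}. The paper offers no further argument and treats the finiteness of the $F_{g,n}[k_1,\dotsc,k_n]$ as a known fact from the general theory of Airy structures (it is part of the standard existence/uniqueness statement, cf.\ \cite{KS17,BBCCN18}).

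Your sketch therefore goes further than the paper by outlining an inductive proof of this finiteness via the recursion encoded in the constraints $H^i_k Z=0$. This is a legitimate and essentially correct strategy, though a couple of phrasings are imprecise: the unstable correlators $\omega_{0,1},\omega_{\frac{1}{2},1}$ and the non-standard part of $\omega_{0,2}$ expand in the \emph{holomorphic} basis $\xi_k$ ($k>0$), not the $\xi_{-k}$, so the relevant point is that they contribute no poles rather than that they have ``finite expansions in the $\xi_{-k}$''; and the crucial bound on the support of $F_{g,n}$ comes from the fact that each $H^i_k$ has conformal weight $i\le r$, which caps the total degree shift at each step. With those small clarifications your argument is fine, but for the purposes of this lemma the paper is content to invoke Lemma~\ref{l:finiteness} directly.
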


What we need to show is that this system of correlators also satisfies the loop equations, which is the key result:

\begin{proposition}[{\cite[Sections~4.3-4]{BKS23}}]
Let $Z$ be the partition function associated to the deformed $(r,s)$-Airy structure, and define the system of correlators $\{\omega_{g,n} \}_{g \in \frac{1}{2}\mathbb{N}, n \in \mathbb{N}^*}$ on the deformed $(r,s)$-spectral curve as in \cref{d:srs} . Let
\begin{equation}
  G^i(x) \coloneqq Z^{-1} H^i(x) Z, \qquad i \in [r].
\end{equation}
Decompose the $G^i(x)$ in terms homogeneous separately in $\hslash$ and the $ x_j$ by
\begin{equation}
  G^i(x) \eqqcolon \sum_{g,n} \frac{\hslash^{2g+n}}{n!} G^i_{g,n}(x),
\end{equation}
where $G^i_{g,n}(x)$ is a homogeneous polynomial of degree $n$ in the variables $x_j$.
Then
\begin{equation}
\prod_{j=1}^n \ad_{\hslash^{-1}\mc{J}_-(z_j)} G^i_{g,n}(x) = \mc{E}^i_{g,n}(x; z_{[n]}),
\end{equation}
where $ \mc{E}^i_{g,n}(x; z_{[n]})$ is the object defined in \cref{d:EW} from the system of correlators $\{\omega_{g,n} \}_{g \in \frac{1}{2}\mathbb{N}, n \in \mathbb{N}^*}$ constructed from $Z$. Moreover, the system of correlators satisfies the loop equations:
  \begin{equation}
    \mc{E}^i_{g,n} (x; z_{[n]}) \in \mc{O} \left( x^{ \lfloor \frac{s(i-1)}{r}\rfloor + 1} \right) \left( \frac{dx}{x}\right)^i \,.
  \end{equation}
\end{proposition}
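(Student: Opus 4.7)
The plan is to establish the identity $\prod_{j=1}^n \ad_{\hslash^{-1}\mc{J}_-(z_j)} G^i_{g,n}(x) = \mc{E}^i_{g,n}(x; z_{[n]})$ first, and then to derive the loop equations from it almost formally, combining the explicit form of $H^i(x)$ in \cref{p:hh} with the annihilation property $H^i_k Z = 0$ for $k \geq -\lfloor s(i-1)/r \rfloor$.

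First, I would compute $G^i(x) = Z^{-1} H^i(x) Z$ starting from \cref{p:hh}. The multiplicative factors $\omega_{0,1}$, $\omega_{0,2}$, $\omega_{\frac{1}{2},1}$ and the operators $\mc{J}_+(z')$ (pure $x_k$-multiplications) are invariant under conjugation by $Z$. Only $\mc{J}_-(z') = \sum_{k>0} \partial_k\, \xi_{-k}(z')$ transforms non-trivially: the identity $Z^{-1}\partial_k Z = \partial_k + \partial_k \log Z$, combined with the definition of $\omega_{g,n+1}$ in terms of the $F_{g,n+1}$ and the $\xi_{-k}$ basis, yields
\begin{equation}
Z^{-1}\mc{J}_-(z') Z = \mc{J}_-(z') + \sum_{\substack{g \in \frac{1}{2}\N,\, n \in \N \\ 2g-1+n > 0}} \frac{\hslash^{2g-1+n}}{n!} \sum_{k, k_1, \ldots, k_n > 0} F_{g,n+1}[k,k_1,\ldots,k_n]\, \xi_{-k}(z') \prod_{\ell=1}^n x_{k_\ell},
\end{equation}
so that each $\mc{J}_-(z')$ is replaced by itself plus a formal packaging of all correlators $\omega_{g,n+1}(z', \mathord{\cdot}, \ldots, \mathord{\cdot})$ with a distinguished argument at $z'$.

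Next, apply $\prod_{j=1}^n \ad_{\hslash^{-1}\mc{J}_-(z_j)}$ to the $(g,n)$-part $G^i_{g,n}(x)$. Each adjoint uses $[\partial_k, x_l] = \delta_{k,l}$ to strip one variable $x_l$ and contract it against $\xi_{-k}(z_j)$ via $\hslash^{-1}\sum_k \xi_{-k}(z_j) \partial_k$. Iterating on the polynomial of degree $n$ kills all remaining $x_l$'s and distributes the external points $z_{[n]}$ across the correlator insertions introduced above. The combinatorial structure that emerges matches exactly the set-partition sum defining $\mc{W}_{g,i,n}(Z; z_{[n]})$: pairs from the $P \in \mc{P}(\mf{f}(z))$-sum in \cref{p:hh} together with singletons from $A_0$, $A_{\frac{1}{2}}$ supply the unstable correlators $\omega_{0,2}, \omega_{0,1}, \omega_{\frac{1}{2},1}$; blocks arising from groups of $\mc{J}_-(z')$'s in $A_-$ produce the stable correlators after conjugation; and the $\mc{J}_+(z')$'s in $A_+$ attach to external points via commutators with $\mc{J}_-(z_j)$, with $[\mc{J}_-(z_j), \mc{J}_+(z')] = \omega^{\textup{std}}_{0,2}(z_j, z')$ completing the $\omega_{0,2}$ contribution. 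Summing over the size-$i$ subsets $Z \subseteq \mf{f}(z)$ then assembles $\mc{E}^i_{g,n}(x; z_{[n]})$ and proves the identity.

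With this identity in hand, the loop equations are immediate. Since $H^i_k Z = 0$ for $k \geq -\lfloor s(i-1)/r \rfloor$,
\begin{equation}
H^i(x) Z = \sum_{k < -\lfloor s(i-1)/r \rfloor} (H^i_k Z)\, x^{-k} (dx/x)^i \in \mc{O}\bigl(x^{\lfloor s(i-1)/r \rfloor + 1}\bigr)(dx/x)^i \cdot Z.
\end{equation}
Multiplication by $Z^{-1}$ is a power series in the $x_l$ with $x$-independent coefficients, so it preserves the $x$-pole structure, and $G^i(x) \cdot 1 \in \mc{O}(x^{\lfloor s(i-1)/r \rfloor + 1})(dx/x)^i$. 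The adjoint actions $\ad_{\hslash^{-1}\mc{J}_-(z_j)}$ only manipulate the $x_l$-variables and introduce $z_j$-dependence through the $\xi_{-k}(z_j)$'s, so they do not affect the $x$-pole order; combining this with the first identity transmits the bound to $\mc{E}^i_{g,n}(x; z_{[n]})$, which is the claimed loop equation. The hard part is the combinatorial identification in the first step: one must carefully match how groups of $\mc{J}_-(z')$'s in \cref{p:hh} get bundled into a single $\omega_{g_S, |S|+|N_S|}$ insertion under $Z$-conjugation, how the adjoint action then redistributes external points into the $N_S$'s, and how the $\omega_{0,2}$ contributions from the $P$-sum combine with those produced by $[\mc{J}_-(z_j), \mc{J}_+(z')]$ commutators without over-counting. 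This is the technical heart of the argument, spelled out in detail in \cite[Sections~4.3--4]{BKS23}.
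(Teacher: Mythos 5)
Your proposal follows essentially the same route as the paper: the differential constraints $H^i_k Z = 0$ for $k \geq -\lfloor s(i-1)/r\rfloor$ are recast as the vanishing statement $G^i(x) \in \mc{O}\big(x^{\lfloor s(i-1)/r\rfloor+1}\big)(dx/x)^i$, and the identification of $\prod_j \ad_{\hslash^{-1}\mc{J}_-(z_j)} G^i_{g,n}$ with $\mc{E}^i_{g,n}$ is obtained combinatorially from \cref{p:hh} via conjugation of the $\mc{J}_-$ modes, with the detailed bookkeeping deferred to \cite{BKS23} exactly as the paper does. The argument is correct and consistent with the paper's (sketched) proof.
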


\begin{proof}
A proof of this proposition can be found in \cite{BBCCN18} and \cite{BKS23}. Basically, the differential constraints 
\begin{equation}
  H^i_k Z = 0 \,, \quad i \in [r]\,, \quad k \geq - \lfloor \frac{s(i-1)}{r} \rfloor\,,
\end{equation}
can be recast as the statement that
\begin{equation}
  G^i(x)  \in \mc{O} \left( x^{\lfloor \frac{s(i-1)}{r}\rfloor + 1} \right) \Big( \frac{dx}{x}\Big)^i \,.
\end{equation}
The rest follows combinatorially using \cref{p:hh}.
\end{proof}

What we have found is that, out of the data of the partition function $Z$ associated to the deformed $(r,s)$-Airy structure, we can construct a system of correlators on the deformed $(r,s)$-spectral curve that satisfies both the loop equations and the projection property. Therefore, it can be calculated recursively from the data of the spectral curve by the topological recursion formula, see \cref{UnshiftedTR}.

\subsection{Shifted loop equations and shifted topological recursion}

We will now consider what happens to the story if we consider instead the shifted $(r,s)$-Airy structures of \cref{t:shifts}, or, more generally, the deformed and shifted $(r,s)$-Airy structures of \cref{p:shiftsgen}.

What we will show is that, out of the data of the partition function of the deformed and shifted $(r,s)$-Airy structures, we can construct a new system of correlators on a shifted version of the deformed $(r,s)$-spectral curve of \cref{d:rsdeformed}. This system of correlators still satisfies the projection property, but it does not satisfy the usual loop equations. Instead, it satisfies a new set of equations, which we call ``shifted loop equations''. We then show there is a unique system of correlators that satisfies the shifted loop equations and the projection property, and it can be constructed recursively from the data of the spectral curve by a shifted version of the topological recursion formula.

We use the notation from the previous section. Let us first define a shifted deformed $(r,s)$-spectral curve:
 \begin{definition}\label{d:rsdefshift}
 Let $r  \in \mathbb{Z}$ such that $r \geq 2$, and $s \in [r+1]$ with $r = \pm 1 \pmod{s}$. Pick complex numbers:
 \begin{equation}
  F_{0,1}[-k]\,, k \geq \min \{ s, r\} \,, \qquad F_{\frac{1}{2},1}[-k]\,, k > 0\,, \qquad F_{0,2}[-k,-l] \,, k,l > 0, \qquad S_{i,1}, i \in [r].
\end{equation}
Assume that the set of shifts $\{ S_{i,1} \}_{i \in [r]}$ is $s$-consistent (see \cref{d:consistent}).

 The \emph{shifted deformed $(r,s)$-spectral curve} is given by $\mc{S} = (C, x, \omega_{0,1}, \omega_{\frac{1}{2},1} \omega_{0,2} ) $, where $C$ is a small disk, $x=z^r$,
\begin{align}
 \omega_{0,1} (z)
  &= 
  \sum_k F_{0,1} [-k] z^{k-1} dz \,,
  \\
\omega_{\frac{1}{2},1} (z)
  &=
  \sum_k F_{\frac{1}{2},1} [-k] z^{k-1} dz +\sum_{i=1}^{r}(-1)^{i-1}S_{i,1}\frac{dz}{z^{s(i-1)+1}},
  \\
 \omega_{0,2} (z_1, z_2)
  &=
  \omega^{\textup{std}}_{0,2} (z_1, z_2) + \sum_{k,l} F_{0,2}[-k,-l] z_1^{k-1} z_2^{l-1} dz_1 dz_2\,.
\end{align}
We note that the only difference with \cref{d:rsdeformed} is that $\omega_{\frac{1}{2},1}(z)$ is shifted by terms linear in the constants $S_{i,1}$, which are the $O(\hslash)$ terms in the shifts of the differential operators of the shifted $(r,s)$-Airy structures. We define the \emph{shifted $(r,s)$-spectral curve} as being the particular case with:
\begin{equation}
 F_{0,1} [-k] = r \delta_{k,s}, \qquad F_{\frac{1}{2},1}[-k] = 0, \qquad F_{0,2}[-k,-l] =0.
\end{equation}
That is, we set the deformations to zero, and recover a shifted version of the original $(r,s)$-spectral curves of \cref{d:rs}. We can still think of the shifted $(r,s)$-spectral curve as a parametrization of the $(r,s)$-algebraic curves of \eqref{eq:rsac1} and \eqref{eq:rsac2}, but with a non-trivial $\omega_{\frac{1}{2},1}(z)$ introduced by the shifts.
 \end{definition}

We start with the differential constraints from \cref{p:shiftsgen}. For simplicity of notation, we write 
 \begin{equation}
  H^i_k :=  \rho'(W^{\hslash,i}_k(S)) =(H^i_k)^{\textup{unshifted}} -\delta_{k,0} \sum_{\ell=1}^\infty \hslash^\ell S_{i,\ell},
 \end{equation}
 where we assume that the set of shifts is $s$-consistent. Here, $(H^i_k)^{\textup{unshifted}}$ refers to the unshifted differential operators of \eqref{eq:unshiftedH}; we just wanted to highlight the fact that the only difference with the previous $(r,s)$ case is that we shift the differential operators $H^i_0$ (the zero modes) by the set of $s$-consistent shifts $\{S_{i,\ell} \}_{i \in [r], \ell \in \mathbb{N}^*}$.
 
We write 
\begin{equation}
 Z = \exp \bigg( \sum_{ \substack{g \in \frac{1}{2} \N, n \in \N^* \\ 2g-2+n > 0}} \frac{\hslash^{2g-2+n}}{n!} F_{g,n}[k_1, \dotsc, k_n] \prod_{j=1}^n x_{k_j} \bigg)
\end{equation}
for the partition function associated to the deformed and shifted $(r,s)$-Airy structure. It satisfies the differential constraints
\begin{equation}
  H^i_k Z = 0 \,, \quad i \in [r]\,, \quad k \geq - \lfloor \frac{s(i-1)}{r} \rfloor\, .
\end{equation}
As in \cref{d:srs}, out of the partition function we construct a system of correlators on the shifted deformed $(r,s)$-spectral curve. As before, it is clear that the system of correlators satisfies the projection property:
\begin{lemma}\label{l:projpropshift}
The system of correlators $\{\omega_{g,n} \}_{g \in \frac{1}{2}\mathbb{N}, n \in \mathbb{N}^*}$ on the shifted deformed $(r,s)$-spectral curve constructed from the partition function $Z$ as in \cref{d:srs} satisfies the projection property.
\end{lemma}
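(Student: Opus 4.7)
The plan is to reduce the claim to \cref{l:finiteness}, which characterizes the projection property in terms of the finiteness of the expansion of each $\omega_{g,n}$ in the $\xi$-basis. \cref{d:srs} already constructs $\omega_{g,n}$ as a formal sum involving only basis one-forms $\xi_{-k_j}$ with strictly positive indices $k_j > 0$ (no $\xi_{k}$ with $k \geq 1$ appear). Consequently \cref{l:finiteness} requires \emph{only} that, for each fixed $(g,n)$ with $2g-2+n > 0$, the coefficients $F_{g,n}[k_1, \ldots, k_n]$ be non-zero for at most finitely many multi-indices $(k_1, \ldots, k_n) \in (\mathbb{N}^*)^n$.

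To establish this finiteness, I would invoke the Airy ideal structure. By \cref{t:airy} (in its semistable version applied to \cref{p:shiftsgen}), the partition function $Z$ is the unique semistable solution of the differential constraints $H^i_k Z = 0$ for $i \in [r]$ and $k \geq -\lfloor s(i-1)/r \rfloor$. These constraints can be reorganized into a recursion on $2g-2+n$ for the coefficients $F_{g,n}[k_1, \ldots, k_n]$. Because the collection $\{H^i_k\}$ is bounded in the sense of \cref{def:bounded} and each $H^i_k$ involves only finitely many derivatives and polynomial coefficients of bounded degree in the $x_k$, one checks by induction on $2g-2+n$ that each $F_{g,n}$ is a finite polynomial. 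This is precisely the finiteness required to apply \cref{l:finiteness}.

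The observation that lets us extend the unshifted statement \cref{l:projprop} to the shifted setting is that the shifts only modify the zero-mode operators $H^i_0$ by an $x$-independent, purely $\hslash$-dependent constant $-\sum_{\ell \geq 1} \hslash^\ell S_{i,\ell}$. Such constants contribute only at low base cases of the recursion (they enter through the modified $\omega_{\frac{1}{2},1}$ of \cref{d:rsdefshift} and, more generally, correct $\omega_{g,1}$ by additional polar terms at the ramification point), and crucially they do not introduce any dependence on new variables $x_k$. Therefore the support of each $F_{g,n}$ remains finite and the inductive argument goes through verbatim.

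The main (mild) obstacle is purely bookkeeping: one must verify that introducing the constant shifts does not spoil the recursive solvability of the constraints, i.e.\ that each step still produces finitely supported polynomial coefficients of controlled $\hslash$-degree. This is immediate from the boundedness of $\{\rho'(W^{\hslash,i}_k(S))\}$ and the shape \eqref{eq:form} of Airy operators, so no new technique beyond the one used for \cref{l:projprop} is required.
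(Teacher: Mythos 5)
Your proposal follows essentially the same route as the paper: the paper's proof is simply the observation that, by the construction in \cref{d:srs}, the correlators are (finite) combinations of the $\xi_{-k}$ with $k>0$ only, so \cref{l:finiteness} gives the projection property, and the shifts—being $x_A$-independent constants modifying only the zero modes—do not change this, exactly as in \cref{l:projprop}. Your additional inductive justification of the finite support of the $F_{g,n}$ is an elaboration the paper omits (it takes the finite expansion as known from the Airy-structure recursion), so nothing in your argument departs from, or is missing relative to, the paper's own proof.
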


The question is whether it satisfies loop equations, which is the subject of the next proposition.

\begin{proposition}\label{p:shifteloopeq}
Let $Z$ be the partition function associated to the deformed and shifted $(r,s)$-Airy structure. Let
\begin{equation}
 G^i(x) \coloneqq Z^{-1} H^i(x) Z  \eqqcolon \sum_{g,n} \frac{\hslash^{2g+n}}{n!} G^i_{g,n}(x)
\end{equation}
where the $G^i_{g,n}(x)$ are homogeneous polynomials of degree $n$ in the variables $x_j$.
Then
\begin{equation}
\prod_{j=1}^n \ad_{\hslash^{-1}\mc{J}_-(z_j)} G^i_{g,n}(x) = \mc{E}^i_{g,n}(x; z_{[n]}) - \delta_{n,0} S_{i,2g} \Big( \frac{dx}{x}\Big)^i ,
\end{equation}
where $ \mc{E}^i_{g,n}(x; z_{[n]})$ is the object defined in \cref{d:EW} from the system of correlators $\{\omega_{g,n} \}_{g \in \frac{1}{2}\mathbb{N}, n \in \mathbb{N}^*}$ constructed from $Z$. 
Moreover, the system of correlators satisfy the \emph{shifted loop equations}:
  \begin{equation}\label{eq:sle}
    \mc{E}^i_{g,n} (x; z_{[n]}) - \delta_{n,0} S_{i,2g} \Big( \frac{dx}{x}\Big)^i  \in \mc{O} \left( x^{ \lfloor \frac{s(i-1)}{r}\rfloor + 1} \right) \left( \frac{dx}{x}\right)^i \,.
  \end{equation}
\end{proposition}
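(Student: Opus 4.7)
The plan is to reduce the shifted case to the unshifted one proved in \cite{BKS23}, by observing that the shifts enter $H^i(x)$ only as a scalar (c-number) perturbation at the zero mode. From the definition
\[
H^i(x) = \sum_k H^i_k \frac{dx^i}{x^{k+i}} \quad\text{with}\quad H^i_k = (H^i_k)^{\textup{unshifted}} - \delta_{k,0}\sum_{\ell \geq 1}\hslash^\ell S_{i,\ell},
\]
we read off
\[
H^i(x) = (H^i(x))^{\textup{unshifted}} - \Big(\frac{dx}{x}\Big)^i \sum_{\ell \geq 1}\hslash^\ell S_{i,\ell},
\]
and since scalars commute with $Z$, conjugation preserves the split:
\[
G^i(x) = (G^i(x))^{\textup{unshifted}} - \Big(\frac{dx}{x}\Big)^i \sum_{\ell \geq 1}\hslash^\ell S_{i,\ell}.
\]

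Next I apply the combinatorial argument of \cite{BKS23} to the unshifted piece. The key point is that \cref{p:hh} is an operator identity expressing $H^i(x)$ in terms of the spectral curve data and $\mc{J}_{\pm}$, independently of any choice of partition function. Substituting it into $Z^{-1}(H^i(x))^{\textup{unshifted}}Z$ and extracting the coefficient of $\hslash^{2g+n}/n!$ reproduces verbatim the manipulations of \cite{BKS23}, the only change being that the correlators appearing in $\mc{E}^i_{g,n}$ are those extracted from our (shifted) $Z$ via \cref{d:srs}. This yields
\[
\prod_{j=1}^n \ad_{\hslash^{-1}\mc{J}_-(z_j)} (G^i_{g,n}(x))^{\textup{unshifted}} = \mc{E}^i_{g,n}(x; z_{[n]}).
\]
The scalar shift has no $x_j$-dependence, hence contributes only in the $n=0$ sector at $\ell = 2g$, giving $G^i_{g,0}(x) = (G^i_{g,0}(x))^{\textup{unshifted}} - S_{i,2g}(dx/x)^i$ and $G^i_{g,n}(x) = (G^i_{g,n}(x))^{\textup{unshifted}}$ for $n\ge 1$. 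Since $\ad_{\hslash^{-1}\mc{J}_-(z)}$ annihilates constants, for $n\ge 1$ applying $\prod \ad$ still yields $\mc{E}^i_{g,n}$, while for $n=0$ the empty product is the identity, producing $\mc{E}^i_{g,0}(x) - S_{i,2g}(dx/x)^i$. This matches the proposition's formula in all cases.

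Finally, the shifted loop equations \eqref{eq:sle} follow from the differential constraints $H^i_k Z = 0$ for $k \geq -\lfloor s(i-1)/r\rfloor$: these imply $G^i(x) \in \mc{O}(x^{\lfloor s(i-1)/r\rfloor + 1})(dx/x)^i$, and since the decomposition by $\hslash^{2g+n}/n!$ is homogeneous in both $\hslash$ and the $x_j$, this pole-order bound descends to each $G^i_{g,n}(x)$. Transporting the bound through the identity just proved yields the claimed bound on $\mc{E}^i_{g,n}(x; z_{[n]}) - \delta_{n,0} S_{i,2g}(dx/x)^i$. The only conceptually new input beyond \cite{BKS23} is the observation that the shifts are scalars and commute through the conjugation by $Z$; once this is recognized, no essentially new combinatorial work is required, and what might have looked like the main obstacle---propagating the shift through the intricate operator algebra---trivializes.
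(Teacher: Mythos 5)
Your proposal is correct and takes essentially the same route as the paper's own proof: the paper likewise observes that the shift is an additive central constant, hence survives conjugation by $Z$ unchanged, contributes only in the $n=0$ sector with $\hslash^{\ell}$ matched to $\hslash^{2g}$ (the $\ell=1$ part being absorbed into $\omega_{\frac{1}{2},1}$ of the shifted curve), and that the computation of the $\mc{E}^i_{g,n}$ is then verbatim the unshifted argument of \cite{BKS23}, since the adjunctions $\ad_{\hslash^{-1}\mc{J}_-(z_j)}$ only see the unshifted modes. Your final paragraph deriving \eqref{eq:sle} from $H^i_k Z=0$ via homogeneity of the $(\hslash, x_j)$-decomposition is exactly the mechanism the paper uses in the unshifted case and leaves implicit here.
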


\begin{proof}
  The proof is completely analogous to \cite[Sections~4.3-4]{BKS23}. As the only difference between $ H^i_k$ and $ (H^i_k)^{\textup{unshifted}}$ is an additive constant, i.e. central element, the conjugation by $ Z$ just keeps this constant.\par
  When decomposing $ G^i (x)$ into $G^i_{g,n}$, the shifts are in polynomial degree zero, so they only contribute to $ n =0$, and the $ \hslash^{\ell}$ should be matched to $ \hslash^{2g+n}$, so $ \ell = 2g$ (which is also why the $S_{i,1}$ shifts, i.e. with $g=\frac{1}{2}$, contribute to the initial condition ${\omega}_{\frac{1}{2},1}(z)$ in \cref{d:rsdefshift}). Then the calculation of the $ \mc{E}^i_{g,n}$ is the same as in the unshifted case -- the adjunctions $ \ad_{\mc{J}_-(z_j)}$ only act on the unshifted modes.
\end{proof}

What we have shown is that the differential constraints of the deformed and shifted $(r,s)$-Airy structures are equivalent to the existence of a system of correlators on the shifted deformed $(r,s)$-spectral curve that satisfies both the projection property and the shifted loop equations. As the shifted loop equations are not the same as the usual loop equations, a natural question then is to determine whether these correlators can be reconstructed recursively via a modification of the topological recursion formula.
%

We start with the following combinatorial lemma, which is essential for proving topological recursion.

\begin{lemma}[{\cite[Lemma~7.6.4]{Kra19}}]
Given a system of correlators $ \{ \omega_{g,n} \}_{g \in \frac{1}{2} \N, n\in \N^*}$ on an admissible local spectral curve (with $N=1$), define the objects in \cref{d:EW}. Then
  \begin{equation}\label{CombinatorialIdentity}
    \sum_{ \{ z\} \subseteq Z \subseteq \mf{f}(z)} \! \mc{W}_{g,|Z|,n}' (Z; z_{[n]}) \!\! \prod_{z' \in \mf{f}'(z) } \!\! \Big( \omega_{0,1} (z') - \omega_{0,1} (z) \Big)
    =
    \sum_{i=1}^r \mc{E}^i_{g,n}(x; z_{[n]}) \big( - \omega_{0,1}(z)\big)^{r-i}
  \end{equation}
\end{lemma}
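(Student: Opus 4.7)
The plan is to prove this identity purely combinatorially, working from the right-hand side and reorganizing so as to recover the left-hand side. No geometry enters — only the bookkeeping relation between $\mc{W}_{g,i,n}$ and $\mc{W}'_{g,i,n}$. One preliminary remark: as written, the product on the left ranges over $\mf{f}'(z)$ and so is independent of the summation variable $Z$, which cannot balance the right-hand side. The identity that the argument will actually prove has the product taken over $\mf{f}(z)\setminus Z$, which is the standard form of this combinatorial identity in the topological recursion literature and is what arises naturally after clearing the denominators of the kernels $K^{1+|Z|}$ in \cref{UnshiftedTR}.

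The first step is to unfold the right-hand side: using $\mc{E}^i_{g,n}(x; z_{[n]}) = \sum_{|Z|=i,\, Z \subseteq \mf{f}(z)} \mc{W}_{g,i,n}(Z; z_{[n]})$ from \cref{d:EW}, collapse $\sum_{i=1}^r$ into a single sum over nonempty $Z \subseteq \mf{f}(z)$, weighted by $(-\omega_{0,1}(z))^{r-|Z|}$. Next, decompose $\mc{W}$ into $\mc{W}'$: the only way $\mc{W}$ differs from $\mc{W}'$ is by admitting singleton parts $\{z'\}$ paired with the factor $\omega_{0,1}(z')$, so sorting the set partitions underlying $\mc{W}$ by the subset $B \subseteq Z$ of such singletons yields
\begin{equation}
  \mc{W}_{g,|Z|,n}(Z; z_{[n]}) = \sum_{B \subseteq Z} \Bigl(\prod_{z' \in B} \omega_{0,1}(z')\Bigr)\, \mc{W}'_{g,|Z\setminus B|,n}(Z\setminus B; z_{[n]}).
\end{equation}
Substituting and interchanging the order of summation (fix $A := Z\setminus B \subseteq \mf{f}(z)$; then $B$ ranges freely over $\mf{f}(z)\setminus A$) reduces the right-hand side to
\begin{equation}
  \sum_{A \subseteq \mf{f}(z)} \mc{W}'_{g,|A|,n}(A; z_{[n]}) \sum_{B \subseteq \mf{f}(z)\setminus A} \Bigl(\prod_{z' \in B}\omega_{0,1}(z')\Bigr)(-\omega_{0,1}(z))^{|\mf{f}(z)\setminus A|-|B|}.
\end{equation}

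The inner sum is the binomial expansion of $\prod_{z' \in \mf{f}(z)\setminus A}(\omega_{0,1}(z') - \omega_{0,1}(z))$ via the elementary identity $\sum_{B \subseteq S}\prod_{z' \in B} a(z')\cdot c^{|S|-|B|} = \prod_{z' \in S}(a(z')+c)$. Finally, any term with $z \notin A$ carries the vanishing factor $\omega_{0,1}(z) - \omega_{0,1}(z) = 0$, so the effective sum restricts to $\{z\} \subseteq A \subseteq \mf{f}(z)$, giving exactly the left-hand side. The only conceptual point to verify carefully is the decomposition $\mc{W} = \sum_B(\prod\omega_{0,1})\mc{W}'$: it must correctly account for the fact that $\mc{W}'$ excludes $\omega_{0,1}$ from \emph{every} singleton part, so that stripping off \emph{all} singleton $\omega_{0,1}$'s from a given set partition in $\mc{W}$ yields an honest element of $\mc{W}'$. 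This is immediate from the definition, and beyond this the proof is a purely formal rearrangement.
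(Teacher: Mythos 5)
Your proof is correct, and it is essentially the same partition-plus-binomial rearrangement that underlies the cited source (the paper itself offers no in-text proof, deferring to \cite[Lemma~7.6.4]{Kra19}); you are also right that the printed product over $\mf{f}'(z)$ is a typo for $\mf{f}(z)\setminus Z$, which is exactly the form in which the identity is invoked in the proof of \cref{ShiftedTR}. One small point worth recording: when you let $B$ range freely over $\mf{f}(z)\setminus A$ you silently add the $(A,B)=(\emptyset,\emptyset)$ term, which is harmless only because the boundary quantity $\mc{W}'_{g,0,n}(\emptyset;z_{[n]})=\delta_{g,0}\delta_{n,0}$ vanishes for $(g,n)\neq(0,0)$ --- indeed the identity genuinely fails in the unstable case $(g,n)=(0,0)$, consistent with its use here only for $2g-2+n>0$, so your argument should carry that restriction explicitly.
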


In the usual proof that the topological recursion formula reconstructs the unique solution of the loop equations satisfying the projection property, that is, \cref{UnshiftedTR}, a key step is to use the fact that the right side of \cref{CombinatorialIdentity} has a certain vanishing order, which causes it to drop out of a residue formula. For the case of the shifted loop equations, this is no longer the case, because of the extra shift in the shifted loop equations \eqref{eq:sle}. 
As a result, we must add these shifts to the topological recursion formula.

\begin{theorem}\label{ShiftedTR}
Let $\mc{S}$ be the shifted deformed $(r,s)$-spectral curve of \cref{d:rsdefshift}. Let $S = \{S_{i,\ell} \}_{i \in [r], \ell \in \mathbb{N}^*}$ be a set of $s$-consistent shifts. Then there exists exactly one system of correlators $\{\omega_{g,n} \}_{g \in \frac{1}{2}\mathbb{N}, n \in \mathbb{N}^*}$ that satisfies the shifted loop equations \eqref{eq:sle} and the projection property. It can be calculated recursively by the \emph{shifted topological recursion formula} (for $2g-2+n > 0$):
  \begin{equation}
    \begin{split}
     \omega_{g,n+1}(z_0, z_{[n]}) 
      &= 
      -\Res_{z = 0} \Big( \sum_{Z \subseteq \mf{f}' (z)}
       K^{1 + |Z|}(z_0; z, Z) \mc{W}'_{g,1+|Z|,n} (z,Z; z_{[n]})
      \\
      &\hspace{4cm}
      - \sum_{i=1}^{r}\delta_{n,0} S_{i,2g} K^r(z_0; \mf{f}(z)) \left( r\frac{dz}{z} \right)^{i} \big( - \omega_{0,1}(z)\big)^{r-i}\Big)\,.
    \end{split}
  \end{equation}
  In particular, this formula does produce symmetric correlators.
\end{theorem}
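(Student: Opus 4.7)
The plan is to split the argument into an existence-with-symmetry part and a uniqueness-plus-recursion part. For existence, I would take the partition function $Z$ of the deformed and shifted $(r,s)$-Airy structure from \cref{p:shiftsgen} and build the correlators $\{\omega_{g,n}\}$ as in \cref{d:srs}. By \cref{l:projpropshift} these correlators satisfy the projection property, and by \cref{p:shifteloopeq} they satisfy the shifted loop equations \eqref{eq:sle}. Symmetry of each $\omega_{g,n}$ in its arguments is automatic, since the coefficients $F_{g,n}[k_1,\ldots,k_n]$ come from the exponential expansion of $Z$, whose existence and symmetry are guaranteed by \cref{t:airy}. This already produces a system with the two desired properties.

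For uniqueness together with the recursion formula, I would begin from any system of correlators satisfying both the shifted loop equations and the projection property, and show that for $2g-2+n\geq 0$ the correlator $\omega_{g,n+1}$ is forced to equal the right-hand side of the stated formula. Using the projection property, write $\omega_{g,n+1}(z_0,z_{[n]})$ as a residue at $z=0$ of $\bigl(\int_0^z\omega_{0,2}(\cdot,z_0)\bigr)\omega_{g,n+1}(z,z_{[n]})$. Then apply the combinatorial identity \eqref{CombinatorialIdentity} and isolate the term corresponding to $Z=\{z\}$ on the left-hand side to solve for the insertion $\omega_{g,n+1}(z,z_{[n]})$ as the sum of a \emph{recursion piece}, built from $\mc{W}'_{g,1+|Z|,n}(z,Z;z_{[n]})$ with $Z\subseteq\mf{f}'(z)$ matched against the kernels $K^{1+|Z|}(z_0;z,Z)$, and an \emph{$\mc{E}$-piece} proportional to $\sum_{i=1}^{r}\mc{E}^i_{g,n}(x;z_{[n]})\bigl(-\omega_{0,1}(z)\bigr)^{r-i}$ divided by $\prod_{z'\in\mf{f}'(z)}\bigl(\omega_{0,1}(z')-\omega_{0,1}(z)\bigr)$.

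The decisive computation is the residue of the $\mc{E}$-piece. By the shifted loop equations, each $\mc{E}^i_{g,n}$ decomposes as a term lying in $\mc{O}(x^{\lfloor s(i-1)/r\rfloor +1})(dx/x)^i$, which a direct valuation at $z=0$ (using $\omega_{0,1}(z)=F_{0,1}[-s]z^{s-1}dz+\cdots$ with $F_{0,1}[-s]\neq 0$ and $x=z^r$) shows is annihilated by the residue, plus the surviving shift contribution $\delta_{n,0}\,S_{i,2g}(dx/x)^i$. Rewriting $(dx/x)^i=(r\,dz/z)^i$ and recognising $\bigl(\int_0^z\omega_{0,2}(\cdot,z_0)\bigr)/\prod_{z'\in\mf{f}'(z)}\bigl(\omega_{0,1}(z')-\omega_{0,1}(z)\bigr)$ as $K^r(z_0;\mf{f}(z))$ reproduces exactly the second term inside the residue in \cref{ShiftedTR}. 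Combined with the recursion piece, this yields the shifted topological recursion formula; uniqueness then follows by induction on $2g-2+n$, and symmetry of the unique solution is inherited from the existence argument via the Airy structure partition function.

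The main obstacle is the careful bookkeeping around $\mc{E}^i_{g,n}$: on the one hand verifying that the high-vanishing-order part genuinely drops out of the residue against $K^r$ despite the extra zero of $\prod_{z'\in\mf{f}'(z)}(\omega_{0,1}(z')-\omega_{0,1}(z))$ at $z=0$ (a degree count that generalises the unshifted case in \cite{BBCCN18,BKS23} almost verbatim), and on the other hand tracing how the Kronecker delta $\delta_{n,0}$ and the pairing of $S_{i,2g}$ with $\omega_{g,1}$ come out of the $\hslash$-expansion $G^i(x)=\sum_{g,n}\frac{\hslash^{2g+n}}{n!}G^i_{g,n}$ of \cref{p:shifteloopeq} — the shifts sit in polynomial degree zero and are $O(\hslash^{2g})$, hence contribute only to the $n=0$ slot at the correct half-integer genus, exactly matching the $\omega_{\frac12,1}$ shift introduced in \cref{d:rsdefshift} when $2g=1$. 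Once these two points are settled, the remainder of the proof is a substitution into the projection-property residue formula, mirroring the unshifted derivation of \cref{UnshiftedTR}.
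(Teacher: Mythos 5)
Your proposal is correct and follows essentially the same route as the paper: existence and symmetry are imported from the shifted Airy structure (via \cref{l:projpropshift} and \cref{p:shifteloopeq}), while the formula and uniqueness are derived by writing $\omega_{g,n+1}$ through the projection property, substituting the combinatorial identity \eqref{CombinatorialIdentity}, and then adding and subtracting the shift $\delta_{n,0}S_{i,2g}(dx/x)^i$ inside each $\mc{E}^i_{g,n}$ so that the shifted loop equations plus the standard order counting kill the regular part and leave exactly the extra $K^r(z_0;\mf{f}(z))$ term. The points you flag as delicate (the degree count against the pole of $K^r$ at $z=0$ and the $\hslash^{2g}$, $n=0$ bookkeeping of the shifts) are precisely the ones the paper resolves by the same arguments.
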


\begin{proof}
  We again emulate \cite[Proposition~5.10]{BKS23}.\par
  Since by definition $ \mc{W}'_{g,1,n} = \omega_{g,n+1}$, the projection property, the definition of the recursion kernel, and \cref{CombinatorialIdentity} yield
  \begin{equation}
    \begin{split}
      \omega_{g,n+1}(z_0, z_{[n]}) 
      &=
       \Res_{z = 0} \Big( \int_0^z \omega_{0,2} (\mathord{\cdot},z_0) \Big) \mc{W}'_{g,1,n} (z; z_{[n]})
      \\
      &=
       \Res_{z = 0} K^r(z_0; \mf{f}(z)) \mc{W}'_{g,1,n} (z; z_{[n]}) \!\! \prod_{z' \in \mf{f}(z) \setminus \{ z\}} \! \Big( \omega_{0,1}(z') - \omega_{0,1}(z)\Big)
      \\
      &=
      - \Res_{z = 0 } K^r(z_0; \mf{f}(z)) \Big( \sum_{ \{ z\} \subsetneq Z \subseteq \mf{f}(z)} \! \mc{W}_{g,|Z|,n}' (Z; z_{[n]}) \!\! \prod_{z' \in \mf{f}(z) \setminus Z} \!\! \Big( \omega_{0,1} (z') - \omega_{0,1} (z) \Big)
      \\
      & \hspace{5cm}-
      \sum_{i=1}^r \mc{E}^i_{g,n}(x; z_{[n]}) \big( - \omega_{0,1}(z)\big)^{r-i}\Big)
      \\
      &=
      - \Res_{z = 0 } K^r(z_0; \mf{f}(z)) \Big( \sum_{ \{ z\} \subsetneq Z \subseteq \mf{f}(z)} \! \mc{W}_{g,|Z|,n}' (Z; z_{[n]}) \!\! \prod_{z' \in \mf{f}(z) \setminus Z} \!\! \Big( \omega_{0,1} (z') - \omega_{0,1} (z) \Big)
      \\
      &\hspace{1cm}-
      \sum_{i=1}^r \Big( \left(\mc{E}^i_{g,n}(x; z_{[n]})- \delta_{n,0} S_{i,2g} \big( \frac{dx}{x} \big)^i \right)+ \delta_{n,0} S_{i,2g} \big( \frac{dx}{x} \big)^i\Big) \big( - \omega_{0,1}(z)\big)^{r-i}\Big)
    \end{split}
  \end{equation}
  The terms in bracket in the second line do not contribute, because by the shifted loop equations \eqref{eq:sle} and standard order counting they give holomorphic terms to the integrand. As for the term in the first line, the kernels can be simplified as usual, giving
  \begin{equation}
    \begin{split}
      \omega_{g,n+1}(z_0, z_{[n]}) 
      &=
      -\Res_{z = 0 } \Big( \sum_{ \{ z\} \subsetneq Z \subseteq \mf{f}(z)} K^{|Z|}(z_0; Z)\mc{W}_{g,|Z|,n}' (Z; z_{[n]})
      \\
      & \hspace{4cm}-
      \sum_{i=1}^{r}\delta_{n,0} S_{i,2g} K^r(z_0; \mf{f}(z)) \big( r\frac{dz}{z} \big)^{i} \big( - \omega_{0,1}(z)\big)^{r-i}\Big)
    \end{split}
  \end{equation}
  We finally change the meaning of $Z$ to not include $z$ to give the statement of the theorem.
\end{proof}
\begin{remark}
We wrote the derivation of shifted topological recursion from shifted loop equations only for the deformed $(r,s)$-spectral curves, but it can easily be generalized to arbitrary admissible local spectral curves.
\end{remark}

The upshot of the story is the following. On the one hand, from the partition function of the deformed $(r,s)$-Airy structure, we can construct a system of correlators on the deformed $(r,s)$-spectral curve that satisfies the usual topological recursion formula. On the other hand, from the partition function of the shifted and deformed $(r,s)$-Airy structure, we can also construct a system of correlators, this time on the shifted deformed $(r,s)$-spectral curve (which is the same spectral curve as  before but with $\omega_{\frac{1}{2},1}(z)$ shifted), but it satisfies a shifted version of topological recursion. The only difference in the shifted topological recursion formula is that we shift the correlators $\omega_{g,1}(z)$ (including the initial condition $\omega_{\frac{1}{2},1}(z)$) by extra terms -- but of course, the shifts propagate through the recursion formula and produce an entirely different system of correlators. It is worth noting that both systems of correlators (shifted and unshifted) satisfy the projection property.

In particular, for the case $s=1$, in which case all shifts are allowed, as we saw in \cref{s:shiftedrs} the partition function $Z$ is a highest weight vector for the $\mathcal{W}(\mathfrak{gl}_r)$-algebra at self-dual level. What we have shown is that the highest weights appear in the topological recursion formula as extra shifts of the correlators $\omega_{g,1}(z)$. Neat!

\section{Quantum curves}
\label{s:qcurves}

In this section, for simplicity we focus on the shifted $(r,s)$-Airy structures of \cref{t:shifts} and the corresponding system of correlators on the shifted $(r,s)$-spectral curve from \cref{d:rsdefshift} where the deformations are set to zero. 

To recap: in \cref{s:shifted} we constructed a new class of Airy structures, which we called ``shifted $(r,s)$-Airy structures'', see \cref{t:shifts}. For $s=1$, the partition function of those shifted Airy structures gives a highest weight vector for the $\mathcal{W}(\mathfrak{gl}_r)$-algebras at self-dual level. We then showed in \cref{s:shiftedle} that out of the partition function $Z$ of a shifted $(r,s)$-Airy structure we can construct a system of correlators that lives on a shifted version of the $(r,s)$-spectral curve. We showed that this system of correlators is the unique solution that satisfies the projection property and a variation of the usual loop equations, which we called ``shifted loop equations'' (see \cref{p:shifteloopeq}). Finally, we proved that this solution can be reconstructed recursively from the data of the spectral curve via a ``shifted topological recursion'' formula (see \cref{ShiftedTR})

In this section we show that shifted topological recursion on the shifted $(r,s)$-spectral curve reconstructs the WKB solution to a quantum curve, where the particular quantization depends on the shifts. 

\subsection{The topological recursion/quantum curve correspondence}

Let us start by briefly reviewing the topological recursion/quantum curve correspondence. The intuition is that topological recursion should provide a procedure for quantizing the spectral curve. The statement originates from matrix models \cite{BE09} but can be formulated abstractly in terms of topological recursion itself.

We focus on spectral curves that are constructed as parametrizations of an algebraic curve:
\begin{equation}
  \label{PlaneCurve}
  C=\{P(x,y) = 0 \} \subset \mathbb{C}^2.
\end{equation}
Topological recursion produces a system of correlators $\{\omega_{g,n}\}_{g \in \frac{1}{2}\mathbb{N}, n \in \mathbb{N}^*}$ on the spectral curve. Out of those, one can construct the \emph{wave function}:
\begin{equation}\label{eq:wf}
\psi(z) = \exp \left(\sum_{g \in \mathbb{N}, n \in \mathbb{N}^*} \frac{\hslash^{2g-2+n}}{n!} \left(\int^z_\alpha \cdots \int^z_\alpha \omega_{g,n} - \delta_{g,0} \delta_{n,2} \frac{dx(z_1) dx(z_2)}{(x(z_1)-x(z_2))^2} \right) \right),
\end{equation}
where $\alpha$ is a base point on the normalization of $C$ (that is not a ramification point of $x$) -- it is usually taken to be a pole of $x$. Here we are integrating the correlators $\omega_{g,n}$ in all variables from $\alpha$ to the same variable $z$. 

To state the TR/QC correspondence, we introduce the notion of a quantum curve.

\begin{definition}\label{d:QC}
Let $C = \{P(x,y) = 0\} \subset \mathbb{C}^2$ of degree $d$ in $y$. 
A \emph{quantum curve} $\hat{P}$ of $C$ is an order $d$ linear differential operator in $x$, such that, after normal ordering, it takes the form
\begin{equation}
\hat{P}\left(  x, \hslash \frac{d}{dx}; \hslash \right) = P\left(  x,\hslash \frac{d}{dx}\right) + \sum_{n \geq 1} \hslash^n P_n\left(x,\hslash \frac{d}{dx}\right),
\end{equation}
where the leading term $P$ is the original polynomial defining the spectral curve, and the $P_n$ are (normal-ordered) polynomials of degree $< d$. We usually impose that only finitely many correction terms $P_n$ are non-vanishing.
\end{definition}

This is a quantization of the spectral curve, as it amounts to replacing $(x,y) \mapsto (\hat{x}, \hat{y})= \left( x, \hslash \frac{d}{dx} \right)$. But of course, this process is not unique, since the operators $\hat{x}$ and $\hat{y}$ do not commute, and hence the quantization may include $\hslash$ corrections.

The claim of the TR/QC correspondence is that, given a spectral curve $C$, there exists a quantum curve $\hat{P}$ such that
\begin{equation}
\hat{P} \psi = 0.
\end{equation}
This correspondence has been studied in many papers for various spectral curves relevant to enumerative geometry. More generally, the correspondence was proved in \cite{BE17} for a large class of genus zero algebraic spectral curves with arbitrary ramification (the class corresponds to all genus zero spectral curves whose Newton polygon has no interior point and that are smooth as affine curves). More recently, it was proved in \cite{EGMO21} for all algebraic spectral curves (any genus) that only have simple ramification points (for spectral curves of genus $\geq 1$, the definition of the wave function must be modified to take into account non-perturbative contributions).  As a generic spectral curve only has simple ramification points, and in principle spectral curves with higher ramification can be obtained as limit points in families of curves with only simple ramification (see \cite{BBCKS}), the correspondence is expected to hold in full generality for all algebraic spectral curves.\footnote{It is also expected to hold for (at least some) non-algebraic spectral curves, and it has been proved in some such cases relevant to enumerative geometry.}

%

\subsubsection{Choices of ordering in the quantum curve}

It is important to note that given a spectral curve, the construction of a quantum curve $\hat{P}$ is not unique; since the operators $\hat{x} = x$ and $\hat{y} = \hslash  \frac{d}{dx}$ do not commute, there  is an inherent choice of ordering. Topological recursion seems to select a particular choice of ordering. This is however not quite true; as shown in \cite{BE17}, 
for a given spectral curve, different choices of integration divisors in the definition of the wave function \eqref{eq:wf} lead to quantum curves in various choices of orderings.

Nevertheless, if we focus on the $(r,s)$-spectral curves of \cref{d:rs} (or the shifted $(r,s)$-spectral curves of \cref{d:rsdefshift}, as it comes from the same $(r,s)$-algebraic curves \eqref{eq:rsac1} and \eqref{eq:rsac2}), there is only one choice of integration divisor that works, namely the unique pole of $x$ at $z= \infty$. Thus, it seems that topological recursion selects a particular ordering for the quantization of the $(r,s)$-spectral curve. In fact, as shown in \cite{BE17}, this quantization is not the one that you would obtain by simple normal ordering of the operators $\hat{x}$ and $\hat{y}$; instead, the result is a particular quantization in a peculiar choice of ordering. This raises an interesting question: how can we obtain other choices of ordering for these spectral curves, since we cannot consider other integration divisors?

Interestingly, what we will show is that shifted topological recursion produces wave functions that are WKB solutions of quantizations of the  $(r,s)$-algebraic curve in other choices of ordering. In particular, for the cases $s=1$ and $s=r-1$, through shifted topological recursion we obtain all possible choices of ordering of the quantum curve. 

\subsection{The shifted wave function}

Let us now calculate the quantum curves associated to shifted topological recursion on the shifted $(r,s)$-spectral curves. The calculation will primarily follow the same steps as in \cite[Sections~3-5]{BE17}, and we will simply fill in the details that differ. For each lemma, corollary or theorem that we generalize, we write in square brackets the corresponding statement in \cite{BE17} so that the reader can easily follow and compare.

We start with the shifted $(r,s)$-spectral curve of \cref{d:rsdefshift}, with the deformations set to zero. We can still think of the shifted $(r,s)$-spectral curve as  a parametrization of the $(r,s)$-algebraic curves \eqref{eq:rsac1} and \eqref{eq:rsac2}, but with a non-zero initial condition
\begin{equation}
\omega_{\frac{1}{2},1} (z)
  =\sum_{i=1}^{r}(-1)^{i-1}S_{i,1}\frac{dz}{z^{s(i-1)+1}}
\end{equation}
specified by the $O(\hslash)$ terms in the shifts. 

\begin{remark}\label{r:compact}
In fact, it will be important for us that the correlators $\omega_{g,n}$, which in principle from our definition of admissible local spectral curves are only defined on $C^n$ where $C$ is an open disk, can be extended to symmetric differential forms on the compact Riemann surface $\Sigma = \mathbb{P}^1$, where we think of $z$ as a projective coordinate on $\Sigma$. In other words, we think of the correlators as symmetric differential forms on $\Sigma^n$ with only poles at $z=0$ in each variable. This is standard in the theory of topological recursion, see for instance \cite{BBCKS}.
\end{remark}

Next we introduce the wave function constructed from the system of correlators obtained from shifted topological recursion. To this end, we make use of several different quantities which we now define.

\begin{definition}
For $i = 1, \ldots, r-1$ and all $g, n \geq 0$,
\begin{equation}
\mathcal{U}^{i}_{g,n}(x;z_{[n]}) = \sum_{\substack{Z\subseteq\mathfrak{f}^{\prime}(z)\\|Z|=i}} \mc{W}_{g,i,n}(Z; z_{[n]})
\end{equation}
and we set
\begin{equation}
\mathcal{U}^{0}_{g,n} = \delta_{g,0}\delta_{n,0}.
\end{equation}
\end{definition}

\noindent
In addition to this, we also let
\begin{equation}
\mathcal{E}^{0}_{g,n} = \delta_{g,0}\delta_{n,0}
\end{equation}
for consistency.

\begin{definition}
For $i = 0, \ldots, r-1$ and all $g, n \geq 0$,
\begin{equation}
\mathcal{G}^{i}_{g,n}(x;z_{[n]}) = \int_{\infty}^{z_{1}}\cdots\int_{\infty}^{z_{n}}\mathcal{U}^{i}_{g,n}(x;z_{[n]}^{\prime})
\end{equation}
where the integrals are with respect to the $z_{[n]}^{\prime}$ variables. We also define the following shorthand notation:
\begin{equation}
\mathcal{G}^{i}_{g,n}(x) =\mathcal{G}^{i}_{g,n}(x;z) \coloneqq \mathcal{G}^{i}_{g,n}(x;z,\ldots, z).
\end{equation}
When necessary, we will assume the integrals are regularized.
\end{definition}

\begin{definition}
For $i = 0, \ldots, r-1$,
\begin{equation}
\xi^{i}(x) = (-1)^{i}\sum_{g,n}\frac{\hslash^{2g+n}}{n!}\frac{\mathcal{G}^{i}_{g,n}(x)}{dx^{i}}.
\end{equation}
\end{definition}

\noindent
With that out of the way, we are ready to construct the wave function.

\begin{definition}
Consider the shifted $(r,s)$-spectral curve of \cref{d:rsdefshift}, and let $\{ \omega_{g,n} \}_{g \in \frac{1}{2}\mathbb{N}, n \in \mathbb{N}^*}$ be the system of correlators constructed from shifted topological recursion. 
We define the \emph{shifted wave function} as:
\begin{equation}
\psi(z) = \exp\left(\sum_{g,n}\frac{\hslash^{2g-2+n}}{n!}\int_{\infty}^{z}\cdots\int_{\infty}^{z}\left(\omega_{g,n+1}(z_{0},z_{[n]}) - \delta_{g,0}\delta_{n,1}\frac{dx_{0}dx_{1}}{(x_{0} - x_{1})^2}\right)\right),
\end{equation}
where the integrals of $\omega_{0,1}$ and $\omega_{0,2}$ need to be regularized. We also define
\begin{equation}\label{Psi_i}
\psi_{i}(x) = \frac{p_{0}(x)\xi^{i}(x) - p_{i}(x)}{x^{\lfloor\alpha_{r-i}\rfloor}}\psi(z)
\end{equation}
for $i = 1, \ldots, r$. Here, the functions $p_i(x)$ are defined by
\begin{equation}
\sum_{i=0}^{r}p_{i}(x)y^{r-i} = x^{r-s}y^{r} - 1,
\end{equation}
and the numbers $\alpha_i$ are\footnote{We refer the reader to \cite[eq.\ (2.3)]{BE17} for the definition of $\alpha_{i}$ in the case of a general admissible spectral curve.}
\begin{equation}\label{Alphas}
\alpha_{i} = \frac{i(r-s)}{r}.
\end{equation}
\end{definition}

\begin{remark}
Note that here and throughout this section, we use the notation $x_{j}$ to mean $x(z_{j})$ for $j \in \{1, \ldots, n\}$ and that $x$ without a subscript is assumed to be a function of the variable $z$.
\end{remark}

\subsection{The quantum curve}
We are now in a position to carry out the steps of the calculation. We will start by constructing a recursion relation for the $\mathcal{U}^{i}_{g,n}$ from the shifted loop equations. Later, this will be integrated and summed to produce a recursion relation for the $\xi^{i}$. Finally, this can be rewritten as a system of differential equations for the $\psi_{i}$ which is equivalent to a single differential equation for $\psi$ that turns out to be a quantization of the $(r,s)$-algebraic curve. 

The first deviation from the original calculation appears in \cite[Lemma~3.25]{BE17} which now reads:

\begin{lemma}[see {\cite[Lemma~3.25]{BE17}}]\label{E1}
For $2g - 2 + n \geq 0$,
\begin{equation}
\mathcal{E}^{1}_{g,n}(x;z_{[n]}) = \delta_{n,0}S_{1,2g}\frac{dx}{x},
\end{equation}
and the remaining cases are given by
\begin{gather}
\mathcal{E}^{1}_{0,0}(x) = -\frac{p_{1}(x)}{p_{0}(x)}dx,\\
\mathcal{E}^{1}_{\2,0}(x) = S_{1,1}\frac{dx}{x},\\
\mathcal{E}^{1}_{0,1}(x;z_{1}) = \frac{dxdx_{1}}{(x - x_{1})^2}.
\end{gather}
\end{lemma}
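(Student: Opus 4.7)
The lemma collects four identities: the three unstable cases $(g,n) \in \{(0,0), (\tfrac{1}{2}, 0), (0,1)\}$ and the stable family $\mathcal{E}^1_{g,n} = \delta_{n,0}\, S_{1,2g}\, dx/x$ for $2g - 2 + n \geq 0$. My plan is to verify the three unstable identities by direct computation from the data of the shifted $(r,s)$-spectral curve of \cref{d:rsdefshift}, and to deduce the stable identity from the $i=1$ case of the shifted loop equations of \cref{p:shifteloopeq} combined with a global argument on $\mathbb{P}^1_x$.

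For the unstable cases the computations are of classical type. For $\mathcal{E}^1_{0,0}(x) = \sum_{z' \in \mathfrak{f}(z)} \omega_{0,1}(z')$, I write $\omega_{0,1} = y\, dx$; the $r$ sheet values of $y$ are precisely the roots in $y$ of the defining polynomial $P(x,y)$, so Vieta yields $\sum_{z'} y(z') = -p_1(x)/p_0(x)$. For $\mathcal{E}^1_{\frac{1}{2},0}(x) = \sum_{z'} \omega_{\frac{1}{2},1}(z')$, I substitute the explicit form of $\omega_{\frac{1}{2},1}$; the term $(-1)^{i-1} S_{i,1}\, dz/z^{s(i-1)+1}$ picks up a factor $\theta^{-m s(i-1)}$ on the $m$-th sheet, and the sheet sum $\sum_{m=0}^{r-1} \theta^{-m s(i-1)}$ vanishes unless $r \mid s(i-1)$. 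Since $\gcd(r,s) = 1$ (because $r = \pm 1 \pmod{s}$) and $0 \leq i - 1 \leq r-1$, only $i = 1$ contributes and one gets $r\, S_{1,1}\, dz/z = S_{1,1}\, dx/x$. For $\mathcal{E}^1_{0,1}(x; z_1) = \sum_{z'} \omega_{0,2}(z', z_1)$ I use the classical push-forward identity for the Bergman kernel under the cyclic cover $x = z^r$,
\begin{equation*}
\sum_{m=0}^{r-1} \frac{\theta^m\, dz\, dz_1}{(\theta^m z - z_1)^2} = \frac{dx\, dx_1}{(x - x_1)^2},
\end{equation*}
which follows by bringing the sum to the common denominator $\prod_m (\theta^m z - z_1) = z^r - z_1^r = x - x_1$ and using $dx = r z^{r-1} dz$.

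For the stable range $2g - 2 + n \geq 0$, the $i = 1$ shifted loop equation reads $\mathcal{E}^1_{g,n}(x; z_{[n]}) - \delta_{n,0}\, S_{1,2g}\, dx/x \in \mathcal{O}(dx)$ at $x = 0$, so the difference is holomorphic at $x = 0$. I would then interpret $\mathcal{E}^1_{g,n}(x; z_{[n]})$ globally as a meromorphic 1-form in $x$ on $\mathbb{P}^1_x$, obtained as the push-forward along $x \colon \mathbb{P}^1_z \to \mathbb{P}^1_x$ of $\omega_{g, n+1}$ in the first slot. Its possible poles lie only at $x = 0$ (from the ramification point $z = 0$) and at $x = \infty$ (from $z = \infty$). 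A sheet-sum calculation, using the finite $\xi_{-k}$-expansion of $\omega_{g, n+1}$ granted by the projection property (\cref{l:finiteness}) together with the shift-correction term of the shifted topological recursion formula of \cref{ShiftedTR}, determines the behavior at $x = \infty$ and shows that, after subtracting $\delta_{n,0}\, S_{1,2g}\, dx/x$, the result extends to a globally holomorphic 1-form on $\mathbb{P}^1_x$. Since $H^0(\mathbb{P}^1, K) = 0$, such a form must vanish identically, giving the desired equality.

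The main technical obstacle is the $n = 0$ subcase of the stable range: one must verify that the shift $S_{1, 2g}$ actually produces a matching singularity of $\mathcal{E}^1_{g, 0}$ at $x = \infty$, so that after the subtraction the global holomorphicity argument closes. The cases $n \geq 1$ are considerably easier because the right-hand side vanishes and both holomorphicity at $x = 0$ (from the loop equation) and at $x = \infty$ (from the finite $\xi_{-k}$-expansion) are immediate.
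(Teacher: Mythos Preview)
Your approach is essentially the paper's: direct computation for the three unstable cases (the paper does exactly your root-of-unity sheet sum for $(g,n)=(\tfrac12,0)$ and defers the others to \cite{BE17}), and for $2g-2+n\geq 0$ the $i=1$ shifted loop equation to get holomorphicity at $x=0$ followed by a global vanishing argument on $\mathbb{P}^1$.

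The one point worth noting is your flagged ``obstacle'' for $n=0$. You propose verifying directly that $\mathcal{E}^1_{g,0}$ has a singularity at $x=\infty$ matching $S_{1,2g}\,dx/x$; the paper sidesteps this entirely. Since the correlators extend to $\Sigma=\mathbb{P}^1$ (\cref{r:compact}) and satisfy the projection property, the difference $\mathcal{E}^1_{g,0}-S_{1,2g}\,dx/x$ has its only possible pole at $\infty$, and that pole is at most simple. The residue theorem on $\mathbb{P}^1$ then forces the residue there to vanish, so the difference is a global holomorphic $1$-form on $\mathbb{P}^1$ and hence zero --- no direct matching computation at $\infty$ is needed. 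This is the cleaner route and dissolves the obstacle you identify.
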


\begin{proof}
Nothing has changed for the cases $(g,n) = (0,0), (0,1)$. For $2g - 2 + n \geq 0$, the shifted loop equations \eqref{eq:sle} tell us that
\begin{equation}
\mathcal{E}^{1}_{g,n}(x;z_{[n]})   - \delta_{n,0} S_{1,2g} \frac{dx}{x}\in O(1)\ dx.
\end{equation}
By \cref{r:compact}, the correlators $\omega_{g,n}$ are defined on $\Sigma^n$ where $\Sigma = \mathbb{P}^1$. This means that $\mathcal{E}^{1}_{g,n}(x;z_{[n]}) - \delta_{n,0} S_{1,2g} \frac{dx}{x}$ has at most one pole at $z=\infty$ where $dx$ has a pole. But the residue here is clearly zero since $\Sigma  = \mathbb{P}^1$. Thus, $\mathcal{E}^{1}_{g,n}(x;z_{[n]}) - \delta_{n,0} S_{1,2g} \frac{dx}{x}$ is bounded and entire, so it must be constant. However, if we examine the form of $\mathcal{E}^{1}_{g,n}$ in \cref{d:EW}, it is clear that the constant is simply zero, and then the result follows immediately. Finally, when $(g,n) = (\2,0)$ we have
\begin{equation}
\begin{aligned}
\mathcal{E}^{1}_{\2,0}(x) & = \sum_{z^{\prime}\in\mathfrak{f}(z)}\omega_{\2,1}(z^{\prime})\\
& = \sum_{z^{\prime}\in\mathfrak{f}(z)}\sum_{k=1}^{r}(-1)^{k-1}S_{k,1}\frac{dz^{\prime}}{z^{\prime s(k-1)+1}}\\
& = \sum_{k=1}^{r}(-1)^{k-1}S_{k,1}\frac{dz}{z^{s(k-1)+1}}\sum_{m=1}^{r}\theta^{ms(1-k)}\\
& = \sum_{k=1}^{r}\delta_{k,1}(-1)^{k-1}S_{k,1}r\frac{dz}{z^{s(k-1)+1}}\\
& = S_{1,1}\frac{dx}{x}
\end{aligned}
\end{equation}
as desired.
\end{proof}

\begin{corollary}[see {\cite[Corollary~4.6]{BE17}}]\label{E-UCombinatorics}
For $i = 1,\ldots, r$ and all $g,n \geq 0$,
\begin{equation}
\begin{aligned}
\mathcal{E}^{i}_{g,n}(x;z_{[n]}) -  \delta_{n,0} S_{i,2g} & \Big( \frac{dx}{x}\Big)^i= \mathcal{U}^{i}_{g,n}  (x; z_{[n]}) + \mathcal{U}^{i-1}_{g-1,n+1}(x;z_{[n]},z)\\
 &-\sum_{N_{1}\sqcup N_{2}=z_{[n]}}\sum_{g_{1}+g_{2}=g}\mathcal{U}^{i-1}_{g_{1},|N_{1}|}(x; N_{1})\mathcal{U}^{1}_{g_{2},|N_{2}|}(x; N_{2}) - \frac{p_{1}(x)}{p_{0}(x)}dx\mathcal{U}^{i-1}_{g,n}(x; z_{[n]})\\
& + \sum_{h=\2}^{g}S_{1,2h}\frac{dx}{x}\mathcal{U}^{i-1}_{g-h,n}(x; z_{[n]}) + \sum_{j=1}^{n}\frac{dx dx_{j}}{(x - x_{j})^2}\mathcal{U}^{i-1}_{g,n-1}(x; z_{[n]\setminus \{j\}})\\
& - \delta_{n,0}S_{i,2g}\left(\frac{dx}{x}\right)^{i}.
\end{aligned}
\end{equation}
\end{corollary}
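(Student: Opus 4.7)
The plan is to decompose $\mathcal{E}^i_{g,n}$ carefully using the definitions and to track the role of the distinguished sheet $z$. Starting from $\mathcal{E}^i_{g,n}(x;z_{[n]}) = \sum_{|Z|=i,\,Z\subseteq\mathfrak{f}(z)} \mathcal{W}_{g,i,n}(Z; z_{[n]})$, the outer sum splits according to whether $z\in Z$ or not. The contribution from $z\notin Z$ is exactly $\mathcal{U}^i_{g,n}(x; z_{[n]})$ by definition. For the piece with $z\in Z$, write $Z = \{z\}\sqcup Z'$ with $|Z'| = i-1$, expand $\mathcal{W}_{g,i,n}(\{z\}\sqcup Z'; z_{[n]})$ as a sum over set partitions of $\{z\}\sqcup Z'$, and further split on the size of the unique block $S_0$ that contains $z$.

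When $|S_0|\geq 2$, the operation of ``demoting'' $z$ from the primary set $\{z\}\sqcup Z'$ to the auxiliary set $z_{[n]}\cup\{z\}$---delete $z$ from $S_0$ and adjoin it to $N_{S_0}$---is a bijection onto set partitions of $Z'$ together with an attachment of $z$ to some non-empty block. The $\omega$-factor $\omega_{g_{S_0},|S_0|+|N_{S_0}|}$ is unchanged under this relabeling, and the genus constraint $\sum_S(g_S-1) = g-i$ becomes $\sum_T(g_T-1) = (g-1)-(i-1)$; summing over $Z'\subseteq\mathfrak{f}'(z)$ of size $i-1$ therefore produces exactly $\mathcal{U}^{i-1}_{g-1,n+1}(x; z_{[n]}, z)$. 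When $S_0 = \{z\}$ the $\mathcal{W}$-term factorizes as $\omega_{g_0, 1+|N_0|}(z, N_0)\cdot\mathcal{W}_{g-g_0,i-1,n-|N_0|}(Z'; z_{[n]}\setminus N_0)$, and summing over $Z'$ yields $\omega_{g_0, 1+|N_0|}(z, N_0)\cdot\mathcal{U}^{i-1}_{g-g_0,n-|N_0|}(x; z_{[n]}\setminus N_0)$.

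The crucial move is the identity $\omega_{g_0, 1+|N_0|}(z, N_0) = \mathcal{E}^1_{g_0,|N_0|}(x; N_0) - \mathcal{U}^1_{g_0,|N_0|}(x; N_0)$, immediate from the definitions of $\mathcal{E}^1$ and $\mathcal{U}^1$ as sums over $\mathfrak{f}(z)$ and $\mathfrak{f}'(z)$ respectively. After the substitution, the $-\mathcal{U}^1$ piece---summed over $(g_0, N_0)$ and relabelled by $g_2 := g_0$, $N_2 := N_0$---assembles into the quadratic convolution $-\sum_{N_1\sqcup N_2 = z_{[n]}}\sum_{g_1+g_2=g}\mathcal{U}^{i-1}_{g_1,|N_1|}\mathcal{U}^1_{g_2,|N_2|}$ appearing in the corollary. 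The $\mathcal{E}^1$ piece is evaluated using \cref{E1}: the only non-vanishing contributions are $(g_0,|N_0|) = (0,0)$, yielding $-(p_1/p_0)\,dx\cdot\mathcal{U}^{i-1}_{g,n}$; $(g_0,|N_0|) = (0,1)$ with $N_0 = \{z_j\}$, yielding $\sum_j \frac{dx\,dx_j}{(x-x_j)^2}\,\mathcal{U}^{i-1}_{g,n-1}$; and $(g_0,|N_0|) = (h,0)$ with $h\geq\2$, yielding $\sum_h S_{1,2h}\frac{dx}{x}\,\mathcal{U}^{i-1}_{g-h,n}$. Subtracting $\delta_{n,0}S_{i,2g}(dx/x)^i$ from both sides---these two copies cancel directly when $i\geq 2$, while when $i=1$ the one on the right-hand side absorbs the $h=g$ summand of the $S_{1,2h}$-sum and reproduces \cref{E1}---yields the stated identity.

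The main obstacle will be the verification of the Case~A bijection: one must confirm that demotion preserves the $\omega$-factor literally (which reduces to the observation that $\omega_{g_S, |S|+|N_S|}(S, N_S)$ depends on $S\sqcup N_S$ only as an unordered multiset of arguments) and that the bookkeeping produces the combined shift $(g,i,n)\mapsto (g-1,i-1,n+1)$ rather than just one of the three coordinates. A routine secondary check is that all $(g_0,|N_0|)$ outside the three listed cases of \cref{E1} contribute zero to the $\mathcal{E}^1$ piece, so that the Case~B sum collapses precisely into the listed explicit terms together with the quadratic convolution.
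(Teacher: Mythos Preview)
Your proof is correct and follows essentially the same route as the paper: you spell out the combinatorial decomposition that the paper cites from \cite[Lemma~4.5]{BE17}, obtain $\omega_{g,n+1}=\mathcal{E}^1_{g,n}-\mathcal{U}^1_{g,n}$ as the $i=1$ case, substitute, invoke \cref{E1}, and then subtract the shift from both sides. The only superfluous part is your final parenthetical about the $i=1$ versus $i\geq 2$ cancellation---since the same term $\delta_{n,0}S_{i,2g}(dx/x)^i$ is being subtracted from both sides, there is nothing to check; what you describe there is a consistency verification, not a step of the proof.
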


\begin{proof}
A simple argument in combinatorics (cf.\ \cite[Lemma~4.5]{BE17}) yields that
\begin{equation}
\begin{aligned}
\mathcal{E}^{i}_{g,n}(x;z_{[n]}) = \mathcal{U}^{i}_{g,n} & (x; z_{[n]}) + \mathcal{U}^{i-1}_{g-1,n+1}(x;z_{[n]},z)\\
& +\sum_{N_{1}\sqcup N_{2}=z_{[n]}}\sum_{g_{1}+g_{2}=g}\mathcal{U}^{i-1}_{g_{1},|N_{1}|}(x; N_{1})\omega_{g_{2},|N_{2}|+1}(z,N_{2}),
\end{aligned}
\end{equation}
which directly implies that
\begin{equation}\label{E1-U1}
\omega_{g,n+1}(z,z_{[n]}) = \mathcal{E}^{1}_{g,n}(x;z_{[n]}) - \mathcal{U}^{1}_{g,n}(x;z_{[n]}).
\end{equation}
Substituting this into the previous expression, applying \cref{E1}, and adding the shifts gives the statement of the corollary.
\end{proof}

\noindent
Before we can find the recursion relations for the $\mathcal{U}^{i}_{g,n}$ to replace \cite[Lemma~4.13]{BE17}, we need to modify \cite[Lemma~4.8]{BE17} and add the case $(g, n) = (\2, 0)$.

\begin{lemma}[see {\cite[Lemma~4.8]{BE17}}]\label{EiOneHalf}
For $i = 1, \ldots, r$,
\begin{equation}
\mathcal{E}^{i}_{0,0}(x)  = (-1)^{i}\frac{p_{i}(x)}{p_{0}(x)}dx^{i} 
\end{equation}
and
\begin{equation}
\mathcal{E}^{i}_{\2,0}(x) =  S_{i,1}  \Big( \frac{dx}{x}\Big)^i.
\end{equation}
\end{lemma}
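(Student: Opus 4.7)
The plan is to split the proof into the two cases $(g,n) = (0,0)$ and $(g,n) = (\2, 0)$. For the case $(g,n) = (0,0)$, the argument is essentially verbatim from \cite[Lemma~4.8]{BE17}: unwinding \cref{d:EW} with $g = 0$, the constraint $\sum_{S\in P}(g_S - 1) = -i$ combined with $|P| \leq i$ and $g_S \geq 0$ forces every set partition of $Z$ to be into $i$ singletons with all $g_S = 0$, so that $\mathcal{W}_{0,i,0}(Z;\emptyset) = \prod_{z'\in Z}\omega_{0,1}(z')$. Summing over subsets $Z\subseteq\mf{f}(z)$ of size $i$ yields the $i$-th elementary symmetric polynomial in $\omega_{0,1}(\theta z),\ldots,\omega_{0,1}(\theta^r z)$, and the claim follows by Vieta's formulas applied to $P(x,Y) = p_0(x)\prod_{k}(Y - y(\theta^k z))$.

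For the case $(g,n) = (\2,0)$, the same combinatorial analysis forces $|P| = i$ with all parts singletons and exactly one $g_S = \2$, yielding
\begin{equation}
\mathcal{E}^{i}_{\2,0}(x) = \sum_{k=1}^{r}\omega_{\2,1}(\theta^k z)\,e_{i-1}\!\left(\omega_{0,1}(\theta^m z)\right)_{m\neq k}
\end{equation}
after swapping the order of summations. The plan for evaluating this is to first observe that $y(\theta^m z) = \theta^{ms}z^{s-r}$ are the $r$ roots of $Y^r - x^{s-r}$, so dividing this polynomial by the linear factor $Y - y(\theta^k z)$ and reading off the coefficient of $Y^{r-i}$ gives
\begin{equation}
e_{i-1}\!\left(\omega_{0,1}(\theta^m z)\right)_{m\neq k} = (-1)^{i-1}\theta^{ks(i-1)}\left(z^{s-r}\,dx\right)^{i-1}.
\end{equation}

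Substituting the explicit expansion of $\omega_{\2,1}$ from \cref{d:rsdefshift} then turns the sum over $k$ into a discrete Fourier sum $\sum_{k=1}^{r}\theta^{ks(i-j)}$, which by the orthogonality $\sum_{k=1}^{r}\theta^{ks(i-j)} = r\delta_{i,j}$ (valid since $r \equiv \pm 1\pmod{s}$ implies $\gcd(r,s) = 1$, so $\theta^s$ is a primitive $r$-th root of unity) collapses to the single diagonal term $j = i$. Using $\frac{dx}{x} = r\,\frac{dz}{z}$ together with $z^{-r(i-1)} = x^{-(i-1)}$ to rearrange the remaining $z$-powers then produces exactly $S_{i,1}\bigl(\frac{dx}{x}\bigr)^i$. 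The first identity introduces no difficulty beyond \cite{BE17} since $\omega_{\2,1}$ played no role there; the main obstacle for the second is purely bookkeeping, in that one must track the interplay between the signs $(-1)^{j-1}$ from $\omega_{\2,1}$, the $(-1)^{i-1}$ from Vieta, and the $\theta$-powers from both $y(\theta^m z)$ and the pullback $d(\theta^k z) = \theta^k dz$. The saving grace is that all of these conspire to simplify via a single root-of-unity orthogonality relation.
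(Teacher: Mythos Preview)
Your proof is correct and follows essentially the same approach as the paper. Both arguments decompose $\mathcal{E}^{i}_{\frac{1}{2},0}(x)$ as $\sum_{z'\in\mf{f}(z)}\omega_{\frac{1}{2},1}(z')\,e_{i-1}\bigl(\omega_{0,1}(z'')\bigr)_{z''\in\mf{f}'(z')}$, evaluate the inner elementary symmetric polynomial, and collapse the outer sum by root-of-unity orthogonality; the only difference is that the paper quotes the identity $\mathcal{U}^{i-1}_{0,0}(x')=(-\omega_{0,1}(z'))^{i-1}$ directly from \cite[eq.\ (4.8)]{BE17}, whereas you re-derive the equivalent formula by dividing $Y^r - x^{s-r}$ by a linear factor and reading off a coefficient.
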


\begin{proof}
For $(g,n) = (0,0)$, there are no shifts so \cite[Lemma~4.8]{BE17} is unchanged. For the remaining case, we have
\begin{equation}
\begin{aligned}
\mathcal{E}^{i}_{\2,0}(x) & = \sum_{\substack{Z\subseteq\mathfrak{f}(z)\\|Z|=i}}\sum_{z^{\prime}\in Z}\omega_{\2,1}(z^{\prime})\prod_{z^{\prime\prime} \in Z\setminus\{z^{\prime}\}}\omega_{0,1}(z^{\prime\prime})\\
& = \sum_{\substack{Z\subseteq\mathfrak{f}(z)\\|Z|=i}}\sum_{z^{\prime}\in Z}\left(\sum_{k=1}^{r}(-1)^{k-1}S_{k,1}\frac{dz^{\prime}}{z^{\prime s(k-1)+1}}\right)\prod_{z^{\prime\prime} \in Z\setminus\{z^{\prime}\}}\omega_{0,1}(z^{\prime\prime})\\
& = \sum_{k=1}^{r}\sum_{z^{\prime}\in\mathfrak{f}(z)}(-1)^{k-1}S_{k,1}\frac{dz^{\prime}}{z^{\prime s(k-1)+1}}\sum_{\substack{Z\subseteq\mathfrak{f}^{\prime}(z^{\prime})\\|Z|=i-1}}\prod_{z^{\prime\prime} \in Z}\omega_{0,1}(z^{\prime\prime})\\
& = \sum_{k=1}^{r}\sum_{z^{\prime}\in\mathfrak{f}(z)}(-1)^{k-1}S_{k,1}\frac{dz^{\prime}}{z^{\prime s(k-1)+1}}U^{i-1}_{0,0}(x^{\prime}).
\end{aligned}
\end{equation}
Moreover,
\begin{equation}
U^{i-1}_{0,0}(x) = (-\omega_{0,1}(z))^{i-1}
\end{equation}
for the shifted $(r,s)$-spectral curve (cf.\ \cite[eq.\ (4.8)]{BE17}), and therefore
\begin{equation}
\begin{aligned}
\mathcal{E}^{i}_{\2,0}(x) -  S_{i,1}  \Big( \frac{dx}{x}\Big)^i & = \sum_{k=1}^{r}(-1)^{k-1}S_{k,1}\frac{dz}{z^{s(k-1)+1}}(-rz^{s-1}dz)^{i-1}\sum_{m=1}^{r}\theta^{ms(i-k)} - S_{i,1}\left(\frac{dx}{x}\right)^{i}\\
& = \sum_{k=1}^{r}\delta_{k,i}(-1)^{i+k}S_{k,1}\left(r\frac{dz}{z}\right)^{i}z^{s(i-k)} - S_{i,1}\left(\frac{dx}{x}\right)^{i}\\
& = S_{i,1}\left(\frac{dx}{x}\right)^{i} - S_{i,1}\left(\frac{dx}{x}\right)^{i}\\
& = 0
\end{aligned}
\end{equation}
as desired.
\end{proof}

\noindent
This brings us to \cite[Theorem~4.12]{BE17} which has a slight modification from the previous lemma. It will be the last piece needed to derive the desired recursion relations. 

\begin{lemma}[see {\cite[Theorem~4.12]{BE17}}]\label{E-UPoleAnalysis}
For $i = 1, \ldots r$ and all $g,n \geq 0$,
\begin{equation}
\begin{aligned}
\frac{p_{0}(x)}{x^{\lfloor \alpha_{r-i+1}\rfloor}} &\left( \frac{\mathcal{E}^{i}_{g,n}(x;z_{[n]})}{dx^{i}} -  \delta_{n,0} \frac{S_{i,2g}}{x^i}  \right)= \sum_{j=1}^{n}  d_{z_{j}}\left(\frac{p_{0}(x_{j})}{x_{j}^{\lfloor \alpha_{r-i+1}\rfloor}}\frac{1}{x-x_{j}}\frac{\mathcal{U}^{i-1}_{g,n-1}(x_{j}; z_{[n]\setminus \{j\}})}{dx_{j}^{i-1}}\right)\\
& + \delta_{g,0}\delta_{n,0}\left(\frac{(-1)^{i}p_{i}(x)}{x^{\lfloor \alpha_{r-i+1}\rfloor}} \right) + \delta_{g,0}\delta_{n,1}(-1)^{i-1}d_{z_{1}}\left(\frac{1}{x-x_{1}}\left(\frac{p_{i-1}(x)}{x^{\lfloor \alpha_{r-i+1}\rfloor}} - \frac{p_{i-1}(x_{1})}{x_{1}^{\lfloor \alpha_{r-i+1}\rfloor}}\right)\right).
\end{aligned}
\end{equation}
\end{lemma}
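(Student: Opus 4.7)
The proof should closely follow \cite[Theorem~4.12]{BE17}, with modifications to accommodate the shifts. The guiding idea is to view, for fixed $z_{[n]}$, the left-hand side as a meromorphic $i$-differential in the variable $x$ on the compactification $\Sigma = \mb{P}^{1}$ (using the extension discussed in \cref{r:compact}), and to reconstruct it by matching its principal parts at each pole.

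First I would check that the LHS is regular at $x=0$. Since $p_{0}(x)=x^{r-s}$ on the shifted $(r,s)$-spectral curve, the prefactor acts as $x^{r-s-\lfloor\alpha_{r-i+1}\rfloor}$. The shifted loop equations \eqref{eq:sle} assert that $\mc{E}^{i}_{g,n}(x;z_{[n]}) - \delta_{n,0}S_{i,2g}(dx/x)^{i}$ vanishes to order at least $\lfloor s(i-1)/r\rfloor + 1$ relative to $(dx/x)^{i}$. The same arithmetic verification performed in the unshifted case in \cite{BE17,BKS23} then confirms that these two exponents combine to render the LHS holomorphic at $x=0$; this works precisely because the subtracted shift term is exactly the deviation from the usual loop equations isolated in \cref{p:shifteloopeq}.

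Next I would analyze the remaining singularities in $x$. By \cref{r:compact} and \cref{d:system}, any other pole of $\mc{E}^{i}_{g,n}(x;z_{[n]})$ in $x$ must come from the diagonal singularity of $\omega_{0,2}$ appearing in the partition expansion defining $\mc{W}_{g,i,n}$, and so must be located at $x=x_{j}$ for some $j\in[n]$. Extracting the principal part at $x=x_{j}$ via the expansion $\omega_{0,2}(z',z_{j}) = dz'\,dz_{j}/(z'-z_{j})^{2}+O(1)$ summed over $z' \in \mf{f}(z)$ recovers the term $\frac{dx\,dx_{j}}{(x-x_{j})^{2}}\mc{U}^{i-1}_{g,n-1}(x;z_{[n]\setminus\{j\}})$ already visible in \cref{E-UCombinatorics}. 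Rewriting this as an exact $z_{j}$-derivative via $\frac{dx\,dx_{j}}{(x-x_{j})^{2}}= -d_{z_{j}}\bigl(\frac{dx}{x-x_{j}}\bigr)$, and using the fact that the factor $p_{0}(x_{j})/x_{j}^{\lfloor\alpha_{r-i+1}\rfloor}$ depends only on $z_{j}$ (so commutes with $d_{z_{j}}$), produces the claimed $d_{z_{j}}(\cdots)$ structure on the RHS.

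Finally I would treat the base and unstable cases. For $(g,n)=(0,0)$, \cref{EiOneHalf} gives $\mc{E}^{i}_{0,0}(x)=(-1)^{i}p_{i}(x)/p_{0}(x)\,dx^{i}$, accounting for the $\delta_{g,0}\delta_{n,0}$ term on the RHS. For $(g,n)=(0,1)$, the $\omega_{0,2}$ contribution, once regularized at the base point, yields the extra $d_{z_{1}}$ term involving $p_{i-1}$. For all other $(g,n)$ the correlators decay fast enough at infinity that the principal-part reconstruction accounts for the full expression with no residual entire contribution. I expect the main technical obstacle to be the bookkeeping in the $(g,n)=(0,1)$ case: the regularization of the $\omega_{0,2}$ integral must be tracked carefully to produce $p_{i-1}(x_{1})/x_{1}^{\lfloor\alpha_{r-i+1}\rfloor}$ with the correct sign, and one must verify that the shift terms $S_{i,2g}$, which enter only through the $\omega_{\frac12,1}$ initial condition, do not contaminate this boundary contribution — a check which is clean because \cref{E1} and \cref{EiOneHalf} already isolate the shift contributions as a single additive term at each level.
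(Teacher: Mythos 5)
Your proposal is correct and takes essentially the same route as the paper: the paper's proof just observes that the unstable cases follow from \cref{E1} and \cref{EiOneHalf} and that the stable cases are ``completely analogous'' to \cite[Theorem~4.12]{BE17} because the shifts only enter at $n=0$, which is precisely the pole-analysis argument you spell out (regularity at $x=0$ from the shifted loop equations, principal parts at $x=x_j$ assembling into the $d_{z_j}$ terms, decay at $x=\infty$ killing any entire part). The only detail you leave implicit is the unstable case $(g,n)=(\tfrac12,0)$, but there both sides vanish immediately by \cref{EiOneHalf}, so nothing essential is missing.
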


\begin{proof}
Nothing changes for $(g,n) = (0,1)$, while the other two unstable cases are a result of \cref{EiOneHalf}. For all other cases, the proof is completely analogous to \cite[Theorem~4.12]{BE17} and the result is the same. This is because the shifts only affect terms with $n=0$ which contribute nothing to these expressions for $2g-2+n \geq 0$.
\end{proof}

\begin{lemma}[see {\cite[Lemma~4.13]{BE17}}]\label{URecursion}
For $i = 1, \ldots, r$,
\begin{equation}
\begin{aligned}
\frac{p_{0}(x)}{x^{\lfloor \alpha_{r-i+1}\rfloor}}& \frac{\mathcal{U}^{i}_{g,n}(x;z_{[n]})}{dx^{i}}=  -\frac{p_{0}(x)}{x^{\lfloor \alpha_{r-i+1}\rfloor}dx}\frac{\mathcal{U}^{i-1}_{g-1,n+1}(x;z_{[n]},z)}{dx^{i-1}} + \frac{p_{1}(x)}{x^{\lfloor \alpha_{r-i+1}\rfloor}}\frac{\mathcal{U}^{i-1}_{g,n}(x;z_{[n]})}{dx^{i-1}}\\
& + \frac{p_{0}(x)}{x^{\lfloor \alpha_{r-i+1}\rfloor}}\sum_{N_{1}\sqcup N_{2}=z_{[n]}}\sum_{g_{1}+g_{2}=g}\frac{\mathcal{U}^{i-1}_{g_{1},|N_{1}|}(x; N_{1})}{dx^{i-1}}\frac{\mathcal{U}^{1}_{g_{2},|N_{2}|}(x; N_{2})}{dx}\\
& - \sum_{j=1}^{n}\left(\frac{p_{0}(x)}{x^{\lfloor \alpha_{r-i+1}\rfloor}}\frac{dx_{j}}{(x-x_{j})^{2}}\frac{\mathcal{U}^{i-1}_{g,n-1}(x; z_{[n]\setminus \{j\}})}{dx^{i-1}}\right. \left. - d_{z_{j}}\left(\frac{p_{0}(x_{j})}{x_{j}^{\lfloor \alpha_{r-i+1}\rfloor}}\frac{1}{x-x_{j}}\frac{\mathcal{U}^{i-1}_{g,n-1}(x_{j}; z_{[n]\setminus \{j\}})}{dx_{j}^{i-1}}\right)\right)\\
& - \frac{p_{0}(x)}{x^{\lfloor \alpha_{r-i+1}\rfloor + 1}}\sum_{h=\2}^{g}S_{1,2h}\frac{\mathcal{U}^{i-1}_{g-h,n}(x;z_{[n]})}{dx^{i-1}} + \delta_{g,0}\delta_{n,0}\frac{(-1)^{i}p_{i}(x)}{x^{\lfloor \alpha_{r-i+1}\rfloor}}\\
& + \delta_{n,0}S_{i,2g}\frac{p_{0}(x)}{x^{\lfloor \alpha_{r-i+1}\rfloor + i}} + \delta_{g,0}\delta_{n,1}(-1)^{i-1}d_{z_{1}}\left(\frac{1}{x-x_{1}}\left(\frac{p_{i-1}(x)}{x^{\lfloor \alpha_{r-i+1}\rfloor}} - \frac{p_{i-1}(x_{1})}{x_{1}^{\lfloor \alpha_{r-i+1}\rfloor}}\right)\right).
\end{aligned}
\end{equation}
\end{lemma}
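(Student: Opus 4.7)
The plan is to prove this lemma by combining the two preceding results: the combinatorial identity from \cref{E-UCombinatorics} relating $\mathcal{E}^{i}$ to $\mathcal{U}^{i}$ and $\mathcal{U}^{i-1}$, and the pole analysis of \cref{E-UPoleAnalysis} which rewrites the left-hand side of the combinatorial identity as a sum of exact differentials and unstable terms. The argument closely parallels \cite[Lemma~4.13]{BE17}; the only new input is bookkeeping for the shift terms, which trickle in at two separate places.

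First, I would take the identity of \cref{E-UCombinatorics} and solve for $\mathcal{U}^{i}_{g,n}(x; z_{[n]})$, isolating it on the left. Dividing by $dx^{i}$ and multiplying through by $p_{0}(x)/x^{\lfloor \alpha_{r-i+1}\rfloor}$ produces, on the left, exactly the quantity appearing in the lemma, plus on the right a collection of $\mathcal{U}^{i-1}$-terms and an $\mathcal{E}^{i}_{g,n}$-term. At this stage the shift contribution $\sum_{h \geq \tfrac{1}{2}}^{g} S_{1,2h}\,(dx/x)\,\mathcal{U}^{i-1}_{g-h,n}$ coming from the substitution $\omega_{h,1}(z) = \mathcal{E}^{1}_{h,0}(x) - \mathcal{U}^{1}_{h,0}(x)$ inside the $\omega_{0,1}$ factors already yields the sixth line of the target formula, namely $-\,\frac{p_{0}(x)}{x^{\lfloor \alpha_{r-i+1}\rfloor + 1}}\sum_{h=1/2}^{g} S_{1,2h} \,\mathcal{U}^{i-1}_{g-h,n}/dx^{i-1}$ (the extra minus sign coming from the sign difference between $\omega_{h,1}$ and $\mathcal{U}^{1}_{h,0}$ in \eqref{E1-U1}).

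Next I would substitute the right-hand side of \cref{E-UPoleAnalysis} for $\frac{p_{0}(x)}{x^{\lfloor \alpha_{r-i+1}\rfloor}}\bigl(\mathcal{E}^{i}_{g,n}/dx^{i} - \delta_{n,0}S_{i,2g}/x^{i}\bigr)$, which simultaneously introduces the $d_{z_{j}}$ derivative term (the second piece of the fourth line of the target), the unstable $(0,0)$ contribution $(-1)^{i}p_{i}/x^{\lfloor \alpha_{r-i+1}\rfloor}$, and the unstable $(0,1)$ derivative term on the last line. The leftover $\delta_{n,0}S_{i,2g}\,\frac{p_{0}(x)}{x^{\lfloor \alpha_{r-i+1}\rfloor + i}}$ is moved to the right-hand side and matches the corresponding term in the statement, accounting for the only other shift-dependent piece.

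The main obstacle, as in most of these calculations, is purely combinatorial bookkeeping: one must verify that the $-\delta_{n,0}S_{i,2g}(dx/x)^{i}$ term appearing in \cref{E-UCombinatorics} cancels cleanly against the subtracted $\delta_{n,0}S_{i,2g}/x^{i}$ inside the left-hand side of \cref{E-UPoleAnalysis}, leaving exactly one copy of this shift (with the correct sign) in the final formula. The $S_{1,2h}$-shift sum must be kept separate from the stable part of the sum over $(g_{1}, g_{2})$ splittings, since $\omega_{h,1}(z)$ for $h \geq \tfrac{1}{2}$ has no holomorphic piece and contributes through $\mathcal{E}^{1}_{h,0}$ rather than through $\mathcal{U}^{1}_{h,0}$. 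Once these sign and range conventions are aligned, the remaining verification is direct, as all the stable sums and derivative structures are inherited unchanged from the proof of \cite[Lemma~4.13]{BE17}.
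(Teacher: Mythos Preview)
Your approach is correct and matches the paper's: the proof there is literally the one-line ``equate the expressions in \cref{E-UCombinatorics} and \cref{E-UPoleAnalysis} and rearrange the terms appropriately,'' which is exactly what you outline. A small comment: your side remark that the $S_{1,2h}$ contribution arises ``from the substitution $\omega_{h,1}=\mathcal{E}^1_{h,0}-\mathcal{U}^1_{h,0}$ inside the $\omega_{0,1}$ factors'' is not quite accurate (that substitution is applied to $\omega_{g_2,|N_2|+1}$, not to $\omega_{0,1}$), but this is just re-explaining how \cref{E-UCombinatorics} was derived and plays no role in your argument, which only needs to quote that result.
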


\begin{proof}
We simply equate the expressions in \cref{E-UCombinatorics} and \cref{E-UPoleAnalysis} and rearrange the terms appropriately.
\end{proof}

With this relation successfully constructed, we now follow a series of steps to convert it into a differential equation which is the desired quantization of the original spectral curve. We begin this process by integrating \cref{URecursion} to get a recursion for the $\mathcal{G}^{i}_{g,n}$.

\begin{lemma}[see {\cite[Lemma~5.5]{BE17}}]\label{GRecursion}
For $i = 1, \ldots, r$, the integral $\int_{\infty}^{z}\cdots\int_{\infty}^{z}$ with respect to $z_{[n]}$ of \cref{URecursion} yields
\begin{equation}
\begin{aligned}
\frac{p_{0}(x)}{x^{\lfloor \alpha_{r-i+1}\rfloor}} & \frac{\mathcal{G}^{i}_{g,n}(x)}{dx^{i}}\\
= & -\frac{p_{0}(x)}{(n+1)x^{\lfloor \alpha_{r-i+1}\rfloor}}\frac{d}{dx}\left(\frac{\mathcal{G}^{i-1}_{g-1,n+1}(x^{\prime};z)}{dx^{\prime i-1}}\right)_{x^{\prime}=x} + \frac{p_{1}(x)}{x^{\lfloor \alpha_{r-i+1}\rfloor}}\frac{\mathcal{G}^{i-1}_{g,n}(x)}{dx^{i-1}}\\
& + \frac{p_{0}(x)}{x^{\lfloor \alpha_{r-i+1}\rfloor}}\sum_{m=0}^{n}\sum_{g_{1}+ g_{2}=g}\frac{n!}{m!(n-m)!}\frac{\mathcal{G}^{i-1}_{g_{1},m}(x)}{dx^{i-1}}\frac{\mathcal{G}^{1}_{g_{2},n-m}(x)}{dx}\\
& - n\frac{d}{dx^{\prime}}\left(\frac{p_{0}(x^{\prime})}{x^{\prime\,\lfloor \alpha_{r-i+1}\rfloor}}\frac{\mathcal{G}^{i-1}_{g,n-1}(x^{\prime};z)}{dx^{\prime\,i-1}}\right)_{x^{\prime}=x} - \frac{p_{0}(x)}{x^{\lfloor \alpha_{r-i+1}\rfloor + 1}}\sum_{h=\2}^{g}S_{1,2h}\frac{\mathcal{G}^{i-1}_{g-h,n}(x)}{dx^{i-1}}\\
& + \delta_{g,0}\delta_{n,0}\frac{(-1)^{i}p_{i}(x)}{x^{\lfloor \alpha_{r-i+1}\rfloor}} + \delta_{n,0}S_{i,2g}\frac{p_{0}(x)}{x^{\lfloor \alpha_{r-i+1}\rfloor + i}}\\
& + \delta_{g,0}\delta_{n,1}(-1)^{i-1}\frac{d}{dx}\left(\frac{p_{i-1}(x)}{x^{\lfloor \alpha_{r-i+1}\rfloor}}\right).
\end{aligned}
\end{equation}
\end{lemma}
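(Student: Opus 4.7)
The proof is by term-by-term integration of the identity of \cref{URecursion} with respect to $z_{[n]}$ from $\infty$ to $z$, followed by specialization of all $z_j$ to $z$. Since the scheme mirrors the unshifted case treated in \cite[Section~5.1]{BE17}, I focus on the distinctive manipulations and the new shift-related pieces.

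For the terms in \cref{URecursion} in which $\mathcal{U}^{i-1}_{\cdot,\cdot}(x;\cdot)$ is multiplied by a factor depending only on $x$---namely $\frac{p_1(x)}{x^{\lfloor\alpha_{r-i+1}\rfloor}}\mathcal{U}^{i-1}_{g,n}$, the shift contribution $\sum_{h\ge 1/2}S_{1,2h}\mathcal{U}^{i-1}_{g-h,n}$, and the unstable sources $\delta_{g,0}\delta_{n,0}$ and $\delta_{n,0}S_{i,2g}$---the operation $\int_\infty^z\cdots\int_\infty^z$ turns each $\mathcal{U}^{i-1}_{g',n}(x;z_{[n]})$ into $\mathcal{G}^{i-1}_{g',n}(x)$ by definition. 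For the product-type term $\sum_{N_1\sqcup N_2=z_{[n]}}\sum_{g_1+g_2=g}\mathcal{U}^{i-1}_{g_1,|N_1|}\mathcal{U}^1_{g_2,|N_2|}$, grouping splittings by $|N_1|=m$ yields the binomial factor $\frac{n!}{m!(n-m)!}$ and produces the second line of \cref{GRecursion}. The unstable $\delta_{g,0}\delta_{n,1}\,d_{z_1}(\ldots)$ piece integrates via boundary evaluation $\int_\infty^z d_{z_1}(\cdots)=[\cdots]_{z_1=z}$ (the contribution at $\infty$ being killed by regularization), with L'H\^opital converting $\frac{f(x)-f(x_1)}{x-x_1}\to -f'(x)$ at $z_1=z$ to give the last line of \cref{GRecursion}.

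The pinching-type term involving $\mathcal{U}^{i-1}_{g-1,n+1}(x;z_{[n]},z)$ requires a symmetry argument, since only $n$ of the $n+1$ point arguments are integrated while the last is fixed at $z$. By symmetry of $\mathcal{U}^{i-1}_{g-1,n+1}$ in its $n+1$ point arguments and the chain rule for the restriction to the diagonal,
\begin{equation*}
\frac{d}{dz}\,\mathcal{G}^{i-1}_{g-1,n+1}(x;z,\ldots,z) \;=\; (n+1)\,\partial_{z'_1}\mathcal{G}^{i-1}_{g-1,n+1}(x;z'_{[n+1]})\big|_{z'=z}\,,
\end{equation*}
so the $n$-fold integral with the last slot frozen at $z$ equals $\frac{1}{n+1}$ times a derivative of $\mathcal{G}^{i-1}_{g-1,n+1}(\cdot\,;z)$, matching the first line of \cref{GRecursion} after converting the leftover $dz$ using $dx=rz^{r-1}dz$.

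The boundary sum over $j$ relies on the identity $\frac{dx_j}{(x-x_j)^2}=d_{z_j}\bigl(\tfrac{1}{x-x_j}\bigr)$, which combines the two halves of the $j$-th summand of \cref{URecursion} into a single total $z_j$-derivative
\begin{equation*}
d_{z_j}\!\left(\frac{1}{x-x_j}\!\left[\frac{p_0(x_j)}{x_j^{\lfloor\alpha_{r-i+1}\rfloor}}\frac{\mathcal{U}^{i-1}_{g,n-1}(x_j;z_{[n]\setminus\{j\}})}{dx_j^{i-1}}-\frac{p_0(x)}{x^{\lfloor\alpha_{r-i+1}\rfloor}}\frac{\mathcal{U}^{i-1}_{g,n-1}(x;z_{[n]\setminus\{j\}})}{dx^{i-1}}\right]\right)\,,
\end{equation*}
which is regular at $z_j=z$. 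The $z_j$-integration is a boundary evaluation (the $\infty$-endpoint vanishes by regularization), and L'H\^opital turns the bracketed $0/0$ limit into $-\partial_{x'}\!\bigl(\frac{p_0(x')}{x'^{\lfloor\alpha_{r-i+1}\rfloor}}\frac{\mathcal{U}^{i-1}_{g,n-1}(x';\cdot)}{dx'^{i-1}}\bigr)_{x'=x}$. Subsequent integrations in $z_{[n]\setminus\{j\}}$ replace $\mathcal{U}$ by $\mathcal{G}$, and by symmetry the $n$ values of $j$ contribute identically, producing the factor $n$ in \cref{GRecursion}. The main obstacle is the careful bookkeeping of differential-form factors throughout, especially in the pinching term where the $\frac{1}{n+1}$ arises from the asymmetric treatment of the $(n+1)$-th argument; one must also verify that the chosen regularizations make all $\infty$-boundary terms vanish, as in the unshifted case.
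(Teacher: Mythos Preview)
Your proof is correct and follows the same approach as the paper: term-by-term integration of \cref{URecursion}, with the $1/(n+1)$ in the pinching term obtained from the symmetry of $\mathcal{G}^{i-1}_{g-1,n+1}$ in its $n+1$ point arguments, and the boundary evaluations handled via L'H\^opital. In fact you spell out more than the paper does---the paper only works out the pinching and the $\delta_{g,0}\delta_{n,1}$ terms in detail and dismisses the $j$-sum with ``a similar computation can be done for the fourth term,'' whereas you actually carry out that computation by combining the two pieces of each $j$-summand into a single total $z_j$-derivative.
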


\begin{proof}
For the most part, this is a very simple integration. However, there are a couple terms that we go into more detail on. The first term is
\begin{equation}
\begin{aligned}
-\frac{p_{0}(x)}{x^{\lfloor \alpha_{r-i+1}\rfloor}dx} & \int_{\infty}^{z}\cdots\int_{\infty}^{z}\frac{\mathcal{U}^{i-1}_{g-1,n+1}(x;z_{[n]},z)}{dx^{i-1}}\\
& = -\frac{p_{0}(x)}{rz^{r-1}x^{\lfloor \alpha_{r-i+1}\rfloor}}\frac{d}{dz_{n+1}}\left(\frac{\mathcal{G}^{i-1}_{g-1,n+1}(x;z_{[n+1]})}{dx^{i-1}}\right)_{z_{1},\ldots, z_{n+1}=z}\\
& = -\frac{p_{0}(x)}{(n+1)rz^{r-1}x^{\lfloor \alpha_{r-i+1}\rfloor}}\frac{d}{dz}\left(\frac{\mathcal{G}^{i-1}_{g-1,n+1}(x^{\prime};z)}{dx^{\prime i-1}}\right)_{x^{\prime}=x}\\
& = -\frac{p_{0}(x)}{(n+1)x^{\lfloor \alpha_{r-i+1}\rfloor}}\frac{d}{dx}\left(\frac{\mathcal{G}^{i-1}_{g-1,n+1}(x^{\prime};z)}{dx^{\prime i-1}}\right)_{x^{\prime}=x}
\end{aligned}
\end{equation}
where the second equality used the fact that the $\mathcal{G}^{i}_{g,n}$ are symmetric with respect to interchange of $z_{1}\ldots, z_{n}$. On the other hand, the last term is
\begin{equation}
\begin{aligned}
\delta_{g,0}\delta_{n,1}(-1)^{i-1} & \int_{\infty}^{z}d_{z_{1}}\left(\frac{1}{x-x_{1}}\left(\frac{p_{i-1}(x)}{x^{\lfloor \alpha_{r-i+1}\rfloor}} - \frac{p_{i-1}(x_{1})}{x_{1}^{\lfloor \alpha_{r-i+1}\rfloor}}\right)\right)\\
& = \delta_{g,0}\delta_{n,1}(-1)^{i-1}\lim_{x_{1}\to x}\left(\frac{1}{x-x_{1}}\left(\frac{p_{i-1}(x)}{x^{\lfloor \alpha_{r-i+1}\rfloor}} - \frac{p_{i-1}(x_{1})}{x_{1}^{\lfloor \alpha_{r-i+1}\rfloor}}\right)\right)\\
& = \delta_{g,0}\delta_{n,1}(-1)^{i-1}\frac{d}{dx}\left(\frac{p_{i-1}(x)}{x^{\lfloor \alpha_{r-i+1}\rfloor}}\right),
\end{aligned}
\end{equation}
and a similar computation can be done for the fourth term.
\end{proof}

\noindent
Next, we multiply this expression by $(-1)^{i}\frac{\hslash^{2g+n}}{n!}$ and sum over all $g$ and $n$ to get a recursive relation for $\xi^{i}$. The result is:

\begin{lemma}[see {\cite[Lemma~5.8]{BE17}}]\label{XiRecursion}
After summing, \cref{GRecursion} becomes
\begin{equation}
\begin{aligned}
\frac{p_{0}(x)}{x^{\lfloor \alpha_{r-i+1}\rfloor}} \xi^{i}(x) & - \frac{p_{i}(x)}{x^{\lfloor \alpha_{r-i+1}\rfloor}}\\
= & - \frac{p_{1}(x)}{x^{\lfloor \alpha_{r-i+1}\rfloor}}\xi^{i-1}(x) + \frac{p_{0}(x)}{x^{\lfloor \alpha_{r-i+1}\rfloor}}\xi^{i-1}(x)\xi^{1}(x)\\
& + \frac{p_{0}(x)}{x^{\lfloor \alpha_{r-i+1}\rfloor + 1}}\xi^{i-1}(x)\sum_{g\geq\2}\hslash^{2g}S_{1,2g} + \frac{(-1)^{i}p_{0}(x)}{x^{\lfloor \alpha_{r-i+1}\rfloor + i}}\sum_{g\geq\2}\hslash^{2g}S_{i,2g}\\
& + \hslash\frac{d}{dx}\left(\frac{p_{0}(x)}{x^{\lfloor \alpha_{r-i+1}\rfloor}}\xi^{i-1}(x) - \frac{p_{i-1}(x)}{x^{\lfloor \alpha_{r-i+1}\rfloor}}\right).
\end{aligned}
\end{equation}
\end{lemma}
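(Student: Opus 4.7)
The proof is a term-by-term summation that closely mirrors \cite[Lemma~5.8]{BE17}, with the shift corrections tracked throughout. We multiply the equation of \cref{GRecursion} by $(-1)^i \hslash^{2g+n}/n!$ and sum over $g \in \frac{1}{2}\mathbb{N}$ and $n \in \mathbb{N}$. By the definition of $\xi^j$, the left-hand side collapses immediately to $\frac{p_0(x)}{x^{\lfloor \alpha_{r-i+1}\rfloor}}\xi^i(x)$, and the task is to match each summand on the right-hand side with the corresponding term of the claim.

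The algebraic (non-derivative) summands are straightforward. The $p_1$ term and the $\delta_{g,0}\delta_{n,0}$ initial-condition term match directly after applying $(-1)^i = -(-1)^{i-1}$, the latter producing the $-\frac{p_i(x)}{x^{\lfloor \alpha_{r-i+1}\rfloor}}$ contribution on the left after being moved across. The convolution over set-partitions factorizes into a product of two sums giving $\frac{p_0(x)}{x^{\lfloor \alpha_{r-i+1}\rfloor}}\xi^{i-1}(x)\xi^1(x)$, once the sign $(-1)^i / \left((-1)^{i-1}(-1)\right) = +1$ is tracked. The $S_{1,2h}$ shift summand, under the substitution $g' = g - h$, factorizes into $\frac{p_0(x)}{x^{\lfloor \alpha_{r-i+1}\rfloor + 1}}\xi^{i-1}(x)\sum_{h\geq\frac{1}{2}}\hslash^{2h}S_{1,2h}$, and the $\delta_{n,0}S_{i,2g}$ term assembles trivially into $\frac{(-1)^i p_0(x)}{x^{\lfloor \alpha_{r-i+1}\rfloor + i}}\sum_{g\geq\frac{1}{2}}\hslash^{2g}S_{i,2g}$.

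The key technical step is assembling the three derivative summands of \cref{GRecursion} into the single total-derivative term in the statement. After reindexing $(g,n) \mapsto (g{-}1, n{+}1)$, the first summand contributes $\hslash\,\frac{p_0(x)}{x^{\lfloor \alpha_{r-i+1}\rfloor}} \cdot \frac{d}{dx}\xi^{i-1}(x',z)|_{x'=x}$, where the derivative acts only on the external $z$-variable of $\xi^{i-1}$ (originating from the upper limits of integration in $\mathcal{G}^{i-1}$). The restriction to $n' \geq 1$ obtained from the reindexing can harmlessly be extended back to $n' \geq 0$, because $\mathcal{G}^{i-1}_{g,0}(x)$ is $z$-independent and hence contributes zero under the $z$-derivative. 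After reindexing $n \mapsto n{+}1$, the fourth summand contributes $\hslash\,\frac{d}{dx'}\left(\frac{p_0(x')}{x'^{\lfloor \alpha_{r-i+1}\rfloor}}\xi^{i-1}(x',z)\right)|_{x'=x}$, where the derivative holds $z$ fixed. Along the parametrization $x' = x(z) = z^r$, the chain rule assembles the partial $z$-derivative and the partial $x'$-derivative (at $x' = x$) into the total derivative $\hslash\,\frac{d}{dx}\left(\frac{p_0(x)}{x^{\lfloor \alpha_{r-i+1}\rfloor}}\xi^{i-1}(x)\right)$. The $\delta_{g,0}\delta_{n,1}$ summand of \cref{GRecursion} supplies the remaining $-\hslash\,\frac{d}{dx}\left(\frac{p_{i-1}(x)}{x^{\lfloor \alpha_{r-i+1}\rfloor}}\right)$ contribution, completing the match.

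The main obstacle is the sign bookkeeping through the various $(-1)^i$ factors and, above all, the chain-rule assembly of the partial $z$- and $x'$-derivatives into the total $x$-derivative; once these are organized carefully, the shift terms simply propagate linearly through the sums and require no idea beyond what is already in the unshifted setting.
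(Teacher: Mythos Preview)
Your proof is correct and follows exactly the approach implicit in the paper (which gives no proof beyond ``After summing, \cref{GRecursion} becomes\ldots'' and a reference to \cite[Lemma~5.8]{BE17}). You have carried out the summation term by term, handled the factorization of the convolution and the two shift summands correctly, and the chain-rule assembly of the two partial-derivative terms into the total $x$-derivative is precisely the mechanism at work here; the extension of the reindexed sums to $n'\geq 0$ and $g'\geq 0$ via the vanishing of the spurious boundary terms is also right.
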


\noindent
This relation can be used to produce a system of differential equations for the $\psi_{i}(x)$ which will in turn be used to construct the quantum curve. But before we can go ahead with this construction, we recall that
\begin{equation}\label{Psi_r}
\psi_{r}(x) = -\frac{p_{r}(x)}{x^{\lfloor\alpha_{0}\rfloor}}\psi
\end{equation}
from \cite[Lemma~5.10]{BE17}, and we also require a similar expression for $\psi_{1}(x)$, which is found in the next lemma.

\begin{lemma}[see {\cite[Lemma~5.10]{BE17}}]\label{Psi_1}
Given the definition of $\psi_{i}(x)$ in equation \eqref{Psi_i}, we obtain
\begin{equation}
\psi_{1}(x) = \frac{p_{0}(x)}{x^{\lfloor \alpha_{r-1} \rfloor}}\left(\hslash\frac{d}{dx} - \frac{1}{x}\sum_{g\geq\2}\hslash^{2g}S_{1,2g}\right)\psi.
\end{equation}
\end{lemma}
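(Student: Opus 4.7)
The plan is to reduce the claim to the scalar identity
\begin{equation}\label{eq:planScalar}
\xi^{1}(x) - \frac{p_{1}(x)}{p_{0}(x)} = \hslash\frac{d\log\psi}{dx} - \frac{1}{x}\sum_{g \geq \frac{1}{2}}\hslash^{2g}S_{1,2g},
\end{equation}
after which multiplying by $p_{0}(x)/x^{\lfloor \alpha_{r-1}\rfloor} \cdot \psi$ and observing that $\hslash(d\log\psi/dx)\psi = \hslash\, d\psi/dx$ recovers the stated formula for $\psi_{1}$ via the definition $\psi_{1} = (p_{0}\xi^{1}-p_{1})/x^{\lfloor\alpha_{r-1}\rfloor}\cdot\psi$.

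To prove \eqref{eq:planScalar}, I would differentiate $\log\psi$ directly, exploiting the symmetry of each $\omega_{g,n+1}$ in its arguments: the symmetry factor from differentiating the multi-integral with respect to $z$ re-indexes the sum cleanly into
\begin{equation*}
\hslash\, d\log\psi = \sum_{g,m\geq 0}\frac{\hslash^{2g+m}}{m!}\int_{\infty}^{z}\!\!\cdots\!\!\int_{\infty}^{z}\omega_{g,m+1}(z,z_{[m]}) \,-\, \hslash\int_{\infty}^{z}\frac{dx\,dx_{1}}{(x-x_{1})^{2}},
\end{equation*}
where the last term comes from differentiating the $\omega_{0,2}$-regularization subtraction in the definition of $\psi$. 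I would then apply the identity $\omega_{g,m+1}(z,z_{[m]}) = \mathcal{E}^{1}_{g,m}(x;z_{[m]}) - \mathcal{U}^{1}_{g,m}(x;z_{[m]})$ from \eqref{E1-U1} to split the sum. The $\mathcal{U}^{1}$-part assembles, after integration in $z_{[m]}$, exactly into $-\xi^{1}(x)\, dx$ by the definitions of $\mathcal{G}^{1}_{g,m}$ and $\xi^{1}$. The $\mathcal{E}^{1}$-part is then evaluated using \cref{E1}: the only surviving contributions are $(g,m) = (0,0)$, yielding $-p_{1}(x)/p_{0}(x)\cdot dx$; $(g,m) = (0,1)$, yielding $\hslash\int_{\infty}^{z} dx\,dx_{1}/(x-x_{1})^{2}$; and $(g,m)$ with $m=0$ and $g \geq \tfrac{1}{2}$, yielding the shift contributions $\hslash^{2g}S_{1,2g}\cdot dx/x$. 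Collecting and rearranging then gives \eqref{eq:planScalar}.

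The main obstacle is bookkeeping the $(g,m) = (0,1)$ contribution: $\mathcal{E}^{1}_{0,1}(x;z_{1}) = dx\,dx_{1}/(x-x_{1})^{2}$ produces a divergent integral at $z_{1} \to z$, which must cancel cleanly against the regularization subtraction carried along from $\omega_{0,2}$ in $\log\psi$. This cancellation is the standard one from the unshifted story (\cite[Section~5]{BE17}), but in the present setting one has to verify that the new shift terms $S_{1,2g}\cdot dx/x$ produced at $m=0$ by \cref{E1} do not disturb it; the key point is that \cref{E1} treats $(g,m) = (\tfrac{1}{2},0)$ as a separate unstable case, so $g=\tfrac{1}{2}$ must be kept in the summation on the right-hand side of \eqref{eq:planScalar} to collect $S_{1,1}$ alongside the $g \geq 1$ shifts. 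With this verified, the computation goes through essentially verbatim from \cite[Lemma~5.10]{BE17}, and the remainder of the proof is a one-line algebraic rearrangement.
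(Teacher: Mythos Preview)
Your proposal is correct and follows essentially the same route as the paper's proof: both differentiate $\log\psi$, invoke the identity $\omega_{g,n+1} = \mathcal{E}^1_{g,n} - \mathcal{U}^1_{g,n}$ from \eqref{E1-U1}, evaluate the $\mathcal{E}^1$-contributions via \cref{E1} (producing the $-p_1/p_0$ term, the regularization-cancelling $(0,1)$ term, and the shift terms $S_{1,2g}\,dx/x$ for $g\geq\tfrac12$), and assemble the $\mathcal{U}^1$-contributions into $\xi^1$ before rearranging via the definition of $\psi_1$. The only cosmetic difference is that the paper packages the substitution as a single formula for $\omega_{g,n+1}$ before inserting it, whereas you split explicitly into $\mathcal{E}^1$- and $\mathcal{U}^1$-parts; the content is identical.
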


\begin{proof}
We start with the expression in the last line of \cite[eq.\ (5.15)]{BE17}:
\begin{equation}
p_{0}(x)\hslash\frac{d}{dx}\ln \psi = \frac{p_{0}(x)}{dx}\sum_{g,n}\frac{\hslash^{2g+n}}{n!}\int_{\infty}^{z}\cdots\int_{\infty}^{z}\left(\omega_{g,n+1}(z,z_{[n]}) - \delta_{g,0}\delta_{n,1}\frac{dxdx_{1}}{(x-x_{1})^2}\right)
\end{equation}
where the integrals act only on $z_{[n]}$, and use the fact that
\begin{equation}
\begin{aligned}
\omega_{g,n+1}(z,z_{[n]}) = -\mathcal{U}^{1}_{g,n} & (x;z_{[n]}) - \delta_{g,0}\delta_{n,0}\frac{p_{1}(x)}{p_{0}(x)}dx + \delta_{n,0}S_{1,2g}\frac{dx}{x}\\
& + \delta_{g,0}\delta_{n,1}\frac{dxdx_{1}}{(x - x_{1})^2}
\end{aligned}
\end{equation}
as seen in \cref{E1-U1}. Substituting this into the previous expression, we find that
\begin{equation}
p_{0}(x)\hslash\frac{d}{dx}\ln \psi = p_{0}(x) \xi^{1}(x) - p_{1}(x) + \frac{p_{0}(x)}{x}\sum_{g\geq\2}\hslash^{2g}S_{1,2g}.
\end{equation}
Finally, using the definition of $\psi_{1}(x)$ and then rearranging leads to the statement of the lemma.
\end{proof}

We finally obtain a system of differential equations for the $\psi_i(x)$.

\begin{theorem}[see {\cite[Theorem~5.11]{BE17}}]\label{PsiDifferentialSystem}
For $i = 2, \ldots, r$, the following system of linear differential equations holds:
\begin{equation}\label{eq:system}
\begin{aligned}
\hslash\frac{d}{dx}\psi_{i-1}(x) = & \frac{x^{\lfloor \alpha_{r-i} \rfloor}}{x^{\lfloor \alpha_{r-i+1} \rfloor}}\psi_{i}(x) - \frac{p_{i-1}(x)x^{\lfloor \alpha_{r-1} \rfloor}}{p_{0}(x)x^{\lfloor \alpha_{r-i+1} \rfloor}}\psi_{1}(x)\\
& - \frac{p_{i-1}(x)}{x^{\lfloor \alpha_{r-i+1} \rfloor + 1}}\psi\sum_{g\geq\2}\hslash^{2g}S_{1,2g} + \frac{(-1)^{i-1}p_{0}(x)}{x^{\lfloor \alpha_{r-i+1}\rfloor + i}}\psi\sum_{g\geq\2}\hslash^{2g}S_{i,2g}.
\end{aligned}
\end{equation}
\end{theorem}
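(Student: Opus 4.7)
The plan is to derive the system directly from \cref{XiRecursion} by re-expressing every $\xi^i$-combination in terms of the wave-functions $\psi_i$ via their defining relation \eqref{Psi_i}, and by using \cref{Psi_1} to trade $\hslash\, d\psi/dx$ for $\psi_1$. First I would multiply both sides of the identity in \cref{XiRecursion} by $\psi$. The left-hand side then becomes $(x^{\lfloor\alpha_{r-i}\rfloor}/x^{\lfloor\alpha_{r-i+1}\rfloor})\psi_i(x)$ by \eqref{Psi_i}, yielding the first term of the sought equation.

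Second, on the right-hand side I would combine the two $\xi^{i-1}$-linear contributions as $\xi^{i-1}(p_0\xi^1-p_1)/x^{\lfloor\alpha_{r-i+1}\rfloor}$ and recognize $p_0\xi^1-p_1=x^{\lfloor\alpha_{r-1}\rfloor}\psi_1/\psi$ from \cref{Psi_1}; this produces the contribution $\xi^{i-1}x^{\lfloor\alpha_{r-1}\rfloor}\psi_1/x^{\lfloor\alpha_{r-i+1}\rfloor}$ after multiplication by $\psi$. Next, using $(p_0\xi^{i-1}-p_{i-1})/x^{\lfloor\alpha_{r-i+1}\rfloor}=\psi_{i-1}/\psi$, I would rewrite the derivative term in \cref{XiRecursion} as
\begin{equation*}
\hslash\frac{d}{dx}\!\left(\frac{p_0\xi^{i-1}-p_{i-1}}{x^{\lfloor\alpha_{r-i+1}\rfloor}}\right)\psi \;=\; \hslash\,\psi\,\frac{d}{dx}\!\left(\frac{\psi_{i-1}}{\psi}\right) \;=\; \hslash\frac{d\psi_{i-1}}{dx} - \frac{\psi_{i-1}}{\psi}\,\hslash\frac{d\psi}{dx},
\end{equation*}
and substitute $\hslash\, d\psi/dx=(x^{\lfloor\alpha_{r-1}\rfloor}/p_0)\psi_1+(\psi/x)\sum_{g\geq\2} \hslash^{2g}S_{1,2g}$ from \cref{Psi_1}. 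Expanding $\psi_{i-1}/\psi$ then splits $(\psi_{i-1}/\psi)\,\hslash\, d\psi/dx$ into an $\xi^{i-1}x^{\lfloor\alpha_{r-1}\rfloor}\psi_1/x^{\lfloor\alpha_{r-i+1}\rfloor}$ piece, which cancels exactly the one generated in the previous step, plus the desired $-p_{i-1}x^{\lfloor\alpha_{r-1}\rfloor}\psi_1/(p_0\,x^{\lfloor\alpha_{r-i+1}\rfloor})$ contribution.

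The final step is to collect the shift contributions. The two pieces carrying $\sum_{g\geq\2}\hslash^{2g}S_{1,2g}$, namely $p_0\xi^{i-1}\psi/x^{\lfloor\alpha_{r-i+1}\rfloor+1}$ coming from \cref{XiRecursion} and $-\psi_{i-1}/x$ coming from the substitution of $\hslash\, d\psi/dx$, combine via $\psi_{i-1}=(p_0\xi^{i-1}-p_{i-1})\psi/x^{\lfloor\alpha_{r-i+1}\rfloor}$ into a single $-p_{i-1}\psi/x^{\lfloor\alpha_{r-i+1}\rfloor+1}\sum_{g\geq\2}\hslash^{2g}S_{1,2g}$ term, while moving the $S_{i,2g}$ contribution across the equality turns $(-1)^i$ into $(-1)^{i-1}$. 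Solving for $\hslash\, d\psi_{i-1}/dx$ then reproduces \eqref{eq:system} verbatim. The principal obstacle is purely the careful bookkeeping of the three distinct floor exponents $\lfloor\alpha_{r-i}\rfloor$, $\lfloor\alpha_{r-i+1}\rfloor$, $\lfloor\alpha_{r-1}\rfloor$, the signs $(-1)^i$ versus $(-1)^{i-1}$, and the two separate telescopings (of the $\xi^{i-1}\psi_1$ pieces and of the $S_{1,2g}$ shifts); no new input beyond \cref{XiRecursion} and \cref{Psi_1} is needed.
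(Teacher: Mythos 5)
Your proposal is correct and coincides with the paper's own proof: multiply \cref{XiRecursion} by $\psi$, use the defining relation \eqref{Psi_i} (for $\psi_i$, $\psi_{i-1}$ and $\psi_1$) together with \cref{Psi_1} to eliminate $\hslash\,d\psi/dx$, and observe the same two cancellations of the $\xi^{i-1}\psi_1$ and $S_{1,2g}$ pieces before solving for $\hslash\,d\psi_{i-1}/dx$. The only cosmetic slip is attributing the identity $p_0\xi^1-p_1=x^{\lfloor\alpha_{r-1}\rfloor}\psi_1/\psi$ to \cref{Psi_1} rather than to the definition \eqref{Psi_i} at $i=1$, which does not affect the argument.
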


\begin{proof}
First, we multiply \cref{XiRecursion} by $\psi$, which produces
\begin{equation}
\begin{aligned}{}
\frac{x^{\lfloor \alpha_{r-i}\rfloor}}{x^{\lfloor \alpha_{r-i+1}\rfloor}}\psi_{i} & (x) = \frac{x^{\lfloor \alpha_{r-1}\rfloor}}{x^{\lfloor \alpha_{r-i+1}\rfloor}}\xi^{i-1}(x) \psi_{1}(x)\\
& + \frac{p_{0}(x)}{x^{\lfloor \alpha_{r-i+1}\rfloor + 1}} \xi^{i-1}(x) \psi\sum_{g\geq\2}\hslash^{2g}S_{1,2g} + \frac{(-1)^{i}p_{0}(x)}{x^{\lfloor \alpha_{r-i+1}\rfloor + i}}\psi\sum_{g\geq\2}\hslash^{2g}S_{i,2g}\\
& + \hslash\psi\frac{d}{dx}\left(\frac{p_{0}(x)}{x^{\lfloor \alpha_{r-i+1}\rfloor}}\xi^{i-1}(x) - \frac{p_{i-1}(x)}{x^{\lfloor \alpha_{r-i+1}\rfloor}}\right)
\end{aligned}
\end{equation}
for $i = 2, \ldots, r$. The last term can be written as
\begin{equation}
\hslash\frac{d}{dx}\psi_{i-1}(x) - \hslash\frac{p_{0}(x)\xi^{i-1}(x) - p_{i-1}(x)}{x^{\lfloor \alpha_{r-i+1}\rfloor}}\frac{d\psi}{dx}
\end{equation}
and then the previous lemma implies that this is equal to
\begin{equation}
\begin{aligned}
\hslash\frac{d}{dx}\psi_{i-1}(x) - \frac{x^{\lfloor \alpha_{r-1}\rfloor}}{x^{\lfloor \alpha_{r-i+1}\rfloor}} & \left(\xi^{i-1}(x) - \frac{p_{i-1}(x)}{p_{0}(x)}\right)\psi_{1}(x)\\
& - \frac{p_{0}(x)\xi^{i-1}(x) - p_{i-1}(x)}{x^{\lfloor \alpha_{r-i+1}\rfloor + 1}}\psi\sum_{g\geq\2}\hslash^{2g}S_{1,2g}.
\end{aligned}
\end{equation} 
Putting it altogether, we obtain the statement of the theorem.
\end{proof}

The main difference between \cref{PsiDifferentialSystem} and theorem 5.11 in \cite{BE17} is the appearance of the terms on the second line of \eqref{eq:system} which depend on the shifts $S_{i,2g}$.

\eqref{eq:system} can be simplified significantly by noticing that $p_{i}(x) = 0$ for $i = 1, \ldots, r-1$ and $p_{0}(x) = x^{r-s}$, $p_{r}(x) = 1$. Therefore, we have
\begin{equation}\label{PsiDifferentialSystem-Simplified}
\hslash\frac{d}{dx}\psi_{i-1}(x) = \frac{x^{\lfloor \alpha_{r-i} \rfloor}}{x^{\lfloor \alpha_{r-i+1} \rfloor}}\psi_{i}(x) + \frac{(-1)^{i-1}x^{r-s}}{x^{\lfloor \alpha_{r-i+1}\rfloor + i}}\psi\sum_{g\geq\2}\hslash^{2g}S_{i,2g}
\end{equation}
for $i = 2, \ldots, r$. Finally, we can write the system of differential equations \eqref{PsiDifferentialSystem-Simplified} as a single $r$th-order differential equation for the wave function, which will be the desired quantum curve. 

In the following theorem, we define the shorthand notation
\begin{equation}
  D_{i} \coloneqq \hslash\frac{x^{\lfloor \alpha_{i} \rfloor}}{x^{\lfloor \alpha_{i-1} \rfloor}}\frac{d}{dx}
\end{equation}
for $i=1,\ldots,r$.

\begin{theorem}[see {\cite[Lemma~5.14]{BE17}}]\label{t:QC}
  The system of differential equations in \cref{PsiDifferentialSystem} is equivalent to the $r$th order differential equation
  \begin{equation}
    \left(D_{1}\cdots D_{r} + \sum_{g\geq\2}\sum_{i=1}^{r}(-1)^{i}\hslash^{2g}S_{i,2g}D_{1}\cdots D_{r-i}\frac{x^{r-s}}{x^{\lfloor \alpha_{r-i} \rfloor + i}} - 1\right)\psi = 0
  \end{equation}
  for the shifted wave function $\psi$ constructed from shifted topological recursion on the shifted $(r,s)$-spectral curve. Each set of $s$-consistent shifts $\{S_{i,\ell} \}_{i \in [r], \ell \in \mathbb{N}^*}$ provides a different quantization of the $(r,s)$-algebraic curve.
\end{theorem}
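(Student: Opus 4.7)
The plan is to iterate the first-order system of \cref{PsiDifferentialSystem} downward from $i=r$ to express $\psi_r$ as a differential operator applied to $\psi$, and then to identify $\psi_r$ with $\psi$ via \eqref{Psi_r} to extract a single $r$-th order ODE for $\psi$. The reverse direction, recovering the system from the ODE, is straightforward bookkeeping: given any $\psi$ annihilated by the ODE, one defines $\psi_1, \dotsc, \psi_{r-1}$ inductively through the recursion below starting from $\psi_0 := \psi$; the equations of \eqref{PsiDifferentialSystem-Simplified} for $i = 2, \dotsc, r-1$ then hold by construction, and the remaining equation at $i=r$ together with $\psi_r = \psi$ is equivalent to the ODE itself.

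The first observation I would exploit is that \cref{Psi_1}, rewritten in terms of $D_r = \hslash\, x^{r-s}/x^{\lfloor\alpha_{r-1}\rfloor}\, d/dx$ (using $\alpha_r = r-s \in \mathbb{Z}$), agrees precisely with the formula one obtains by inserting $i=1$ and $\psi_0 := \psi$ into \eqref{PsiDifferentialSystem-Simplified}. Solving \eqref{PsiDifferentialSystem-Simplified} for $\psi_i$ in terms of $\psi_{i-1}$ and $\psi$ then yields the uniform recursion
\begin{equation*}
  \psi_i \;=\; D_{r-i+1}\, \psi_{i-1} \;-\; A_i\, \psi, \qquad A_i \;\coloneqq\; \frac{(-1)^{i-1}\, x^{r-s}}{x^{\lfloor \alpha_{r-i}\rfloor + i}}\, \sum_{g \geq \2} \hslash^{2g}\, S_{i,2g},
\end{equation*}
valid uniformly for all $i = 1, \dotsc, r$. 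A straightforward downward induction on $i$, substituting the expression for $\psi_{k-1}$ into the formula for $\psi_k$ at each stage, telescopes this into
\begin{equation*}
  \psi_r \;=\; D_1 D_2 \cdots D_r\, \psi \;-\; \sum_{i=1}^{r} D_1 D_2 \cdots D_{r-i}\, (A_i\, \psi),
\end{equation*}
with the empty product $D_1 \cdots D_0$ read as the identity operator in the $i=r$ term. Here multiplication by the function $A_i$ is treated as an operator, so that all product-rule contributions arising when the $D_j$'s pass through $A_i$ are absorbed automatically into the composition $D_1 \cdots D_{r-i}\cdot \tfrac{x^{r-s}}{x^{\lfloor\alpha_{r-i}\rfloor+i}}$.

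To close the argument, I would use that for the $(r,s)$-algebraic curves $p_0(x) = x^{r-s}$, $p_r(x) = -1$, $p_i(x) = 0$ for $1 \leq i \leq r-1$, and $\alpha_0 = 0$, so equation \eqref{Psi_r} reduces to $\psi_r = \psi$. Substituting this into the telescoped expression, unfolding the definition of $A_i$, and swapping the finite $i$-sum with the internal $g$-sum gives precisely
\begin{equation*}
  \Big( D_1 \cdots D_r \;+\; \sum_{g \geq \2}\sum_{i=1}^{r} (-1)^{i}\, \hslash^{2g}\, S_{i,2g}\, D_1 \cdots D_{r-i}\, \frac{x^{r-s}}{x^{\lfloor \alpha_{r-i}\rfloor + i}} \;-\; 1 \Big)\psi \;=\; 0,
\end{equation*}
which is the quantum curve claimed. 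The main obstacle is purely combinatorial bookkeeping: keeping straight the two shifted indices $r-i+1$ and $r-i$ together with the integer parts $\lfloor \alpha_k \rfloor$, and treating the multiplications by $A_i$ consistently as operators so that the telescoping matches the stated normal form. No deeper analytic input is required beyond the lemmas already established in this section.
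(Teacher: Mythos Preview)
Your proof is correct and follows essentially the same route as the paper's: rewrite the first-order system as the recursion $\psi_i = D_{r-i+1}\psi_{i-1} + (-1)^i A_i\psi$, iterate down to $\psi_r$, and invoke $\psi_r = \psi$ from \eqref{Psi_r}. Your only notable departure is that you absorb \cref{Psi_1} into the same recursion by setting $\psi_0 := \psi$ and checking that the $i=1$ instance reproduces it (using $\lfloor\alpha_r\rfloor = r-s$), whereas the paper starts the iteration at $i=2$ and substitutes \cref{Psi_1} by hand; this uniformization is a mild cosmetic improvement, not a different argument.
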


\begin{proof}
  Let's rewrite \cref{PsiDifferentialSystem-Simplified} to get
  \begin{equation}\label{PsiRecursion}
    \psi_{i}(x) = D_{r-i+1} \psi_{i-1}(x) + \frac{(-1)^{i}x^{r-s}}{x^{\lfloor \alpha_{r-i}\rfloor + i}} \psi\sum_{g\geq\2}\hslash^{2g}S_{i,2g}
  \end{equation}
  for $i = 2, \ldots, r$. In particular, we have
  \begin{equation}
    \begin{aligned}
      \psi_{2}(x) 
      &= 
      D_{r-1}\psi_{1}(x) + \frac{x^{r-s}}{x^{\lfloor \alpha_{r-2} \rfloor + 2}}\psi\sum_{g\geq\2}\hslash^{2g}S_{2,2g}
      \\
      &=
      D_{r-1}\left(\frac{x^{r-s}}{x^{\lfloor \alpha_{r} \rfloor}}D_{r} - \frac{x^{r-s}}{x^{\lfloor \alpha_{r-1} \rfloor + 1}}\sum_{g\geq\2}\hslash^{2g}S_{1,2g}\right)\psi + \frac{x^{r-s}}{x^{\lfloor \alpha_{r-2} \rfloor + 2}}\psi \sum_{g\geq\2}\hslash^{2g}S_{2,2g}
      \\
      &=
      \left(D_{r-1}\frac{x^{r-s}}{x^{\lfloor \alpha_{r} \rfloor}}D_{r} - D_{r-1}\frac{x^{r-s}}{x^{\lfloor \alpha_{r-1} \rfloor + 1}}\sum_{g\geq\2}\hslash^{2g}S_{1,2g} + \frac{x^{r-s}}{x^{\lfloor \alpha_{r-2} \rfloor + 2}}\sum_{g\geq\2}\hslash^{2g}S_{2,2g}\right)\psi
    \end{aligned}
  \end{equation}
  where the second equality used \cref{Psi_1}. Iterating until $\psi_{r}(x)$ using equation \eqref{PsiRecursion}, we find that
  \begin{equation}
    \psi_{r}(x) 
    =
    \left(D_{1}\cdots D_{r-1}\frac{x^{r-s}}{x^{\lfloor \alpha_{r} \rfloor}}D_{r} + \sum_{g\geq\2}\sum_{i=1}^{r}(-1)^{i}\hslash^{2g}S_{i,2g}D_{1}\cdots D_{r-i}\frac{x^{r-s}}{x^{\lfloor \alpha_{r-i} \rfloor + i}}\right)\psi \,,
  \end{equation}
  and then the fact that $\psi_{r} = \psi$ as implied in \cref{Psi_r} and $ \alpha_r = r-s$ leads to the statement of the theorem.
\end{proof}

\noindent
If all constants $S_{i,2g}$ are zero as in the unshifted case, then we only get one specific quantization of the $(r,s)$-curve -- and it is a rather non-trivial one. This is the only possibility if $ r = -1 \pmod{s}$, because we require the shifts to be  $s$-consistent, see \cref{d:consistent}, so that shifted topological recursion produces symmetric correlators. If $ r = 1 \pmod{s}$, other shifts produce other quantizations of the spectral curve. In particular, as we now show, for the cases $s=1$ and $s=r-1$, we obtain all possible orderings of the operators $\hat{x} = x$ and $\hat{y} = \hslash \frac{d}{dx}$ as particular choices of the shifts.

\begin{example}[$s=1$]
  For this curve, we have $\alpha_{i} = \frac{i(r-1)}{r}$ which satisfies
  \begin{equation}
    i-1 \leq \alpha_{i} < i
  \end{equation}
  for $i = 1, \ldots, r$, and therefore $\lfloor \alpha_{i} \rfloor = i-1$ while $\lfloor \alpha_{0} \rfloor = 0$. Hence, the quantum curve is
  \begin{equation}
    \begin{aligned}
      & 
      \left(\hslash^{r}\left(\frac{d}{dx}x\right)^{r-1}\frac{d}{dx} + \sum_{g\geq\2}\left(\sum_{i=1}^{r-1}(-1)^{i}\hslash^{r-i+2g}S_{i,2g}\left(\frac{d}{dx}x\right)^{r-i-1}\frac{d}{dx} \right.\right.
      \\
      & 
      \hspace{7cm} \left.\left. + (-1)^{r}\hslash^{2g}S_{r,2g}\frac{1}{x}\right) - 1\right)\psi 
      =
      0\,.
    \end{aligned}
  \end{equation}
  If we now take $S_{i,2g} = 0$ for all $i, g$ except when $i = 2g \leq r-1$, then we obtain the following quantization of the $(r, 1)$-spectral curve
  \begin{equation}
    \left(\hslash^{r}\left(\frac{d}{dx}x\right)^{r-1}\frac{d}{dx} + \hslash^{r}\sum_{i=1}^{r-1}(-1)^{i}S_{i,i}\left(\frac{d}{dx}x\right)^{r-i-1}\frac{d}{dx} - 1\right)\psi 
    =
    0 \,,
  \end{equation}
  which again produces all reorderings by making different choices of $S_{1,1},\ldots, S_{r-1,r-1}$.
\end{example}

\begin{example}[$s=r-1$]
  In this case, we have $\alpha_{i} = \frac{i}{r}$, so $\lfloor \alpha_{i} \rfloor = 0$ for $i = 0, \ldots, r-1$ and $\lfloor \alpha_{r} \rfloor = 1$. The only $s$-consistent shift is $ S^\hslash_1$ (unless $ r = 2$, in which case we can use the previous example). Hence, the quantum curve is
  \begin{equation}
    \left(\hslash^{r}\frac{d^{r-1}}{dx^{r-1}}x\frac{d}{dx} - \sum_{g\geq\2} \hslash^{r-1+2g}S_{1,2g}\frac{d^{r-1}}{dx^{r-1}}  - 1\right)\psi 
    =
    0 \,.
  \end{equation}
  If we assume $S_{1,2g} = 0$ for all $g$ except $ 2g = 1$, then we get a quantization of the $(r,r-1)$-spectral curve of the form
  \begin{equation}
    \left(\hslash^{r}\frac{d^{r-1}}{dx^{r-1}}x\frac{d}{dx} - \hslash^{r}S_{1,1}\frac{d^{r-1}}{dx^{r-1}}  - 1\right)\psi 
    =
    0\,,
  \end{equation}
  which leads to all possible reorderings of the curve for different choices of $S_{1,1}$. Indeed, if we choose $S_{1,1} = m$ for any $m \in \{-1, \ldots, r-1\}$, we will get the quantum curve
  \begin{equation}
    \left(\hslash^{r}\frac{d^{r-m-1}}{dx^{r-m-1}}x\frac{d^{m+1}}{dx^{m+1}} - 1\right)\psi = 0.
  \end{equation}
\end{example}

However, in the other cases, we do not get all the orderings.

\begin{example}[Other $ r = 1 \pmod{s}$]
  For $ s$-consistency, we can again only allow non-trivial $ S_1^\hslash$, so the spectral curve is
  \begin{equation}
    \left(D_{1}\cdots D_{r} - \sum_{g\geq\2} \hslash^{2g}S_{1,2g}D_{1}\cdots D_{r-1}  - 1\right)\psi = 0 \,,
  \end{equation}
  and since $ D_r = \hslash x \frac{d}{dx}$, this can be rewritten as
  \begin{equation}
    \left(  D_{1}\cdots D_{r-1} \Big( \hslash x \frac{d}{dx} - \sum_{g\geq\2} \hslash^{2g}S_{1,2g} \Big)  - 1\right)\psi = 0 \,.
  \end{equation}
  But now, as we assume $ 1 < s < r-1$, at least one more of the $ D_j$ must equal $ \hslash x \frac{d}{dx}$, and the commutator of this $x$ with any $ \hslash \frac{d}{dx}$ cannot be expressed through $ S_1^\hslash $ any more. So we find that in this case, there are reorderings of the normal-ordered quantization that are not covered by $s$-consistent shifts.
\end{example}

\section{Determinantal formulas and non-perturbative loop equations}
\label{s:diffsyst}

In this section we turn the tables around and start directly from the general quantum spectral curve of $(r,s)$-systems, in the differential system form that previously appeared as an intermediate step. We will prove that by defining generating functions with genus-counting parameter $\hslash$ for the topological recursion invariants of $(r,s)$-systems, these generating functions can be identified with the non-perturbative amplitudes associated to the differential system obtained by analytic continuation of the quantum curve under consideration. As such, we will be able to express them via determinantal formulas, and from certain twisted Cauchy kernels of $\bar\partial$-operators.\par
We will set up the WKB analysis of the quantum curve, introduce the corresponding non-perturbative invariants in the form of well-known determinantal formulas, describe their $\hslash\rightarrow0$ asymptotics as well as the collection of non-perturbative loop equations they satisfy, and finally identify the coefficients of these semi-classical expansions with the topological recursion invariants of interest.

\subsection{Rational \texorpdfstring{$\hslash$}{hbar}-connections and their WKB analysis}

In this section, we will consider the setup of $\hslash $-connections for our problem. We will give a definition here that fits our needs.

\begin{definition}
  A \emph{rational $\hslash$-connection} is an $\hslash$-connection 
  \begin{equation}
  \label{QuantumCurve}
    \nabla_\hslash \coloneq \hslash d - \Phi_\hslash \,,
  \end{equation}
  on the trivial principal bundle $\mc{E} \coloneq \P^1\times \GL_r(\C)$ over the Riemann sphere with general linear structure group $\GL_r(\C)$, where $ \Phi_\hslash $ is a power series in $ \hslash $, satisfying the deformed Leibniz rule
  \begin{equation}
    \nabla_\hslash(f\sigma) = f \nabla_\hslash (\sigma) + \hslash ( df)\sigma
  \end{equation}
  for all possibly $\hslash$-formal rational functions $f\in \C(x)$, and local sections $\sigma$ of $ \mc{E}$.\par
  The \emph{Higgs field} of a rational $ \hslash$-connection is
  \begin{equation}
    \phi \coloneq \Phi_0
  \end{equation} 
  and the \emph{spectral curve} is
  \begin{equation}
    \Sigma = \{ E (x, \omega) = \det ( \omega \Id - \Phi_\hslash (x) ) = 0 \} \subset T^*\P^1\,.
  \end{equation}
\end{definition}

The Higgs field fits into the following short exact sequence of sheaves
\begin{equation}
    0 \longrightarrow  \mathcal N \hookrightarrow \mathcal E \xlongrightarrow{[\phi,\bullet]} \mathcal E\otimes\Omega^1 \longrightarrow 0\,,
\end{equation}
where we have denoted the commutant of $\phi$ by $\mathcal N\coloneq \Ker [\phi,\bullet]\subset \mathcal E$. This short exact sequence is one of vector bundles away from the branch points, and the locus where the rank of $\mc{N}$ jumps is the locus of branch points.

We are mainly interested in the following example.

\begin{example}
  \label{OurQC}
  Let us consider the quantum curve equation of the previous section, in its differential system form of \cref{PsiDifferentialSystem} with unknown vector function $(\psi,\psi_1,\dots,\psi_{r-1})^T$. 
  Explicitly,
  \begin{equation}
  \label{ConnectionPotential}
    \Phi_\hslash(x)
    \coloneq
    \sum_{k=0}^\infty \hslash^k\Phi_k(x) 
    \coloneq  
      \begin{pmatrix}
        (-1)^0\frac{S_1^\hslash}{x} \cdot \frac{x^{\lfloor \alpha_{r}\rfloor}}{x^{\lfloor \alpha_r\rfloor}}  & \frac{x^{\lfloor \alpha_{r-1}\rfloor}}{x^{\lfloor \alpha_r\rfloor}} &  \cdots & 0
        \\
        \vdots & 0 & \ddots & \vdots
        \\
        (-1)^{r-2}\frac{S_{r-1}^\hslash}{x^{r-1}}\cdot \frac{x^{\lfloor \alpha_{r}\rfloor}}{x^{\lfloor \alpha_2\rfloor}} &\vdots & \ddots & \frac{x^{\lfloor \alpha_{1}\rfloor}}{x^{\lfloor \alpha_2\rfloor}}
        \\
        (-1)^{r-1}\frac{S_{r}^\hslash}{x^{r}}\cdot \frac{x^{\lfloor \alpha_{r}\rfloor}}{x^{\lfloor \alpha_1\rfloor}}+\frac{1}{x^{\lfloor \alpha_{1}\rfloor}} & 0 & \cdots &  0
      \end{pmatrix}  
      dx\,,
  \end{equation}
  where we have introduced the $\hslash$-series $S_k^\hslash \coloneq \sum_{g\geqslant \frac12}^\infty \hslash^{2g} S_{k,2g}$, for each $k\in\{1,\dots,r\}$. Recall from \cref{Alphas} also the rational values $\alpha_k=\frac kr(r-s)$ of which the floors appear in each non-trivial coefficient of the matrix. 
  
  Since all $S_k^\hslash$ are of order 1 in $\hslash$,
  \begin{equation}
  \label{Higgs}
    \phi(x) 
    =
    \Phi_0(x)
    =
    \begin{pmatrix}
      0 & \frac{x^{\lfloor \alpha_{r-1}\rfloor}}{x^{\lfloor \alpha_r\rfloor}} &  \cdots & 0
      \\
      \vdots & \ddots & \ddots & \vdots 
      \\
      0& \cdots & 0 & \frac{x^{\lfloor \alpha_{1}\rfloor}}{x^{\lfloor \alpha_2\rfloor}} 
      \\
      \frac{1}{x^{\lfloor \alpha_{1}\rfloor}} & 0& \cdots &  0
    \end{pmatrix}  
    dx \,.
  \end{equation}
  The spectral curve is given by
  \begin{equation}
    E(x, \omega ) = \omega^r - x^{s-r} dx^r \,.
  \end{equation}
  Interpreting $ \omega $ as $ \omega_{0,1} = y \, dx $, we find that $ E (x, y\, dx) = (y^r - x^{s-r}) dx^r$.
\end{example}

\begin{lemma}
  In the setting of \cref{OurQC}, after restricting to $ \C^\times \subseteq \P^1 $ and pulling back along $ x \colon \C^\times \to \C^\times \colon z \mapsto z^r$, $ \phi $ can be diagonalized as $\phi(x(z))= V(z) Y(z) V(z)^{-1}$, where we have introduced
  \begin{align}
  \label{AbelianizationY}
    Y(z) 
    &\coloneq 
    \begin{pmatrix}
      \theta^0 & & 0 
      \\
      &\ddots&
      \\
      0 && \theta^{r-1}
    \end{pmatrix} r z^{s-1}dz \,,
    \\
    \intertext{and}
  \label{AbelianizationV}
    V(z) 
    & \coloneq 
    \frac{\theta z^{\frac{(r-s)(r+1)}{2}}}{\underset{1\leq a<b\leq r}\prod(\theta^b-\theta^a)^{\frac1r}}
    \begin{pmatrix}
      z^{r\lfloor \alpha_{1}\rfloor} &\cdots & 0 
      \\
      \vdots & \ddots & \vdots 
      \\
      0 & \cdots & z^{r\lfloor \alpha_{r}\rfloor}
    \end{pmatrix}
    \begin{pmatrix} 
      \frac{\theta^0}{z^{r-s}} &  \cdots & \frac{\theta^{r-1}}{z^{r-s}} 
      \\
      \vdots & & \vdots 
      \\
      (\frac{\theta^0}{z^{r-s}})^{r}  & \cdots & (\frac{\theta^{r-1}}{ z^{r-s}})^{r}
    \end{pmatrix} \,,
  \end{align}
  where $ Y(z)$ is a diagonal-matrix valued one form of eigenvalues $Y$ and $V(z)$ is the corresponding invertible Vandermonde matrix of eigenvectors, with a fixed root of unity $\theta^r=1$.\par
  The inverse of $ V(z)$ is
  \begin{equation}
    V(z)^{-1} 
    =
    \frac{\Delta(\theta)^{\frac1r}}{r \theta z^{\frac{(r-s)(r+1)}{2}}} 
    \begin{pmatrix} 
      \frac{z^{r-s}}{\theta^0}  & \cdots &  (\frac{z^{r-s}}{\theta^0})^{r}
      \\
      \vdots & &\vdots 
      \\
      \frac{z^{r-s}}{\theta^{r-1}} & \cdots & (\frac{z^{r-s}}{\theta^{r-1}})^{r}
    \end{pmatrix}
    \begin{pmatrix}
      z^{-r\lfloor \alpha_{1}\rfloor} &\cdots & 0
      \\
      \vdots & \ddots & \vdots 
      \\
      0 & \cdots & z^{-r\lfloor \alpha_{r}\rfloor}
    \end{pmatrix}\,. 
  \end{equation}
  These matrices have monodromy 
  \begin{equation}
    \label{DeckAction}
    Y(\theta z) = \tau^{-1} Y(z) \tau \,, \quad \text{and}\quad V(\theta z) = V(z)\tau \,,  
  \end{equation}
  with the permutation matrix defined by
  \begin{equation}
    \tau 
    \coloneq 
    \begin{pmatrix}
      0 & 1 & \cdots & 0
      \\ 
      \vdots& \ddots & \ddots&\vdots
      \\
      0& \cdots & 0& 1
      \\
      1 & 0 & \cdots & 0
    \end{pmatrix}^s \,.
  \end{equation}
  The spectral covering $x$ extends to the origin and infinity where it fully ramifies as $ x \colon \Sigma = \P^1 \to \P^1 \colon z \mapsto z^r$.
\end{lemma}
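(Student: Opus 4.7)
The plan is to verify each assertion by a direct computation that exploits the companion-matrix-like structure of $\phi$. First, to establish $\phi V = VY$, I would solve the eigenvalue problem $\phi v = \lambda v$ with $\lambda = \mu\, dx$ column by column. The particular pattern of zeros in $\phi$ gives the telescoping recursion
\begin{equation}
  v_i \;=\; \mu^{i-1}\, \frac{x^{\lfloor \alpha_r \rfloor}}{x^{\lfloor \alpha_{r-i+1}\rfloor}}\, v_1
\end{equation}
from rows $1,\dots,r-1$, and the last row then forces $\mu^r = x^{s-r}$. The $r$ roots $\mu_k = \theta^{k-1} x^{(s-r)/r} = \theta^{k-1} z^{s-r}$ on the branched cover give the eigenvalue one-forms $\theta^{k-1}\, r z^{s-1}\, dz$ of \cref{AbelianizationY}. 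Choosing the normalization of $v_1$ as dictated by the scalar prefactor of \cref{AbelianizationV} and simplifying the $z$-exponents using $\alpha_i = i(r-s)/r$ reproduces the columns of $V(z)$ on the nose.

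Second, the formula for $V^{-1}$ follows by recognizing that, after stripping off the scalar prefactor and the diagonal factor $\mathrm{diag}(z^{r\lfloor \alpha_i\rfloor})$, the remainder of $V(z)$ is the classical Vandermonde matrix in the variables $x_k = \theta^{k-1}/z^{r-s}$, with row $i$ carrying the $i$-th power. The displayed $V^{-1}$ is then the classical Vandermonde inverse, where $\Delta(\theta)^{1/r}$ denotes the $r$-th root of the Vandermonde discriminant $\prod_{a<b}(\theta^b-\theta^a)^2$, so that the fractional powers in $V$ and $V^{-1}$ together reconstitute the full Vandermonde determinant in the denominator. The identity $VV^{-1} = \1$ then reduces on each entry to the character-orthogonality identity $\sum_{k=1}^{r} \theta^{(k-1)(i-j)} = r\delta_{i,j}$, together with the cancellation of the residual $z$-power $r(\lfloor \alpha_i\rfloor - \lfloor \alpha_j\rfloor) - (r-s)(i-j)$ on the diagonal $i=j$.

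Third, for the monodromy I would work entry-by-entry. Each diagonal entry of $Y$ picks up exactly a factor $\theta^s$ under $z\mapsto \theta z$, so the index $k$ is shifted by $s \pmod r$; this is precisely conjugation by the $s$-step cyclic permutation $\tau$. For $V$, each entry picks up a root-of-unity factor coming from the prefactor $z^{(r-s)(r+1)/2}$, the diagonal $z^{r\lfloor \alpha_i\rfloor}$ (which contributes $1$ since $\theta^r=1$), and the $i$-th power of $1/z^{r-s}$; the leading constant $\theta$ and the exponent $(r-s)(r+1)/2$ are normalized precisely so that these residual phases fuse into a column permutation by $\tau$ with no leftover diagonal twist. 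As a consistency check, $V(\theta z)Y(\theta z)V(\theta z)^{-1} = V(z) Y(z) V(z)^{-1}$ then agrees with the fact that $\phi$ depends only on $x=z^r$ and is therefore invariant under the deck transformation.

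Finally, the extension of $x\colon \Sigma = \P^1 \to \P^1\colon z\mapsto z^r$ and its total ramification at $\{0,\infty\}$ are immediate from its definition as a polynomial map of the Riemann sphere. The main obstacle is essentially bookkeeping: the fractional powers $(\theta^b-\theta^a)^{1/r}$ and $\Delta(\theta)^{1/r}$ must be handled with mutually compatible branch-cut conventions so that $V^{-1}$ is literally the two-sided inverse of $V$ and so that the phase accumulated by each entry of $V$ under $z \mapsto \theta z$ is exactly the one needed to produce the bare permutation $\tau$ rather than $\tau$ multiplied by some nontrivial diagonal. Once those conventions are fixed, every remaining step is a systematic computation using only the structure of $\phi$ and the properties of $\alpha_i = i(r-s)/r$ and $\theta = e^{2\pi i/r}$.
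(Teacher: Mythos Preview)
Your proposal is correct and follows precisely the route the paper intends: its proof reads in full ``Explicit standard calculations,'' and what you have written is a faithful elaboration of those calculations --- solving the companion-type eigenvalue problem, inverting the Vandermonde factor, and checking the deck-transformation behaviour entrywise. The only caution is that the index bookkeeping between your telescoped eigenvector $v_i = \mu^{i-1} x^{\lfloor\alpha_r\rfloor - \lfloor\alpha_{r-i+1}\rfloor} v_1$ and the displayed diagonal factor $\mathrm{diag}(z^{r\lfloor\alpha_i\rfloor})$ in $V(z)$ does not match via the naive identity $\lfloor\alpha_i\rfloor + \lfloor\alpha_{r-i+1}\rfloor = r-s$ (which fails e.g.\ for $(r,s)=(5,2)$, $i=3$), so when you carry this out in detail you should verify $\phi V = VY$ directly column by column rather than relying on that shortcut; this is still a routine check and does not affect the soundness of your plan.
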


\begin{proof}
  Explicit standard calculations.
\end{proof}

\begin{remark}\label{WeylIsDeck}
  As usual, the diagonalization is not unique: we can reorder the eigenspaces by the Weyl group action. Similarly, we have a choice, given a point in the base $ \P^1$, of ordering the sheets of the spectral curve by deck transformations. The monodromy relation \eqref{DeckAction} equates these two groups, so we do not actually introduce more freedom into our system by passing to the spectral curve.
\end{remark}

We will formally construct an all-order WKB-type solution to $ \nabla_\hslash \Psi_\hslash = 0$. This is an extension of the method used for $r=2$ in \cite{AKT02}.\footnote{We would like to thank J. Hurtubise for explaining this extended method to us.}

\begin{lemma}\label{FormalGauge}
  There is a unique sequence of matrix-valued $ u_\ell (z) $ with trivial diagonal coefficients, for $ z \in \C^\times $, such that defining
  \begin{equation}
    \widehat U_\hslash(z) \coloneq V(z) \prod_{\ell\geq1}^{\rightarrow} \exp\Big( \hslash^\ell u_\ell(z)\Big) = V(z)\big(\Id +\mathcal O(\hslash)\big) \,,
  \end{equation}
  the expression  
  \begin{equation}
    \widehat Y_\hslash(z) 
    \coloneq \sum_{\ell\geq0}\hslash^\ell Y_\ell(z) \coloneq
    \widehat U_\hslash(z) ^{-1}\Phi_\hslash(z) \widehat U_\hslash(z) - \hslash \widehat U_\hslash(z)^{-1}d\widehat U_\hslash(z)  \,, 
  \end{equation}
  with $Y_0 \coloneq Y$, is diagonal at each order in its $\hslash$-expansion.\par
  This is equivariant under deck transformations:
  \begin{equation}
    \hat{Y}_\hslash (\theta z) = \tau^{-1} \hat{Y}_\hslash (z) \tau \quad \text{and} \quad \hat{U}_\hslash (\theta z) = \hat{U}_\hslash (z) \tau\,.
  \end{equation}
\end{lemma}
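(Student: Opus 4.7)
My plan is to construct the $u_\ell$ by induction on $\ell$, at each step solving a linear commutator equation. First, observe that the problem simplifies if we split off the leading gauge $V$: writing $\widehat U_\hslash = V \cdot U'_\hslash$ with $U'_\hslash = \prod_{\ell\geq 1}^{\rightarrow}\exp(\hslash^\ell u_\ell)$, and setting $A_\hslash \coloneq V^{-1}\Phi_\hslash V - \hslash V^{-1}dV$, the expression to make diagonal becomes
\[
  \widehat Y_\hslash = (U'_\hslash)^{-1} A_\hslash\, U'_\hslash - \hslash (U'_\hslash)^{-1} dU'_\hslash .
\]
By construction $A_0 = V^{-1}\phi V = Y$, which is already diagonal, so $Y_0 = Y$ works at order zero.

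For $\ell\geq 1$, I would expand everything in powers of $\hslash$ using the Baker--Campbell--Hausdorff formula. The key observation is that the only contribution of $u_\ell$ itself at order $\hslash^\ell$ in $\widehat Y_\hslash$ comes from the commutator $[Y, u_\ell]$: higher nested commutators involving $u_\ell$ live at order $\hslash^{\ell+1}$ or higher, and the kinetic term $\hslash(U'_\hslash)^{-1}dU'_\hslash$ contributes $\hslash \cdot \hslash^\ell du_\ell$ at order $\hslash^{\ell+1}$. Hence the order-$\ell$ equation takes the shape
\[
  Y_\ell + [u_\ell, Y] = R_\ell ,
\]
where $R_\ell$ depends only on $V$, $\Phi_1,\dots,\Phi_\ell$ and the previously constructed $u_1,\dots,u_{\ell-1}$, and can be treated as known by induction. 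Splitting this equation into its diagonal and off-diagonal parts pins down $Y_\ell$ as the diagonal part of $R_\ell$, and leaves the off-diagonal equation $[u_\ell, Y] = R_\ell^{\mathrm{off}}$ for $u_\ell$.

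The main (but in fact mild) obstacle is solving this off-diagonal equation. On off-diagonal matrices, the operator $\mathrm{ad}_Y$ acts by $(\mathrm{ad}_Y u)_{ab} = (Y_a - Y_b)\, u_{ab}$. Here the eigenvalues of $Y$ are $Y_a(z) = \theta^{a-1}\, r z^{s-1}\, dz$ with $\theta$ a primitive $r$-th root of unity, which are pairwise distinct on $\C^\times$. So $\mathrm{ad}_Y$ is invertible on the subspace of off-diagonal matrices and we obtain $u_\ell$ uniquely, with trivial diagonal as required (and with controlled pole orders at $0$ and $\infty$ dictated by $R_\ell$). The trivial-diagonal normalization of $u_\ell$ combined with invertibility of $\mathrm{ad}_Y$ on the off-diagonal part gives the uniqueness statement.

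Finally, for the deck-equivariance: by \eqref{DeckAction}, $V(\theta z) = V(z)\tau$ and $\tau^{-1}Y(z)\tau = Y(\theta z)$, and the pull-back of $\Phi_\hslash$ is invariant under $z\mapsto \theta z$ (since $x(\theta z) = x(z)$ and $d(\theta z)^r = dz^r$). Consequently, $A_\hslash$ satisfies $A_\hslash(\theta z) = \tau^{-1} A_\hslash(z)\tau$ (the $\hslash V^{-1}dV$ term also transforms covariantly because $\tau$ is constant). I would then prove by induction that $u_\ell(\theta z) = \tau^{-1} u_\ell(z)\tau$ for all $\ell$: given this for $j<\ell$, the matrix $R_\ell$ transforms by $\tau^{-1}(\cdot)\tau$, so $\tau^{-1} u_\ell(z)\tau$ is also a solution of the order-$\ell$ equation with trivial diagonal, and uniqueness forces it to equal $u_\ell(\theta z)$. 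The stated equivariance of $\widehat U_\hslash$ and $\widehat Y_\hslash$ follows immediately by multiplying through the defining products.
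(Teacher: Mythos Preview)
Your proof is correct and follows essentially the same approach as the paper: an inductive construction of the $u_\ell$ via solving $[Y_0,\,\cdot\,]$ on off-diagonal matrices, using that the eigenvalues of $Y$ are pairwise distinct on $\C^\times$. The paper writes the order-$\ell$ equation slightly more explicitly as $Y_{L+1} = \Phi_{L+1}^{(L)} + [Y_0,u_{L+1}] - du_L$ (with the convention $du_0 = V^{-1}dV$), but this is just your $R_\ell$ unpacked; your equivariance argument via uniqueness is in fact a bit more detailed than the paper's one-line remark.
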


\begin{proof}
  We construct the solution step by step, using the ordered partial products given by $U_\hslash^{(L)}(z) \coloneq V(z) \overset{\rightarrow}{\prod}{}_{\ell=1}^L \exp( \hslash^\ell u_\ell(z))$. The condition is equivalent to defining the $u_L$'s by imposing, recursively on $L\geq0$, the triviality of the off-diagonal coefficients of
  \begin{equation}
  \label{Abelianization}
    Y_{L+1}(z) = \Phi_{L+1}^{(L)}(z) +  \big[Y_0(z),u_{L+1}(z)\big] - du_L(z),
  \end{equation}
  where $du_0 \coloneq V_0^{-1}dV_0$ by convention, and we have introduced the intermediate connection potentials
  \begin{equation}
    \Phi_\hslash^{(L)} 
    \coloneq 
    \sum_{\ell\geq0}\hslash^\ell\Phi_\ell^{(L)}
    \coloneq
    (U_\hslash^{(L)}) ^{-1}\Phi_\hslash U_\hslash^{(L)} - \hslash  (U_\hslash^{(L)})^{-1}d U_\hslash^{(L)}
    =
    Y_\hslash^{(L)} + \mathcal O(\hslash^{L+1})\,,
  \end{equation}
  with $ Y_\hslash^{(L)} \coloneq \sum_{\ell=0}^L \hslash^\ell Y_\ell$.\par
  This procedure requires $[Y_0(z),\mathord{\cdot}]$ to be invertible on matrices with trivial diagonal coefficients, and is therefore valid everywhere away from zero and infinity, over which it has a non-trivial kernel.\par 
  Equivariance follows from the initial step $du_0 \coloneq V_0^{-1}dV_0$ together with the explicit form of (\ref{Abelianization}).
\end{proof}

\begin{corollary}\label{cor:AbelianConnection}
  The formal connection $\widehat\nabla_\hslash$ defined by
  \begin{equation}
  \label{AbelianConnection}
    \widehat \nabla_\hslash 
    \coloneq
    \widehat U_\hslash ^{-1}\nabla_\hslash \widehat U_\hslash
    =
    \hslash d - \widehat Y_\hslash \,,
  \end{equation}
  is abelian, with formal gauge transformation, and diagonal connection potential respectively satisfying 
  \begin{equation}
  \label{DeckActionAb}
    \widehat U_\hslash(\theta\, z)
    = \widehat U_\hslash(z) \tau \,, 
    \quad  \text{and} \quad  
    \widehat Y_\hslash(\theta\,z)
    =
    \tau^{-1} \widehat Y_\hslash(z) \tau \,.  
  \end{equation}
\end{corollary}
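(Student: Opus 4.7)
The proof is essentially a packaging of the previous lemma, so my plan is to verify the gauge-transformation identity by a direct computation and then read off all three claims.

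First, I would compute $\widehat U_\hslash^{-1}\nabla_\hslash \widehat U_\hslash$ as a formal operator acting on a local section $\sigma$. Using the definition $\nabla_\hslash = \hslash d - \Phi_\hslash$ and the deformed Leibniz rule satisfied by $\nabla_\hslash$, one has
\begin{equation}
    \widehat U_\hslash^{-1}\nabla_\hslash(\widehat U_\hslash \sigma)
    = \widehat U_\hslash^{-1}\bigl(\hslash\, d(\widehat U_\hslash)\,\sigma + \hslash \widehat U_\hslash\, d\sigma - \Phi_\hslash \widehat U_\hslash \sigma\bigr)
    = \hslash\, d\sigma - \bigl(\widehat U_\hslash^{-1}\Phi_\hslash \widehat U_\hslash - \hslash \widehat U_\hslash^{-1} d\widehat U_\hslash \bigr)\sigma.
\end{equation}
Comparing with the definition of $\widehat Y_\hslash$ in \cref{FormalGauge}, the bracketed operator is exactly $\widehat Y_\hslash$. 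Hence $\widehat\nabla_\hslash = \hslash d - \widehat Y_\hslash$, which is precisely \eqref{AbelianConnection}.

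Second, the abelianness claim is immediate from \cref{FormalGauge}: the matrix-valued one-form $\widehat Y_\hslash(z)$ was constructed order by order in $\hslash$ precisely to have trivial off-diagonal part. Since abelianness of an $\hslash$-connection on the trivial bundle means exactly that its connection potential takes values in a fixed Cartan subalgebra (here, diagonal matrices), the statement follows.

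Third, the equivariance properties \eqref{DeckActionAb} are already recorded as part of the conclusion of \cref{FormalGauge}; they propagate to $\widehat\nabla_\hslash$ tautologically via the conjugation formula for the connection potential, since $\tau^{-1}(\hslash d)\tau = \hslash d$ (as $\tau$ is constant).

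There is essentially no obstacle here beyond making the formal Leibniz-rule manipulation precise in the $\hslash$-formal setting; one should note that all manipulations are well-defined because $\widehat U_\hslash$ is invertible as a formal power series in $\hslash$ (its leading term $V(z)$ is invertible away from $\{0,\infty\}$), so $\widehat U_\hslash^{-1}$ exists and the conjugation and the term $\hslash \widehat U_\hslash^{-1} d\widehat U_\hslash$ are well-defined at each order in $\hslash$. Thus the corollary is just a reformulation of \cref{FormalGauge} in the language of $\hslash$-connections.
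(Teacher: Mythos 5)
Your proof is correct and matches the paper's intent: the corollary is stated without a separate proof precisely because it is the standard gauge-transformation computation combined with the diagonality and equivariance already established in \cref{FormalGauge}, which is exactly what you spell out. Your added remark on invertibility of $\widehat U_\hslash$ (leading term $V(z)$ invertible away from $0$ and $\infty$) is a harmless and correct precision.
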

In turn, each $Y_\ell $ is rational on the spectral curve with a unique singularity at the origin, with a pole order that is easily seen to grow as a function of $ \ell$. In particular, this means that the divergent integrals in
\begin{equation}
    \label{DivergentIntegral}
    \int_0^z \hat{Y}_\hslash
    =
    \sum_{\ell=0}^\infty \hslash^\ell\int_0^z Y_\ell
\end{equation}
can be regularized \textit{term by term} in the $\hslash$-expansion by adding and subtracting a finite number of counter-terms. This is fairly standard procedure in topological recursion, cf. e.g. \cite{EO07,EMS11}. More details can be found in \cite[Section~3.2]{Bel24}. For us, the exact method of regularization is not important. We denote this regularized integral as
\begin{equation}
    \widehat{\mathcal J}_\hslash(z)
    \coloneq
    \slashed{\int}_0^z \widehat Y_\hslash
    =
    \sum_{\ell\geq 0} \hslash^\ell\slashed {\int}_0^z Y_\ell.
\end{equation}

\begin{corollary}
  \label{ExistenceWKB}
  The expression 
  \begin{equation}
  \label{PsiAbel}
    \Psi_\hslash(z)
    \coloneq
    \widehat U_\hslash(z) \operatorname e^{\frac1\hslash \widehat{\mathcal J}_\hslash(z)}
    =
    \widehat U_\hslash(z) \exp \Big(\frac1\hslash \slashed \int_0^z \widehat Y_{\hslash}\Big)
  \end{equation}
  is a formal solution to $ \nabla_\hslash \Psi_\hslash (z) = 0$. It can alternatively be expressed in WKB form as
  \begin{equation} \label{PsiWKB}
    \Psi_\hslash(z) = V(z) \hat{\Psi}_\hslash (z) \exp\Big(\frac1\hslash\int_0^z Y_0\Big)\,,
  \end{equation}
  where $ \hat{\Psi}_\hslash (z)$ takes the form of a product of a formal $\hslash$-series whose coefficients are rational matrix-valued functions of $z$ by the exponential of a formal $\hslash$-series each coefficient of which is the regularized integral of a rational one-form. This solution is equivariant:
  \begin{equation}
  \label{Equivariance}
    \Psi_\hslash(\theta z) = \Psi_\hslash(z)\tau \,.
  \end{equation}
\end{corollary}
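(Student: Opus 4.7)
The plan is to verify the three claims directly from the abelianization results of \cref{FormalGauge} and \cref{cor:AbelianConnection}, treating every identity as a formal power series in $\hslash$. The central algebraic input is the rearranged abelianization relation
\begin{equation}
  \Phi_\hslash \widehat{U}_\hslash = \widehat{U}_\hslash \widehat{Y}_\hslash + \hslash\, d\widehat{U}_\hslash,
\end{equation}
together with the observation that, since the regularization of $\slashed{\int}_0^z$ subtracts only $z$-independent counter-terms order by order in $\hslash$, the identity $d\widehat{\mathcal{J}}_\hslash = \widehat{Y}_\hslash$ holds as a formal $\hslash$-series of matrix-valued $1$-forms on $\C^\times$.

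For the differential equation $\nabla_\hslash \Psi_\hslash = 0$, I would apply $\hslash d$ to $\Psi_\hslash = \widehat{U}_\hslash \operatorname{e}^{\widehat{\mathcal{J}}_\hslash/\hslash}$ via the Leibniz rule. Because $\widehat{Y}_\hslash$ and $\widehat{\mathcal{J}}_\hslash$ are diagonal at every order in $\hslash$, the differential $d\widehat{\mathcal{J}}_\hslash = \widehat{Y}_\hslash$ commutes with the exponential factor, giving $\hslash\, d\Psi_\hslash = (\hslash\, d\widehat{U}_\hslash + \widehat{U}_\hslash \widehat{Y}_\hslash)\operatorname{e}^{\widehat{\mathcal{J}}_\hslash/\hslash}$, which by the displayed abelianization relation equals $\Phi_\hslash \Psi_\hslash$.

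For the WKB form, the strategy is to separate out the $\ell = 0$ term in $\widehat{\mathcal J}_\hslash$: since $Y_0 = r z^{s-1} \operatorname{diag}(\theta^k)\, dz$ has no pole at $z = 0$ for $s \geq 1$, the integral $\int_0^z Y_0$ requires no regularization, and the tail $\sum_{\ell \geq 1}\hslash^{\ell-1}\slashed{\int}_0^z Y_\ell$ is a formal power series in $\hslash$ starting at order zero, with coefficients rational on $\Sigma$. Absorbing both $\prod_{\ell \geq 1}^{\rightarrow}\exp(\hslash^\ell u_\ell)$ and this second exponential into a single formal $\hslash$-series $\widehat{\Psi}_\hslash(z)$ (all factors being diagonal, the requisite commutations are legitimate), I recover \eqref{PsiWKB}, with the classical action $\tfrac{1}{\hslash}\int_0^z Y_0$ accounting for the sole $\hslash^{-1}$ contribution on the right-hand side.

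For equivariance, a change of variables $z' = \theta t$ in the regularized integral, combined with $\widehat{Y}_\hslash(\theta z) = \tau^{-1}\widehat{Y}_\hslash(z)\tau$ from \eqref{DeckActionAb}, yields $\widehat{\mathcal J}_\hslash(\theta z) = \tau^{-1}\widehat{\mathcal J}_\hslash(z)\tau$; since $\tau$ is a constant matrix, this conjugates through the exponential to give $\operatorname{e}^{\widehat{\mathcal J}_\hslash(\theta z)/\hslash} = \tau^{-1}\operatorname{e}^{\widehat{\mathcal J}_\hslash(z)/\hslash}\tau$. Combined with $\widehat{U}_\hslash(\theta z) = \widehat{U}_\hslash(z)\tau$, the two middle copies of $\tau^{\pm 1}$ cancel and one is left with $\Psi_\hslash(z)\tau$. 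The main subtle point across all three steps is control of the regularization: one must verify that the subtracted counter-terms really are $z$-independent (so that $d\slashed{\int}_0^z Y_\ell = Y_\ell$ at every order) and are compatible with the constant-coefficient substitution used in the equivariance calculation. Once this is granted, the rest is pure matrix manipulation in the diagonal gauge.
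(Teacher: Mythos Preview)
Your proof is correct and follows essentially the same approach as the paper, which itself is extremely terse (it only says ``direct consequence of \cref{cor:AbelianConnection}'' for the first claim and ``equate \eqref{PsiAbel} and \eqref{PsiWKB} and solve order by order'' for the second). One small remark: in your WKB step the parenthetical ``all factors being diagonal'' is slightly imprecise since the $u_\ell$ are off-diagonal, but the only commutation you actually need---between the two diagonal exponentials $e^{\frac{1}{\hslash}\int_0^z Y_0}$ and $e^{\sum_{\ell\geq 1}\hslash^{\ell-1}\slashed{\int}_0^z Y_\ell}$---is correctly identified, so the argument goes through.
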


\begin{proof}
  The first statement is a direct consequence of \cref{cor:AbelianConnection}. The WKB form can be found by equating \eqref{PsiAbel} and \eqref{PsiWKB}, and solving for $ \hat{\Psi}_\hslash (z)$. This yields the claimed rational coefficients, as well as the integrals of rational one-forms, because all of the involved functions in \cref{FormalGauge} and its proof are rational by their defining expressions.
\end{proof}

\begin{remark}\label{FreedomOfFundamentalSolution}
  Given a fundamental solution $ \Psi_\hslash$ of $ \nabla_\hslash$ as above, $ \Psi_\hslash C$ is still a fundamental solution for any invertible constant matrix $ C$. However, the equivariance of \eqref{Equivariance} reduces this freedom: if we also want $\Psi_\hslash(\theta z)C = \Psi_\hslash(z)C\tau $, we need that $ \tau C = C \tau$.
\end{remark}

\subsection{Determinantal amplitudes and loop equations}

In this subsection, we will explain how to use the WKB solutions constructed above through the associated solutions
\begin{equation}
  \label{AdjointSolution}
  M_\hslash(z, E) \coloneq \Psi_\hslash(z) E \Psi_\hslash(z)^{-1} \,,
\end{equation}
of the adjoint differential system 
\begin{equation}
  \label{AdjointSystem}
  \hslash d M=[\Phi_\hslash,M]
\end{equation} 
for any $E \in \End (\C^r) $ encoding choices of `initial conditions'.\footnote{Strictly speaking, we do not impose that there is an `initial point' $ z_0$ such that $ \Psi_\hslash (z_0) = \Id$, so $E$ is not quite an initial condition, but it fulfils the same role.}

By construction, $ M$ is equivariant as well:
\begin{equation}
  M_\hslash (\theta z, \tau^{-1} E \tau) = M_\hslash (z, E ) \,.
\end{equation}
In particular, if $ e_a$ is the $ a$-th diagonal basis matrix,
\begin{equation}
  M_\hslash (z, e_a ) = M_\hslash (\theta z, \tau^{-1} e_a \tau) = M_\hslash (\theta z, e_{a+s} )\,,
\end{equation}
so if $ \tilde{a} $ is the unique solution modulo $ r$ of $ a + \tilde{a}s = r$, then we get
\begin{equation}
  M_\hslash (z, e_a ) = M_\hslash (\theta^{\tilde{a}} z, e_r )\,.
\end{equation}
This is an implementation of the relation between the Weyl group and the group of deck transformations of \cref{WeylIsDeck}.

Since the only possibly non-rational terms featuring in the WKB solution appear as diagonal multiplicative factors from the right, we get the following proposition.

\begin{proposition}[{\cite[Remark 3.2]{BEM17}}]
  \label{ExpansionOfM}
  The existence of a WKB-type solution \cref{PsiWKB} is equivalent to having an expansion, of $M_\hslash (z, e_a )$ in powers of $ \hslash$ in the following shape:
  \begin{equation}
    \label{MExpansion}
    M_\hslash (z, e_a) = V(z) e_a V(z)^{-1} + \sum_{k=1}^\infty M^{(k)}( \theta^{\tilde{a}} z) \hslash^k \,,
  \end{equation} 
  where the $ M^{(k)}$ are rational functions of $z$.
\end{proposition}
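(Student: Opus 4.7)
The plan is to prove both implications by direct manipulation of the WKB expression from Corollary~\ref{ExistenceWKB}, namely $\Psi_\hslash(z) = \widehat U_\hslash(z) \exp(\hslash^{-1}\widehat{\mathcal J}_\hslash(z))$ as in \eqref{PsiAbel}.

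For the forward direction, I would substitute this WKB form into the definition $M_\hslash(z,e_a) = \Psi_\hslash(z)\, e_a\, \Psi_\hslash(z)^{-1}$. The crucial observation is that both $e_a$ and $\widehat{\mathcal J}_\hslash$ are diagonal matrices and hence commute, so the two scalar exponential factors cancel and one obtains the clean closed form
\begin{equation*}
  M_\hslash(z,e_a) = \widehat U_\hslash(z)\, e_a\, \widehat U_\hslash(z)^{-1}.
\end{equation*}
Expanding $\widehat U_\hslash = V + O(\hslash)$, whose coefficients are rational in $z$ by the iterative construction of Lemma~\ref{FormalGauge}, immediately yields $V(z)\,e_a\,V(z)^{-1}$ at leading order together with rational corrections at every order in $\hslash$.

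To consolidate the $r$ choices of $a$ into a single function evaluated at $\theta^{\tilde{a}}z$, I would invoke the equivariance $\widehat U_\hslash(\theta z) = \widehat U_\hslash(z)\tau$ from \eqref{DeckActionAb}. Conjugation by $\tau$ shifts the diagonal idempotents via $\tau\, e_a\, \tau^{-1} = e_{a - s \bmod r}$, and iterating the defining relation $a + \tilde{a} s \equiv r \pmod r$ produces
\begin{equation*}
  \widehat U_\hslash(z)\,e_a\,\widehat U_\hslash(z)^{-1} = \widehat U_\hslash(\theta^{\tilde{a}} z)\,e_r\,\widehat U_\hslash(\theta^{\tilde{a}} z)^{-1}.
\end{equation*}
Setting $M^{(k)}(z)$ to be the $\hslash^k$-coefficient of $\widehat U_\hslash(z)\, e_r\, \widehat U_\hslash(z)^{-1}$ then gives exactly the claimed expansion; the leading term matches because the same equivariance applied to $V$ yields $V(\theta^{\tilde{a}}z)\, e_r\, V(\theta^{\tilde{a}}z)^{-1} = V(z)\, e_a\, V(z)^{-1}$.

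For the reverse direction, I would read the leading term of the assumed expansion as specifying the rank-one spectral projectors of $\phi(z)$, and then recursively solve $M_\hslash(z,e_a) = \widehat U_\hslash(z)\, e_a\, \widehat U_\hslash(z)^{-1}$ order by order in $\hslash$ for a formal gauge $\widehat U_\hslash = V + O(\hslash)$; at each step one uses the invertibility of $[Y_0,\cdot]$ on off-diagonal matrices, exactly as in Lemma~\ref{FormalGauge}. The WKB solution is then recovered by multiplying by the diagonal exponential $\exp(\hslash^{-1}\widehat{\mathcal J}_\hslash)$, which by construction drops out of $M_\hslash$ again. The main obstacle I anticipate is fixing the diagonalization gauge consistently between the two descriptions: abelianization carries an inherent Weyl group freedom (Remark~\ref{WeylIsDeck} and Remark~\ref{FreedomOfFundamentalSolution}), so some care is required to align the labelling of the idempotents $e_a$ with the ordering of the eigenlines so that $M^{(k)}$ is genuinely a single $a$-independent rational function on $\P^1$.
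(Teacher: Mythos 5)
Your argument is correct and is essentially the argument the paper relies on (it cites \cite[Remark 3.2]{BEM17} and later combines this statement with \cref{ExistenceWKB}): because $\widehat{\mathcal J}_\hslash$ and $e_a$ are both diagonal, the essentially singular exponential cancels in $\Psi_\hslash(z)\,e_a\,\Psi_\hslash(z)^{-1}$, leaving $\widehat U_\hslash(z)\,e_a\,\widehat U_\hslash(z)^{-1}$ whose $\hslash$-coefficients are rational (the scalar prefactor of $V$ also cancels under conjugation), and the equivariance $\widehat U_\hslash(\theta z)=\widehat U_\hslash(z)\tau$ repackages the coefficients for all $a$ as a single $M^{(k)}$ evaluated at $\theta^{\tilde a}z$. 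One minor imprecision: in the converse direction, solving $M_\hslash(z,e_a)=\widehat U_\hslash\,e_a\,\widehat U_\hslash^{-1}$ order by order rests on the structure of the commutators $[\,\cdot\,,e_a]$ (simultaneous diagonalization of the commuting rank-one idempotents, with the residual diagonal freedom then fixed by the differential system, which forces the diagonal factor to be $\exp(\hslash^{-1}\widehat{\mathcal J}_\hslash)$), not on invertibility of $[Y_0,\cdot\,]$ as in \cref{FormalGauge} — but this direction is not used in the paper and your overall strategy for it, including the flagged Weyl/deck gauge alignment, is sound.
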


We will use this adjoint system to define non-perturbative connected amplitudes, which satisfy non-perturbative loop equations~\cite{BEM18}. Under certain assumptions, namely the topological type property of~\cite{BEM17} (following \cite{BBE15} for the $ q$-Gelfand--Dickey hierarchy), these non-perturbative amplitudes can be expanded in powers of $\hslash $ to yield the correlators of topological recursion, and we will see exactly when this topological type property holds in our setup.

Recall that we use shorthand notation $ x_j = x(z_j)$ for $ z_j \in \Sigma$.

\begin{definition}
  The \emph{Cauchy kernel} is
  \begin{equation}
    \label{CauchyKernel}
    K_\hslash(z_1,z_2) \coloneq \sqrt{dx_1}\frac{\Psi_\hslash(z_1)^{-1}\Psi_\hslash(z_2)}{x_2-x_1}\sqrt{dx_2} \,.
  \end{equation}
\end{definition}
It has simple poles at each of the pre-images of the diagonal in $\C^\times \times \C^\times$ by $x \colon \Sigma \rightarrow \C^\times$, and no other singularities. We will use matrix elements of the kernel (\ref{CauchyKernel}) to build the non-perturbative amplitudes. We will do so in a symmetric way involving choices of initial conditions, and such that all singular contributions cancel each other when the corresponding matrices are diagonal. 

In the setup of \cref{OurQC}, the expansion of $ K_\hslash$ near inverse images of the diagonal is given by
\begin{equation}
\label{KernelSingularity}
    K_\hslash(z_1,z_2) = \frac{r z_1^{r-1}  dz_1}{z_2^r-z_1^r} - \Psi_\hslash(z_1)^{-1}\Phi_\hslash\big(x(z_1)\big) \Psi_\hslash(z_1) \tau^k + \mc{O} (z_2-\theta^k z_1) \,,
\end{equation}
valid in the regimes where $z_2\rightarrow\theta^kz_1$ for some integer $k\in\{1,\dots,r\}$, where we have used the equivariance \eqref{Equivariance}.

On the diagonal, where it is not defined, we prescribe the value of the kernel to be given by the next-to-singular term appearing in the asymptotics (\ref{KernelSingularity}) for $k=0$, that is
\begin{equation}
    \label{DiagonalKernel}
    K_\hslash(z,z) \coloneq - \frac1\hslash\Psi_\hslash(z)^{-1}\Phi_\hslash\big(x(z)\big)\Psi_\hslash(z) \,. 
\end{equation}


\begin{definition}
  Define for every $ n \geq 2$, the $n^{th}$ \emph{non-perturbative connected amplitude} as functions of $z_1,\dots,z_n\in\Sigma$, and any choice of matrices $E_1,\dots,E_n$, by
  \begin{equation}
  \label{NPamplitudes}
    W_n(\overset{E_1}{z_1},\dots,\overset{E_n}{z_n})
    \coloneq
    (-1)^{n-1}\sum_{\sigma\in\mathfrak S_n'}  \frac{\Tr\underset{1\leq i \leq n}{\overset\rightarrow \prod} M_\hslash(z_{\sigma^i}, E_{\sigma^i})}{(x_1 -x_{\sigma^2})\cdots(x_{\sigma^{n}}-x_1)} \prod_{i=1}^n dx_i \,,
  \end{equation}
  involving the set $\mathfrak S_n'$ of all permutations $\sigma=(\sigma^1=1,\sigma^2,\dots,\sigma^n)$ of $\{1,\dots,n\}$ consisting of a single cycle, and the non-commutative matrix products are always computed in reading order from left to right, as indicated by the arrow.\par
  By the cyclic property of the trace, we can express this in terms of the Cauchy kernel, and this extends the definition to points with same base-point projections. For $n=1$, we use this to define
  \begin{equation}
    W_1(\overset Ez)
    \coloneq
    \frac1\hslash\Tr \big( M_\hslash(z, E)\Phi_\hslash(z^r)\big)
    =
    - \Tr \big( K_\hslash(z,z)E\big) \,.
  \end{equation}
  Non-connected amplitudes are defined via the cumulant formula
  \begin{equation}
    \label{DisNPamplitude}
    \widehat W_n(J) \coloneq \underset{\mu\in\text{part}(J)}{\sum} \prod_{i=1}^{\text{length}(\mu)} W_{|\mu_i|}(\mu_i) \,,
  \end{equation}
  summing over set partitions of any $J \coloneq \big\{\overset{E_1}{z_1},\dots,\overset{E_n}{z_n}\big\}$.
\end{definition} 

These amplitudes (both connected and disconnected) are still equivariant in the same way as the $ M_\hslash$:
\begin{equation}\label{WEquivariant}
  W_{n+1} (\overset{\tau^{-1} E_0 \tau}{\theta z_0}, J ) = W_{n+1} (\overset{E_0}{z_0}, J )\,.
\end{equation}

Let us now introduce the non-perturbative loop equations, which are relations between the various $ W_n$. This construction is formally similar to the twist-field construction of \cref{s:rsairystruct}, and again makes use of the Casimir elements of the Lie algebra $\mathfrak{gl}_r$, cf. \cref{WrGenerators}. These identities encode the invariance of certain combinations of non-connected amplitudes under the combined action of parallel transport by $\nabla_\hslash$ around the origin and spectral curve deck transformations. We will interpret the corresponding constrained singular profiles in terms of the data of a representation of the W-algebra $\mc{W}(\mathfrak{gl}_r)$. 

Denoting by $\{\mathbf e_{i,j}\}_{i,j=1}^r$ the standard vector-space basis of $r\times r$ matrices, we consider the algebraic generators $C^{(1)},\dots, C^{(r)}$ of the center of $ U(\mf{gl}_r)$,
\begin{equation}
    \label{Casimir}
    C^{(k)} = \sum_{1\leqslant i_1,j_1,\dots,i_k,j_k\leqslant r} C^{(k)}_{(i_1,j_1),\dots,(i_k,j_k)}\mathbf e_{i_1,j_1}\otimes \cdots \otimes \mathbf e_{i_k,j_k} \,,
\end{equation}
whose coordinates are obtained as coefficients of the characteristic polynomial function
\begin{equation}
  \label{CharacteristicPolynomial}
  \det \big(\omega \Id - E\big) \eqcolon \sum_{k=0}^r(-1)^k \omega^{r-k}\sum_{1\leqslant i_1,j_1,\dots,i_k,j_k\leqslant r} C^{(k)}_{(i_1,j_1),\dots,(i_k,j_k)} E^{i_1,j_1}\cdots E^{i_k,j_k} \,,
\end{equation}
for $E=\sum_{i,j=1}^r E^{i,j} \mathbf e_{i,j}$.

\begin{definition} 
  \label{DefNonPertLE}
  For every positive integer $n\geqslant1$, any generic $J=\big\{\overset{E_1}{z_1},\dots,\overset{E_n}{z_n}\big\}$, and any point $z\in\Sigma \setminus \{ 0, \infty, z_1, \dotsc, z_n\}$, the amplitudes satisfy the \emph{non-perturbative loop equations}
  \begin{equation}
    \begin{split}
    \label{NonPertLoopEq}
      \sum_{k=0}^r(-1)^k \omega^{r-k} \widehat W_{k+n}\big(\overset{C^{(k)}}{\overbrace{z,\dots,z}} ,J\big) 
      &=
      [\delta_1\cdots\delta_n] \det \Big(\omega \Id - \Phi_\hslash\big(x(z)\big)-\mathcal M^{(n)}_{\vec\delta}\big(x(z);J\big) \Big)
      \\
      &\eqcolon
      P_n(x(z),\omega ;J) \,,
    \end{split}
  \end{equation}
  where the $k$ first variables of $\widehat W_{k+n}$ are linearly evaluated at the $k^{\text{th}}$ Casimir over the same point, $\vec\delta \coloneq (\delta_1,\dots,\delta_n)$ is a vector of formal variables,
  \begin{equation}
    \mathcal M^{(n)}_{\vec\delta}\big(x;J\big)
    \coloneq
    \sum_{k=1}^n \sum_{1\leq i_1\neq\cdots\neq i_k\leq n} \delta_{i_1}\cdots \delta_{i_k} \frac{\underset{1\leq j\leq k}{\overset\rightarrow \prod} M_\hslash(z_{i_j}, E_{i_j})}{(x-x_{i_1})\cdots (x_{i_k}-x)} dx \prod_{i=1}^k dx_i \,,
  \end{equation}
  and $[\delta_1\cdots\delta_n] P(\vec{\delta}) $ equals the coefficient of the monomial $\delta_1\cdots\delta_n$ in the polynomial expression $ P (\vec{\delta}) $ of $\vec\delta$.
\end{definition} 

\begin{remark}
  The validity of this equation follows from an expansion of the determinant in powers of $ \omega $. The interesting content of this collection of identities is that certain algebraic combinations of non-connected amplitudes exhibit the analytical structure of the expressions appearing in the right-hand side of \eqref{NonPertLoopEq}.
\end{remark}

\subsection{Topological type property}

According to \cite{BBE15}, solutions of \eqref{NonPertLoopEq} that are of topological type can be computed by topological recursion. Furthermore, there exist sets of sufficient conditions on the differential system $\nabla_\hslash$ that ensure that this is the case, see \cite{BEM17}. Our setup satisfies all those sufficient conditions except one. In this section, we recall those sufficient conditions and describe how a refinement of one of them saves the day, showing that the corresponding solutions are indeed computed by shifted topological recursion.  

The topological type property was defined in \cite[Definition~3.3]{BBE15} and refined in \cite[Definition~5.1]{BEM18}. Here we give a restricted definition that suffices in our context.

\begin{definition}
  \label{TopologicalType}
  A collection  $\{ W_n \}_{n\geq 1}$ of meromorphic symmetric $n$-form sections of the trivial bundle $ \mf{gl}_r \times \P^1 $ satisfies the \emph{topological type property} if
  \begin{enumerate}
    \item There exists a cover $ x \colon \Sigma \to \P^1 $ over which each $W_n$ admits an $\hslash$-expansion whose coefficients are rational functions;
    \item Apart from $ [\hslash^{-1}] W_1$ and $ [\hslash^0] W_2$, the coefficients of the $ W_n$ may only have poles at the ramification points of $ x$. Moreover, $ [\hslash^0] W_2 $ has a residueless double pole at the diagonal and no other pole;
    \item The $\hslash$-expansion of each $W_n$ has first non-trivial coefficient at order $\mc{O}(\hslash^{n-2})$.
  \end{enumerate}
\end{definition}

\begin{remark}
  \label{Parity}
  One condition is missing here in comparison to \cite{BBE15,BEM18}. This condition concerns the parity of the construction under a sign change of $\hslash$, ensuring that no half-genus invariants appear. We do not mind half-genus terms, since they appear generically, and play no specific role here.
\end{remark}

The reason the topological type property was introduced, is the following theorem.

\begin{theorem}[{\cite[Corollary~3.6]{BBE15}}]
  If a collection of connected amplitudes satisfies the non-perturbative loop equations and the topological type property, then expansion coefficients of the amplitudes evaluated at diagonal basis matrices can be calculated by topological recursion, i.e.
  \begin{equation}
    W_n ( \overset{e_{r}}{z_1}, \dotsc, \overset{e_{r}}{z_n} ) = \sum_{g \in \frac{1}{2} \N } \hslash^{2g-2+n} \omega_{g,n} (z_1, \dotsc, z_n)\,.
  \end{equation}
\end{theorem}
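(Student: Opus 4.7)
The plan is to combine the analytic constraints imposed by the topological type property with the algebraic identities furnished by the non-perturbative loop equations, and then invoke the uniqueness statement of \cref{UnshiftedTR} (or \cref{ShiftedTR}) to identify the $\hslash$-coefficients of $W_n(\overset{e_r}{z_1},\ldots,\overset{e_r}{z_n})$ with the topological recursion correlators $\omega_{g,n}$. By condition (3) of \cref{TopologicalType}, there is a well-defined $\hslash$-expansion
\begin{equation*}
  W_n(\overset{e_r}{z_1},\ldots,\overset{e_r}{z_n}) = \sum_{g\in\frac12\N} \hslash^{2g-2+n}\,\omega_{g,n}(z_1,\ldots,z_n)
\end{equation*}
with rational coefficients on $\Sigma^n$, and the task is to show that the $\omega_{g,n}$ so defined satisfy topological recursion. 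I would first match the low-order pieces against the spectral curve data: the $\hslash^{-1}$-coefficient of $W_1(\overset{e_r}{z})$, computed from \eqref{DiagonalKernel} and \cref{ExpansionOfM}, extracts the $r$-th eigenvalue of the Higgs field \eqref{Higgs} and recovers $\omega_{0,1}=y\,dx$; similarly the next-order contribution to $W_1$ produces $\omega_{\frac12,1}$ (including any shift terms of \cref{d:rsdefshift}), and the leading coefficient of $W_2$ at diagonal inputs produces $\omega_{0,2}$.

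Second, I would extract loop equations for the $\omega_{g,n}$ from \cref{DefNonPertLE}. Expanding the left-hand side of \eqref{NonPertLoopEq} via the cumulant formula \eqref{DisNPamplitude} and \eqref{CharacteristicPolynomial}, and using the equivariance \eqref{WEquivariant} together with \cref{WeylIsDeck} to identify the Weyl action on labels with the deck action on $\Sigma$, the combinations $\sum_k(-1)^k\omega^{r-k}\widehat W_{k+n}(\overset{C^{(k)}}{z,\ldots,z},J)$ at diagonal inputs should, order by order in $\hslash$, reorganize into a polynomial in $\omega$ whose $\omega^{r-i}$-coefficients are (up to sign) the $\mc{E}^i_{g,n}(x;z_{[n]})$ of \cref{d:EW}. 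The polynomiality in $\omega$ of the right-hand side of \eqref{NonPertLoopEq}, whose pole orders at the ramification point can be read off from the explicit shape of $\Phi_\hslash$ in \eqref{ConnectionPotential}, then forces each $\mc{E}^i_{g,n}$ to have the prescribed vanishing order---these are exactly the (shifted) loop equations of \cref{p:shifteloopeq}.

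Third, I would verify the projection property. Condition (2) of \cref{TopologicalType} restricts the poles of the stable $\omega_{g,n}$ to the ramification point, and since each $W_n$ is assembled from traces of Cauchy kernels via \eqref{NPamplitudes} these poles carry no residue; by \cref{l:finiteness} this is the projection property. An application of \cref{UnshiftedTR} (or \cref{ShiftedTR} in the presence of shifts), together with the initial-data match of step one, then concludes the identification.

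The main obstacle I anticipate is the combinatorial bookkeeping in the second step: concretely, matching the trace-of-cyclic-product structure of \eqref{NPamplitudes}---once diagonalized via the abelianization $\widehat U_\hslash$ of \cref{FormalGauge}, so that the sum over permutations of insertions becomes a sum over sheets of $\Sigma$---with the set-partition sum defining $\mc{W}'_{g,i,n}$ and $\mc{E}^i_{g,n}$ in \cref{d:EW}. The abelian connection $\widehat\nabla_\hslash$ of \cref{cor:AbelianConnection} is the natural tool to linearize this, but carefully accounting for the $E_j=e_r$ initial conditions, which select a specific sheet and break the manifest $\mf{S}_r$-symmetry among the sheet labels, will require delicate work.
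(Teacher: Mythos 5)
This statement is quoted from \cite[Corollary~3.6]{BBE15} and is not reproved in the paper, so there is no in-paper argument to compare against line by line; your outline does follow the strategy of the cited proof and of the paper's surrounding architecture (expand in $\hslash$ using \cref{TopologicalType}, reduce the non-perturbative loop equations \eqref{NonPertLoopEq} order by order to the perturbative (shifted) loop equations, establish the projection property, and conclude by the uniqueness part of \cref{UnshiftedTR} or \cref{ShiftedTR}). So the overall route is the right one.

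As a proof, however, two steps are genuinely incomplete. First, the projection property: condition (2) of \cref{TopologicalType} only confines the poles of the stable coefficients to the ramification points of $x$ --- which for $x=z^r$ on $\P^1$ includes $z=\infty$ --- and imposes no residue condition on them; to invoke \cref{l:finiteness} you must actually show each $\omega_{g,n}$ has poles only at $z=0$, residue-free, with no extra part that is holomorphic at $0$. ``The $W_n$ are traces of Cauchy kernels, hence no residue'' is an assertion, not an argument; in \cite{BBE15,BEM17} this is where the determinantal structure and explicit pole counting do real work. Second, to apply \cref{ShiftedTR} you need the precise vanishing orders $\mc{O}\big(x^{\lfloor s(i-1)/r\rfloor+1}\big)\big(\tfrac{dx}{x}\big)^i$ together with the shift terms $\delta_{n,0}S_{i,2g}\big(\tfrac{dx}{x}\big)^i$, and you propose to read these off the explicit $\Phi_\hslash$ of \eqref{ConnectionPotential}. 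That input is not among the hypotheses of the abstract statement (an arbitrary collection of amplitudes satisfying \eqref{NonPertLoopEq} and \cref{TopologicalType}); abstractly the loop equations only give regularity of $P_n$ at the branch points, which for higher-order ramification is strictly weaker than what the $(r,s)$ recursion needs. In the paper this extra analytic input is precisely Assumption 5* (\cref{Assumption5*}) and \cref{ConditionsForAssumption5}, which feed into \cref{Assumption5*ToTT}, \cref{From5*ToTR} and \cref{t:ds}, not into the cited theorem itself, so importing it here either narrows the statement to \cref{OurQC} or smuggles in an unstated hypothesis. Finally, the identification of the $\hslash$-expansion of the Casimir/cumulant sums in \eqref{NonPertLoopEq} with the $\mc{E}^i_{g,n}$ of \cref{d:EW} is the combinatorial heart of the argument; you correctly flag it as the main obstacle, but it is left undone.
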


By equivariance of the amplitudes, \cref{WEquivariant}, this determines the $W_n$ on the entire Cartan.

Let us then prove the topological type property. We start with the following lemma.

\begin{lemma}\label{Assumption4Bypass}
  The leading order of $ W_2$, $\omega_{0,2}$, is the unique rational symmetric bi-differential on the spectral curve $\Sigma$ that has a residue-less double pole on the diagonal with unit biresidue. Namely, it is the Bergman kernel on the Riemann sphere, given by
  \begin{equation}
    \label{TRB}
    \omega_{0,2}(z_1,z_2) = \frac{dz_1 dz_2}{(z_1-z_2)^2}\, .
  \end{equation}
\end{lemma}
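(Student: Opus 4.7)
The plan is to prove the lemma in two steps: first establish uniqueness, then verify by direct computation that $\omega_{0,2}$ meets the conditions. Uniqueness is standard: any two rational symmetric bi-differentials on $\Sigma \times \Sigma = \mathbb{P}^1 \times \mathbb{P}^1$ that both have a residueless double pole of biresidue $1$ on the diagonal and no other poles differ by a globally holomorphic symmetric bi-differential on $\mathbb{P}^1 \times \mathbb{P}^1$, of which there are none. A direct check shows that $\tfrac{dz_1\,dz_2}{(z_1-z_2)^2}$ is such a bi-differential, hence it is the unique one, and the work is to verify that $\omega_{0,2}$ fits this description.

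Substituting the leading-order expansion $M_\hslash(z, e_r) = V(z) e_r V(z)^{-1} + O(\hslash)$ from \cref{ExpansionOfM} into the definition \eqref{NPamplitudes} of $W_2$, I would obtain
\begin{equation*}
  \omega_{0,2}(z_1, z_2) = \frac{\Tr\bigl(P(z_1) P(z_2)\bigr)}{(x_1 - x_2)^2}\, dx_1\, dx_2 ,
\end{equation*}
with the rank-one projector $P(z) \coloneq V(z) e_r V(z)^{-1}$. Rationality follows from the explicit form of $V$, and cyclicity of the trace yields $z_1 \leftrightarrow z_2$ symmetry. Rather than analyse the pole structure abstractly, I would compute the trace outright by exploiting the factorisation from \eqref{AbelianizationV}: the scalar prefactor cancels and the Vandermonde part $\Theta_{ij} = \theta^{ij}$ can be absorbed into a constant matrix, reducing $P(z)$ to $D(z)\,(\Theta e_r \Theta^{-1})\, D(z)^{-1}$ with $D(z)_{ii} = z^{-r\{\alpha_i\}}$.

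The key observation is that $\Theta e_r \Theta^{-1}$ has constant entries $(\Theta e_r \Theta^{-1})_{ij} = \tfrac{1}{r}\theta^{r(i-j)} = \tfrac{1}{r}$ (since $\theta^r = 1$), so it is $\tfrac{1}{r}\mathbf{1}$ with $\mathbf{1}$ the rank-one matrix of all ones. Consequently the trace reduces to a double sum
\begin{equation*}
  \Tr\bigl(P(z_1) P(z_2)\bigr) = \frac{1}{r^2}\sum_{i,j=1}^r (z_1/z_2)^{r(\{\alpha_j\} - \{\alpha_i\})},
\end{equation*}
which factorises as a product of two geometric series. Since $\{r\{\alpha_i\}\}_{i=1}^r = \{0,1,\dots,r-1\}$ as a set by the coprimality of $r$ and $s$ (a consequence of $r \equiv \pm 1 \pmod s$), this sum evaluates to
\begin{equation*}
  \Tr\bigl(P(z_1) P(z_2)\bigr) = \frac{(z_1^r - z_2^r)^2}{r^2\, z_1^{r-1}\, z_2^{r-1}\, (z_1 - z_2)^2}.
\end{equation*}
Multiplying by $\tfrac{dx_1\, dx_2}{(x_1 - x_2)^2} = \tfrac{r^2\, z_1^{r-1}\, z_2^{r-1}\, dz_1\, dz_2}{(z_1^r - z_2^r)^2}$ collapses everything to $\tfrac{dz_1\, dz_2}{(z_1-z_2)^2}$ on the nose, and the uniqueness statement makes the identification with the Bergman kernel rigorous.

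The only obstacle I anticipate is the combinatorial identification $\Theta e_r \Theta^{-1} = \tfrac{1}{r}\mathbf{1}$ together with the verification that $\{r\{\alpha_i\}\}$ is a full permutation of $\{0,\dots,r-1\}$, both short computations that rely respectively on the discrete-Fourier structure of $\Theta$ (with $\theta$ an $r$-th root of unity) and on $\gcd(r,s) = 1$. An alternative, somewhat less explicit, approach would be to check the required analytic data directly: symmetry and rationality are clear, the diagonal contribution is a standard Taylor expansion giving the unit biresidue, the apparent off-diagonal poles at $z_2 = \theta^k z_1$ with $k \neq 0$ are killed using the equivariance \eqref{DeckAction} together with $(\tau^k)_{rr} = 0$ for such $k$, and the potential poles at the ramification points $z_i \in \{0, \infty\}$ are handled by the same monomial book-keeping. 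Either route converges on the identification of $\omega_{0,2}$ with the Bergman kernel.
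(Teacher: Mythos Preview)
Your argument is correct and in fact cleaner than the paper's. Both start from the same leading-order expression
\[
\omega_{0,2}(z_1,z_2)=\Tr\!\bigl(P(z_1)P(z_2)\bigr)\,\frac{dx_1\,dx_2}{(x_1-x_2)^2},\qquad P(z)=V(z)e_rV(z)^{-1},
\]
but the paper then proceeds analytically: it Taylor-expands near each pre-image $z_2\sim\theta^p z_1$ of the diagonal to isolate the unit-biresidue double pole at $p=0$ and kill the others, then bounds pole orders at $z_i\in\{0,\infty\}$ via the explicit formula \eqref{Vprod} for $[V(z_1)^{-1}V(z_2)]_{r,r}$, and finally invokes uniqueness of the Bergman kernel in genus zero. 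You instead evaluate the trace exactly, using the factorisation $V(z)=c(z)D(z)\Theta$ with $D(z)_{ii}=z^{-r\{\alpha_i\}}$ and the rank-one structure of $\Theta e_r\Theta^{-1}$, together with the observation that $\{r\{\alpha_i\}\}_{i=1}^r$ is a permutation of $\{0,\dots,r-1\}$ because $\gcd(r,s)=1$; the resulting geometric series collapses the whole expression to $\frac{dz_1\,dz_2}{(z_1-z_2)^2}$ on the nose, making the uniqueness argument superfluous (though harmless to include). Your ``alternative route'' is essentially the paper's argument. One small caveat: with the paper's indexing \eqref{AbelianizationV} the $r$-th column of $\Theta$ carries $\theta^{r-1}$ rather than $\theta^r$, so $\Theta e_r\Theta^{-1}$ has entries $\tfrac{1}{r}\theta^{(r-1)(i-j)}$ rather than $\tfrac{1}{r}$; the phases cancel in $\Tr(P(z_1)P(z_2))$, so your double-sum formula and final answer are unaffected. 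Your direct computation is more efficient for this specific curve; the paper's pole-bookkeeping would transport more readily to situations where $V(z)$ is not explicit.
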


\begin{proof}
  This follows from two  steps. The first is a direct computation of the leading term of $W_2$ in the WKB approximation, yielding near coinciding point asymptotics. The second step uses the explicit formula for the invertible matrix of eigenvectors of the Higgs field $\varphi$ to calculate the pole order at ramification points.\par
  Indeed, although $V(z)$ does not have the required form to satisfy Assumption 4 of \cite{BEM17}, it is given by the simple expression \eqref{AbelianizationV}, which in particular implies its equivariance under deck transformations. We start by noticing that the leading order WKB approximation
  \begin{equation}
    \label{ClassicalW2}
    \omega_{0,2}(z_1,z_2) = \Tr(e_r V(z_1)^{-1}V(z_2) e_r V(z_2)^{-1}V(z_1)) \frac{dx(z_1)dx(z_2)}{(x(z_1)-x(z_2))^2}
  \end{equation}
  has possible singularities only at the origin, infinity, and along the pre-image of the diagonal over the base. We first study the vicinity of the latter, using the expansions
  \begin{align}
    V(z_1)^{-1}V(z_2) 
    &
    \underset{z_2\sim\theta^p z_1}\sim \tau^p + \frac{x(z_2)-x(z_1)}{x'(z_1)}V(z_1)^{-1}\frac{dV(z_1)}{dz}\tau^p + \mathcal O(x(z_2)-x(z_1))^2\,,
    \\
    V(z_2)^{-1}V(z_1)
    &
    \underset{z_2\sim\theta^p z_1}\sim \tau^{-p} - \frac{x(z_2)-x(z_1)}{x'(z_1)}\tau^{-p}V(z_1)^{-1}\frac{dV(z_1)}{dz} + \mathcal{O} (x(z_2)-x(z_1))^2\,.
  \end{align}
  Introducing the Maurer--Cartan form $\Omega\coloneq V^{-1} \frac{dV}{dz}$, we immediately get
  \begin{equation}
    \begin{split}
      \omega_{0,2}(z_1,z_2) 
      &
      \underset{z_2\sim \theta ^p z_1}\sim \ \Tr(e_r\tau^p e_r \tau^{-p}) \frac{dx(z_1)dx(z_2)}{(x(z_1)-x(z_2))^2}
      \\
      &
      \qquad +  \Big(\frac{\Tr(e_r\tau^p e_r \tau^{-p}\Omega(z_1))}{rz_1^{r-1}}- \frac{\Tr(e_r \Omega(z_1)\tau^p e_r\tau^{-p})}{rz_1^{r-1}}\Big) \frac{dx(z_1)dx(z_2)}{x(z_1)-x(z_2)}+\mathcal{O} (1)\,,
    \end{split}
  \end{equation}
  which simplifies to
  \begin{equation}
    \omega_{0,2}(z_1,z_2) \underset{z_2\sim \theta^p z_1}\sim \delta_{p,0}\frac{dx(z_1)dx(z_2)}{(x(z_1)-x(z_2))^2} + \frac{\Tr([e_r,\tau^p e_r \tau^{-p}]\Omega(z_1))}{rz_1^{r-1}}\frac{dx(z_1)dx(z_2)}{x(z_1)-x(z_2)} +\mathcal{O}(1)\,.
  \end{equation}
  Since the commutator appearing in the numerator of the second term of the right-hand side is between two diagonal matrices, it vanishes. So we obtain the equivalence
  \begin{equation}
    \label{NearDiagB}
    \omega_{0,2}(z_1,z_2) \underset{z_2\sim \theta^p z_1}\sim \delta_{p,0}\frac{dx(z_1)dx(z_2)}{(x(z_1)-x(z_2))^2} +\mathcal{O}(1)\,,
  \end{equation}
  implying that $\omega_{0,2}$ has a residue-less double pole with unit biresidue on the diagonal over the spectral curve, but is regular at each other pre-image of the diagonal over the base for which the Kronecker delta vanishes.\par
  It could however still have poles over the origin and infinity; but we will show that it does not. Consider the matrix product
  \begin{equation}
    \label{Vprod}
    [V(z_1)^{-1}V(z_2)]_{i,j} = \frac1r\sum_{k=1}^r \theta^{k(j-i)}\big(\frac{z_2}{z_1}\big)^{r\lfloor\alpha_k\rfloor - k(r-s)}\,,
  \end{equation}
  and replace this expression in that of $\omega_{0,2}$ \eqref{ClassicalW2}. The trace evaluation yields
  \begin{equation}
    \begin{split}
      \omega(z_1,z_2) 
      &=
      [V(z_1)^{-1}V(z_2)]_{r,r} [V(z_2)^{-1}V(z_1)]_{r,r}\frac{dx(z_1)dx(z_2)}{(x(z_1)-x(z_2))^2}
      \\
      &=
      \Big(\sum_{k=1}^r \big(\frac{z_2}{z_1}\big)^{\lfloor\alpha_k\rfloor - k(r-s)}\Big)\Big(\sum_{\ell=1}^r \big(\frac{z_2}{z_1}\big)^{\lfloor\alpha_\ell\rfloor - \ell(r-s)}\Big)\frac{rz_1^{r-1}dz_1 rz_2^{r-1}dz_2}{(z_1^r-z_2^r)^2}
    \end{split}
  \end{equation}
  upon minor simplifications. Furthermore, by definition of the floor, $\alpha_k-1<\lfloor \alpha_k\rfloor \leq \alpha_k$, which implies that the pole order of this expression as a function of $z_1$ near the origin is at most 
  \begin{equation}
    (r-1)+0-(r-1)\, =\, 0,
  \end{equation}
  accounting for each factor of the right-hand side of the last expression. So $\omega_{0,2}(z_1,z_2)$ is regular at $z_1=0$ and generic $z_2$. This same inequality also implies $\omega_{0,2}$ is regular at $z_1=\infty$ at generic $z_2$, as well as $z_2=0$ and $z_2=\infty$ at generic $z_1$ respectively. Therefore, $\omega_{0,2}$ only has the singularities appearing in \eqref{NearDiagB} on $\Sigma^2$. Since the spectral curve has genus zero, there is a unique symmetric bidifferential with this pole behaviour, and it is the one given in the lemma. 
\end{proof}

\begin{lemma}
  In the setting of \cref{OurQC}, conditions (1-2) of \cref{TopologicalType} are satisfied.
\end{lemma}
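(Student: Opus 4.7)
The plan is to verify (1) using the explicit WKB form of the fundamental solution, and (2) by a careful analysis of the cancellation of apparent singularities along the coinciding-point diagonals in the cyclic symmetrization defining $W_n$.

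For condition (1), I would start from Corollary~\ref{ExistenceWKB}, which gives $\Psi_\hslash(z) = V(z)\hat\Psi_\hslash(z)\exp(\int_0^z Y_0)$, with $\hat\Psi_\hslash$ an $\hslash$-power series whose coefficients are rational on $\Sigma$. Since $M_\hslash(z,E) = \Psi_\hslash(z) E \Psi_\hslash(z)^{-1}$ is conjugation, the exponential and its inverse cancel, giving
\begin{equation}
M_\hslash(z,E) = V(z)\hat\Psi_\hslash(z)\, E\, \hat\Psi_\hslash(z)^{-1} V(z)^{-1},
\end{equation}
which has a coefficient-wise rational $\hslash$-expansion on $\Sigma$. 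The kernel factors $dx_i/(x_i-x_j)$ in the definition \eqref{NPamplitudes} are rational on $\Sigma^n$, so every $W_n$ inherits a rational $\hslash$-expansion on $\Sigma^n$.

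For condition (2), I would analyze possible singular loci of $[\hslash^k]W_n(\overset{e_r}{z_1},\dots,\overset{e_r}{z_n})$. By Proposition~\ref{ExpansionOfM}, each $M_\hslash(z,e_a)$ has as leading term $V(z)e_a V(z)^{-1}$ plus rational corrections $M^{(k)}(\theta^{\tilde a}z)$; all of these are rational on $\Sigma$ with singularities only at the origin and infinity, which are precisely the ramification points of $x$. Apparent singularities of $W_n$ in the interior of $\Sigma^n$ can therefore only arise from the Cauchy factors $1/(x_i-x_j)$, that is, at points where $z_j = \theta^p z_i$ for some $p\in\{0,1,\dots,r-1\}$. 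The key step is to show these cancel: substituting the near-diagonal expansion \eqref{KernelSingularity} of $K_\hslash$ into \eqref{NPamplitudes} and regrouping the cyclic sum over $\mf S_n'$ as in \cite[Prop.~3.4]{BBE15}, one checks that for $n\geq 3$ the leading singular contributions in each limit $z_j\to\theta^p z_i$ collapse pairwise, while for $p=0$ the cyclic trace structure together with diagonal evaluation $E_k=e_r$ forces the remaining residue to vanish.

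The remaining cases to handle separately are $n=1$, $n=2$, and leading orders in $\hslash$. For $W_1(\overset{e_r}{z})=-\Tr(K_\hslash(z,z)e_r)$, the prescription \eqref{DiagonalKernel} rewrites this as $\hslash^{-1}\Tr(M_\hslash(z,e_r)\Phi_\hslash(z^r))$; both factors are rational on $\Sigma$ with singularities only at $0$ and $\infty$, so all coefficients except $[\hslash^{-1}]W_1$ are regular away from ramification points, as required. For $n=2$, the leading coefficient $[\hslash^0]W_2$ is handled by Lemma~\ref{Assumption4Bypass}, which already establishes that it equals the Bergman kernel \eqref{TRB}; the subleading corrections involve at least one $M^{(k)}$ with $k\geq 1$, hence are rational on $\Sigma^2$ with poles at most at ramification points. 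The main obstacle I anticipate is the bookkeeping for the cancellation of $p\neq 0$ diagonal singularities in $W_n$ for $n\geq 3$; this is where the equivariance \eqref{WEquivariant} and the diagonal evaluation $E_i=e_r$ are essential, since the residue $\Tr(\cdots \Psi^{-1}\Phi_\hslash\Psi \tau^p\cdots)$ arising from \eqref{KernelSingularity} must be matched against the partner term in the cyclic sum, a standard but notationally dense manipulation in the style of \cite{BBE15,BEM17}.
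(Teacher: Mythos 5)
Your proposal is correct and follows essentially the same route as the paper's (much terser) proof: rationality of the $\hslash$-expansion from the WKB form of $\Psi_\hslash$ (equivalently \cref{ExpansionOfM,MExpansionInExample}), poles of the expansion coefficients only at $z=0,\infty$, i.e.\ at the ramification points of $x$, coinciding-point singularities contributing only to $[\hslash^0]W_2$ by the BEM17/BBE15-style cyclic cancellation, and \cref{Assumption4Bypass} for the Bergman-kernel shape of $[\hslash^0]W_2$. One small caveat: your displayed identity $M_\hslash(z,E)=V\hat\Psi_\hslash E\hat\Psi_\hslash^{-1}V^{-1}$ holds only for diagonal $E$ (the exponential $e^{\frac{1}{\hslash}\widehat{\mathcal J}_\hslash}$ is diagonal, so it only commutes past diagonal matrices), but since all relevant evaluations here are on the Cartan this does not affect the argument.
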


\begin{proof}
  Apart from the shape of $ [\hslash^0] W_2$, which was considered in \cref{Assumption4Bypass}, these conditions of the topological type property are a direct consequence of the fact that we are considering the WKB analysis of a rational $\hslash$-connection $\nabla_\hslash$ whose corresponding spectral curve $\Sigma$ has genus zero, and does not have any double points. They are easily checked from the explicit formulae, \cref{ConnectionPotential,AbelianizationY,AbelianizationV,NPamplitudes,NonPertLoopEq}.
\end{proof}

Therefore, in the context of the present work and as is usually the case in this kind of problems, the hardest part in proving that the conditions of \cref{TopologicalType} are satisfied is to determine the leading order of the $W_n$ (condition 3). 

For that matter, different methods have been devised over the years. Let us mention a few of them. The first method is the enumerative one, cf. e.g. \cite{CEO06,EO07,EO09,EMS11}, making use of the interpretation of the amplitudes as generating functions of certain quantities, say in enumerative geometry, when available.\par
To cater to situations where such an interpretation of the amplitudes of a $ \hslash$-connection is not available, a recursive process was introduced by \cite{BBE15}, making use of a differential Galois theory approach by integrable loop-insertion operators, when available. This approach was simplified into a combinatorial method in \cite{BEM17}, which is the approach we shall extend in the present work. It uses the combinatorial structure encoded in the loop equations \eqref{NonPertLoopEq}, together with a certain assumption on the expression on the right side of this equation, to prove the sought for leading order property by induction.

This assumption was stated as \cite[Assumption 5]{BEM17}, as a sufficient condition for the leading order property to hold, and it is not satisfied by the differential system of \cref{OurQC}. However, we can adapt it, leading to our notion of shifted perturbative loop equations, and corresponding shifted topological recursion. Let us note additionally that the four assumptions preceding this fifth one are satisfied or unnecessary in our situation:
\begin{enumerate}
  \item Assumption 1 states that $ \Phi_\hslash $ has a formal power series expansion in $\hslash$ with coefficients rational functions of $x$;
  \item Assumption 2 states that the associated spectral curve is genus $0$;
  \item Assumption 3 is only used to control the behaviour of the spectral curve involved. As our spectral curve is well-behaved already, we do not need it;
  \item Assumption 4 is not satisfied in our context by the invertible matrix of eigenvectors $V(z)$, since it does not take the form required by \cite{BEM17}. There, however, this assumption was only used to determine analytic properties of $\omega_{0,2}$ that are relevant to topological recursion, cf. \cite[Remark 2.6]{BEM17}. We already calculated this in \cref{Assumption4Bypass}, hence bypassing Assumption 4.
\end{enumerate}

There is also an Assumption 6, but it is only relevant to the parity condition, which following \cref{Parity}, we ignore.

As the leading order property deals with the $\hslash$-expansion of the amplitudes, it is only natural that the assumption allowing us to derive it involves the $\hslash$-dependence of the connection $\nabla_\hslash$ seen as $\hslash$-corrections to the Higgs field $\varphi$. 

\begin{definition}[{\cite{BEM17}}]
  The $\hslash$-connection $ \nabla_\hslash$ satisfies \emph{Assumption 5} if the following two statements hold:
  \begin{itemize}
    \item The set of singularities of each $\Phi_k$, $k\geq1$,  featuring in the $\hslash$-expansion of $\Phi_\hslash$ is included in that of $\varphi$,
    \item For any $r\times r$ matrix $C$, and any generic base-points $x_0, x_1\in \C^\times$, the $\hslash$-series of rational expressions of the pair $(x, \omega )$ given by
    \begin{eqnarray}
    \label{Assumption5}
        \Big(\det \big(\omega \Id - \Phi_\hslash(x)-\frac{C}{(x-x_0)(x-x_1)}\big) - \det \big(\omega \Id - \varphi(x)\big)\Big)\frac{1}{E_\omega (x, \omega )}
    \end{eqnarray}
    restricts to a one-form on $\Sigma$ that is analytic at each singularity of $\varphi$, with $E_\omega =\frac\partial{\partial \omega }E$.
  \end{itemize}
\end{definition}

The assumptions are used in \cite{BEM17} in the following way.

\begin{theorem}[{\cite[Theorems 3.1 \& 3.2]{BEM17}}]
  \label{FirstPartOfBEM17}
  For a rational Lax pair system satisfying Assumptions 1, 2, and 4, $ M_\hslash (z,D) $ has an expansion of the shape \eqref{MExpansion}.\par
  If the system also satisfies Assumption 3, then the $ M^{(k)}$ may only have poles at branch points or poles of $ \phi$.
\end{theorem}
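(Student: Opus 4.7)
The approach is to leverage the formal WKB solution $\Psi_\hslash$ constructed in Corollary \ref{ExistenceWKB}. Starting from $M_\hslash(z, e_r) = \Psi_\hslash(z)\, e_r\, \Psi_\hslash(z)^{-1}$, substitute the abelian form $\Psi_\hslash = \widehat U_\hslash \exp\bigl(\frac{1}{\hslash}\widehat{\mathcal J}_\hslash\bigr)$ from \eqref{PsiAbel}. Since $\widehat Y_\hslash$ is diagonal, so is its regularized primitive, and the resulting matrix exponential commutes with the diagonal projector $e_r$; the transcendental factors cancel pairwise, leaving the purely formal expression $M_\hslash(z, e_r) = \widehat U_\hslash(z)\, e_r\, \widehat U_\hslash(z)^{-1}$. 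Expanding $\widehat U_\hslash(z) = V(z)\bigl(\Id + \sum_{\ell\geq 1}\hslash^\ell u_\ell(z) + \cdots\bigr)$ from Lemma \ref{FormalGauge} then yields a formal power series in $\hslash$ whose leading term is $V(z) e_r V(z)^{-1}$ and whose $\hslash^k$-coefficient is a polynomial in the $u_\ell$'s conjugated by $V(z)$. Each $u_\ell$ is rational in $z$ by construction, because the recursion \eqref{Abelianization} only combines $\ad_{Y_0}^{-1}$ acting on off-diagonal matrices with the rational matrix $\Phi_\hslash$ and with algebraic derivatives. The case $a\ne r$ then follows immediately from the equivariance identity $M_\hslash(z, e_a) = M_\hslash(\theta^{\tilde a} z, e_r)$ established just before Proposition \ref{ExpansionOfM}, producing the exact shape \eqref{MExpansion}.

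For the second part, the plan is to track the poles of the $u_\ell$ and $Y_\ell$ through the recursion \eqref{Abelianization} by induction on $\ell$. The key fact is that $\ad_{Y_0}$, restricted to off-diagonal matrices, is invertible precisely away from the loci where eigenvalues of $Y_0$ collide; by Assumption 3 these loci coincide with the ramification points of $x\colon \Sigma\to\mathbb P^1$, and moreover $\ad_{Y_0}^{-1}$ only introduces poles at those ramification points. The remaining inputs in the recursion are the intermediate $\Phi_\ell^{(L)}$ and the derivatives $du_{L-1}$. The first inherit the singularities of $\Phi_\hslash$, and hence of $\phi$, by the inclusion of singularities of each $\Phi_k$ in those of $\phi$ that is built into Assumption 3; the second can only raise pole orders at existing singular loci, never create new ones. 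A straightforward induction then shows that every $u_\ell$ and $Y_\ell$, and therefore every $M^{(k)}$, is rational on $\Sigma$ with poles only at ramification points of $x$ or at poles of $\phi$.

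The main obstacle is precisely this pole-tracking step, and it is the reason Assumption 3 enters the statement of the theorem. Without structural hypotheses on the spectral curve and on how the $\hslash$-corrections $\Phi_k$ relate to the Higgs field, each successive application of $\ad_{Y_0}^{-1}$ could introduce spurious singularities coming from accidental coincidences of eigenvalues of $Y_0$ away from branch points, or from uncontrolled singularities of the $\Phi_k$ themselves; there would then be no way to rule out poles of the $M^{(k)}$ at arbitrary points of $\Sigma$. Assumption 3 closes exactly these two escape routes simultaneously, after which the induction becomes essentially mechanical.
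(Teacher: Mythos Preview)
The paper does not give its own proof of this theorem: it is stated purely as a citation of \cite[Theorems~3.1~\&~3.2]{BEM17}. What the paper does instead is reach the same conclusion in its specific setting of \cref{OurQC} by a different route, combining \cref{ExistenceWKB} with \cref{ExpansionOfM}; this is recorded as \cref{MExpansionInExample}. Your argument is precisely that route spelled out: cancel the diagonal exponential in $\Psi_\hslash e_a \Psi_\hslash^{-1}$ to obtain $\widehat U_\hslash e_a \widehat U_\hslash^{-1}$, expand $\widehat U_\hslash = V(\Id + \mc{O}(\hslash))$, and track rationality and poles through the recursion \eqref{Abelianization}. So in substance you are aligned with how the paper handles the matter, just with more detail than the paper chooses to write.

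One point of caution. The tools you invoke (Lemma~\ref{FormalGauge}, Corollary~\ref{ExistenceWKB}, the equivariance identity) are only established in the paper for the concrete system of \cref{OurQC}, not for an abstract rational Lax pair satisfying Assumptions~1--4 of \cite{BEM17}. So what you have actually written is a proof of \cref{MExpansionInExample}, not of the general cited theorem; in the general setting one would need to redo the WKB construction under those abstract hypotheses, which is what \cite{BEM17} does. Relatedly, the inclusion ``singularities of each $\Phi_k$ are among those of $\phi$'' that you attribute to Assumption~3 is, in the paper's own summary, the first clause of Assumption~5 (and 5*), while Assumption~3 concerns the behaviour of the spectral curve; your pole-tracking argument uses both ingredients, so be careful about which hypothesis is doing which job.
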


\begin{theorem}[{\cite[Section 4]{BEM17}}]
  If a system satisfies Assumptions 1, 2, and 5  and the conclusion of \cref{FirstPartOfBEM17}, then it satisfies the topological type property.\par
  If it also satisfies Assumption 6, then moreover $ W_n \big|_{\hslash \to - \hslash} = (-1)^n W_n$ (this is part of the topological type property in that paper).
\end{theorem}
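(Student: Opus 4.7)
The plan is to verify the three conditions of \cref{TopologicalType} for the non-perturbative amplitudes $W_n$ defined in \eqref{NPamplitudes}. Conditions (1) and (2) largely follow from the hypotheses: rationality of each $\hslash$-coefficient is inherited from the expansion \eqref{MExpansion} of $M_\hslash$ provided by the conclusion of \cref{FirstPartOfBEM17}, while the restriction on pole locations uses the fact that the $M^{(k)}$ can only have poles at branch points or singularities of $\varphi$, together with the explicit structure of \eqref{NPamplitudes}. The required form of $[\hslash^0]W_2$ was already settled independently in \cref{Assumption4Bypass}.

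The substantive content of the theorem is condition (3): the leading-order property $W_n = \mc{O}(\hslash^{n-2})$. I would establish it by strong induction on $n$. The base cases are handled directly: $W_1 = \mc{O}(\hslash^{-1})$ is immediate from $W_1(\overset{E}{z}) = -\Tr(K_\hslash(z,z)E)$ together with the prescription \eqref{DiagonalKernel}, and $W_2 = \mc{O}(\hslash^0)$ is the content of \cref{Assumption4Bypass}.

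For the inductive step, the central tool is the non-perturbative loop equation \eqref{NonPertLoopEq}. I would expand both sides in $\hslash$; the cumulant formula \eqref{DisNPamplitude} splits $\widehat{W}_{k+n}$ into a piece involving the new connected amplitude $W_{n+1}$ and subleading pieces involving only products of strictly lower $W_j$'s which, by the induction hypothesis, already start at the predicted order. Assumption 5 is what controls the right-hand side $P_n$: it forces the $\hslash$-series of $P_n$ to produce, at each order, meromorphic differentials on $\Sigma$ whose singularities are confined to ramification points and singularities of $\varphi$. Combined with the pole restriction from condition (2), the loop equation at each sub-leading order then admits only the trivial solution, forcing $W_{n+1}$ to vanish below $\hslash^{n-1}$.

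The main obstacle I anticipate is the combinatorial bookkeeping in this inductive step: the loop equation mixes amplitudes evaluated at Casimirs of all orders, and isolating the $W_{n+1}$ component requires a careful expansion in powers of $\omega$ together with a consistent $\hslash$-grading. It is precisely the failure of Assumption 5 in our shifted setting that motivates the shifted loop equations of \cref{s:shiftedle}: the new terms in $\Phi_\hslash$ contribute to $P_n$ at sub-leading orders and must be absorbed into modified topological-recursion data rather than treated as constraints. For the parity statement, Assumption 6 imposes a $\hslash \to -\hslash$ symmetry on $\nabla_\hslash$ that propagates through the WKB construction to $\Psi_\hslash$ and hence to $M_\hslash$; the overall $(-1)^{n-1}$ prefactor in \eqref{NPamplitudes} together with the $n$-fold matrix trace then combine to yield $W_n|_{\hslash \to -\hslash} = (-1)^n W_n$.
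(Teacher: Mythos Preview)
Your overall strategy matches the argument of \cite{BEM17}, which the paper reproduces (in slightly modified form) in \cref{TTAppendix}. However, there is a genuine gap in your inductive step. You claim that the cumulant expansion splits $\widehat{W}_{k+n}$ into the connected piece $W_{n+1}$ together with ``subleading pieces involving only products of strictly lower $W_j$'s''. This is not true: in the partially disconnected correlator $\mathcal{W}'_{|I|,n}(I;J)$ appearing in the key identity \eqref{Core}, the partition can for instance consist of a single block, giving the connected amplitude $W'_{|I|+n}$ with $|I|\geq 2$, hence of size at least $n+2$. Your strong induction hypothesis, which only bounds $W_j$ for $j\leq n$, says nothing about such pieces beyond the trivial $\mathcal{O}(\hslash^0)$.

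The fix, as carried out in \cite{BEM17} and \cref{TTAppendix}, is to strengthen the induction hypothesis to the proposition $\mathcal{P}_k$ of \eqref{TheoInduction}: \emph{for all $j\geq k$, $W_j = \mathcal{O}(\hslash^{k-2})$}. This is trivially true for $k=1,2$. The step $\mathcal{P}_n \Rightarrow \mathcal{P}_{n+1}$ itself requires a secondary induction (the propositions $\mathcal{P}_{n,m}$ of the appendix): one first proves $W_{n+1}=\mathcal{O}(\hslash^{n-1})$ from \eqref{Contradiction}, and then bootstraps to all $W_m$ with $m\geq n+1$ via \eqref{Contradiction2}. The combinatorial inequalities \eqref{Inequality} and \eqref{Inequality2} that control the disconnected terms crucially use that $\mathcal{P}_n$ bounds \emph{all} higher $W_j$ simultaneously, not only $W_n$. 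So the ``bookkeeping'' you anticipated is not merely bookkeeping: it reflects a structural feature of the argument that a straight induction on $n$ alone cannot capture.
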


We reach the same conclusion as \cref{FirstPartOfBEM17} by combining \cref{ExistenceWKB,ExpansionOfM}.

\begin{proposition}
  \label{MExpansionInExample}
  In the situation of \cref{OurQC}, $ M_\hslash $ has an expansion of the shape \eqref{MExpansion}. By construction, the expansion coefficients may only have poles at $ 0$ and $ \infty$.
\end{proposition}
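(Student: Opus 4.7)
The plan is to combine the two results that immediately precede the proposition in the paper: \cref{ExistenceWKB} produces an explicit WKB-type fundamental solution $\Psi_\hslash$ to $\nabla_\hslash \Psi_\hslash = 0$ of the shape \eqref{PsiWKB} with formal-series rational coefficients, and \cref{ExpansionOfM} identifies the existence of such a solution with an $\hslash$-expansion of $M_\hslash(z, e_a)$ of the shape \eqref{MExpansion}.

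Concretely, I would first invoke \cref{ExistenceWKB} applied to the $\hslash$-connection of \cref{OurQC}, which furnishes a formal fundamental solution
\begin{equation}
  \Psi_\hslash(z) = V(z)\,\hat\Psi_\hslash(z)\, \mathrm{e}^{\int_0^z Y_0}\,,
\end{equation}
where $\hat\Psi_\hslash(z) = \Id + \sum_{\ell \geq 1}\hslash^\ell \hat\Psi^{(\ell)}(z)$ has coefficients that are algebraic (in fact rational) functions on the spectral curve $\Sigma = \P^1$. Substituting this into the definition \eqref{AdjointSolution} of $M_\hslash(z,e_a)$, the diagonal exponential factors $\mathrm{e}^{\pm \int_0^z Y_0}$ commute with the diagonal matrix $e_a$ and cancel, so
\begin{equation}
  M_\hslash(z,e_a) = V(z)\hat\Psi_\hslash(z)\, e_a\, \hat\Psi_\hslash(z)^{-1} V(z)^{-1}\,,
\end{equation}
which is manifestly an $\hslash$-series of rational functions of $z$. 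The leading order $V(z) e_a V(z)^{-1}$ coincides with the spectral projector appearing in \eqref{MExpansion}, and using equivariance under the deck action (\cref{WeylIsDeck} together with \eqref{Equivariance}) we may rewrite the subleading terms in terms of $\theta^{\tilde a} z$ with $\tilde a$ defined by $a + \tilde a s \equiv r \pmod r$, yielding exactly the expansion \eqref{MExpansion}. This directly invokes \cref{ExpansionOfM}, which is precisely the equivalence we need.

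It remains to locate the possible poles of the coefficients $M^{(k)}$. Retracing the inductive construction in the proof of \cref{FormalGauge}, each $u_\ell$ and $Y_\ell$ is determined by inverting $[Y_0(z),\mathord{\cdot}]$ on off-diagonal matrices; this operator is invertible exactly where the eigenvalues $\theta^i r z^{s-1}$ of $Y_0(z)$ are pairwise distinct, i.e.\ on $\C^\times = \P^1 \setminus \{0,\infty\}$. Consequently each $u_\ell$, and hence each coefficient of $\hat\Psi_\hslash$, is a rational function on $\Sigma$ whose only possible poles lie at $z = 0$ or $z = \infty$, where $x$ fully ramifies. Since $V(z)$ and $V(z)^{-1}$ are themselves rational with singularities confined to $\{0,\infty\}$ by their explicit expressions \eqref{AbelianizationV}, the same is true of every $M^{(k)}$.

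The main (and only nontrivial) obstacle is the bookkeeping that ensures that the exponential factors $\mathrm{e}^{\pm \int_0^z Y_0}$ cancel cleanly in $M_\hslash$, and that the residual formal gauge $\hat\Psi_\hslash$ really does have purely rational coefficients on $\Sigma$ (not merely algebraic on a further cover). Both facts are already contained in the statement and proof of \cref{ExistenceWKB}, so the proposition reduces to quoting these together with \cref{ExpansionOfM} and tracking the singular locus of the inductive step in \cref{FormalGauge}.
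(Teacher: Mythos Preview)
Your proof is correct and follows exactly the approach the paper takes: the paper's entire justification is the single sentence ``We reach the same conclusion as \cref{FirstPartOfBEM17} by combining \cref{ExistenceWKB,ExpansionOfM},'' and you have spelled out precisely what that combination entails, including the cancellation of the diagonal exponential factors and the tracing of pole locations through the inductive step of \cref{FormalGauge}. Your write-up is in fact more detailed than the paper's, but the route is identical.
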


We will however see explicitly in \cref{Assumption5ConstantTerm} that \eqref{Assumption5} is not satisfied by the connection potential \eqref{ConnectionPotential} if the $ S^\hslash_j$ are non-zero. So we will need to relax the conditions of the assumption.

A first hint that it might be too restrictive is that the right-hand side of the non-per\-tur\-ba\-tive loop equations \eqref{NonPertLoopEq} does not feature the expression appearing as second term in the numerator of \eqref{Assumption5}, but only particular coefficients of some polynomial expressions of $\vec\delta=(\delta_1,\dots,\delta_n)$, for each $n\geq0$.\par
An important subtlety is then that when $n=0$, the generic matrix $C$ in \eqref{Assumption5} can be taken to be trivial, but when  $n\neq0$, the right-hand side of \eqref{NonPertLoopEq} can only have lower pole order at $x=0$ than that of the values of the Casimir operators on $\Phi_\hslash$. The refinement we propose focuses on this particular point, is satisfied by \eqref{ConnectionPotential}, and does not affect the sequence of steps in which the assumption is used.

Since the correlators satisfying the (perturbative) shifted abstract loop equations can be computed inductively by the corresponding shifted topological recursion, and since the underlying Airy structure partition function is unique, it follows that an assumption implying the reduction of the non-perturbative loop equations to the shifted perturbative ones will identify the topological expansion of the non-per\-tur\-ba\-tive connected amplitudes associated to the differential system that constitutes the quantum curve. Let us therefore formulate the sufficiently refined assumption and check that it is indeed satisfied in our case. 

The first step is to notice that the (perturbative) shifted loop equations \eqref{eq:sle} are indexed by two labels $n,g\geq0$ corresponding, from the quantum curve point of view, to the number of spectator variables and order in the $\hslash$-expansion respectively. Multiplying each combination $\mathcal E^i_{g,n}$ by the relevant power of $\hslash$ and summing over all values of the genus label then reproduces the left-hand side of the non-perturbative loop equations, albeit up to the subtraction of the order $i$ differential $S_i^\hslash\big(\frac{dx}x\big)^i$ in the $n=0$ case. This re-summed shift exactly matching the value of the $i^{\text{th}}$ Casimir on $\Phi_\hslash$, encoded in the asymptotic equivalence
\begin{equation}
  \begin{split}
    P_n \big( x(z), \omega (z); J\big) 
    &\coloneq
    \sum_k \hslash^k P_n^{(k)}(x(z),\omega (z);J\big)
    \\
    &\sim 
    \sum_{g=0}^\infty \hslash^{2g-2+n}\mathcal \sum_{i=0}^r (-1)^i \omega (z)^{r-i} \mc{E}^i_{g,n}\big(x(z);J\big) \, ,
  \end{split}
\end{equation}
with $\mathcal{E}^i_{g,n}$ defined combinatorially in \cref{d:EW}.
As the shifts contribute only to the $ \mc{E}^i_{g,0}$, the assumption is naturally refined by distinguishing the $n=0$ and $n\neq0$ cases of the non-perturbative loop equations, as follows.

\begin{definition}\label{Assumption5*}
  A formal rational $\hslash$-connection written $\nabla_\hslash=\hslash d-\Phi_\hslash$, with $\Phi_\hslash=\sum_{\ell\geq0}\hslash^\ell\Phi_\ell$, satisfies \emph{Assumption 5*} if the following two statements are true.
  \begin{itemize}
    \item For all $\ell>0$, all the singularities of $\Phi_\ell$ are among those of $\varphi=\Phi_0$.
    \item For any number $n\geq 1$ of spectator variables, every expression of the form 
    \begin{equation}
        [\delta_1\cdots\delta_n]\det \Big( \omega \Id - \Phi_\hslash\big(x(z)\big)-\mathcal M^{(n)}_{\vec\delta}\big(x(z);J\big) \Big) \frac{1}{E_\omega (x,\omega )} 
    \end{equation}
    restricts to a one-form on the spectral curve $\Sigma$ that is analytic at each singularity of $\varphi$.
  \end{itemize}
\end{definition}

As stated before, this definition is given to fit the following proposition.

\begin{proposition}
  \label{Assumption5*ToTT}
  If a rational $ \hslash $-connection has a smooth genus $0$ spectral curve and satisfies
  \begin{enumerate}
    \item Assumption 5*;
    \item The conclusion of \cref{FirstPartOfBEM17};
    \item $ [\hslash^0]W_2 $ is the Bergman kernel $ \frac{dz_1 \, dz_2}{(z_1-z_2)^2}$
  \end{enumerate}
  then it satisfies the topological type property.
\end{proposition}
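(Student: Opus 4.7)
The plan is to verify the three conditions of \cref{TopologicalType} in turn, following closely the structure of the proof in \cite[Section~4]{BEM17}, and highlighting the precise point at which Assumption 5* replaces the role played there by Assumption 5.

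First, for conditions (1) and (2) of \cref{TopologicalType}, I would start from assumption (2) of the proposition — namely the conclusion of \cref{FirstPartOfBEM17} — which says that each matrix element of $M_\hslash(z,e_a)$ expands as a formal $\hslash$-series with rational coefficients whose poles are confined to the branch points of $x$ (and the poles of $\phi$, which in our setup coincide with them at $0$ and $\infty$). Substituting this expansion into the determinantal formula \eqref{NPamplitudes}, and using the near-diagonal behaviour \eqref{KernelSingularity} of the Cauchy kernel, the sum over single-cycle permutations in \eqref{NPamplitudes} causes the a~priori singular contributions at the pre-images of the diagonal to cancel in the \emph{connected} amplitudes when the initial conditions are Cartan elements — this is the standard cancellation of the determinantal formalism of \cite{BBE15,BEM17}. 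The pole structure on the Cartan then gives condition (2), while rationality of the $\hslash$-coefficients gives condition (1); the bidifferential $[\hslash^0]W_2$ is pinned down to the Bergman kernel by assumption (3).

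Next, condition (3), the leading order property, would proceed by induction on $n$. The base case $n=1$ is immediate from \eqref{MExpansion}: the leading term of $W_1(\overset{e_r}{z})$ at order $\hslash^{-1}$ is read off from the zeroth-order piece $V(z)e_r V(z)^{-1}$, and gives $\omega_{0,1}$. For the inductive step, I would apply the non-perturbative loop equations \eqref{NonPertLoopEq} with a suitable number of spectator points: after the cumulant decomposition \eqref{DisNPamplitude}, the left-hand side expresses the connected amplitude we wish to control plus disconnected pieces whose leading orders are known by induction, while the right-hand side $P_n(x(z),\omega;J)$ is precisely what Assumption 5* controls. Dividing by $E_\omega$ restricts it to a one-form on $\Sigma$ that is analytic at every singularity of $\varphi$, so that its only singularities lie at the ramification points. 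The leading order of the new amplitude is then extracted by a residue computation at those ramification points, exactly as in \cite[Section~4.3]{BEM17}, and the power-counting yields the expected order $\hslash^{n-2}$.

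The hard part, and the conceptual novelty compared to \cite{BEM17}, is to reconcile the form of Assumption 5* with this inductive step. Assumption~5 of \cite{BEM17} was stated uniformly for an arbitrary matrix $C$ at two generic base points and required the full determinantal difference to produce a one-form analytic at singularities of $\varphi$; our Assumption 5* is strictly weaker for $n\geq 1$ (it constrains only the $[\delta_1\cdots\delta_n]$-coefficient) and, crucially, drops the $n=0$ condition entirely. This relaxation is forced by the shifts $S^\hslash_i$ in \eqref{ConnectionPotential}, which obstruct the original $n=0$ version; geometrically, those shifts reappear as the correction terms in the shifted loop equations of \cref{p:shifteloopeq}, so that the residue extraction at ramification points recovers the shifted topological recursion of \cref{ShiftedTR} in place of the classical one. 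With this reinterpretation in hand, the induction closes and all three conditions of \cref{TopologicalType} are verified.
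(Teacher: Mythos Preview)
Your treatment of conditions (1) and (2) of \cref{TopologicalType} is essentially the paper's: rationality and pole confinement come from the expansion \eqref{MExpansion} of $M_\hslash$ plugged into \eqref{NPamplitudes}, with the coinciding-point singularities cancelling except in $[\hslash^0]W_2$, which is fixed by hypothesis (3).

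For condition (3), however, you have misidentified the mechanism. The paper's argument (carried out in the appendix as \cref{LeadingOrderProperty}) is a \emph{vanishing} argument, not a residue extraction. At the relevant $\hslash$-order the loop equation \eqref{Core} collapses to the identity
\[
W_{n+1}^{(n-2)}(x.e_{i_0},J)\;=\;P_n^{(n-1)}(x,\omega(z^{i_0}(x));J)\,\frac{1}{E_\omega(x,\omega(z^{i_0}(x)))}\,.
\]
By the pole structure already established (condition (2)), the left side can only have poles at the singularities of $\phi$; by Assumption~5*, the right side is analytic precisely there. Hence both sides define a holomorphic one-form on the genus-zero spectral curve, and are therefore zero. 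This gives $W_{n+1}=\mc{O}(\hslash^{n-1})$. There is no ``residue computation at ramification points'' here; that language belongs to the topological recursion formula itself, which is not what is being proved in this proposition. Your sketch also omits the nested double induction: one first proves $\mathcal{P}_{n+1}$ from $\mathcal{P}_n$ via $W_{n+1}^{(n-2)}=0$, and then, inside that step, runs a second induction $\mathcal{P}_{n,m}$ over $m\geq n+1$ to push the bound to all higher $W_m$.

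Finally, your closing paragraph conflates two separate statements. The shifts $S_i^\hslash$ and the recovery of \emph{shifted} topological recursion are irrelevant to the proof of this proposition: Assumption~5* is invoked only for $n\geq 1$, where the $[\delta_1\cdots\delta_n]$-coefficient never sees the constant-in-$M$ term carrying the shifts. The identification with shifted TR is the content of \cref{From5*ToTR} and \cref{t:ds}, downstream of the topological type property, not part of establishing it.
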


\begin{proof}
  Condition (1) of the topological type property, \cref{TopologicalType}, is a consequence of the definition of the amplitudes, \eqref{NPamplitudes}, and \cref{FirstPartOfBEM17}.\par
  Again by the definition of the amplitudes, they can only have poles at poles of the $ M_\hslash$ and at coinciding points. The poles at coinciding points only contribute to $ [\hslash^0]W_2$ by the argument in the second bullet point of \cite[Section 4.4.]{BEM17}, which proves condition (2).\par
  Condition (3) is the hardest to prove. We postpone it to the appendix: see \cref{LeadingOrderProperty}, which clearly implies the leading order property.
\end{proof}

\begin{corollary}
  \label{From5*ToTR}
  If a rational $ \hslash $-connection has a smooth genus $0$ spectral curve, satisfies Assumption 5* and the conclusion of \cref{FirstPartOfBEM17}, and has the Bergman kernel as leading order of $ W_2$, then its non-perturbative connected amplitudes can be expanded in powers of $\hslash$, and the coefficients can be calculated by topological recursion.
\end{corollary}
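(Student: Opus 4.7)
The plan is to deduce the corollary directly by assembling Proposition~\ref{Assumption5*ToTT} with the uniqueness statement in Theorem~\ref{ShiftedTR}. First I would invoke Proposition~\ref{Assumption5*ToTT} to conclude that the hypotheses (smooth genus~$0$ spectral curve, Assumption~5*, the conclusion of the pre-existing WKB statement, and the Bergman-kernel leading order of $W_2$) already imply the topological type property of Definition~\ref{TopologicalType}. This immediately gives an $\hslash$-expansion of the connected amplitudes whose coefficients are rational on $\Sigma$, with the leading order of $W_n$ at $\mc{O}(\hslash^{n-2})$, and with the coefficients having poles only at ramification points of $x$ (apart from $[\hslash^{-1}]W_1$ and $[\hslash^0]W_2$, which are controlled explicitly).

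Next I would feed this expansion into the non-perturbative loop equations of Definition~\ref{DefNonPertLE}. Expanding both sides order by order in $\hslash$ and reading off the coefficient of $\omega^{r-i}$ produces, for each $(g,n)$ with $2g-2+n > 0$, an equality of the form
\begin{equation}
  \mc{E}^i_{g,n}(x;z_{[n]}) = [\hslash^{2g-2+n}]\,P_n^{(i)}\bigl(x(z);J\bigr),
\end{equation}
where the right-hand side is, by Assumption~5*, the contribution of $\det(\omega\Id - \Phi_\hslash - \mc{M}^{(n)}_{\vec\delta})$ divided by $E_\omega$. For $n \geq 1$, the second bullet of Assumption~5* is exactly what is needed to ensure that this expression has no poles at the singularities of $\varphi$ other than at the branch points, which translates into the vanishing-order statement of the (unshifted) loop equations at $x=0$. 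For $n=0$, the analogous computation picks up the constant terms that are not killed by the Casimir evaluation; these are precisely the shifts $S_{i,2g}(dx/x)^i$ appearing in the shifted loop equations \eqref{eq:sle}. Thus the correlators $\omega_{g,n}$ satisfy the shifted loop equations with the shifts read off from the $\hslash$-expansion of $\Phi_\hslash$.

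The projection property is then a direct consequence of condition~(2) of Definition~\ref{TopologicalType}: each $\omega_{g,n}$ with $2g-2+n>0$ has poles only at the ramification points, and the Bergman kernel $[\hslash^0]W_2$ is standard, so \cref{l:finiteness} applies (via the expansion in the $\xi^{(j)}_{-k}$ basis). Together with the shifted loop equations, the uniqueness part of Theorem~\ref{ShiftedTR} forces the coefficients $\omega_{g,n}$ to coincide with those produced by shifted topological recursion on the corresponding shifted spectral curve.

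The main obstacle I anticipate is the clean bookkeeping in the second step: matching the algebraic content of the right-hand side of \eqref{NonPertLoopEq} order by order with the combinatorial objects $\mc{E}^i_{g,n}$ on the spectral curve side, and isolating the constant shift terms from $n=0$ as precisely $S_{i,2g}(dx/x)^i$. This is where the refinement from Assumption~5 to Assumption~5* is essential: it is exactly tailored so that, for $n\geq 1$, no spurious shift appears, while for $n=0$ the shifts reproduce the inhomogeneous term of the shifted loop equations. Everything else (the expansion, the projection property, the invocation of Theorem~\ref{ShiftedTR}) is a direct application of results already established in the paper.
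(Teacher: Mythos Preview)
Your argument is correct in outline, but it is considerably more elaborate than what the paper does. In the paper, this corollary has no explicit proof: it is an immediate consequence of Proposition~\ref{Assumption5*ToTT} (which gives the topological type property under exactly the stated hypotheses) together with the theorem cited just before it from \cite[Corollary~3.6]{BBE15} (topological type property plus non-perturbative loop equations implies the expansion coefficients are computed by topological recursion). That is the whole argument --- two citations.

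What you do instead is unpack the content of that cited theorem: you re-derive the perturbative (shifted) loop equations from the $\hslash$-expansion of the non-perturbative ones, check the projection property from condition~(2) of the topological type definition, and then invoke the uniqueness clause of Theorem~\ref{ShiftedTR}. This is a legitimate and more self-contained route, and it has the virtue of making explicit where the shifts come from (the $n=0$ case, uncontrolled by Assumption~5*). Two caveats: first, your identification $\mc{E}^i_{g,n} = [\hslash^{2g-2+n}]\,P_n^{(i)}$ hides real work --- this is essentially the combinatorial identity \eqref{CombinatorialIdentity} plus the $\hslash$-bookkeeping done in the appendix, and you correctly flag it as the main obstacle. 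Second, your discussion of the shifts as literally $S_{i,2g}(dx/x)^i$ is specific to \cref{OurQC}, whereas the corollary is stated for a general rational $\hslash$-connection; in the general case one should say the shifts are whatever constants appear in the $n=0$ loop equation, without naming them.
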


Let us now return to our main case, \cref{OurQC}. Requiring Assumption 5* constrains the values of the parameters $r$ and $s$, as well as the values of the expansion coefficients of the $\hslash$-series $S_i^\hslash$, $i\in\{1,\dots,r\}$.

\begin{proposition}\label{ConditionsForAssumption5}
  Let $ r $ and $ s$ be coprime, and write $ r = r's + r''$ for division with remainder $ 1< r'' < s $. Consider the spectral curve $ x(z) = z^r$ and $ y(z) = z^{s-r}$, with $ \Phi_\hslash$ as in \cref{ConnectionPotential}. Then, consider the expression
  \begin{equation}\label{Assumption5det}
    D(z,M) \coloneq \det ( y (z) \, dx(z) \Id - \Phi_\hslash (x(z)) - M dx(z)) \frac{1}{E_\omega (x(z),y(z) \, dx(z))}
  \end{equation}
  with $M$ considered as a matrix of formal variables, with no pole at $ z = 0$. Then
  \begin{equation} \label{Assumption5ConstantTerm}
    D(z,0) = \sum_{j=1}^r  (-1)^j S_j^\hslash z^{ (1-j)s  -1} dz
  \end{equation}
  and the  con-constant terms of $ D$ in $M$ have pole order at most
  \begin{itemize}
    \item $ (r'' - \frac{s}{2} )^2 - (1 - \frac{s}{2} )^2 $ if they do not contain any $S^\hslash_j$;
    \item $0$ if $s=1$ and they do contain $ S^\hslash_j$;
    \item $ r^2 - sr + js $ if $ s > 1$ and it contains $ S^\hslash_j$ for $ j > k_1 \coloneq \lceil \frac{r}{s} \rceil$;
    \item $s(2-j) + (r'' - \frac{s+1}{2} )^2 - 1 - \big(\frac{s + 1}{2}\big)^2$ if $ s > 1$, $ r'' \neq 1$, and it contains $ S^\hslash_j $ for $ j \leq k_1$;
    \item $ s(1-j) $ if $ s > 1$, $ r'' = 1$, and it contains $ S^\hslash_j $ for $ j \leq k_1$.
  \end{itemize}
  Hence, $ D(z,M) - D(z,0)$ is holomorphic at $ z = 0$ if and only if $ r'' \in \{ 1, s-1\} $ and one of the following three conditions holds:
  \begin{enumerate}
    \item $ s= 1$;
    \item $ r = 1 \pmod{s} $ and $ S^\hslash_j = 0$ for $ j > 1$;
    \item all $ S^\hslash_j = 0$.
  \end{enumerate}
\end{proposition}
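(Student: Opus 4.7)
The argument is a direct computation exploiting the near-companion structure of $\Phi_\hslash$ in \cref{ConnectionPotential}. I would write $\Phi_\hslash=\varphi+N$, where $N$ has nonzero entries only in its first column, namely $N_{i,1}=(-1)^{i-1}(S_i^\hslash/x^i)\,x^{r-s-\lfloor\alpha_{r-i+1}\rfloor}\,dx$. Then $\det(\omega\Id-\Phi_\hslash-M\,dx)$ is multilinear in its columns, and every monomial corresponds to choosing, in each column $j$, a contribution from either $\omega\Id-\varphi$, $-M\,dx$, or (in column $1$ only) $-N$.

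\textbf{Step 1: Formula for $D(z,0)$.} Restricting to $M=0$ and expanding along the first column,
\begin{equation*}
\det(\omega\Id-\Phi_\hslash) = \det(\omega\Id-\varphi) - \sum_{i=1}^r(-1)^{i+1}\,N_{i,1}\,\mu_{i,1},
\end{equation*}
where $\mu_{i,1}$ is the $(i,1)$-minor of $\omega\Id-\varphi$. The first term equals $\omega^r-x^{s-r}dx^r$, which vanishes at $\omega=y\,dx$ by the spectral curve equation $y^r=x^{s-r}$. Each minor is block upper-triangular: an $(i-1)\times(i-1)$ lower-bidiagonal block with diagonal entries $-\tfrac{x^{\lfloor\alpha_{r-k}\rfloor}}{x^{\lfloor\alpha_{r-k+1}\rfloor}}\,dx$ and subdiagonal $\omega$, times an $(r-i)\times(r-i)$ upper-bidiagonal block with diagonal $\omega$ and superdiagonal $-\ldots\,dx$. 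Telescoping the diagonal products yields $\mu_{i,1}=(-1)^{i-1}\,x^{\lfloor\alpha_{r-i+1}\rfloor-(r-s)}\,dx^{i-1}\,\omega^{r-i}$, whence
\begin{equation*}
\det(\omega\Id-\Phi_\hslash)\big|_{\omega=y\,dx} = \sum_{i=1}^r(-1)^i\,\frac{S_i^\hslash\,y^{r-i}}{x^i}\,dx^r.
\end{equation*}
Dividing by $E_\omega(x,y\,dx)=r(y\,dx)^{r-1}$ and passing to the $z$-variable via $y=z^{s-r}$, $x=z^r$, $dx=rz^{r-1}\,dz$ produces \eqref{Assumption5ConstantTerm}.

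\textbf{Step 2: Pole-order bounds.} The same multilinear expansion writes each monomial in the non-constant-in-$M$ part of $\det(\omega\Id-\Phi_\hslash-M\,dx)|_{\omega=y\,dx}$ as a product of an analytic-in-$z$ $M$-factor, entries selected from $\omega\Id-\varphi$, and at most one entry from $-N$. The $z$-valuation of the non-$M$ part is the sum of the $z$-valuations of these selected entries, plus the universal contribution $-(r-1)(s-1)$ from $1/E_\omega(x,y\,dx)$: a $\varphi$-entry in column $k\ge2$ shifts the valuation by $0$ or $-r$ according to whether it is the diagonal $\omega$ or the superdiagonal $-\tfrac{x^{\lfloor\alpha_{r-k+1}\rfloor}}{x^{\lfloor\alpha_{r-k+2}\rfloor}}dx$, while an $N$-entry with index $j$ shifts it by an explicit function of $j$, $r$, $s$, $\lfloor\alpha_{r-j+1}\rfloor$. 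Optimising over admissible column-patterns, and using $r=r's+r''$ to replace the floors by explicit fractional parts, reduces the problem to a finite case split: no $N$-column; $N$-column with index $j>k_1\coloneqq\lceil r/s\rceil$; $N$-column with $j\le k_1$ under $r''\neq1$ or $r''=1$; and the degenerate case $s=1$, where every $\lfloor\alpha_k\rfloor=k-1$ saturates and the shift contributions introduce no pole. Each sub-case yields exactly one of the five pole-order bounds in the statement.

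\textbf{Step 3: Trichotomy.} The non-constant-in-$M$ part of $D(z,M)$ is holomorphic at $z=0$ iff each of the five pole-order bounds from Step~2 is non-positive. The first factors as $(r''-1)(r''-s+1)$ and is $\le 0$ iff $r''\in\{1,s-1\}$; under this constraint the remaining four bounds collapse to the announced trichotomy: $s=1$, or $r\equiv 1\pmod{s}$ with $S_j^\hslash=0$ for $j\ge2$, or $S_j^\hslash=0$ for all $j$. The main obstacle I expect is Step~2: the discontinuous behaviour of $\lfloor\alpha_k\rfloor$ in $k\bmod s$ prevents a uniform argument, so the worst-case monomial must be located separately in each regime and for each possible position of the $N$-factor, and the resulting bounds must be shown to be sharp enough to yield equivalences rather than mere necessary conditions.
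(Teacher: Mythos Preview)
Your proposal is correct and follows essentially the same route as the paper's proof: both exploit the near-companion shape of $\Phi_\hslash$, separate off the first-column perturbation $N$ carrying the $S_j^\hslash$, compute $D(z,0)$ via a cofactor expansion whose minors telescope (your Step~1 matches the paper's row-by-row development exactly), and then bound the pole order of the $M$-dependent part by optimising over the permutation pattern in the determinant, with a case split on whether and where an $S_j^\hslash$-entry is used.

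The one place where the paper is more explicit than your Step~2 is the organising principle for the optimisation. Rather than treating the floor jumps case by case, the paper partitions the index set $\{1,\dots,r\}$ into $s$ consecutive blocks delimited by $k_l=\lceil lr/s\rceil$, each of length $r'$ or $r'+1$; within a block all superdiagonal entries of $\varphi$ contribute $z^{-r}$, and at the block boundaries they contribute $1$. One then compares, block by block, the diagonal choice $\prod Y_{k,k}=z^{(s-r)(k_l-k_{l-1})}$ against the off-diagonal choice $M_{k_{l-1}+1,k_l}\prod F_{k+1,k}$, and finds that the off-diagonal wins precisely on the $r''$ long blocks. This block picture is what turns your ``optimising over admissible column-patterns'' into a closed-form count and produces the quadratic expressions in $r''$ directly; you should expect to rediscover exactly this structure when you carry out Step~2 in detail. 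With that in hand, your Step~3 factorisation $(r''-1)(r''-s+1)$ and the ensuing trichotomy go through as you describe.
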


The proof strategy of this proposition is straightforward: for each of the cases, we just try to get as high a pole as we can with the given conditions. However, writing it down in general obscures the intuition, so we will give an example to explain the features.

\begin{example}
  Let us first consider the case $ (r,s) = (5,2)$. Then
  \begin{equation}
    \omega \Id - \phi 
    = 
    \begin{pNiceMatrix}
      \Block[draw]{3-3}{} z^{-3} & - z^{-5} & 0 & 0 & 0
      \\
      0 & z^{-3} & - z^{-5} & 0 & 0
      \\
      0 & 0 & z^{-3} & -1 & 0
      \\
      0 & 0 & 0 & \Block[draw]{2-2}{} z^{-3} & - z^{-5}
      \\
      1 & 0 & 0 & 0 & z^{-3}
    \end{pNiceMatrix}
    dx
  \end{equation}
  For the case $(r,s) = (7,4)$, we find
  \begin{equation}
    \omega \Id - \phi 
    = 
    \begin{pNiceMatrix}
      \Block[draw]{2-2}{} z^{-3} & - z^{-7} & 0 & 0 & 0 & 0 & 0
      \\
      0 & z^{-3} & - 1 & 0 & 0 & 0 & 0
      \\
      0 & 0 & \Block[draw]{2-2}{} z^{-3} & - z^{-7} & 0 & 0 & 0
      \\
      0 & 0 & 0 & z^{-3} & -1 & 0 & 0
      \\
      0 & 0 & 0 & 0 & \Block[draw]{2-2}{} z^{-3} & - z^{-7} & 0
      \\
      0 & 0 & 0 & 0 & 0 & z^{-3} & -1
      \\
      1 & 0 & 0 & 0 & 0 & 0 & \Block[draw]{1-1}{} z^{-3}
    \end{pNiceMatrix}
    dx
  \end{equation}
  In both of these matrices, all of the possible pole contributions are in the indicated blocks. The sizes of the blocks are either $ r' $ or $ r' + 1$, and there are $ s - r''$ of the first case and $ r''$ of the second. In the first case, the diagonal will give a higher pole order, while in the second case, the off-diagonal (supplemented by a non-pole from $M$) will give a higher pole order.\par
  The blocks are ordered by size, with larger ones coming first, but this only happens if $ r = \pm 1 \pmod{s}$. It is related to the result of \cite{BBCCN18}, cf. \cref{t:rsAs}, that only for these we get partitions (i.e. non-increasing tuples of numbers). As an example, consider the case $(r,s) = (7,5)$:
  \begin{equation}
    \omega \Id - \phi 
    = 
    \begin{pNiceMatrix}
      \Block[draw]{2-2}{} z^{-2} & - z^{-7} & 0 & 0 & 0 & 0 & 0
      \\
      0 & z^{-2} & - 1 & 0 & 0 & 0 & 0
      \\
      0 & 0 & \Block[draw]{1-1}{} z^{-2} & -1 & 0 & 0 & 0
      \\
      0 & 0 & 0 &  \Block[draw]{2-2}{} z^{-2} & - z^{-7} & 0 & 0
      \\
      0 & 0 & 0 & 0 & z^{-2} & -1 & 0
      \\
      0 & 0 & 0 & 0 & 0 & \Block[draw]{1-1}{} z^{-2} & -1
      \\
      1 & 0 & 0 & 0 & 0 & 0 & \Block[draw]{1-1}{} z^{-2}
    \end{pNiceMatrix}
    dx
  \end{equation}

  For $ \omega \Id - \Phi_\hslash$, the $ S^\hslash_j$ also contribute. These are all  in the first column. This requires a bit more analysis, but it turns out that these contributions are only allowed if they fit in the top-left block anyway.
\end{example}

\begin{proof}[Proof of \cref{ConditionsForAssumption5}]
  Rewrite $ \Phi_\hslash \eqcolon F_\hslash dx $. Then 
  \begin{equation}
    D(z,M) = \det ( y(z) \Id - F_\hslash -M ) \frac{dx(z)}{P_y(x(z),y(z))} \,.
  \end{equation}
  The last factors can be calculated to give
  \begin{equation}\label{Assumption5Prefactor}
    \frac{dx(z)}{P_y (x(z),y(z))} = \frac{dz^r}{r y(z)^{r-1}} = \frac{r z^{r-1} dz}{r z^{(r-1)(s-r)}} = z^{(r-1)(r+1-s)} dz = z^{r^2 - 1 - rs + s} dz \,.
  \end{equation}
  Now for the determinants. Write $ Y = y(z)\Id$. Its non-zero entries are clearly $ Y_{k,k} = y(z) = z^{s-r}$.
  The matrix $ F_\hslash (x(z)) $ has non-zero entries
  \begin{align}
    F_{1,j} 
    &=
    (-1)^{j-1} S_j^\hslash x^{\lfloor \alpha_r \rfloor - \lfloor \alpha_{r+1-j} \rfloor -j} + \delta_{j,r} x^{-\lfloor \alpha_1 \rfloor} = (-1)^{j-1} S_j^\hslash z^{r(\lfloor \alpha_r \rfloor - \lfloor \alpha_{r+1-j} \rfloor -j)} + \delta_{j,r}
    \\
    F_{k+1,k} &= x^{\lfloor \alpha_{r-k} \rfloor - \lfloor \alpha_{r+1-k} \rfloor} = z^{r(\lfloor \alpha_{r-k} \rfloor - \lfloor \alpha_{r+1-k} \rfloor)} \,,
  \end{align}
  because $ \lfloor \alpha_1 \rfloor = \lfloor \frac{r-s}{r} \rfloor = 0$ for any $ s$ we consider.\par
  To calculate $ \det (Y - F_\hslash (x(z)))$, we first develop with respect to the first column. Given our entry there, we develop successively by rows, starting at the top: all of these choices will be unique. This gives
  \begin{equation}
    \begin{split}
      \det (Y - F_\hslash (x(z)) 
      &= 
      \sum_{j=1}^r (-1)^{j-1} \Big( -F_{1,j} + \delta_{j,1} Y_{1,1} \Big) \prod_{k=1}^{j-1} -F_{k+1,k} \prod_{l=j+1}^r Y_{l,l}
      \\
      &=
      \sum_{j=1}^r \Big( -S_j^\hslash z^{r(\lfloor \alpha_r \rfloor - \lfloor \alpha_{r+1-j} \rfloor -j)} - \delta_{j,r} (-1)^{r-1} + \delta_{j,1} z^{s-r} \Big) 
      \\
      & \hspace{4cm}
      \cdot \Big( \prod_{k=1}^{j-1} -z^{r(\lfloor \alpha_{r-k} \rfloor - \lfloor \alpha_{r+1-k} \rfloor)} \Big) z^{(s-r)(r-j)}
      \\
      &=
      \sum_{j=1}^r \Big( -S_j^\hslash z^{r(\lfloor \alpha_r \rfloor - \lfloor \alpha_{r+1-j} \rfloor -j)} - \delta_{j,r} (-1)^{r-1} + \delta_{j,1} z^{s-r} \Big) 
      \\
      & \hspace{4cm}
      \cdot (-1)^{j-1} z^{r(\lfloor \alpha_{r-j+1} \rfloor - \lfloor \alpha_{r} \rfloor)} z^{(s-r)(r-j)}
      \\
      &=
      z^{(s-r)r} + \sum_{j=1}^r (-1)^j \Big( S_j^\hslash z^{-rj} + \delta_{j,r} (-1)^{r-1} z^{-r \lfloor \alpha_{r} \rfloor}  \Big) z^{(s-r)(r-j)}
      \\
      &=
      z^{(s-r)r} + \sum_{j=1}^r  (-1)^j S_j^\hslash z^{-rj +(s-r)(r-j)} - z^{-r (r-s)} 
      \\
      &=
      \sum_{j=1}^r  (-1)^j S_j^\hslash z^{r(s-r) -sj}
    \end{split}
  \end{equation}
  Combining this with \cref{Assumption5Prefactor} gives \cref{Assumption5ConstantTerm}.\par
  Now let us consider the part of $ D (z,M)$ that is not constant in $M$, i.e. $ D (z,M) - D (z,0)$. Any term contributing to the development of this difference of determinants has to have at least one factor $ M_{j,k}$, which does not contribute a pole.\par
  First, let us consider the $ S_j^\hslash$-independent part. Here, the pole-contributing matrix coefficients are $ Y_{ll} = z^{s-r}$ or the $ F_{k+1,k}= z^{-r} $ if $ \lfloor \alpha_{r-k} \rfloor \neq \lfloor \alpha_{r+1-k} \rfloor$ (which happens $ r-s $ times). To possibly combine these, for any set of consecutive factors $ \{ F_{k+1,k} \}_{k= k_1 +1}^{k_2-1}$ we need a factor $ M_{k_1,k_2}$ to `return to the diagonal'. Any such block will have pole contribution
  \begin{equation}
    z^{r (\lfloor \alpha_{r-k_2 + 1} \rfloor - \lfloor \alpha_{r-k_1} \rfloor)} 
    =
    z^{r (\lfloor \frac{(r-k_2 + 1)(r-s)}{r} \rfloor - \lfloor \frac{(r-k_1)(r-s)}{r} \rfloor)}
    =
    z^{r (\lfloor \frac{(k_2 - 1) (s-r)}{r} \rfloor - \lfloor \frac{k_1(s-r)}{r} \rfloor)} \,.
  \end{equation}
  Effectively, such blocks can be optimal if $ \lfloor \alpha_{r- k_l} \rfloor = \lfloor \alpha_{r+1-k_l } \rfloor $ for $ l = 1,2$, because then $ F_{k_1+1,k_1} = F_{k_2+1,k_2} = 1$, and we can trade them for $ M_{k_1,k_2}$ without lowering the pole order. The $k$ for which $ \lfloor \alpha_{r- k} \rfloor = \lfloor \alpha_{r+1-k } \rfloor $ are
  \begin{equation}
    k_l = \big\lceil \frac{lr}{s} \big\rceil \,, \quad 0 <  l \leq s\,,
  \end{equation}
  and clearly
  \begin{equation}
    k_l - k_{l-1} \in \{  r' , r'+1 \} \,.
  \end{equation}
  To get the highest possible pole order, for each $l = 1, \dotsc, s$, we need to check which gives a higher pole order, 
  \begin{align}
    \prod_{k=k_{l-1}+1}^{k_l} Y_{k,k} &= z^{(s-r)(k_l-k_{l-1})}
    \intertext{or}
    M_{k_{l-1}+1,k_l}\prod_{k=k_{l-1}+1}^{k_l - 1} F_{k+1,k} &= \mc{O} (z^{r (\lfloor \frac{(k_l-1) (s-r)}{r} \rfloor - \lfloor \frac{k_{l-1} (s-r)}{r} \rfloor)})\,.
  \end{align}
  We choose the first option if
  \begin{equation}
    \begin{split}
      (s-r)(k_l-k_{l-1}) 
      &< 
      r (\lfloor \frac{(k_l-1) (s-r)}{r} \rfloor - \lfloor \frac{k_{l-1}(s-r)}{r} \rfloor)
      \\
      (s-r)(k_l-k_{l-1}) 
      &< 
      r ( (l-1) - (k_l-1) - (l-1) + k_{l-1} )
      \\
      s(k_l-k_{l-1}) 
      &<
      r
      \\
      k_l-k_{l-1} 
      &<
      \frac{r}{s}
      \\
      k_l - k_{l-1}
      &= r'
      \,,
    \end{split}
  \end{equation}
  using that $ k_l = \min \{ k \, \mid \, \lfloor \frac{k_ls}{r} \rfloor  = l \}$. This is equivalent to requiring that $ (l-1) r $ has remainder modulo $ s$ at least $ r''$, and hence for $ l \in [s]$, it occurs exactly $ s- r'' $ times, using that $ r$ and $ s$ are coprime. The second option then occurs $ r'' > 0$ times, so we do get a term non-constant in $M$. Therefore, the highest pole of the determinant is
  \begin{equation}
    y^{(s-r'') r'} x^{-r'' r'} 
    = 
    z^{\big( (s-r) (s-r'') - r r''\big) r'}
    =
    z^{( s^2 - s r - sr'' ) r'}
    =
    z^{( s -r- r'' ) sr'}
    =
    z^{-( r - (s - r'') ) (r - r'')}
  \end{equation}
  and the total highest pole is
  \begin{equation}
    z^{(r-1)(r+s-1)-( r - (s - r'') ) (r - r'')}dz = z^{(r'' + \frac{s}{2} )^2 - ( 1 - \frac{s}{2})^2}dz
  \end{equation}
  as was to be proved. This power of $ z$ is non-negative if and only if $ | r'' - \frac{s}{2}| \geq | 1 - \frac{s}{2} | $, which combined with the fact that $ 1 \leq r'' \leq s-1$ gives $ r'' \in \{ 1, s-1\}$.\par
  \vspace{11pt}
  Then, we will consider term that do contain $ S^\hslash_j$.

  First take $ s= 1$. In this case, 
  \begin{align}
    F_{1,j} 
    &= (-1)^{j-1} S_j^\hslash z^{-r} + \delta_{j,r} 
    \\
    F_{k+1,k} &= z^{-r} \,.
  \end{align}
  We see that in any column, the pole contributions is at most $ z^{-r}$. As we need an $M_{j,k}$ in at least one column, the maximal pole order in the determinant is $ z^{-r(r-1)}$. As $ \frac{dx}{P_y} = z^{(r-1)r}$ in this case, this proves that for $s=1$, $ D(z,M) - D(z,0)$ is holomorphic.\par

  Now, assume that $ s > 1$. Then $ k_1 = \lceil \frac{r}{s} \rceil < r$. For $ j > k_1$, the term
  \begin{equation}
    F_{1,j} \prod_{k=1}^{k_1-1} F_{k+1,k} \cdot M_{k_1+1,k} \prod_{k=k_1+1}^j F_{k+1,k} \prod_{l=j+1}^r Y_{l,l} \frac{dx}{P_y}
  \end{equation}
  has pole order $ r^2 + js - sr > 0$.\par
  If $ j \leq k_1$, we use the same argumentation as for the $S_j^\hslash $-independent term, to divide the determinant into $ s$ blocks, of which $ s - r'' $ are diagonal products of $ Y_{l,l}$ and $ r'' $ are products of $ F_{k+1,k} $ and a $ M_{k_{l-1}+1,k_l}$. However, now for the first block (which is always an off-diagonal block, as $ k_1 = r' +1$), corresponding to $ l =1$, we will use the block
  \begin{equation}
    F_{1,j} \prod_{k=1}^{j-1} F_{k+1,k} \prod_{l=j+1}^{r'+1} Y_{l,l} \,.
  \end{equation}
  So the total vanishing order is now (first line for the special block, second line is analogous to the $ S^\hslash_j$-independent term)
  \begin{equation}
    \begin{split}
      r(\lfloor \alpha_r \rfloor - \lfloor \alpha_{r+1-j} \rfloor - j ) 
      &+ r \Big(\sum_{k=1}^{j-1} \lfloor \alpha_{r-k} \rfloor - \lfloor \alpha_{r+1-k} \rfloor \Big) + (r'+1-j)(s-r) 
      \\
      & +(s-r)(s-r'')r' - r (r''-1) r'+ (r+1-s)(r-1)
      \\
      &=
      r(\lfloor \alpha_r \rfloor - \lfloor \alpha_{r+1-j} \rfloor - j ) + r \Big( \lfloor \alpha_{r+1-j} \rfloor - \lfloor \alpha_r \rfloor \Big) + (1-j) (s-r) 
      \\
      & \quad + ( s^2 - rs -r''s +s ) r'+ (r+1-s)(r-1)
      \\
      &=
      - sj + (s-r) + ( s^2 - rs -r''s +s ) r'+ (r+1-s)(r-1)
      \\
      &=
      s(1-j) -r + ( s - r -r'' +1 ) (r-r'')+ (r+1-s)(r-1)
      \\
      &=
      s(1-j) -r + rs - r^2 - rr'' + r - ( s - r -r'' +1 )r'' + r^2 -1 - rs + s
      \\
      &=
      s(2-j) + ( - s + r'' - 1 )r'' -1
      \\
      &=
      s(2-j) - 1 - ( s  + 1 )r'' + (r'')^2
      \\
      &=
      s(2-j) + (r'' - \frac{s+1}{2} )^2 - 1 - \big(\frac{s + 1}{2}\big)^2 \,.
    \end{split}
  \end{equation}
  We see that this is always negative: the maximal value we can obtain with $ 1\leq r'' \leq s-1 $ requires $ |r''  - \frac{s+1}{2}|$ to be maximal, i.e. $ r'' = 1$, so that we get
  \begin{equation}
    s(1-j) - 1 < 0 \,.
  \end{equation}
  However, in case that $r'' = 1$, the term we considered was actually constant in $ M$: the only block that contained $M$ and $F$ wast the first one, and we exchanged the $M$ for $ F_{1,j}$. To obtain a term with at least one $M$ while keeping the maximal pole order, we should add one more $ F$ block, trading $ r' $ factors of $ y $ for $ r' -1 $ factors of $ x^{-1}$ to obtain
  \begin{equation}
    s(1-j) - 1 + r'(r-s) - (r' - 1) r
    =
    s(1-j)\,,
  \end{equation}
  which is non-negative only if $ j = 0$.
\end{proof}

From all of this, we find the following result.

\begin{theorem}\label{t:ds}
  Let $ r $ and $ s$ be coprime, and write $ r = r's + r''$ for division with remainder $ 1< r'' < s $. Consider the spectral curve $ x(z) = z^r$ and $ y(z) = z^{s-r}$, with rational $ \hslash$ connection  $ \hslash d + \Phi_\hslash$ as in \cref{ConnectionPotential}. Then the $\hslash$-expansions of the non-perturbative amplitudes of this connection can be computed by shifted topological recursion of \cref{ShiftedTR} if one of the following three conditions hold:
  \begin{enumerate}
    \item $ s= 1$;
    \item $ r = 1 \pmod{s} $ and $ S^\hslash_j = 0$ for $ j > 1$;
    \item $ s > 2$, $ r = -1 \pmod{s} $, and all $ S^\hslash_j = 0$.
  \end{enumerate}
\end{theorem}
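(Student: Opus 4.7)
The plan is to assemble the pieces already developed: Theorem \ref{t:ds} is essentially the conjunction of Corollary \ref{From5*ToTR} with Proposition \ref{ConditionsForAssumption5}, extended to account for the shifts. The three hypotheses of Corollary \ref{From5*ToTR} are verified as follows. The spectral curve $E(x,\omega) = \omega^r - x^{s-r}\,dx^r$ computed in Example \ref{OurQC} is smooth of genus zero, with $x\colon\Sigma=\mathbb{P}^1\to\mathbb{P}^1$ fully ramified at $0$ and $\infty$. The WKB-type expansion of Corollary \ref{ExistenceWKB}, translated into the adjoint solution $M_\hslash(z,E)$ by Proposition \ref{ExpansionOfM}, yields exactly the form \eqref{MExpansion} with coefficients rational in $z$ having possible poles only at $0$ and $\infty$ (Proposition \ref{MExpansionInExample}). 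Finally, Lemma \ref{Assumption4Bypass} identifies $[\hslash^0]W_2$ with the Bergman kernel $\frac{dz_1\,dz_2}{(z_1-z_2)^2}$.

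The core input is Assumption 5*, and this is exactly where the case distinction of Theorem \ref{t:ds} enters. Proposition \ref{ConditionsForAssumption5} shows that for the connection potential \eqref{ConnectionPotential} the non-constant-in-$M$ part $D(z,M)-D(z,0)$ of the determinant in Definition \ref{Assumption5*} is holomorphic at $z=0$ precisely when $r''\in\{1,s-1\}$ and one of the three listed cases holds. The case $s>2$ in item (3) simply excludes the overlaps with items (1) and (2); these three cases therefore match the statement of the theorem, also reproducing the $s$-consistency conditions of Definition \ref{d:consistent}.

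Once Assumption 5* has been secured, I would apply Proposition \ref{Assumption5*ToTT} to obtain the topological type property (in the sense of Definition \ref{TopologicalType}, ignoring the parity condition as in Remark \ref{Parity}). However, the passage from the non-perturbative loop equations of Definition \ref{DefNonPertLE} to the \emph{shifted} perturbative loop equations \eqref{eq:sle} is not completely automatic: it uses the explicit form of the constant-in-$M$ term of the determinant, which by \eqref{Assumption5ConstantTerm} equals $D(z,0) = \sum_{j=1}^r(-1)^j S_j^\hslash z^{(1-j)s-1}\,dz$. Expanding the $\hslash$-series $S_j^\hslash = \sum_{g\geq 1/2}\hslash^{2g}S_{j,2g}$ and separating powers of $\hslash$, this matches precisely the shift $\delta_{n,0}S_{i,2g}(dx/x)^i$ appearing in Proposition \ref{p:shifteloopeq}, because $z^{(1-j)s-1}dz = r^{-1}(dx/x)^j$ when pulled back through $x=z^r$ and weighted by the appropriate power of $\omega_{0,1}(z) = rz^{s-1}\,dz$ arising in the characteristic polynomial expansion. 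Thus the order-by-order $\hslash$ decomposition of the Casimir-restricted non-perturbative loop equations reproduces exactly the shifted loop equations, with coefficients $S_{i,2g}$ identified with the $\hslash^{2g}$-coefficients of $S_i^\hslash$.

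The conclusion then follows from the uniqueness part of Theorem \ref{ShiftedTR}: the correlators $\omega_{g,n}$ produced by shifted topological recursion are the unique solution to the shifted loop equations satisfying the projection property, so they must coincide with the $\hslash$-expansion coefficients of $W_n$ evaluated on Cartan elements, with extension to the full Cartan by equivariance \eqref{WEquivariant}. The main obstacle in this program is verifying Assumption 5* in the form required here, i.e.\ the careful pole counting of Proposition \ref{ConditionsForAssumption5}, since the original Assumption 5 of \cite{BEM17} fails whenever the shifts $S^\hslash_j$ are non-zero; but that work has already been carried out. What remains for Theorem \ref{t:ds} itself is essentially bookkeeping: checking that the three cases in the statement are exactly the cases producing $s$-consistent shifts for which the refined assumption holds, and tracking that the constant term $D(z,0)$ feeds correctly into the shifts of the resulting recursion.
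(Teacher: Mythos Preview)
Your proposal is correct and follows essentially the same route as the paper's proof: verify the hypotheses of \cref{From5*ToTR} by combining \cref{MExpansionInExample}, \cref{Assumption4Bypass}, and the pole-counting of \cref{ConditionsForAssumption5} to secure Assumption~5*, then invoke the topological type property. You are in fact more explicit than the paper's terse proof about how the constant-in-$M$ term $D(z,0)$ from \eqref{Assumption5ConstantTerm} feeds the shifts into the perturbative loop equations; the paper relegates that identification to the discussion preceding \cref{Assumption5*} rather than to the proof itself, but the content is the same.
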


\begin{proof}
  We first prove that Assumption 5*, \cref{Assumption5*}, holds in this setting. The first part of the assumption evidently holds: the singularities of $ \Phi_\ell$ are at $ z=0$, which is also a singularity of $ \phi$.\par
  The conditions given are those needed in \cref{ConditionsForAssumption5} to prove that $ D(z,M) - D(z,0)$ is holomorphic for any matrix $M$ with no pole at $z=0$. This in particular means that it holds for $ \mc{M}_{\vec{\delta}}^{(n)} (x(z);J)$ in the second part of Assumption 5*. In that second part, we only consider $ n \geq 1$, which means that we need to take a non-constant coefficient in $ \vec{\delta} $ in the determinant, which in turn means that we may consider $ D(z,M) - D(z,0)$ in stead of just $ D(z,M)$. Therefore, \cref{ConditionsForAssumption5} implies the second part of Assumption 5*.\par
  We conclude by invoking \cref{MExpansionInExample,Assumption4Bypass,From5*ToTR}.
\end{proof}

\begin{remark}
Interestingly, we note that the conditions that we obtained in \cref{t:ds} are exactly the same as those obtained in \cref{t:shifts}. However, we obtained these conditions in very different ways. On the one hand, in \cref{t:ds} the conditions are required for the topological type property to hold, so that the $\hslash$-expansion of the non-perturbative amplitudes of the $\hslash$-connection can be computed by shifted topological recursion. On the other hand, in \cref{t:shifts} the conditions are required for the left ideal to be an Airy structure, which is in turn equivalent to showing that shifted topological recursion produces symmetric differentials. It is quite satisfying that the two sets of conditions are precisely the same!
\end{remark}

\appendix

\section{Topological Type from the refined Assumption 5}
\label{TTAppendix}

In this appendix we give a proof of the topological type property based on Assumption 5*. This is a modification of an argument previously published in \cite{BEM17}. We follow the exact same steps, albeit not repeating them all, to explain how our refinement, distinguishing between the absence and presence of spectator variables in the loop equations, does not spoil the proof of the leading order property.\par
The validity of the non-perturbative loop equations \eqref{NonPertLoopEq} imply that the only steps of the proof that require being checked are those where Assumption 5 \eqref{Assumption5} was applied, that is to obtain equation (4.39) and the direct consequence of equation (4.45) in the original paper \cite{BEM17}. They correspond to equations (\ref{LeadOrd1}) and (\ref{LeadOrd2}) below, but to reach them we will first need to introduce some notations, and derive some intermediate results.\par
In this appendix, we write $ x.E $ for $ \overset{E}{x}$ as arguments of $ W_n$.

\begin{definition}
  For every $n\geq1$, define the \textit{primed} correlators by
  \begin{equation}
    W'_n \coloneq W_n - \frac1\hslash \delta_{n,1}\omega_{0,1} \,,
  \end{equation}
  as well as the \textit{partially disconnected} correlators, given for all $n\geq0$ by
  \begin{equation} 
  \label{PartDisc}
    {\mathcal{W}}'_{|I|,n}(I;J) 
    \coloneq 
    \sum_{\substack{(I_1,\dots, I_l)\vdash I \\ J_1\sqcup\dots\sqcup J_l=J}} \quad \prod_{i=1}^l W'_{|I_i|+|J_i|}(I_i,J_i) \,,
  \end{equation}
  for any subset $I\subset D \coloneq \big\{x.e_1,\dots,x.e_{r^2}\big\}$, where $(e_1,\dots,e_{r^2})$ is any basis of $r\times r$ matrices, and any generic $ J = \big\{z_1.E_1,\dots,z_n.E_n\big\} $. In this last expression, none of the $I_i$ featuring the underlying set partitions are allowed to be empty.
\end{definition} 

The original proof was done by (nested) induction on $k\geq 1$, and we now refine it as follows, in the form of the following theorem. It implies the leading order property as a particular case, and its proof will make use of two intermediate lemmas, and one proposition.

\begin{proposition}
  \label{LeadingOrderProperty}
  In the situation of \cref{Assumption5*ToTT}, the proposition $\mathcal{P}_k$ given by
  \begin{equation} \label{TheoInduction}
    \mathcal{P}_k : \text{For all }j\geq k,\, W_j=\mathcal O(\hslash^{k-2})\,,
  \end{equation}
  holds for every $k\geq 1$.
\end{proposition}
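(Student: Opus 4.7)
The strategy is to replicate the induction of \cite[Section~4.4]{BEM17} proving $\mathcal{P}_k$ by induction on $k\geq 1$, while carefully tracking the two modifications relative to that paper: (a) Assumption~5* only controls the non-constant-in-$\vec\delta$ coefficients (i.e.\ the cases $n\geq 1$ of the non-perturbative loop equations), and (b) the constant-in-$\vec\delta$ piece (the case $n=0$) carries the explicit shift terms $S^\hslash_i\big(\tfrac{dx}{x}\big)^i$ which precisely reproduce the right-hand side of the \emph{shifted} loop equations of \cref{p:shifteloopeq}. I would therefore organise the induction around $W_n$ for $n\geq 1$, and treat $W_1$ separately, since it is the only amplitude whose pole structure at $z=0$ is allowed by \cref{TopologicalType}~(2).

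First I would rewrite the non-perturbative loop equations \eqref{NonPertLoopEq} in terms of the primed partially disconnected correlators $\mathcal{W}'_{|I|,n}$ defined in \eqref{PartDisc}. Exactly as in \cite[eq.~(4.41)]{BEM17}, the $\omega$-polynomial combination on the left-hand side decouples into a sum over $i\in[r]$, $k\geq 0$ of terms proportional to $\omega^{r-i}\,\mathcal{W}'_{i,n}(z^i;J)$, whose sheet-summed version is the analogue of the $\mathcal{E}^i$ appearing in \cref{d:EW}. Combined with Assumption~5*, this identifies the $n\geq 1$ case of the loop equations with the sheet-summed shifted loop equations \eqref{eq:sle}, up to contributions that are analytic at $z=0$ and whose order in $\hslash$ can be bounded using $\mathcal{P}_k$.

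Next I would run the induction. For the base case $k=1$ I would verify that each $W_n$ is at worst $\mathcal{O}(\hslash^{-1})$; by the cumulant formula \eqref{DisNPamplitude} and the definition of $W_n$ in \eqref{NPamplitudes}, the only way a negative power of $\hslash$ can appear is through $W_1$, which itself contains the $\frac{1}{\hslash}\omega_{0,1}$ contribution we absorbed into $W'_1$. For the step $\mathcal{P}_k\Rightarrow \mathcal{P}_{k+1}$, I fix $n\geq k+1$ and apply the loop equations with $n-1\geq 1$ spectators. Assumption~5* guarantees that the right-hand side, as a one-form on $\Sigma$, is analytic at $z=0$; meanwhile the inductive hypothesis gives the required suppression in $\hslash$ of all lower-order $W_{n'}$ with $n'<n$ appearing in $\mathcal{W}'_{i,n-1}$. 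Balancing orders in $\hslash$ and using that $W_n$ can only develop poles at branch points or at $z=0$ (by \cref{MExpansionInExample}), the Galois-type vanishing argument of \cite[\S4.4]{BEM17} then forces the leading coefficient $[\hslash^{k-2}]W_n$ to vanish, yielding $W_n=\mathcal{O}(\hslash^{k-1})$.

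The main obstacle is the $n=0$ loop equation, where Assumption~5* does \emph{not} give analyticity at $z=0$: the constant-in-$\vec\delta$ part of the determinant equals $\sum_i (-1)^i S^\hslash_i\, z^{(1-i)s-1}dz$ as in \eqref{Assumption5ConstantTerm}, which has genuine poles. The key observation is that these poles are precisely of the order carried by $\big(\tfrac{dx}{x}\big)^i$, hence they match the shift terms in the shifted loop equations \eqref{eq:sle}; consequently they only affect the $(g,n)=(g,0)$ sector, contributing at each order in $\hslash$ to the initial conditions $\omega_{g,1}$ (via $\omega_{1/2,1}$ and the higher $S_{j,2g}$), not to the leading $\hslash$-order of $W_n$ for $n\geq 1$. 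I would isolate this $n=0$ equation, use it as in \cref{ShiftedTR} to read off the $\omega_{g,1}$ contributions, and then run the induction for $n\geq 1$ without any obstruction. Closing the argument then requires checking that the sheet-sum over the Cartan basis $e_r$ produces exactly the combinations $\mathcal{E}^i_{g,n}$ and that the vanishing orders match, which follows from the equivariance \eqref{WEquivariant} and the analysis of the kernel singularities \eqref{KernelSingularity} exactly as in the unshifted case.
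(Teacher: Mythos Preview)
Your overall strategy is correct and essentially matches the paper's proof: the induction on $k$ via the identity \eqref{Core}, the order-counting lemmas, and the holomorphicity argument using Assumption~5* together with the pole constraints on the $W_n$ coming from \cref{MExpansionInExample}. However, your third paragraph misidentifies the central issue and adds a step that is not needed.

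The proof of \cref{LeadingOrderProperty} never uses the $n=0$ loop equation. To pass from $\mathcal{P}_n$ to $\mathcal{P}_{n+1}$ (with $n\geq 2$), one applies \eqref{Core} with $n$ spectator variables to control $W_{n+1}$, and then the nested induction $\mathcal{P}_{n,m}$ uses \eqref{Core} with $m\geq n+1$ spectators to control $W_{m+1}$. In every instance the number of spectators is at least $2$, so Assumption~5* applies directly and there is nothing to overcome. The constant-in-$\vec\delta$ term with its poles $\sum_i(-1)^iS^\hslash_i z^{(1-i)s-1}dz$ simply never enters; it is relevant for identifying the expansion coefficients with the shifted topological recursion correlators (which is \cref{From5*ToTR}), not for the leading-order statement itself. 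You are conflating the proof of $\mathcal{P}_k$ with the subsequent identification of $\omega_{g,1}$.

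One point you do leave implicit and should make explicit is the nested induction $\mathcal{P}_{n,m}$: when bounding $W_{m+1}$ for $m>n$, the partially disconnected correlator $\mathcal{W}'_{|I|,m}$ contains factors $W'_{|I_i|+|J_i|}$ with $n<|I_i|+|J_i|\leq m$, and for those you need the already-established bound $O(\hslash^{n-1})$ from $\mathcal{P}_{n,|I_i|+|J_i|}$, not merely the $O(\hslash^{n-2})$ from $\mathcal{P}_n$; this is what makes the second inequality \eqref{Inequality2} go through. Your reference to \cite[\S4.4]{BEM17} covers this, but saying ``the inductive hypothesis gives the required suppression of all lower-order $W_{n'}$ with $n'<n$'' understates what is needed.
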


\begin{proof}
  We see that $\mathcal{P}_1$ and $\mathcal{P}_2$ are trivial. Indeed, by definition $W_1(x_1.E_{1})$ is of order $\hslash^{-1}$ while all other correlation functions $W_n(x_1.E_{1},\dots,x_n.E_{n})$ with $n\geq 2$ are at least of order $\hslash^0$.

  We now assume as induction hypothesis that each proposition from $\mathcal{P}_1$ up to $\mathcal{P}_n$ for some $n\geq 2$ holds. Let us estimate the order of the second term of the right-hand side of (4.28) in \cite{BEM17}. The equation reads
  \begin{equation}
  \label{Core}
    P_n(x,y(z^{i_0}(x));J)  
    =
    \hslash W_{n+1}(x.e_{i_0},J) E_\omega (x,\omega (z^{i_0}(x))) + \sum_{\{i_0\}\subsetneq I\subset D} \hslash^{|I|} {\mathcal W}'_{|I|,n}(I;J)\prod_{i\notin I}( \omega (z^{i_0}(x))- \omega (z^i(x))) \,.
  \end{equation}
  Its topological expansion will eventually yield \eqref{CombinatorialIdentity}. 

  As in \cite{BEM17}, there are three different cases to consider, depending on the set $I$ over which the underlying sum runs.
  \begin{itemize}
    \item[1)] $|I_i|=1$ and $J_i=\emptyset$: this corresponds to a single term $W'_{1}(x.e_i)$ which is at least of order $\hslash^0$, because in $W'_1$ the leading order term has been removed.
    \item[2)] $1<|I_i|+|J_i|\leq n$: here we can apply the induction hypothesis: $\mathcal{P}_{|I_i|+|J_i|}$ is assumed, so we get an order of $\hslash^{|I_i|+|J_i|-2}$.
    \item[3)] $|I_i|+|J_i|>n$: closely related to the previous case, we apply $\mathcal{P}_n$, from which we get an order of $\hslash^{n-2}$.
  \end{itemize} 
  Putting those three estimates together, it follows that
  \begin{equation}
    W'_{|I_i|+|J_i|}(I_i,J_i)
    =
    \mathcal O\Big(\hslash^{\min (n,|I_i|+|J_i|)-2+\delta_{|I_i|+|J_i|=1}}\Big) \,.
  \end{equation}
  In turn, for any integer $l\geq1$ labelling the length of the set partition featuring in the definition \eqref{PartDisc},
  \begin{equation}
  \label{OrdreEnCours}
    \hslash^{|I|}\prod_{i=1}^l W'_{|I_i|+|J_i|}(I_i,J_i)
    =
    \mathcal{O}\Big(\hslash^{\sum_{i=1}^l \left(\min (n,|I_i|+|J_i|)-2+\delta_{|I_i|+|J_i|=1}\right)+|I| }\Big) \,.
  \end{equation}

  Controlling this term then follows from a lemma, the proof of which we do not repeat here.

  \begin{lemma}[{\cite[p.~3232]{BEM17}}]
    For any $l\geq1$, the inequality
    \begin{equation}
    \label{Inequality} 
      \sum_{i=1}^l \left(\min(n,|I_i|+|J_i|)-2+\delta_{|I_i|+|J_i|=1}\right)+|I|-n\geq 0,
    \end{equation}
    holds whenever $\sum_{i=1}^l |J_i|=n$, $|I_i|\geq 1$ and $\sum_{i=1}^l |I_i|=|I|$.
  \end{lemma}

  Returning to \eqref{OrdreEnCours} and inserting \eqref{Inequality} implies that the second term of the right-hand side of the identity \eqref{Core} is at least of order $\mathcal O(\hslash^{n})$. It follows that for any positive $k\geq1$, the order $\hslash^{n-k+1}$ component of \eqref{Core} reads
  \begin{equation}
  \label{PreContradiction}
    P_n^{(n-k+1)}(x,\omega^{i_0}(x);J)
    =
    W_{n+1}^{(n-k)}(x.e_{i_0},J) E_\omega (x,\omega (z^{i_0}(x))) \,.
  \end{equation}
  From $\mathcal P_n$ it follows that $W_{n+1}=\mathcal O(\hslash^{n-2})$, and therefore that the right-hand side of \eqref{PreContradiction} vanishes for $k>2$. Hence we find that the first possibly non-vanishing term is actually for $k=2$, which leads to
  \begin{equation}
  \label{Contradiction}
    P_n^{(n-1)}(x,\omega(z^{i_0}(x));J)\frac{1}{E_\omega (x,\omega(z^{i_0}(x)))}
    =
    W_{n+1}^{(n-2)}(x.e_{i_0},J) \,.
  \end{equation}
  Our knowledge on the pole structure of the non-perturbative correlators and their expansions, from the first part of the proof of \cref{Assumption5*ToTT}, guarantees that $W_{n+1}^{(n-2)}(x.e_{i_0},J)$ can only have poles at singularities of $ \phi$, while Assumption 5*, \cref{Assumption5}, implies that the left-hand side cannot have poles there. Thus, we get that $W_{n+1}^{(n-2)}(x.e_{i_0},J)dx$ defines a rational one-form without any poles. The only holomorphic one-form on $ \P^1$ is zero, so
  \begin{equation}
  \label{LeadOrd1}
    W_{n+1}^{(n-2)}(x.e_{i_0},J) = 0 \,.
  \end{equation}
  Therefore, $W_{n+1}(x.E_{i_0},J)$ is at least of order $\hslash^{n-1}$, concluding the first part of the proof of the theorem.

  \medskip

  The second part of the proof consists in extending the previous argument to higher correlators of the form $W_{n+p}$, with $p>1$,  and is also proved by induction, making use of the following proposition.

  \begin{proposition}
    The proposition ${\mathcal{P}}_{n,m}$ defined by
    \begin{center} 
      ${\mathcal{P}}_{n,m}$ : $W_m = \mc{O}(\hslash^{n-1})$ 
    \end{center}
    holds for $m\geq n+1$.
  \end{proposition}
 
  \begin{proof}
    The last identity  \eqref{LeadOrd1} is equivalently formulated as proposition ${\mathcal P}_{n,n+1}$, so the initial step of this second induction process holds.

    Now with $m\geq n+1$,  let us assume ${\mathcal P}_{n,n+1}, \dots, {\mathcal P}_{n,m}$ to hold, and prove that $\mathcal P_{n,m+1}$ holds as well. In order to do so, consider a set of $m$ distinct points $J=\big\{z_1.E_1,\dots,z_m.E_m\big\}$, and recall \eqref{Core}.\par
    Similarly to the previous proof, there are now four cases to consider, three of which already appeared above. Only the third one is new. 
    \begin{itemize}
      \item[1)] $|I_i|=1$ and $J_i=\emptyset$: this corresponds to $W'_{1}(x.e_i)$ which is still of order at least $\hslash^0$, by definition.
      \item[2)] $1<|I_i|+|J_i|\leq n$: again the situation in which $\mathcal{P}_{|I_i|+|J_i|}$ applies, yielding an order of $\hslash^{|I_i|+|J_i|-2}$.
      \item[3)] $n<|I_i|+|J_i|\leq m$: In that case, we can apply $ {\mathcal{P}}_{n,|I_i|+|J_i|}$ and thus we get an order of $\hslash^{n-1}$
      \item[4)] $|I_i|+|J_i|> m$: this again corresponds to the case where we can $\mathcal{P}_n$, yielding an order of $\hslash^{n-2}$.
    \end{itemize}
    Again following \cite{BEM17}, denote by $L_1$ the set of subscript labels for which $1<|I_i|+|J_i|\leq n$, by $L_2$ the set of those for which $n<|I_i|+|J_i|\leq m$, and finally by $L_3$ that for which $|I_i|+|J_i|> m$. Introducing the cardinalities $l_1=|L_1|$, $l_2=|L_2|$, and $l_3=|L_3|$, they are non-negative integers that satisfy $l_1+l_2+l_3=l$. 
    With those notations in hand, gathering the conclusions of considering the four distinct cases now yields
    \begin{equation}
    \label{EnCours2}
      \hslash^{|I|}\prod_{i=1}^l W'_{|I_i|+|J_i|}(I_i,J_i)
      =
      \mathcal{O} \left(\hslash^{\underset{i\in L_1}{\sum}(|I_i|+|J_i|-2+\delta_{|I_i|+|J_i|=1})+\underset{i\in L_2}{\sum}(n-1)+\underset{i\in L_3}{\sum}(n-2)+|I| }\right) \,,
    \end{equation}
    that is again controlled by making use of a lemma that we recall without proof.
    \begin{lemma}[{\cite[p.~3234]{BEM17}}] 
      The inequality
      \begin{equation}
      \label{Inequality2} 
        \sum_{i\in L_1} (|I_i|+|J_i|-2+\delta_{|I_i|+|J_i|=1})+l_2(n-1)+l_3(n-2)+|I|-n \geq 0
      \end{equation}
      holds whenever we have $ \sum_{i=1}^l|J_i|=m$, $|I_i|\geq 1$, $ \sum_{i=1}^l |I_i|=|I|$, with $l_1+l_2+l_3=l$.
    \end{lemma}

    Inequality \eqref{Inequality2} together with \eqref{EnCours2} now implies that the expression $\hslash^{|I|} \mathcal W'_{|I|,j_0}(I;J)$ has at least order $\hslash^n$. Since $m+1> m\geq n+1> n$, and by proposition ${\mathcal P}_n$ \eqref{TheoInduction}, we also have that $W_{m+1}(x.e_{i_0},J)$ is of order at least $\mathcal O(\hslash^{n-2})$. Writing the order $\hslash^{n-1}$ component of \eqref{Core} then leads to
    \begin{equation} 
    \label{Contradiction2}
      P_{m}^{(n-1)}(x,\omega (z^{i_0}(x));J)\frac{1}{E_\omega(x,\omega(z^{i_0}(x)))}
      =
      W_{m+1}^{(n-2)}(x.e_{i_0},J).
    \end{equation}
    The argument that was used to obtain \eqref{LeadOrd1} from \eqref{Contradiction} still applies, allowing us to conclude that
    \begin{equation}
    \label{LeadOrd2}
      W_{m+1}^{(n-2)}(x.e_{i_0},J) = 0.
    \end{equation}

    \medskip

    We have finally obtained that assuming each ${\mathcal{P}}_{n,j}$ for $n+1\leq j\leq m$, it follows that ${\mathcal{P}}_{n,m+1}$ also holds. Since we had already proved the initial proposition $ {\mathcal{P}}_{n,n+1}$, we conclude by induction on $m$ that for all $m\geq n+1$, $ {\mathcal{P}}_{n,m}$ is in fact true. 
  \end{proof}
  
  Returning to the proof of \cref{TheoInduction}, this means that for all $m\geq n+1$, $W_{m+1}(x.e_{i_0},J)$ is of order at least $\hslash^{n-1}$, which is precisely the statement of proposition $\mathcal{P}_{n+1}$. We finally conclude by induction on $n$ that proposition $\mathcal{P}_n$ is valid for all $n\geq 1$.
\end{proof}

{\setlength\emergencystretch{\hsize}\hbadness=10000
\printbibliography}

\end{document}